\newcommand{\cancel}[1]    {}
\newcommand{\eps}          {\varepsilon}
\newcommand{\norm}[1]      {\|#1\|} 
\newcommand{\real}         {\mathbb{R}}
\newcommand{\mani}         {{\cal M}}
\newcommand{\mat}[1]       {\mathsf{#1}}
\newcommand{\col}[1]       {\mathrm{col}(#1)}
\DeclareMathOperator*{\argmax}{argmax}
\DeclareMathOperator{\sign}{sign}
\newtheorem{definition}{Definition}
\newtheorem{theorem}{Theorem}[section]
\newtheorem{lemma}{Lemma}[section]
\newbox\ProofSym
\newenvironment{emromani}
  {
   
   \begin{enumerate}}
  {\end{enumerate}}
\begin{document}

\begin{titlepage}
     
\title{Implicit Manifold Reconstruction\thanks{A preliminary version
appears in Proceedings of the 25th Annual ACM-SIAM Symposium on Discrete Algorithms, 2014, 161--173.}}
\author{Siu-Wing Cheng\footnote{Supported by Research Grants Council, Hong Kong, China (project~no.~612109).  Department of Computer Science and Engineering, HKUST, Hong Kong.}
\and 
Man-Kwun Chiu\footnote{Supported by JST ERATO Grant Number JPMJER1201, Japan and ERC StG 757609.  Institut f\"ur Informatik, Freie Universit\"at Berlin, Berlin, Germany.}} 


\date{}

\maketitle

\begin{abstract} 

Let ${\cal M} \subset \mathbb{R}^d$ be a compact, smooth and boundaryless manifold with dimension $m$
and unit reach.  We show how to construct a function $\varphi:
\real^d \rightarrow \real^{d-m}$ from a uniform $(\eps,\kappa)$-sample $P$ of $\cal M$
that offers several guarantees.  Let $Z_\varphi$ denote the zero set of
$\varphi$.  Let $\widehat{{\cal M}}$ denote the set of points at distance
$\eps$ or less from $\cal M$.  There exists $\eps_0 \in (0,1)$ that decreases as $d$ increases such that if $\eps \leq \eps_0$, the following guarantees hold.  First, $Z_\varphi \cap \widehat{\cal M}$ is a faithful approximation of $\cal M$ in the sense that $Z_\varphi \cap \widehat{\cal M}$ is
homeomorphic to $\cal M$, the Hausdorff distance between $Z_\varphi \cap \widehat{\cal
M}$ and $\cal M$ is $O(m^{5/2}\eps^{2})$, and the normal spaces at nearby
points in $Z_\varphi \cap \widehat{\cal M}$ and $\cal M$ make an angle
$O(m^2\sqrt{\kappa\eps})$.  Second, $\varphi$ has local support; in particular,
the value of $\varphi$ at a point is affected only by sample points in $P$ that
lie within a distance of $O(m\eps)$.  Third, we give a projection operator that
only uses sample points in $P$ at distance $O(m\eps)$ from the initial point.
The projection operator maps any initial point near $P$ onto $Z_\varphi \cap
\widehat{\cal M}$ in the limit by repeated applications.  

\end{abstract}

\end{titlepage}

\section{Introduction}

Sensory devices and numerical experiments may generate numerous data points in
$\real^d$ for some large $d$ due to the large number of attributes of the data
that are being monitored.  It is often believed that the data points are
governed by some hidden processes with fewer controlling parameters, and
therefore, the data points may lie in some $m$-dimensional manifold $\cal M$
for some $m \ll d$.  This motivates the study of manifold reconstruction.

In computational geometry, there are several known results that offer provably
faithful reconstructions in the sense that the reconstruction is topologically
equivalent to $\cal M$, the Hausdorff distance between the reconstruction and
$\cal M$ decreases as the sampling density increases, and the angular error
between the tangent spaces at nearby points in the reconstruction and $\cal M$
decreases as the sampling density increases.  These include the weighted cocone
complex by Cheng, Dey and Ramos~\cite{CDR05}, the weighted witness complex by
Boissonnat, Guibas and Oudot~\cite{bgo}, and the tangential Delaunay complex by
Boissonnat and Ghosh~\cite{bg}.  These reconstructions are $m$-dimensional
simplicial complexes with the given sample points as vertices.  
The corresponding reconstruction algorithms have to deal with the challenging
issue of ``sliver removal'' in high dimensions.

Solutions of partial differential equations on manifolds are required in quite
a few areas such as biology~\cite{memoli}, image processing~\cite{turk,witkin},
weathering~\cite{dorsey}, and fluid dynamics~\cite{myers,myersb}.  The
underlying manifold is often specified by a point cloud.  It has been reported~\cite{liang} that local reconstructions of a manifold in the form of zero level sets of local functions are preferred for solving partial differential equations on the manifold.  
Several numerical methods for solving partial differential equations on level sets have been developed~\cite{bertalmio,greer,liang,ruuth}.  

In this paper, we propose an implicit reconstruction for manifolds with
arbitrary codimension in $\real^d$.  Let $\cal M$ be a compact, smooth, and
boundaryless manifold with unit reach.  
Let $P$ be a \emph{uniform
$(\eps,\kappa)$-sample} of $\cal M$, that is, every point in ${\cal M}$ is at distance
$\eps$ or less from some point in $P$ and the number of sample points inside any $d$-ball of radius $\eps$ is at most some constant $\kappa$.  We assume that the following
information is specified in the input: (i) the manifold dimension $m$, (ii) a
\emph{neighborhood radius} $\gamma = 4\eps$, and (iii) approximate tangent spaces at points in $P$
such that the true tangent space at each point in $P$ makes an angle at most $m\gamma$ with the given approximate tangent space at that point.  There are many
algorithms for estimating the manifold dimension
(e.g.~\cite{CC09,CWW,HA,LB,TSL00}).  When the sample points satisfy some local
uniformity condition (e.g., a constant upper bound on the number of sample
points inside any ball of radius $\eps$ centered in $\cal M$), the neighborhood
radius $\gamma$ can be set by measuring the maximum distance from a sample
point to its $k$th nearest neighbor for some appropriate $k$.  If the sample
points are drawn from an independent and identical distribution on $\cal M$, a
recently proposed reach estimator can be used to set $\gamma$~\cite{aamari}.
There are many algorithms for estimating tangent spaces
(e.g.~\cite{BSW,CC16,GW,LMR12,SSM98}), which give an $O(\eps)$ angular error. 

We use the conditions of $\gamma = 4\eps$ and angular error at most $m\gamma$ in order to keep the number of unknown parameters small.  One may worry about satisfying these two conditions simultaneously, but it is not a concern as we explain below.  Suppose that the estimation algorithms return an angular error bound of $c\eps$ for some known constant $c \geq 1$ and a value $\ell$ such that $\eps \leq \ell = O(\eps)$.  We can set $\gamma = \max\{4\ell,c\ell\}$.  Then, the angular error is at most $c\eps \leq c\ell \leq m\gamma$.  Moreover, letting $c' = \max\{\frac{\ell}{\eps},\frac{c\ell}{4\eps}\}$, the input sample can be viewed as a uniform $(\eps',\kappa')$-sample, where $\eps' = c'\eps = \gamma/4$ and $\kappa' = (2c'+1)^d \kappa$, because a packing argument shows that if any $d$-ball of radius $\eps$ contains at most $\kappa$ sample points, then any $d$-ball of radius $c'\eps$ contains at most $(2c'+1)^d \kappa$ sample points.

Our main result is a formula for a function $\varphi: \real^d \rightarrow \real^{d-m}$ using the $(\eps,\kappa)$-sample $P$ and the neighborhood radius $\gamma$ such that
the zero set of $\varphi$ near $\cal M$ forms a reconstruction of $\cal M$.  Let $Z_\varphi$ denote the zero set of $\varphi$.  Let $\widehat{\cal M}$ denote the set of points at distance $\eps$ or less from $\cal M$.  We prove that there exists $\eps_0 \in (0,1)$ that decreases as $d$ increases such that if $\eps \leq \eps_0$, the following guarantees hold.  First, $Z_\varphi
\cap \widehat{\cal M}$ is a faithful approximation of $\cal M$ in the sense that
$Z_\varphi \cap \widehat{\cal M}$ is homeomorphic to $\cal M$, the Hausdorff
distance between $Z_\varphi \cap \widehat{\cal M}$ and $\cal M$ is
$O(m^{5/2}\gamma^{2}) = O(m^{5/2}\eps^2)$, and the normal spaces at nearby points in $Z_\varphi \cap
\widehat{\cal M}$ and $\cal M$ make an angle $O(m^2\sqrt{\kappa\gamma}) = O(m^2\sqrt{\kappa\eps})$.  Second,
$\varphi$ has local support; in particular, the value of $\varphi$ at a point
is affected only by sample points in $P$ that lie within a distance of
$m\gamma$.  Third, we give a projection operator that only uses sample points
in $P$ at distance $m\gamma$ from the initial point.  The projection operator
maps any initial point near $P$ onto $Z_\varphi \cap \widehat{\cal M}$ in the
limit by repeated applications.  


Implicit surfaces in three dimensions have been extensively
studied, particularly in computer graphics and solid modeling
(e.g.~\cite{alexa,carr,hoppe,levin}).  Two functions have been defined
in~\cite{DS,Kol} and shown to give faithful reconstruction of the underlying
surface in three dimensions.  In $\real^d$, a function is defined
in~\cite{bf} and shown to give faithful reconstruction of $(d-1)$-dimensional
manifold.  There seems to be no prior work with provable guarantees on implicit reconstructions of manifolds in $\real^d$ with codimension less than $d-1$.
In the computer graphics community, similar functions have been proposed as projection operators by Adamson and Alexa~\cite{AA} for designing a complex of surface patches connected via vertices and curves in three dimensions.  Each surface patch is the set of stationary points under a projection operator.  For each surface patch, some input points with prescribed tangent spaces are given for defining the corresponding projection operator, but these input points need not form an $\eps$-sample of the resulting surface patch.  It is discussed how to generalize the framework to $\real^d$ for a complex of submanifolds.  However, no mathematical guarantee was provided in~\cite{AA} for $\real^3$ or $\real^d$.

Although the zero set of our function $\varphi$ has a subset near $\cal M$ that is a faithful reconstruction, $\varphi$ should not be confused to be an smooth implicit function as in the Implicit Function Theorem.  If the normal bundle of $\cal M$ is topologically non-trivial, one cannot define a smooth implicit function whose zero set is a faithful reconstruction of $\cal M$.

We provide the definition of our function $\varphi$ in the next section.
Afterwards, we give the proofs of the theoretical guarantees.

\section{Function formulation}
\label{sec:def}

We use lowercase and uppercase letters in {\sf mathsf} font to denote column
vectors and matrices, respectively.  A point is always specified as a column
vector.  Given a matrix $\mat{K}$, we use $\col{\mat{K}}$ to denote the column
space of $\mat{K}$.  We call the unit eigenvectors of a square matrix
corresponding to the $k$ largest (resp. smallest) eigenvalues the $k$
\emph{most dominant} (resp. \emph{least dominant}) unit eigenvectors.  

Recall that $\gamma = 4\eps$ is the
input neighborhood radius.  We will make use of a weight
function $\omega : \real^d \rightarrow \real$ defined as
\begin{eqnarray*}
\omega(\mat{x},\mat{p}) = 
\frac{h(\norm{\mat{x}-\mat{p}})} {\sum_{\mat{q} \in P}
h(\norm{\mat{x}-\mat{q}})},
\end{eqnarray*}
where
\[
h(s)  = 
\left\{\begin{array}{ll}
\displaystyle \left(1-\frac{s}{m\gamma}\right)^{2m}
\left(\frac{2s}{\gamma}+ 1\right), & \mbox{if $s \in [0, m\gamma]$}, \\ [1.5em]
0, & \mbox{if $s > m\gamma$}.
\end{array}\right.
\]
Note that $h$ is differentiable in $(0,\infty)$ and $h'(s) = 0$ for $s \geq
m\gamma$. This weight function is inspired by the Wendland functions~\cite{wendland}.

Since approximate tangent spaces at the sample points are specified in the
input, we can assume that a $d \times m$ matrix $\mat{T}_\mat{p}$ is given for
each $\mat{p} \in P$ such that $\mat{T}_\mat{p}$ has orthogonal unit columns and
$\col{\mat{T}_\mat{p}}$ is the approximate tangent space at $\mat{p}$.
Define the following matrix and vector space for each point
$\mat{x} \in \real^d$:
\begin{center}
\begin{tabular}{lcp{4.5in}}
$\mat{C}_{\mat{x}}$ & = & $\sum_{\mat{p} \, \in P} 
\omega(\mat{x},\mat{p}) \cdot \mat{T}_{\mat{p}}^{} \cdot \mat{T}_{\mat{p}}^t$, \\
$L_\mat{x}$ & = & space spanned by the $(d-m)$ least
dominant unit eigenvectors of $\mat{C}_{\mat{x}}$. \\
\end{tabular}
\end{center}
The $(d-m)$ least dominant unit eigenvectors of $\mat{T}_\mat{p}^{}
\cdot \mat{T}_\mat{p}^t$ span an approximate normal space of $\cal M$ at $\mat{p}$.
So $L_\mat{x}$ is the ``weighted average'' of the approximate normal
spaces at the sample points near $\mat{x}$.

Define a class $\Phi$ of functions $\varrho: \real^d \rightarrow \real^{d-m}$
as follows: \begin{quote} $\displaystyle \Phi = \left\{\varrho \, :\,
\varrho(\mat{x}) = \sum_{\mat{p}\, \in P} \omega(\mat{x},\mat{p}) \cdot
\mat{B}_{\varrho,\mat{x}}^t \cdot (\mat{x}-\mat{p})\right\}$, where
$\mat{B}_{\varrho,\mat{x}}$ is any $d \times (d-m)$ matrix with linearly
independent columns such that $\col{\mat{B}_{\varrho,\mat{x}}} = L_\mat{x}$.
\end{quote} Evaluating $\varrho(\mat{x})$ requires only the sample points at
distance $m\gamma$ or less from $\mat{x}$, and $\omega$ gives more weight to
sample points nearer $\mat{x}$.  Different choices of
$\mat{B}_{\varrho,\mat{x}}$ at each $\mat{x} \in \real^d$ give rise to
different functions in $\Phi$.  A natural choice is a $d \times (d-m)$ matrix
consisting of $d-m$ orthogonal unit vectors that span $L_\mat{x}$.  
We denote the corresponding function in $\Phi$ by $\varphi$ and so
\[
\varphi(\mat{x}) = \sum_{\mat{p}\, \in P}
\omega(\mat{x},\mat{p}) \cdot \mat{B}_{\varphi,\mat{x}}^t \cdot
(\mat{x}-\mat{p}).
\]

We will show that every function in $\Phi$ has the same zero set.  $Z_\varphi$
as a whole is not a good reconstruction of $\cal M$.  Indeed, by definition,
$\varphi(\mat{x}) = 0$ for any $\mat{x} \in \real^d$ at distance $m\gamma$ or
more from $\cal M$.  
We focus on the subset $\widehat{\cal M}$ of $\real^d$ (i.e., the set of points at distance $\eps$ or less from $\mani$).  We
show that $Z_\varphi \cap \widehat{\cal M}$ is a faithful reconstruction of $\cal M$.  

\section{Preliminaries}
\label{sec:prelim}

\subsection{Definitions}

Given a matrix or vector, the corresponding italic lowercase letter with
subscripts denotes an element.  For example, $k_{ij}$ denotes the $(i,j)$ entry
of a matrix $\mat{K}$ and $v_i$ denotes the $i$-th coordinate of a vector
$\mat{v}$.  We use $\mat{I}_j$ to denote a $j \times j$ identity matrix and
$\mat{0}_{i,j}$ an $i \times j$ zero matrix.  The 2-norms of $\mat{v}$ and
$\mat{K}$ are $\norm{\mat{v}} = \left(\sum_i v_i^2 \right)^{1/2}$ and
$\norm{\mat{K}} = \max\left\{\, \norm{\mat{K} \mat{v}} : \norm{\mat{v}} = 1
\,\right\}$.  

We use $B(\mat{x},r)$ to denote the geometric $d$-ball centered at $\mat{x}$
with radius $r$.  We use $\angle (\mat{v},E)$ to denote the angle between a
vector $\mat{v}$ and its projection in an affine subspace $E$.  The angle
$\angle (E,F)$ between two affine subspaces $E$ and $F$, where $\mathrm{dim}(E)
\leq \mathrm{dim}(F)$, is $\max\{ \angle (\mat{v},F) : \mbox{vector $v$ in
$E$}\}$.  

The normal space of $\mani$ at a point $\mat{z}$, denoted $N_\mat{z}$, is the
linear subspace of $\real^d$ that comprises of all vectors normal to $\mani$ at
$\mat{z}$.  Each vector in $N_\mat{z}$ has $d$ coordinates although $N_\mat{z}$
has dimension $d-m$.  The tangent space of $\mani$ at $\mat{z}$, denoted
$T_\mat{z}$, is the orthogonal complement of $N_\mat{z}$.  

The medial axis of $\mani$ is the closure of the set of points in $\real^d$
that have two or more closest points in $\mani$.  The \emph{local feature size}
at a point $\mat{z} \in \mani$ is the distance from $\mat{z}$ to the medial
axis.  We assume that the reach or minimum local feature size of $\cal M$ is 1.  

Let $\nu$ denote the nearest point map.  That is, for every point $\mat{x}$
that does not belong to the medial axis of $\mani$, $\nu(\mat{x})$ is the point
in $\mani$ nearest to $\mat{x}$.

\subsection{Basic results}
\label{sec:basics}

We need the following basic results on $\eps$-sampling theory, 
matrices, and linear subspaces.

\begin{lemma}{\em (\cite{CDR05,GW})}
\label{lem:basic}
\hspace{.2in}
\begin{emromani}
\item For all $\mat{y},\mat{z} \in \mani$, if $\norm{\mat{y}-\mat{z}} \leq
\xi$ for some $\xi < 1$, $\mat{y}$ is at distance $\xi^2/2$ or less
from $\mat{z} + T_\mat{z}$.

\item For all $\mat{y},\mat{z} \in \mani$, if $\norm{\mat{y}-\mat{z}} \leq
\xi$ for a small enough $\xi$, then $\angle (N_\mat{y}, N_\mat{z}) \leq
4\xi$.

\end{emromani}
\end{lemma}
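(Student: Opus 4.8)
The plan is to derive both parts directly from the reach of $\mani$ being $1$, that is, from the fact that for every $\mat{z}\in\mani$ and every unit vector $\mat{n}\in N_\mat{z}$, the point $\mat{z}+\mat{n}$ has $\mat{z}$ as its nearest point on $\mani$.

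For (i), I fix $\mat{y},\mat{z}\in\mani$ with $\norm{\mat{y}-\mat{z}}\le\xi<1$ and let $\mat{n}$ be any unit vector in $N_\mat{z}$. Since $\mat{z}$ is the nearest point of $\mani$ to $\mat{z}+\mat{n}$, we have $\norm{\mat{y}-(\mat{z}+\mat{n})}\ge\norm{(\mat{z}+\mat{n})-\mat{z}}=1$; squaring and expanding yields $\langle\mat{y}-\mat{z},\mat{n}\rangle\le\tfrac12\norm{\mat{y}-\mat{z}}^2$, and running the same computation with $-\mat{n}$ gives $|\langle\mat{y}-\mat{z},\mat{n}\rangle|\le\tfrac12\norm{\mat{y}-\mat{z}}^2\le\xi^2/2$. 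As this holds for every unit $\mat{n}\in N_\mat{z}$, the component of $\mat{y}-\mat{z}$ orthogonal to $T_\mat{z}$ has norm at most $\xi^2/2$, which is exactly the claim that $\dist(\mat{y},\mat{z}+T_\mat{z})\le\xi^2/2$. (Equivalently, this is Federer's tubular-neighbourhood estimate specialized to reach $1$.)

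For (ii), I would bootstrap from (i). Since $N_\mat{y}$ and $N_\mat{z}$ are the orthogonal complements of the $m$-dimensional spaces $T_\mat{y}$ and $T_\mat{z}$, their principal angles agree with those of $T_\mat{y}$ and $T_\mat{z}$, so it suffices to bound $\angle(\mat{v},T_\mat{y})$ for an arbitrary unit vector $\mat{v}\in T_\mat{z}$. Put $\delta=\norm{\mat{y}-\mat{z}}\le\xi$ and let $c$ be the unit-speed geodesic of $\mani$ with $c(0)=\mat{z}$ and $c'(0)=\mat{v}$. Because $\mani$ has reach $1$, its principal curvatures are at most $1$, so $\norm{c''(s)}\le1$ and hence $c(t)=\mat{z}+t\mat{v}+\mat{r}(t)$ with $\norm{\mat{r}(t)}\le t^2/2$. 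For small $t$ one has $\norm{c(t)-\mat{y}}\le t+\delta$, so applying (i) with base point $\mat{y}$ twice gives $\dist(c(t)-\mat{z},T_\mat{y})\le\dist(c(t)-\mat{y},T_\mat{y})+\dist(\mat{y}-\mat{z},T_\mat{y})\le\tfrac12(t+\delta)^2+\tfrac12\delta^2$, and combining with the bound on $\mat{r}(t)$ yields $\dist(t\mat{v},T_\mat{y})\le\tfrac12(t+\delta)^2+\tfrac12\delta^2+\tfrac12 t^2$. Dividing by $t$ and setting $t=\delta$ gives $\dist(\mat{v},T_\mat{y})\le3\delta\le3\xi$, hence $\angle(\mat{v},T_\mat{y})\le\arcsin(3\xi)<4\xi$ once $\xi$ is small enough. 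Maximizing over unit $\mat{v}\in T_\mat{z}$ gives $\angle(T_\mat{z},T_\mat{y})\le4\xi$, and therefore $\angle(N_\mat{y},N_\mat{z})\le4\xi$.

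The expansions of squared norms and the Taylor estimate for the geodesic are routine; the only points needing care are the existence of a unit-speed geodesic issuing from $\mat{z}$ in direction $\mat{v}$ that stays on $\mani$ for parameter values up to $\delta$ (immediate since $\mani$ is compact, smooth, and boundaryless, once $\xi$ is small) and the curvature bound $\norm{c''}\le1$ (the standard fact that reach $\ge1$ forces every principal curvature to be at most $1$). I expect the only mildly delicate step to be the bookkeeping around the numerical constant — the intermediate bound $3\xi$ sharpening to the stated $4\xi$ after passing through $\arcsin$, for an explicit threshold on $\xi$. Since the lemma is quoted from \cite{CDR05,GW}, citing those references would also suffice, but the argument above is short enough to spell out.
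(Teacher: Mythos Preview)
The paper does not prove this lemma at all: it is stated with a citation to \cite{CDR05,GW} and used as a black box, so there is no in-paper argument to compare against. Your derivation is correct and self-contained. Part~(i) is exactly the standard tangent-ball argument from the reach-$1$ hypothesis, and your bootstrap for part~(ii) via a short geodesic and the curvature bound $\norm{c''}\le 1$ (which follows from Federer's characterization of reach in terms of the second fundamental form) is a clean route to the constant $4$. The only places to be slightly more explicit, should you wish, are (a) that applying (i) at base point $\mat{y}$ with $t=\delta$ requires $t+\delta=2\delta<1$, and (b) an explicit threshold such as $\xi\le 0.3$ under which $\arcsin(3\xi)\le 4\xi$; both are immediate. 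Since the lemma is quoted from \cite{CDR05,GW}, simply citing those references is also entirely adequate in the context of this paper.
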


\begin{lemma}
	\label{lem:ball}
	Let $P$ be a uniform $(\eps,\kappa)$-sample of $\mani$.  For any $\mat{x} \in \mathbb{R}^d$ and any $t \in \bigl[1,\frac{1}{\sqrt{2\eps}}\bigr]$, $| P \cap B(\mat{x}, t \eps) | \leq (4t+1)^m\kappa$.
\end{lemma}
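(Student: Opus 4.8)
The plan is to flatten a ball of $\mani$ onto a tangent plane and turn the count into a packing problem in $\real^m$; projecting to the $m$-dimensional tangent plane is exactly what replaces the ambient dimension $d$ in a naive packing bound by the intrinsic dimension $m$. If $P \cap B(\mat{x},t\eps) = \emptyset$ the bound is trivial, so fix $\mat{p} \in P \cap B(\mat{x},t\eps)$. Since the sample points lie on $\mani$, the triangle inequality gives $P \cap B(\mat{x},t\eps) \subseteq \mani \cap B(\mat{p},2t\eps)$; write $r = 2t\eps$ and note $t \le \frac{1}{\sqrt{2\eps}}$ forces $r \le \sqrt{2\eps}$, so (for $\eps$ below an absolute constant) $r$ is an admissible scale for the unit-reach estimates of Lemma~\ref{lem:basic}. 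Let $\pi$ be the orthogonal projection of $\real^d$ onto the affine $m$-plane $\mat{p}+T_\mat{p}$. I would record two facts about $\pi$ on the patch $\mani \cap B(\mat{p},r)$: (i) $\pi$ maps the patch into the $m$-ball $(\mat{p}+T_\mat{p}) \cap B(\mat{p},r)$, since $\pi$ fixes $\mat{p}$ and is $1$-Lipschitz; and (ii) $\pi$ is injective on the patch, with $\norm{\pi(\mat{y})-\pi(\mat{z})} \le \norm{\mat{y}-\mat{z}} \le (1+O(\eps))\,\norm{\pi(\mat{y})-\pi(\mat{z})}$ for all $\mat{y},\mat{z}$ in the patch. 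For (ii) one uses Lemma~\ref{lem:basic}(i) to keep the patch within $r^2/2$ of $\mat{p}+T_\mat{p}$ and Lemma~\ref{lem:basic}(ii), together with the chord–tangent bound $\sin\angle(\mat{y}-\mat{z},T_\mat{y}) = O(\norm{\mat{y}-\mat{z}})$, to keep $\angle(\mat{y}-\mat{z},T_\mat{p}) = O(r)$; hence the patch is the graph over its projection of a function into $N_\mat{p}$ of slope $O(r)=O(\sqrt{\eps})$, which yields injectivity and the distortion factor $1+O(\eps)=(1+O(r^2))^{1/2}$.

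Next take $\{\mat{c}_1,\dots,\mat{c}_N\} \subseteq P \cap B(\mat{x},t\eps)$ maximal with $\norm{\mat{c}_i-\mat{c}_j} > \eps$ for $i\neq j$. Maximality forces every point of $P \cap B(\mat{x},t\eps)$ to lie within distance $\eps$ of some $\mat{c}_i$, so $P\cap B(\mat{x},t\eps) \subseteq \bigcup_i B(\mat{c}_i,\eps)$, and hence $|P\cap B(\mat{x},t\eps)| \le N\kappa$ because each $d$-ball of radius $\eps$ contains at most $\kappa$ points of $P$. To bound $N$: by (ii) the points $\pi(\mat{c}_i)$ are pairwise more than $\eps/(1+O(\eps))$ apart, so the $m$-balls of radius $\frac{\eps}{2(1+O(\eps))}$ about them are pairwise disjoint and, by (i), all lie in the concentric $m$-ball of radius $r+\frac{\eps}{2(1+O(\eps))}$; comparing $m$-volumes gives $N \le \bigl(\tfrac{2r}{\eps}(1+O(\eps))+1\bigr)^m = \bigl(4t(1+O(\eps))+1\bigr)^m$. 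Since $t\eps \le \sqrt{\eps/2}$, the term $4t\cdot O(\eps)=O(\sqrt{\eps})$ is absorbed into the constants, and we conclude $|P\cap B(\mat{x},t\eps)| \le (4t+1)^m\kappa$.

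The only non-routine ingredient is fact (ii): the injectivity and near-isometry of the tangent-plane projection on a patch of radius $\Theta(\sqrt{\eps})$. This is precisely where unit reach enters, and it is what reduces the exponent from $d$ to $m$; everything else is the standard maximal $\eps$-packing argument, and the only real care is to keep the lower-order terms from inflating the base $4t+1$ of the exponential beyond what the admissible range of $t$ allows.
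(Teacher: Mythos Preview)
Your approach is essentially the paper's: choose a maximal $\eps$-separated subset, project it onto an $m$-dimensional tangent plane, and bound its size by an $m$-dimensional volume comparison. The paper centers the projection at $\nu(\mat{x})$, picks the separated set from $\mani\cap B(\mat{x},t\eps)$, and uses only the crude consequence of Lemma~\ref{lem:basic}(i) that projected points stay at least $\eps/2$ apart; you center at a sample point $\mat{p}$, pick the separated set from $P\cap B(\mat{x},t\eps)$, and work harder to get a $1+O(\eps)$ distortion bound for $\pi$. These are cosmetic differences.

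One loose end: after passing to the patch $\mani\cap B(\mat{p},2t\eps)$, your packing yields $N\le\bigl(4t(1+O(\eps))+1\bigr)^m$, and the claim that the extra $4t\cdot O(\eps)=O(\sqrt{\eps})$ ``is absorbed into the constants'' is not quite right --- the stated bound $(4t+1)^m\kappa$ has no unspecified constant to absorb it. The easy fix is to notice that you never needed the radius $r=2t\eps$: the $\mat{c}_i$ all lie in $B(\mat{x},t\eps)$, so their projections lie in the $m$-ball of radius $t\eps$ about $\pi(\mat{x})$ (projection is $1$-Lipschitz), and the same volume comparison then gives $N\le\bigl(2t(1+O(\eps))+1\bigr)^m\le(4t+1)^m$ once $\eps$ is below an absolute constant.
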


\begin{proof}
We first show an upper bound on the minimum number of balls with radii $\eps$ such that their union contains $\mani \cap B(\mat{x}, t \eps)$, which will imply the desired result. 
We pick a maximal set $S$ of points in $\mani \cap B(\mat{x}, t \eps)$ such that any two of them are at distance $\eps$ or more apart. 
It implies that $\mani \cap B(\mat{x}, t \eps) \subseteq \cup_{\mat{z} \in S} B(\mat{z}, \eps)$. 
Otherwise there exists a point $\mat{z} \in \mani \cap B(\mat{x}, t \eps)$ such that the distance between $\mat{z}$ and $S$ is larger than $\eps$, then we can get a larger set by adding $\mat{z}$ to $S$, a contradiction to the definition of $S$.  
Let $S'$ denote the projection of $S$ onto $\mat{x} + T_{\nu(\mat{x})}$. By Lemma~\ref{lem:basic}(i), the distance between any two points in $S'$ is at least $\eps - (t\eps)^2 \geq \eps/2$ when $t \leq \frac{1}{\sqrt{2\eps}}$. Thus, any two balls centered at points in $S'$ with radius $\eps/4$ are interior-disjoint. 
Since the projection of $\mani \cap B(\mat{x}, t \eps)$ into $\mat{x} + T_{\nu(\mat{x})}$ is contained in $(\mat{x} + T_{\nu(\mat{x})}) \cap B(\mat{x}, t \eps)$, $|S'|$ is no more than the size of a maximal packing of interior-disjoint $m$-dimensional balls with radius $\eps/4$ in $(\mat{x} + T_{\nu(\mat{x})}) \cap B(\mat{x}, t \eps + \eps/4)$, which 
is at most the volume of $(\mat{x} + T_{\nu(\mat{x})}) \cap B(\mat{x},t\eps + \eps/4)$ divided by $(\eps/4)^m V_m$, where $V_m$ is the volume of a unit $m$-ball.  Thus, $|S| = |S'| \leq \frac{(t\eps + \eps/4)^m}{(\eps/4)^m} = (4t+1)^m$.  
Then, $| P \cap B(\mat{x}, t \eps) | \leq (4t+1)^m\kappa$ by the definition of uniform $(\eps,\kappa)$-sampling.	
\end{proof}

Partition a square matrix $\mat{K}$ into blocks:
\[
\begin{pmatrix}
\mat{K}_{11} & \cdots & \mat{K}_{1r} \\
\vdots       & \ddots & \vdots \\
\mat{K}_{r1} & \cdots & \mat{K}_{rr}
\end{pmatrix}
\]
The matrices $\mat{K}_{ii}$ are square, but they may have different dimensions.
For $j \not= i$, $\mat{K}_{ij}$ may be square or rectangular.  For any $i,j,k
\in [1,r]$, $\mat{K}_{ik}$ and $\mat{K}_{jk}$ have the same number of columns
and $\mat{K}_{ij}$ and $\mat{K}_{ik}$ have the same number of rows.  Each row
of blocks $\begin{pmatrix}\mat{K}_{i1} & \cdots &
\mat{K}_{ir}\end{pmatrix}$ defines a \emph{generalized gershgorin set} $G_i$ as
follows.  Let $n_i$ be the dimension of $\mat{K}_{ii}$.  
\[
G_i = \left\{\mu \in \real : 
\frac{1}{\norm{(\mat{K}_{ii} -
\mu\mat{I}_{n_i})^{-1}}} \leq \sum_{j \not= i} \norm{\mat{K}_{ij}}\right\}  
\]
It
follows that the numbers in $G_i$ are at least the smallest eigenvalue of
$\mat{K}_{ii}$ minus $\sum_{i \not= j} \norm{\mat{K}_{ij}}$ and at most the
maximum eigenvalue of $\mat{K}_{ii}$ plus $\sum_{i \not= j}
\norm{\mat{K}_{ij}}$.  The eigenvalues of $\mat{K}_{ii}$ are defined to be in
$G_i$ using a continuity argument~\cite{gc}.

\begin{lemma}{\em (\cite{gc})}
\label{lem:gc}
Consider any partition of a square matrix $\mat{K}$ into blocks.
Every
eigenvalue of $\mat{K}$ lies in some generalized gershgorin set $G_i$ with
respect to this partition.  Moreover, if a generalized gershgorin set $G_i$
is disjoint from the union of the other generalized gershgorin sets, then
$G_i$ contains exactly $n_i$ eigenvalues of $\mat{K}$, where $n_i$ is the
dimension of $\mat{K}_{ii}$.
\end{lemma}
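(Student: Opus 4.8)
The plan is to prove the inclusion claim directly and then bootstrap it to the counting claim through a homotopy that switches off the off-diagonal blocks. For the inclusion, let $\lambda$ be an eigenvalue of $\mat{K}$ and $\mat{v}\neq\mat{0}$ a corresponding eigenvector, partitioned into blocks $\mat{v}_1,\dots,\mat{v}_r$ conformally with the block partition of $\mat{K}$; choose an index $i$ with $\norm{\mat{v}_i}=\max_j\norm{\mat{v}_j}$, so $\norm{\mat{v}_i}>0$. The $i$-th block row of $\mat{K}\mat{v}=\lambda\mat{v}$ reads $(\mat{K}_{ii}-\lambda\mat{I}_{n_i})\mat{v}_i=-\sum_{j\neq i}\mat{K}_{ij}\mat{v}_j$. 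If $\mat{K}_{ii}-\lambda\mat{I}_{n_i}$ is singular, then $\lambda$ is an eigenvalue of $\mat{K}_{ii}$, hence in $G_i$ by the convention recalled just before the statement. Otherwise, solving for $\mat{v}_i$, taking $2$-norms, and using submultiplicativity together with $\norm{\mat{v}_j}\leq\norm{\mat{v}_i}$ for every $j$ gives $\norm{\mat{v}_i}\leq\norm{(\mat{K}_{ii}-\lambda\mat{I}_{n_i})^{-1}}\bigl(\sum_{j\neq i}\norm{\mat{K}_{ij}}\bigr)\norm{\mat{v}_i}$; dividing by $\norm{\mat{v}_i}$ yields the defining inequality of $G_i$, so $\lambda\in G_i$.

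For the counting claim, fix $i$ with $G_i$ disjoint from $\bigcup_{j\neq i}G_j$. I would carry out the argument for $\mat{K}$ real symmetric, which is the case used in this paper and the setting in which the definition $G_i\subseteq\real$ is literal (the general case only replaces intervals by planar disks and eigenvalue continuity by continuity of the roots of the characteristic polynomial); then all eigenvalues are real. Set $\mat{D}=\mathrm{diag}(\mat{K}_{11},\dots,\mat{K}_{rr})$ and $\mat{K}(t)=\mat{D}+t(\mat{K}-\mat{D})$ for $t\in[0,1]$, a symmetric matrix whose diagonal blocks are those of $\mat{K}$ and whose $(i,j)$ off-diagonal block is $t\mat{K}_{ij}$. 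Let $G_i(t)$ denote the generalized gershgorin set of the $i$-th block row of $\mat{K}(t)$. Since $\sum_{j\neq i}\norm{t\mat{K}_{ij}}$ is nondecreasing in $t$, the defining inequality gives $G_i(s)\subseteq G_i(t)\subseteq G_i(1)=G_i$ whenever $s\leq t$, so $G_i(t)$ is disjoint from $\bigcup_{j\neq i}G_j(t)$ for every $t$. Also each $G_i$ is compact: the function $\mu\mapsto 1/\norm{(\mat{K}_{ii}-\mu\mat{I}_{n_i})^{-1}}$ extends to $\sigma_{\min}(\mat{K}_{ii}-\mu\mat{I}_{n_i})$, which is continuous in $\mu$ and tends to $+\infty$ as $|\mu|\to\infty$. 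As $G_i$ and the compact set $\bigcup_{j\neq i}G_j$ are disjoint, they are at distance $2\delta>0$ for some $\delta>0$; let $U\subseteq\real$ be the open $\delta$-neighborhood of $G_i$. Then $G_i(t)\subseteq G_i\subseteq U$ while $\bigcup_{j\neq i}G_j(t)\subseteq\bigcup_{j\neq i}G_j$ is disjoint from $\overline{U}$, so by the inclusion claim applied to $\mat{K}(t)$ every eigenvalue of $\mat{K}(t)$ lies in $U$ or outside $\overline{U}$, hence never on $\partial U$, for all $t\in[0,1]$.

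The step I expect to be the main obstacle is converting this into an exact count, and here I would invoke the continuity of the spectrum of a symmetric matrix. Writing the eigenvalues of $\mat{K}(t)$, with multiplicity, as $\lambda_1(t)\leq\dots\leq\lambda_n(t)$, Weyl's perturbation inequality $|\lambda_k(t)-\lambda_k(s)|\leq\norm{\mat{K}(t)-\mat{K}(s)}$ shows each $\lambda_k$ is continuous on $[0,1]$. Since $\real\setminus\partial U=U\sqcup(\real\setminus\overline{U})$ is a disjoint union of open sets and no $\lambda_k(t)$ meets $\partial U$, each $\lambda_k$ has connected image lying in a single piece, hence stays entirely in $U$ or entirely in $\real\setminus\overline{U}$ for all $t$; therefore $N(t):=\#\{k:\lambda_k(t)\in U\}$ is constant on $[0,1]$. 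At $t=0$ the matrix $\mat{K}(0)=\mat{D}$ has spectrum (with multiplicity) equal to the union of the spectra of the $\mat{K}_{kk}$, its off-diagonal blocks vanish so $G_i(0)$ equals the spectrum of $\mat{K}_{ii}\subseteq U$ while $G_k(0)\subseteq\real\setminus\overline{U}$ for $k\neq i$; thus $N(0)=n_i$. Hence $N(1)=n_i$, i.e.\ $\mat{K}$ has exactly $n_i$ eigenvalues in $U$; and since any eigenvalue of $\mat{K}$ in $U$ must lie in $G_i$ (by the inclusion claim it lies in some $G_j$, and $G_j\cap U=\emptyset$ for $j\neq i$) while $G_i\subseteq U$, those $n_i$ eigenvalues are exactly the eigenvalues of $\mat{K}$ in $G_i$. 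This is the assertion.
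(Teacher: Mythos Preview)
The paper does not prove this lemma at all; it is quoted directly from Feingold and Varga~\cite{gc} and used as a black box. Your proposal supplies exactly the classical argument from that source: the block-row with maximal eigenvector component for the inclusion, and the homotopy $\mat{K}(t)=\mat{D}+t(\mat{K}-\mat{D})$ together with eigenvalue continuity for the count. The proof is correct, and your explicit restriction to the real symmetric case (so that Weyl's inequality gives continuity of the ordered eigenvalues and the sets $G_i\subseteq\real$ are literally as defined in the paper) is appropriate here since every application of Lemma~\ref{lem:gc} in this paper is to the symmetric matrices $\mat{C}_\mat{x}$ and $\mat{C}_\mat{x}+\Delta\mat{C}_\mat{x}$.
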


\begin{lemma}{\em (\cite{golub})}
\label{lem:matrix}
Let $(\mat{U} \,\,\, \mat{V})$ be a $d \times d$ orthogonal matrix,
where $\mat{U}$ is $d \times r$ and $\mat{V}$ is $d \times (d-r)$.  Let
$\mat{K}$ be a $d \times r$ matrix with orthogonal unit columns.  Then, $\angle
(\col{\mat{U}},\col{\mat{K}}) = \arcsin(\norm{\mat{V}^t \cdot \mat{K}})$.
\end{lemma}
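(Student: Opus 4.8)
The plan is to compute $\angle(\col{\mat{U}},\col{\mat{K}})$ straight from the definition of the angle between subspaces and turn it into a spectral statement about the $r\times r$ matrix $\mat{U}^t\mat{K}$. Since $(\mat{U}\,\,\,\mat{V})$ is orthogonal we have $\mat{U}^t\mat{U}=\mat{I}_r$, and $\mat{K}^t\mat{K}=\mat{I}_r$ by hypothesis, so $\col{\mat{U}}$ and $\col{\mat{K}}$ are both $r$-dimensional and $\angle(\col{\mat{U}},\col{\mat{K}}) = \max\{\angle(\mat{v},\col{\mat{K}}) : \mat{v}\in\col{\mat{U}},\ \norm{\mat{v}}=1\}$. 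First I would note that for a unit vector $\mat{v}$ the orthogonal projection onto $\col{\mat{K}}$ is $\mat{K}\mat{K}^t\mat{v}$ (using $\mat{K}^t\mat{K}=\mat{I}_r$), that $\mat{v}-\mat{K}\mat{K}^t\mat{v}$ is orthogonal to $\col{\mat{K}}$, and hence the right triangle with hypotenuse $\mat{v}$ gives $\sin\angle(\mat{v},\col{\mat{K}}) = \norm{(\mat{I}_d-\mat{K}\mat{K}^t)\mat{v}}$. Because $\mat{a}\mapsto\mat{U}\mat{a}$ is a norm-preserving bijection from the unit sphere of $\real^r$ onto the unit sphere of $\col{\mat{U}}$, maximizing over $\mat{v}$ amounts to maximizing $\norm{(\mat{I}_d-\mat{K}\mat{K}^t)\mat{U}\mat{a}}$ over $\norm{\mat{a}}=1$, i.e. $\sin\angle(\col{\mat{U}},\col{\mat{K}}) = \norm{(\mat{I}_d-\mat{K}\mat{K}^t)\mat{U}}$, with the angle lying in $[0,\pi/2]$.

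It then remains to prove the identity $\norm{(\mat{I}_d-\mat{K}\mat{K}^t)\mat{U}} = \norm{\mat{V}^t\mat{K}}$. Setting $\mat{P}=\mat{I}_d-\mat{K}\mat{K}^t$, which is symmetric and (again by $\mat{K}^t\mat{K}=\mat{I}_r$) idempotent, I would write $\norm{\mat{P}\mat{U}}^2 = \norm{\mat{U}^t\mat{P}^t\mat{P}\mat{U}} = \norm{\mat{U}^t\mat{P}\mat{U}} = \norm{\mat{I}_r-(\mat{U}^t\mat{K})(\mat{U}^t\mat{K})^t}$, and similarly, using $\mat{U}\mat{U}^t+\mat{V}\mat{V}^t=\mat{I}_d$ (orthogonality of $(\mat{U}\,\,\,\mat{V})$), $\norm{\mat{V}^t\mat{K}}^2 = \norm{\mat{K}^t\mat{V}\mat{V}^t\mat{K}} = \norm{\mat{I}_r-(\mat{U}^t\mat{K})^t(\mat{U}^t\mat{K})}$. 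Since $(\mat{U}^t\mat{K})(\mat{U}^t\mat{K})^t$ and $(\mat{U}^t\mat{K})^t(\mat{U}^t\mat{K})$ are $r\times r$ with the same eigenvalues (the squared singular values of $\mat{U}^t\mat{K}$, all in $[0,1]$ because $\norm{\mat{U}^t\mat{K}}\le\norm{\mat{U}^t}\,\norm{\mat{K}}=1$), the two symmetric matrices obtained by subtracting them from $\mat{I}_r$ have equal operator norm. Hence $\sin^2\angle(\col{\mat{U}},\col{\mat{K}}) = \norm{\mat{V}^t\mat{K}}^2$, and taking square roots and applying $\arcsin$ (a bijection $[0,1]\to[0,\pi/2]$) yields the claim.

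The argument is short; the only real content is that single identity $\norm{(\mat{I}_d-\mat{K}\mat{K}^t)\mat{U}} = \norm{\mat{V}^t\mat{K}}$, which is the CS-decomposition fact that the ``two halves'' $\mat{U},\mat{V}$ of an orthogonal matrix record the cosines and sines of the principal angles between $\col{\mat{U}}$ and $\col{\mat{K}}$. Everything else is bookkeeping with the projections $\mat{K}\mat{K}^t$, $\mat{U}\mat{U}^t$ and the spectral equivalence of $\mat{X}\mat{X}^t$ and $\mat{X}^t\mat{X}$ for square $\mat{X}$. An equivalent route, if one prefers to avoid operator norms of products: parametrize $\mat{v}=\mat{U}\mat{a}$, observe $\cos^2\angle(\mat{v},\col{\mat{K}})=\norm{\mat{K}^t\mat{U}\mat{a}}^2$, use $\norm{\mat{U}^t\mat{K}\mat{b}}^2+\norm{\mat{V}^t\mat{K}\mat{b}}^2=\norm{\mat{b}}^2$ for all $\mat{b}$, and evaluate at the extremal right singular vectors of $\mat{U}^t\mat{K}$ and of $\mat{V}^t\mat{K}$ to conclude $1-\sigma_{\min}(\mat{U}^t\mat{K})^2=\sigma_{\max}(\mat{V}^t\mat{K})^2$; here the squareness of $\mat{U}^t\mat{K}$ is what makes ``$\sigma_{\min}$ equals the minimum of $\norm{\mat{U}^t\mat{K}\mat{b}}$ over the unit sphere'' legitimate.
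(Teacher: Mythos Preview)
Your argument is correct. The paper does not actually prove this lemma; it is quoted as a known fact from Golub and van Loan's \emph{Matrix Computations} (reference~\cite{golub}), so there is no ``paper's own proof'' to compare against. Your route via the projector $\mat{P}=\mat{I}_d-\mat{K}\mat{K}^t$ and the identity $\norm{\mat{P}\mat{U}}^2=\norm{\mat{I}_r-(\mat{U}^t\mat{K})(\mat{U}^t\mat{K})^t}=\norm{\mat{I}_r-(\mat{U}^t\mat{K})^t(\mat{U}^t\mat{K})}=\norm{\mat{V}^t\mat{K}}^2$ is exactly the standard CS-decomposition reasoning behind the cited result, and all steps are valid (in particular, $\mat{U}^t\mat{K}$ being square $r\times r$ is what makes the two Gram matrices share the full spectrum).
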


\begin{lemma}{\em (\cite[Lemma~1.1]{EI94})}
\label{lem:slant}
Let $\mat{M}_1$ be an $s \times s$ real symmetric matrix with eigenvalues
$\lambda_1,\ldots,\lambda_s$ in an arbitrary order.  Let $\mat{v}_i$ denote a
unit eigenvector of $\mat{M}_1$ corresponding to $\lambda_i$.  If $\mat{M}_1 +
\mat{M}_2$ is a real symmetric matrix, $\sigma$ is an eigenvalue of $\mat{M}_1
+ \mat{M}_2$, and $\mat{e}$ is a unit eigenvector of $\mat{M}_1 + \mat{M}_2$
corresponding to $\sigma$, then for every $r \in [1,s-1]$, the angle between
$\mat{e}$ and the space spanned by $\{\mat{v}_1,\ldots,\mat{v}_r\}$ is at most
$\arcsin\left(\norm{\mat{M}_2}/\min_{i \in [r+1,s]}
|\lambda_i - \sigma|\right)$.
\end{lemma}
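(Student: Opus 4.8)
The plan is to expand the eigenvector $\mat{e}$ of $\mat{M}_1+\mat{M}_2$ in an orthonormal eigenbasis of $\mat{M}_1$ and to bound the portion of $\mat{e}$ that falls outside $\mathrm{span}\{\mat{v}_1,\dots,\mat{v}_r\}$ using the separation of $\lambda_{r+1},\dots,\lambda_s$ from $\sigma$. Since $\mat{M}_1$ is real symmetric, I would first assume (at no cost) that $\mat{v}_1,\dots,\mat{v}_s$ are orthonormal: for a repeated eigenvalue one simply picks an orthonormal basis of its eigenspace, and this is harmless because the statement only involves the span of the first $r$ of them and the quantity $\min_{i\in[r+1,s]}|\lambda_i-\sigma|$, neither of which depends on the choice. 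Write $\mat{e} = \sum_{i=1}^{s} c_i\mat{v}_i$, so $\sum_i c_i^2 = \norm{\mat{e}}^2 = 1$. The orthogonal projection of $\mat{e}$ onto $\mathrm{span}\{\mat{v}_1,\dots,\mat{v}_r\}$ is $\sum_{i=1}^r c_i\mat{v}_i$, whence the angle $\theta$ between $\mat{e}$ and that subspace satisfies $\cos^2\theta = \sum_{i=1}^r c_i^2$, i.e. $\sin^2\theta = \sum_{i=r+1}^s c_i^2$.

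Next I would extract the two inequalities that drive the estimate. From $\mat{M}_1\mat{v}_i = \lambda_i\mat{v}_i$ and $(\mat{M}_1+\mat{M}_2)\mat{e} = \sigma\mat{e}$,
\[
\mat{M}_2\mat{e} \;=\; \sigma\mat{e} - \mat{M}_1\mat{e} \;=\; \sum_{i=1}^{s}(\sigma-\lambda_i)\,c_i\,\mat{v}_i,
\]
so by orthonormality of the $\mat{v}_i$,
\[
\norm{\mat{M}_2\mat{e}}^2 \;=\; \sum_{i=1}^{s}(\sigma-\lambda_i)^2 c_i^2 \;\ge\; \Bigl(\min_{i\in[r+1,s]}|\lambda_i-\sigma|\Bigr)^{2}\sum_{i=r+1}^{s}c_i^2 \;=\; \Bigl(\min_{i\in[r+1,s]}|\lambda_i-\sigma|\Bigr)^{2}\sin^2\theta .
\]
On the other hand $\norm{\mat{M}_2\mat{e}} \le \norm{\mat{M}_2}\,\norm{\mat{e}} = \norm{\mat{M}_2}$ by the definition of the operator norm. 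Combining the two bounds gives $\bigl(\min_{i\in[r+1,s]}|\lambda_i-\sigma|\bigr)^{2}\sin^2\theta \le \norm{\mat{M}_2}^2$, and taking square roots yields $\sin\theta \le \norm{\mat{M}_2}/\min_{i\in[r+1,s]}|\lambda_i-\sigma|$, i.e. $\theta \le \arcsin\!\bigl(\norm{\mat{M}_2}/\min_{i\in[r+1,s]}|\lambda_i-\sigma|\bigr)$, which is the claim.

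There is no real obstacle here; the only point needing care is the degenerate case in which $\min_{i\in[r+1,s]}|\lambda_i-\sigma| = 0$, or more generally in which the ratio on the right-hand side is at least $1$. Then the asserted bound is vacuous (every angle is at most $\pi/2$), and the argument above still goes through because its final inequality is trivially satisfied; I would just note the convention that $\arcsin$ of a number $\ge 1$ denotes $\pi/2$. The orthonormal-basis choice for repeated eigenvalues of $\mat{M}_1$, already handled in the first paragraph, is the other thing worth an explicit sentence.
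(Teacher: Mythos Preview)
Your proof is correct and is the standard argument for this perturbation bound. Note, however, that the paper does not supply its own proof of this lemma: it is quoted from \cite[Lemma~1.1]{EI94} without proof, so there is nothing in the paper to compare your argument against. One small quibble: your justification that assuming orthonormality of the $\mat{v}_i$ ``is harmless because \dots the span of the first $r$ of them \dots does not depend on the choice'' is not literally true when eigenvalues repeat (different choices of unit eigenvectors within a repeated eigenspace can change $\mathrm{span}\{\mat{v}_1,\dots,\mat{v}_r\}$); the correct point is simply that the lemma is intended, and is only true, under the reading that $\mat{v}_1,\dots,\mat{v}_s$ form an orthonormal eigenbasis of $\mat{M}_1$, which is how it is used in the paper and in the cited source.
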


\begin{lemma}
\label{lem:choice}
Let $V$ and $W$ be two linear subspaces of the same dimension $k$ in
$\real^d$ such that $\theta = \angle (V,W) < \pi/2$.
\begin{emromani}

\item For each orthonormal basis $\{\mat{v}_1,\ldots,\mat{v}_k\}$ of $V$,
there exists an orthonormal basis $\{\mat{w}_1,\ldots,\mat{w}_k\}$ of $W$ such
that $\angle (\mat{v}_i,\mat{w}_i) \leq \theta$ for $i \in [1,k]$ and $\angle
(\mat{v}_i, \mat{w}_j-\mat{v}_j) \in
\left[\frac{\pi-\theta}{2},\frac{\pi+\theta}{2}\right]$ for $i,j \in [1,k]$.

\item If $k > d/2$, then there exist orthonormal bases
$\{\mat{v}_1,\ldots,\mat{v}_k\}$ and $\{\mat{w}_1,\ldots,\mat{w}_k\}$ of $V$
and $W$, respectively, such that $\mat{v}_i = \mat{w}_i$ for $i \in [1,2k-d]$,
$\angle (\mat{v}_i,\mat{w}_i) \leq \theta$ for $i \in [1,k]$, and $\angle
(\mat{v}_i, \mat{w}_j-\mat{v}_j) \in
\left[\frac{\pi-\theta}{2},\frac{\pi+\theta}{2}\right]$ for $i,j \in [1,k]$.  Hence,
for any distinct $i$ and $j$, if $i \in [1,2k-d]$ or $j \in [1,2k-d]$, then
$\mat{v}_i \perp \mat{w}_j$.

\end{emromani}
\end{lemma}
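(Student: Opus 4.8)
The plan is to reduce both parts to the \emph{principal-angle} (canonical-angle) decomposition of the pair $(V,W)$, which is classical (see, e.g., \cite{golub}): there exist orthonormal bases $\{\mat{a}_1,\dots,\mat{a}_k\}$ of $V$ and $\{\mat{b}_1,\dots,\mat{b}_k\}$ of $W$ and angles $0\le\theta_1\le\cdots\le\theta_k$ with $\mat{a}_l^t\mat{b}_{l'}=\cos\theta_l$ when $l=l'$ and $\mat{a}_l^t\mat{b}_{l'}=0$ otherwise. I would use two standard facts about this decomposition: $\theta_k=\angle(V,W)=\theta$ (so the hypothesis $\theta<\pi/2$ forces $\cos\theta_l\ge\cos\theta>0$ for every $l$), and the number of indices $l$ with $\theta_l=0$ equals $\di{V\cap W}$, with each such $\mat{a}_l$ necessarily equal to $\mat{b}_l$; ordering the $\theta_l$ increasingly puts the zero angles first.

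For part (i), starting from the \emph{given} orthonormal basis $\{\mat{v}_i\}$ of $V$, I would write $\mat{v}_i=\sum_l q_{li}\mat{a}_l$ for an orthogonal $k\times k$ matrix $\mat{Q}=(q_{li})$ and \emph{define} $\mat{w}_i=\sum_l q_{li}\mat{b}_l$; orthogonality of $\mat{Q}$ together with $\mat{b}_l^t\mat{b}_{l'}=\delta_{ll'}$ shows $\{\mat{w}_i\}$ is an orthonormal basis of $W$ (equivalently, $\mat{w}_i$ is the image of $\mat{v}_i$ under the isometry $V\to W$ in the polar decomposition of the orthogonal projection of $V$ onto $W$, which is invertible since $\theta<\pi/2$). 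Then two short computations remain: first, $\mat{v}_i^t\mat{w}_i=\sum_l q_{li}^2\cos\theta_l\ge\cos\theta$ because $\sum_l q_{li}^2=1$, giving $\angle(\mat{v}_i,\mat{w}_i)\le\theta$; second, writing $c_l:=1-\cos\theta_l\in[0,\,1-\cos\theta]$ one gets $\mat{v}_i^t(\mat{w}_j-\mat{v}_j)=-\sum_l q_{li}q_{lj}c_l$ and $\norm{\mat{w}_j-\mat{v}_j}^2=2\sum_l q_{lj}^2 c_l$, so Cauchy--Schwarz applied to the vectors $(q_{li}\sqrt{c_l})_l$ and $(q_{lj}\sqrt{c_l})_l$ yields
\[
\bigl|\cos\angle(\mat{v}_i,\mat{w}_j-\mat{v}_j)\bigr|\ \le\ \tfrac{1}{\sqrt2}\Bigl(\textstyle\sum_l q_{li}^2 c_l\Bigr)^{1/2}\ \le\ \tfrac{1}{\sqrt2}\sqrt{1-\cos\theta}\ =\ \sin\tfrac\theta2\ =\ \cos\tfrac{\pi-\theta}{2},
\]
which is exactly the asserted range for the angle (and the condition is vacuous when $\mat{w}_j=\mat{v}_j$).

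For part (ii), since $k>d/2$ we have $\di{V\cap W}=\di V+\di W-\di{V+W}\ge 2k-d$, so after the increasing ordering $\theta_1=\cdots=\theta_{2k-d}=0$ and $\mat{a}_l=\mat{b}_l$ for $l\le 2k-d$. Applying part (i) with the canonical basis $\{\mat{a}_i\}$ itself as the given orthonormal basis of $V$ (so that $\mat{Q}=\mat{I}_k$ and $\mat{w}_i=\mat{b}_i$) already supplies the two angle bounds; moreover $\mat{v}_i=\mat{w}_i$ for $i\le 2k-d$, and the final orthogonality assertion is then immediate, since $\mat{v}_i=\mat{w}_i\perp\mat{w}_j$ when $i\le 2k-d$ and $\mat{w}_j=\mat{v}_j\perp\mat{v}_i$ when $j\le 2k-d$.

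The work here is bookkeeping rather than ideas. The one delicate point is that part (i) must start from the \emph{arbitrary} given orthonormal basis of $V$ while still controlling every angle: this is exactly what the rotation $\mat{Q}$ accomplishes, its orthogonality being what makes the Cauchy--Schwarz bound above collapse to $\tfrac1{\sqrt2}\sqrt{1-\cos\theta}$ independently of $\mat{Q}$. The remaining care is in citing, rather than reproving, the two structural facts about canonical angles used above, namely $\theta_k=\angle(V,W)$ and $\#\{l:\theta_l=0\}=\di{V\cap W}$.
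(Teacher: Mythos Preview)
Your proof is correct and follows the same overall route as the paper: both use the principal-angle decomposition, transfer the given basis of $V$ to $W$ via the same orthogonal matrix $\mat{Q}$, and verify $\angle(\mat{v}_i,\mat{w}_i)\le\theta$ identically. The only real difference is the second angle bound: the paper observes that $\mat{v}_i+\mat{w}_i$ bisects the angle between $\mat{v}_i$ and $\mat{w}_i$ and then checks the exact orthogonality $(\mat{v}_i+\mat{w}_i)^t(\mat{w}_j-\mat{v}_j)=0$ directly from $\mat{a}_l\perp\mat{b}_{l'}$ for $l\ne l'$, which immediately gives $\bigl|\tfrac{\pi}{2}-\angle(\mat{v}_i,\mat{w}_j-\mat{v}_j)\bigr|\le\tfrac{\theta}{2}$; you instead compute the cosine explicitly and bound it by Cauchy--Schwarz. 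Both arguments are short and yield the same conclusion; the paper's is a touch slicker (an equality rather than an inequality does the work), while yours is more mechanical and avoids spotting the bisector trick. For part~(ii) your choice $\mat{Q}=\mat{I}_k$ is a clean specialization of the paper's construction.
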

\begin{proof}
We make use of principal angles and principal vectors~\cite{bjorck,galantai,miao}.  Pick unit vectors $\mat{a}_1 \in V$ and $\mat{b}_1 \in W$ that minimizes $\angle(\mat{a}_1,\mat{b}_1)$.  For $i \in [2,k]$, pick unit vectors $\mat{a}_i \in V$ and $\mat{b}_i \in W$ that minimizes $\angle(\mat{a}_i,\mat{b}_i)$ subject to $\mat{a}_i \perp \mat{a}_j$ and $\mat{b}_i \perp \mat{b}_j$ for all $j \in [1,i-1]$.  The angles $\angle(\mat{a}_1,\mat{b}_1),\ldots,\angle(\mat{a}_k,\mat{b}_k)$ are called the principal angles.  The vectors $\{\mat{a}_1,\ldots,\mat{a}_k\}$ and $\{\mat{b}_1,\ldots,\mat{b}_k\}$ are called principal vectors.  Note that $\{\mat{a}_1,\ldots,\mat{a}_k\}$ and $\{\mat{b}_1,\ldots,\mat{b}_k\}$ are orthonormal bases of $V$ and $W$, respectively.  The alternative definition of principal angles in~\cite{galantai} implies that for $i \in [1,k]$, $\theta_i \leq \theta = \angle (V,W)$.  It is also known that $\mat{a}_i \perp \mat{b}_j$ for $i \not= j$~\cite{bjorck,galantai}.

Consider (i).  Given an orthonormal basis $\{\mat{v}_1,\ldots,\mat{v}_k\}$ of $V$, for each $i \in [1,k]$, $\mat{v}_i = \sum_{r=1}^k c_{ir}\mat{a}_r$ for some real coefficients $c_{ir}$'s.  Correspondingly, define $\mat{w}_i = \sum_{r=1}^k c_{ir}\mat{b}_r$.  Note that $\norm{\mat{w}_i} = (\sum_{r=1}^k c_{ir}^2)^{1/2} = \norm{\mat{v}_i} = 1$.  Also, for $i \not= j$, $\mat{w}_i^t\mat{w}_j^{} = \sum_{r=1}^k c_{ir}c_{jr} = \mat{v}_i^t\mat{v}_j^{} = 0$.  So $\{\mat{w}_1,\ldots,\mat{w}_k\}$ is an orthonormal basis of $W$.

For $i \in [1,k]$, $\mat{v}_i^t\mat{w}_i^{} = \sum_{r=1}^k c_{ir}^2 \mat{a}_r^t\mat{b}_r^{} \geq \cos\theta$ because $\angle(\mat{a}_r,\mat{b}_r) \leq \theta$ and $\sum_{r=1}^k c_{ir}^2 = \norm{\mat{v}_i} = 1$.  It follows that $\angle(\mat{v}_i,\mat{w}_i) \leq \theta$.  Since $\mat{v}_i$ and $\mat{w}_i$ are unit vectors and $\angle (\mat{v}_i,\mat{w}_i) \leq \theta$, $\mat{v}_i + \mat{w}_i$ is an angle bisector between $\mat{v}_i$ and $\mat{w}_i$. Hence, $\angle (\mat{v}_i,\mat{v}_i + \mat{w}_i) \leq \theta/2$.  It suffices to show that for any $i,j \in [1,k]$, $\mat{v}_i + \mat{w}_i \perp \mat{w}_j - \mat{v}_j$, which then implies that $\left|\frac{\pi}{2} - \angle (\mat{v}_i,\mat{w}_j-\mat{v}_j) \right| \leq \angle (\mat{v}_i,\mat{v}_i + \mat{w}_i) \leq \theta/2$, completing the proof of (i).  To see that $\mat{v}_i + \mat{w}_i \perp \mat{w}_j - \mat{v}_j$, we check $(\mat{v}_i + \mat{w}_i)^t \cdot (\mat{w}_j - \mat{v}_j) = \sum_{r=1}^k (c_{ir}\mat{a}_r + c_{ir}\mat{b}_r)^t \cdot \sum_{r=1}^k (c_{jr}\mat{b}_r - c_{jr}\mat{a}_r)$.  Recall that $\mat{a}_r$ and $\mat{b}_r$ are unit vectors and for $r \not=s$, $\mat{a}_r \perp \mat{a}_s$, $\mat{b}_r \perp \mat{b}_s$, and $\mat{a}_r \perp \mat{b}_s$.  Therefore, $\sum_{r=1}^k (c_{ir}\mat{a}_r + c_{ir}\mat{b}_r)^t \cdot \sum_{r=1}^k (c_{jr}\mat{b}_r - c_{jr}\mat{a}_r) = 0$.

Consider (ii).  Since $k > d/2$, the dimension of $V \cap W$ is at least $2k-d$.  Pick an arbitrary subset $\{\mat{u}_1,\ldots,\mat{u}_{2k-d}\}$ of the orthonormal basis of $V \cap W$.  Set $\mat{v}_i = \mat{w}_i = \mat{u}_i$ for $i \in [1,2k-d]$.  Complete $\{\mat{v}_1,\ldots,\mat{v}_{2k-d}\}$ arbitrarily to an orthonormal basis $\{\mat{v}_1,\ldots,\mat{v}_k\}$ of $V$.  Then, we construct $\mat{w}_j$ as the same way as in (i) for $j \in [2k-d+1,k]$.
\end{proof}

\begin{lemma}
\label{lem:angle}
Let $E_1$ and $E_2$ be two $k$-dimensional linear subspaces.  Let
$\{\mat{u}_1,\ldots,\mat{u}_k\}$ be a basis of $E_1$ consisting 
of unit vectors such that for any distinct $i,j \in [1,k]$, 
$\angle (\mat{u}_i,\mat{u}_j) \in [\pi/2-\phi,\pi/2+\phi]$ for
some $\phi \in \left[0,\arcsin\left(\frac{1}{k}\right)\right)$.  For any 
$\theta \in \left[0,\arcsin\left(\sqrt{\frac{1}{k}-\sin\phi}\right)\right)$,
if $\angle (\mat{u}_i,E_2) \leq \theta$ for all $i \in [1,k]$, then 
$\angle
(E_1,E_2) \leq \arctan\Bigl(\frac{\sqrt{k}\sin\theta}{\sqrt{1-k\sin^2\theta-k\sin\phi}}\Bigr)$.
\end{lemma}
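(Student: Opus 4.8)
The plan is to bound $\angle(E_1, E_2)$ by controlling, for an arbitrary unit vector $\mat{u} \in E_1$, the length of its component orthogonal to $E_2$. Write $\mat{u} = \sum_{i=1}^k \alpha_i \mat{u}_i$. The hypothesis $\angle(\mat{u}_i, \mat{u}_j) \in [\pi/2 - \phi, \pi/2 + \phi]$ means $|\mat{u}_i^t \mat{u}_j| \leq \sin\phi$ for $i \neq j$, so the Gram matrix $\mat{G}$ of $\{\mat{u}_1,\dots,\mat{u}_k\}$ has ones on the diagonal and off-diagonal entries of absolute value at most $\sin\phi$. Since $1 = \norm{\mat{u}}^2 = \alpha^t \mat{G}\, \alpha$ and, by Gershgorin, every eigenvalue of $\mat{G}$ is at least $1 - (k-1)\sin\phi > 1 - k\sin\phi$, we get $\norm{\alpha}^2 \leq \frac{1}{1 - k\sin\phi}$. (The constraint $\phi < \arcsin(1/k)$ guarantees this denominator is positive; I would double-check whether the intended bound uses $(k-1)\sin\phi$ or $k\sin\phi$, and state it to match the lemma's conclusion.)

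**Bounding the orthogonal component.**
Let $\Pi$ denote orthogonal projection onto $E_2^\perp$. For each $i$, $\angle(\mat{u}_i, E_2) \leq \theta$ gives $\norm{\Pi \mat{u}_i} \leq \sin\theta$. Then
\[
\norm{\Pi \mat{u}} = \Bigl\| \sum_{i=1}^k \alpha_i\, \Pi \mat{u}_i \Bigr\| \leq \sum_{i=1}^k |\alpha_i|\, \norm{\Pi \mat{u}_i} \leq \sin\theta \sum_{i=1}^k |\alpha_i| \leq \sin\theta \cdot \sqrt{k}\, \norm{\alpha} \leq \frac{\sqrt{k}\,\sin\theta}{\sqrt{1 - k\sin\phi}},
\]
using Cauchy--Schwarz ($\sum |\alpha_i| \leq \sqrt{k}\,\norm{\alpha}$) in the penultimate step. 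Call this bound $\beta$; the restriction $\theta < \arcsin\bigl(\sqrt{1/k - \sin\phi}\bigr)$ is exactly what ensures $\beta < 1$, i.e. $k\sin^2\theta + k\sin\phi < 1$, so the projection onto $E_2$ is nonzero and the angle below is well-defined and strictly less than $\pi/2$.

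**From $\norm{\Pi \mat{u}}$ to the angle.**
Since $\mat{u}$ is a unit vector, $\norm{\Pi \mat{u}} = \sin\angle(\mat{u}, E_2)$ and the component of $\mat{u}$ inside $E_2$ has length $\sqrt{1 - \norm{\Pi \mat{u}}^2} \geq \sqrt{1 - \beta^2}$. Hence
\[
\tan \angle(\mat{u}, E_2) = \frac{\norm{\Pi \mat{u}}}{\sqrt{1 - \norm{\Pi \mat{u}}^2}} \leq \frac{\beta}{\sqrt{1 - \beta^2}} = \frac{\sqrt{k}\,\sin\theta / \sqrt{1 - k\sin\phi}}{\sqrt{1 - \frac{k\sin^2\theta}{1 - k\sin\phi}}} = \frac{\sqrt{k}\,\sin\theta}{\sqrt{1 - k\sin\phi - k\sin^2\theta}},
\]
where the last equality is just algebraic simplification (multiply numerator and denominator by $\sqrt{1 - k\sin\phi}$). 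Since $\mat{u} \in E_1$ was arbitrary and $E_1, E_2$ have the same dimension, $\angle(E_1, E_2) = \max_{\mat{u} \in E_1} \angle(\mat{u}, E_2)$, giving the claimed bound after applying $\arctan$, which is monotone.

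**Main obstacle.**
The only delicate point is the lower bound on the eigenvalues of the Gram matrix $\mat{G}$ and tracking the exact constant ($k$ versus $k-1$ in front of $\sin\phi$) so that it lines up with the stated hypotheses $\phi < \arcsin(1/k)$ and $\theta < \arcsin(\sqrt{1/k - \sin\phi})$; everything else is a routine chain of Cauchy--Schwarz and trigonometric identities. If a sharper bound is wanted one could diagonalize $\mat{G}$ rather than use Gershgorin, but the Gershgorin estimate already suffices for the stated form.
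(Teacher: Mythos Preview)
Your argument is correct and reaches the same bound, but it is organized differently from the paper's proof. The paper chooses a vector $\mat{n}\in E_1$ realizing $\angle(E_1,E_2)$, flips the signs of the $\mat{u}_i$ so that $\mat{n}$ is a \emph{convex} combination $\sum_i\lambda_i\mat{u}_i$ with $\sum_i\lambda_i=1$, and then splits $\mat{n}$ into its $E_2$- and $E_2^\perp$-parts $\sum_i\lambda_i\mat{v}_i$ and $\sum_i\lambda_i\mat{w}_i$; the numerator is bounded by $\sin\theta$ via the triangle inequality, while the denominator is bounded below by expanding $\|\sum_i\lambda_i\mat{v}_i\|^2$, using $|\mat{v}_i^t\mat{v}_j|\le\sin^2\theta+\sin\phi$, and minimizing $\sum_i\lambda_i^2$ over the simplex (minimum $1/k$). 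You instead normalize by $\|\mat{u}\|=1$, bound $\|\alpha\|^2$ via a Gershgorin estimate on the Gram matrix of the $\mat{u}_i$, control the $E_2^\perp$-part directly by Cauchy--Schwarz, and recover the $E_2$-part from $\|\mat{u}\|=1$. Your route is a bit cleaner: it avoids the sign-flipping/convex-combination step and replaces the explicit quadratic-form expansion by a one-line eigenvalue bound; the paper's route, on the other hand, makes the role of the cross terms $\mat{v}_i^t\mat{v}_j$ more explicit. Your parenthetical about $k$ versus $k-1$ is harmless, since $1-(k-1)\sin\phi\ge 1-k\sin\phi$, so passing to the weaker denominator only loosens the bound and matches the lemma as stated.
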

\begin{proof}
Orient space such that $E_2$ is spanned by the first $k$ coordinate axes
of $\real^d$.
Then, for all $i \in [1,k]$, we can write 
\[
\mat{u}_i = \begin{pmatrix}
\mat{v}_i \\
\mat{w}_i
\end{pmatrix},
\]
where $\mat{v}_i$ consists of the first $k$ coordinates and $\mat{w}_i$
consists the remaining $d-k$ coordinates.  Note that
\[
\begin{pmatrix}
\mat{0}_{k,1} \\
\mat{w}_i
\end{pmatrix}
\perp E_2  
\quad \mbox{and} \quad
\begin{pmatrix}
\mat{v}_i \\
\mat{0}_{d-k,1}
\end{pmatrix}
\in E_2.
\]
Since $\angle (\mat{u}_i,E_2) \leq \theta$ by assumption, we have $\norm{\mat{w}_i}
\leq \sin\theta$.  As a result, $\norm{\mat{v}_i} \in [\cos\theta, 1]$.
For any $i \not= j$, we have
\begin{eqnarray*}
& & \begin{pmatrix}
\mat{v}_i^t \,\,\,\, \mat{w}_i^t
\end{pmatrix} 
\cdot 
\begin{pmatrix}
\mat{v}_j \\ \mat{w}_j
\end{pmatrix} 
\in \left[\cos\left(\frac{\pi}{2}+\phi\right),\cos\left(\frac{\pi}{2}-\phi\right)\right] \\
& \Rightarrow & 
\mat{v}_i^t\cdot \mat{v}_j + \mat{w}_i^t\cdot\mat{w}_j \in [-\sin\phi,\sin\phi] \\
& \Rightarrow & |\mat{v}_i^t \cdot \mat{v}_j| \leq
\norm{\mat{w}_i} \cdot \norm{\mat{w}_j} + \sin\phi 
\leq \sin^2\theta + \sin\phi.
\end{eqnarray*}
Let $\mat{n}$ be a vector in $E_1$ that makes the angle $\angle(E_1,E_2)$ with
$E_2$.  By flipping the orientation of any $\mat{u}_i$'s if necessary, we can ensure that
$\mat{n}$ is a convex combination of $\{\mat{u}_1,\ldots,\mat{u}_k\}$, i.e.,
$\mat{n} = \displaystyle \sum_{i=1}^k \lambda_i \begin{pmatrix} \mat{v}_i \\
\mat{w}_i \end{pmatrix}$ for some $\lambda_i$'s in $[0,1]$ such that
$\sum_{i=1}^k \lambda_i = 1$.  Note that flipping the orientation of any $\mat{u}_i$
preserves the angle $\angle (\mat{u}_i,E_2)$ and the fact that 
for any distinct $i,j \in [1,k]$, $\angle (\mat{u}_i,\mat{u}_j) \in [\pi/2-\phi,\pi/2+\phi]$.
Hence,
\begin{eqnarray*}
\angle (E_1,E_2) & = & 
\arctan\left(\frac{\|\sum_{i=1}^k \lambda_i \mat{w}_i\|}
{\|\sum_{i=1}^k \lambda_i \mat{v}_i\|}\right)   
\;\; \leq \;\; 
\arctan\left(\frac{\sum_{i=1}^k \lambda_i\norm{\mat{w}_i}}
{\sqrt{\sum_{i=1}^k\sum_{j=1}^k  \lambda_i\lambda_j \cdot \mat{v}_i^t \cdot \mat{v}_j}} \right)  \\
& \leq & \arctan\left(\frac{\sin\theta} 
{\sqrt{\cos^2\theta \sum_{i=1}^k \lambda_i^2 - (\sin^2\theta+\sin\phi) \sum_{i\not=j}\lambda_i\lambda_j}} \right) \\ 
& = & \arctan\left(\frac{\sin\theta} 
{\sqrt{\sum_{i=1}^k \lambda_i^2 - (\sin^2\theta+\sin\phi) \left(\sum_{i=1}^k\lambda_i\right)^2 }} \right) \\ 
& \leq & \arctan\left(\frac{\sqrt{k}\sin\theta} 
{\sqrt{1 - k\sin^2\theta-k\sin\phi}} \right).
\end{eqnarray*}
The last step uses the fact that $\sum_{i=1}^k\lambda_i^2$ is minimized 
when $\lambda_i = 1/k$ for all $i$.
\end{proof}

\section{Accuracy of $L_\mat{x}$}

The main result of this section is Lemma~\ref{lemma::normal_angle} below: for
every point $\mat{z} \in \mani$ and every point $\mat{x}$ near $\mat{z}$,
$N_\mat{z}$ is approximated by $L_\mat{x}$.  We need the following technical
result.  Recall that $\nu$ is the nearest point map.

\begin{lemma}
\label{lem:C}
Let $\mat{x}$ be a point at distance $2\eps$ or less from $\mani$.  
Assume a coordinate frame such that the columns of
$\begin{pmatrix}
\mat{I}_m \\ \mat{0}_{d-m,m} \end{pmatrix}$ form an
orthonormal basis of $T_{\nu(\mat{x})}$.
Partition $\mat{C}_\mat{x}$ into
$\begin{pmatrix}
\mat{C}_{11} & \mat{C}_{12} \\
\mat{C}_{21} & \mat{C}_{22} \\ 
\end{pmatrix}$, where
$\mat{C}_{11}$ is $m \times m$, $\mat{C}_{12}$ is $m \times (d-m)$,
$\mat{C}_{21}$ is $(d-m) \times m$, and $\mat{C}_{22}$ is $(d-m)
\times (d-m)$.  Then, 
$\norm{\mat{C}_{12}}$ and $\norm{\mat{C}_{21}}$ are $O(m\gamma)$,
$\norm{\mat{C}_{22}}$ is $O(m^2\gamma^2)$, and 
the smallest eigenvalue of $\mat{C}_{11}$ is at least $1 - O(m^2\gamma^2)$. 
\end{lemma}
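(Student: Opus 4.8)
The plan is to reduce the whole statement to one uniform estimate: for every sample point $\mat{p}$ that actually influences $\mat{C}_\mat{x}$, the approximate tangent space $\col{\mat{T}_\mat{p}}$ makes an $O(m\gamma)$ angle with $T_{\nu(\mat{x})}$. First I would note that $\omega(\mat{x},\mat{p}) \neq 0$ forces $\norm{\mat{x}-\mat{p}} < m\gamma$ because $h$ vanishes on $[m\gamma,\infty)$; combined with $\norm{\mat{x}-\nu(\mat{x})} \leq 2\eps = \gamma/2$ this gives $\norm{\mat{p}-\nu(\mat{x})} < (m+1)\gamma$. Since $\mat{p},\nu(\mat{x}) \in \mani$, Lemma~\ref{lem:basic}(ii) gives $\angle(N_\mat{p},N_{\nu(\mat{x})}) \leq 4\norm{\mat{p}-\nu(\mat{x})} = O(m\gamma)$; as the angle between two equidimensional linear subspaces is unchanged by passing to orthogonal complements, also $\angle(T_\mat{p},T_{\nu(\mat{x})}) = O(m\gamma)$. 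Adding the input guarantee $\angle(T_\mat{p},\col{\mat{T}_\mat{p}}) \leq m\gamma$ via the triangle inequality for the angle metric on the Grassmannian yields $\angle(T_{\nu(\mat{x})},\col{\mat{T}_\mat{p}}) = O(m\gamma)$ for all $\mat{p}$ with $\omega(\mat{x},\mat{p}) \neq 0$.

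Next I would pass to the block structure. In the given frame write $\mat{T}_\mat{p} = \begin{pmatrix} \mat{A}_\mat{p} \\ \mat{D}_\mat{p} \end{pmatrix}$, where $\mat{A}_\mat{p}$ is $m \times m$ and $\mat{D}_\mat{p}$ is $(d-m) \times m$; then $\mat{T}_\mat{p}^{}\mat{T}_\mat{p}^t$ has blocks $\mat{A}_\mat{p}^{}\mat{A}_\mat{p}^t$, $\mat{A}_\mat{p}^{}\mat{D}_\mat{p}^t$, $\mat{D}_\mat{p}^{}\mat{A}_\mat{p}^t$, $\mat{D}_\mat{p}^{}\mat{D}_\mat{p}^t$, and $\mat{C}_\mat{x}$ is the $\omega(\mat{x},\cdot)$-weighted average of these matrices with $\sum_{\mat{p}\in P}\omega(\mat{x},\mat{p}) = 1$ (the denominator of $\omega$ is positive because $\mani$ is $\eps$-sampled and $\mat{x}$ is within $2\eps$ of $\mani$, so some sample point lies within $3\eps < m\gamma$ of $\mat{x}$). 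Applying Lemma~\ref{lem:matrix} with $\mat{U} = \begin{pmatrix} \mat{I}_m \\ \mat{0}_{d-m,m} \end{pmatrix}$ and $\mat{K} = \mat{T}_\mat{p}$ gives $\norm{\mat{D}_\mat{p}} = \sin\bigl(\angle(T_{\nu(\mat{x})},\col{\mat{T}_\mat{p}})\bigr) = O(m\gamma)$, while $\norm{\mat{A}_\mat{p}} \leq \norm{\mat{T}_\mat{p}} = 1$. Since the operator norm of a weighted average is at most the weighted average, hence the maximum, of the operator norms, $\norm{\mat{C}_{12}} = \norm{\mat{C}_{21}} \leq \max_\mat{p}\norm{\mat{A}_\mat{p}}\,\norm{\mat{D}_\mat{p}} = O(m\gamma)$ and $\norm{\mat{C}_{22}} \leq \max_\mat{p}\norm{\mat{D}_\mat{p}}^2 = O(m^2\gamma^2)$, the maxima being over the $\mat{p}$ with nonzero weight.

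For the bound on the smallest eigenvalue of $\mat{C}_{11}$ I would compute its quadratic form directly. Fix a unit vector $\mat{u} \in \real^m$ and set $\widetilde{\mat{u}} = \begin{pmatrix} \mat{u} \\ \mat{0}_{d-m,1} \end{pmatrix}$, a unit vector in $T_{\nu(\mat{x})}$. Because $\mat{T}_\mat{p}$ has orthonormal columns, $\mat{T}_\mat{p}^{}\mat{T}_\mat{p}^t$ is the orthogonal projection onto $\col{\mat{T}_\mat{p}}$, so $\mat{u}^t\mat{A}_\mat{p}^{}\mat{A}_\mat{p}^t\mat{u} = \norm{\mat{T}_\mat{p}^t\widetilde{\mat{u}}}^2 = \cos^2\angle(\widetilde{\mat{u}},\col{\mat{T}_\mat{p}}) \geq \cos^2\angle(T_{\nu(\mat{x})},\col{\mat{T}_\mat{p}}) \geq 1 - O(m^2\gamma^2)$. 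Taking the $\omega(\mat{x},\cdot)$-weighted average over $\mat{p}$ gives $\mat{u}^t\mat{C}_{11}\mat{u} \geq 1 - O(m^2\gamma^2)$, and since $\mat{C}_{11}$ is symmetric its smallest eigenvalue equals $\min_{\norm{\mat{u}}=1}\mat{u}^t\mat{C}_{11}\mat{u} \geq 1 - O(m^2\gamma^2)$.

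The norm and quadratic-form manipulations are routine; the step that requires care is the uniform angle estimate $\angle(T_{\nu(\mat{x})},\col{\mat{T}_\mat{p}}) = O(m\gamma)$. Two points there deserve attention: that the distance $\norm{\mat{p}-\nu(\mat{x})} < (m+1)\gamma$ is indeed in the regime where Lemma~\ref{lem:basic}(ii) applies, which is exactly where the hypothesis $\eps \leq \eps_0$ with $\eps_0$ shrinking in $d \geq m$ is used, and that Lemma~\ref{lem:basic}(ii) bounds the angle between normal spaces whereas the block partition is aligned with a tangent space, so one must invoke the invariance of the subspace angle under orthogonal complementation to transfer the bound.
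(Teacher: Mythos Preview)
Your proposal is correct and follows essentially the same route as the paper: bound $\norm{\mat{p}-\nu(\mat{x})} < (m+1)\gamma$ for contributing $\mat{p}$, combine Lemma~\ref{lem:basic}(ii) with the input tangent error to get $\norm{\mat{D}_\mat{p}} = O(m\gamma)$ via Lemma~\ref{lem:matrix}, and then average. The only cosmetic difference is your eigenvalue step: where the paper passes through the identity $\mat{T}_\mat{p}^t\mat{T}_\mat{p}^{} = \mat{Y}_\mat{p}^t\mat{Y}_\mat{p}^{} + \mat{Z}_\mat{p}^t\mat{Z}_\mat{p}^{}$ and the fact that $\mat{Y}_\mat{p}^{}\mat{Y}_\mat{p}^t$ and $\mat{Y}_\mat{p}^t\mat{Y}_\mat{p}^{}$ share eigenvalues, you read $\mat{u}^t\mat{A}_\mat{p}^{}\mat{A}_\mat{p}^t\mat{u}$ directly as $\cos^2\angle(\widetilde{\mat{u}},\col{\mat{T}_\mat{p}})$, which is the same content phrased more geometrically.
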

\begin{proof}
Consider any sample point $\mat{p} \in P$.  Partition $\mat{T}_\mat{p}$ into
$\begin{pmatrix} \mat{Y}_\mat{p} \\ \mat{Z}_\mat{p} \end{pmatrix}$, where
$\mat{Y}_\mat{p}$ is $m \times m$ and $\mat{Z}_\mat{p}$ is $(d-m) \times m$.
For all $\mat{p} \in P \cap B(\mat{x},m\gamma)$,
\[
\norm{\mat{p}-\nu(\mat{x})} 
\leq \norm{\mat{p}-\mat{x}} + \norm{\mat{x}-\nu(\mat{x})} 
\leq m\gamma + 2\eps 
< (m+1)\gamma. 
\]
Then, $\angle (T_\mat{p}, T_{\nu(\mat{x})}) \leq 4(m+1)\gamma$
by Lemma~\ref{lem:basic}(ii).

Since $\begin{pmatrix} \mat{I}_m \\ \mat{0}_{d-m,m} \end{pmatrix}$ and
$\begin{pmatrix} \mat{0}_{m,d-m} \\ \mat{I}_{d-m} \end{pmatrix}$ form a $d
\times d$ orthogonal matrix, we obtain
\begin{eqnarray*}
\arcsin(\norm{\mat{Z}_\mat{p}}) 
& = & \arcsin(\norm{(\mat{0}_{d-m,m} \,\, \mat{I}_{d-m}) \cdot \mat{T}_\mat{p}}) \\
& = & \angle (T_{\nu(\mat{x})}, \col{\mat{T}_\mat{p}})  \quad\quad\quad\quad\quad\quad\quad 
(\because \text{Lemma~\ref{lem:matrix}}) \\ 
& \leq &
\angle (T_\mat{p}, T_{\nu(\mat{x})})  
+ \angle (T_\mat{p},\col{\mat{T}_\mat{p}}) \\
& \leq & 4(m+1)\gamma + m \gamma.
\end{eqnarray*}
(We use the assumption that the input approximate tangent spaces have angular errors at most $m\gamma$.  Although an angular error of $O(m\gamma)$ also works, an exact bound of $m\gamma$ makes explicit the input requirement for constructing the formula of $\varphi$.)  Hence, we have 
\begin{equation}
\forall \, \mat{p} \in P \cap B(\mat{x},m\gamma), \quad
\norm{\mat{Z}_\mat{p}} = O(m\gamma).
\label{eq:Z}
\end{equation}

Because $\omega(\mat{x},\mat{p})$ vanishes for all $\mat{p} \not\in
B(\mat{x},m\gamma)$, $\mat{C}_{12} = \sum_{\mat{p} \in P \cap
B(\mat{x},m\gamma)} \omega(\mat{x},\mat{p}) \cdot \mat{Y}_\mat{p}^{}
\cdot \mat{Z}_\mat{p}^t$. 
Since the columns in $\mat{T}_\mat{p}$ have unit 2-norm, we get
$\norm{\mat{Y}_\mat{p}} \leq 1$.  Thus, 
\[
\norm{\mat{C}_{12}} = 
\left\| \sum_{\mat{p} \in P \cap B(\mat{x},m\gamma)} \omega(\mat{x},\mat{p}) \cdot
\mat{Y}_{\mat{p}}^{} \cdot \mat{Z}_{\mat{p}}^t \right\| 
\leq \sum_{\mat{p}
\in P \cap B(\mat{x},m\gamma)} \omega(\mat{x},\mat{p}) \cdot
\norm{\mat{Z}_\mat{p}} 
= O(m\gamma).
\]
Similarly, 
\[
\norm{\mat{C}_{21}} = 
\left\| \sum_{\mat{p} \in P \cap B(\mat{x},m\gamma)} \omega(\mat{x},\mat{p}) \cdot
\mat{Z}_{\mat{p}}^{} \cdot \mat{Y}_{\mat{p}}^t \right\| 
\leq \sum_{\mat{p}
\in P \cap B(\mat{x},m\gamma)} \omega(\mat{x},\mat{p}) \cdot
\norm{\mat{Z}_\mat{p}} 
= O(m\gamma),
\]
\[
\norm{\mat{C}_{22}} = 
\left\|\sum_{\mat{p} \in P \cap B(\mat{x},m\gamma)} \omega(\mat{x},\mat{p}) \cdot 
\mat{Z}_\mat{p}^{} \cdot \mat{Z}_\mat{p}^t\right\| 
\leq 
\sum_{\mat{p} \in P \cap B(\mat{x},m\gamma)} \omega(\mat{x},\mat{p}) \cdot 
\norm{\mat{Z}_\mat{p}^{}}^2 
= O(m^2\gamma^2).
\]

Since $\mat{T}_\mat{p}^t  \cdot \mat{T}_\mat{p}^{} = \mat{Y}_\mat{p}^t
\cdot \mat{Y}_\mat{p}^{} + \mat{Z}_\mat{p}^t \cdot \mat{Z}_\mat{p}^{}$, the minimum
eigenvalue of $\mat{Y}_\mat{p}^t \cdot \mat{Y}_\mat{p}^{}$ is at least the minimum
eigenvalue of $\mat{T}_\mat{p}^t \cdot \mat{T}_\mat{p}^{}$ minus
$\norm{\mat{Z}_\mat{p}^t \cdot \mat{Z}_\mat{p}^{}}$.  Therefore,
\begin{equation}
\text{minimum eigenvalue of $\mat{Y}_\mat{p}^t \cdot \mat{Y}_\mat{p}^{} \geq 1 - O(m^2\gamma^2)$}.
\label{eq:Y}
\end{equation}
$\mat{Y}_\mat{p}^{} \cdot \mat{Y}_\mat{p}^t$ has the same eigenvalues
as $\mat{Y}_\mat{p}^t \cdot \mat{Y}_\mat{p}^{}$.  The smallest
eigenvalue of a real symmetric matrix $\mat{M}$ is $\min_{\mat{v} \not= \mat{0}}
(\mat{v}^t \cdot \mat{M} \cdot \mat{v})/\norm{\mat{v}}^2$.  Then, using
the relation $\mat{C}_{11} = \sum_{\mat{p} \in P \cap B(\mat{x},m\gamma)}
\omega(\mat{x},\mat{p}) \cdot \mat{Y}_\mat{p}^{} \cdot \mat{Y}_\mat{p}^t$, we
conclude that the smallest eigenvalue of $\mat{C}_{11}$ is at least the sum of the smallest
eigenvalues of $\omega(\mat{x},\mat{p}) \cdot \mat{Y}_\mat{p}^{} \cdot \mat{Y}_\mat{p}^t$.
This sum is at least $1 - O(m^2\gamma^2)$ by \eqref{eq:Y}.
\end{proof}

We are ready to show that the angle between $L_\mat{x}$ and any nearby normal
space of $\mani$ is $O(m\sqrt{m}\gamma)$.

\begin{lemma}
\label{lemma::normal_angle}
For every point $\mat{z} \in \mani$ and every point $\mat{x} \in
B(\mat{z},2\eps)$, $\angle(L_\mat{x},N_\mat{z}) = O(m\sqrt{m}\gamma)$.
\end{lemma}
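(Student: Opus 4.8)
The plan is to estimate $\angle(L_\mat{x},N_{\nu(\mat{x})})$ first and then transfer to $N_\mat{z}$. Since $\mat{z}\in\mani$ and $\mat{x}\in B(\mat{z},2\eps)$, we have $\norm{\mat{x}-\nu(\mat{x})}=\dist(\mat{x},\mani)\le 2\eps$, hence $\norm{\nu(\mat{x})-\mat{z}}\le 4\eps=\gamma$, and Lemma~\ref{lem:basic}(ii) gives $\angle(N_{\nu(\mat{x})},N_\mat{z})\le 4\gamma=O(m^{3/2}\gamma)$. As the largest principal angle is a metric on the set of $(d-m)$-dimensional subspaces, it therefore suffices to show $\angle(L_\mat{x},N_{\nu(\mat{x})})=O(m^{3/2}\gamma)$. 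From now on I adopt the coordinate frame of Lemma~\ref{lem:C}, so that $T_{\nu(\mat{x})}=\col{\begin{pmatrix}\mat{I}_m\\\mat{0}\end{pmatrix}}$, $N_{\nu(\mat{x})}=\col{\begin{pmatrix}\mat{0}\\\mat{I}_{d-m}\end{pmatrix}}$, and $\mat{C}_\mat{x}=\begin{pmatrix}\mat{C}_{11}&\mat{C}_{12}\\\mat{C}_{21}&\mat{C}_{22}\end{pmatrix}$ with $\norm{\mat{C}_{12}}=\norm{\mat{C}_{21}}=O(m\gamma)$, $\norm{\mat{C}_{22}}=O(m^2\gamma^2)$, and $\lambda_{\min}(\mat{C}_{11})\ge 1-O(m^2\gamma^2)$.

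The first step is to localize the spectrum of $\mat{C}_\mat{x}$. Being a convex combination of the orthogonal projections $\mat{T}_\mat{p}\mat{T}_\mat{p}^t$, the matrix $\mat{C}_\mat{x}$ is symmetric positive semidefinite with all eigenvalues in $[0,1]$; in particular $\lambda_{\max}(\mat{C}_{11})\le 1$ and $\mat{C}_{22}$ is positive semidefinite. Apply Lemma~\ref{lem:gc} to the two-block partition $\{\mat{C}_{11},\mat{C}_{22}\}$: the generalized Gershgorin set of the first block row lies in $[\lambda_{\min}(\mat{C}_{11})-\norm{\mat{C}_{12}},\,\lambda_{\max}(\mat{C}_{11})+\norm{\mat{C}_{12}}]\subseteq[1-O(m\gamma),\,1+O(m\gamma)]$, and that of the second lies in $[\lambda_{\min}(\mat{C}_{22})-\norm{\mat{C}_{21}},\,\lambda_{\max}(\mat{C}_{22})+\norm{\mat{C}_{21}}]\subseteq[-O(m\gamma),\,O(m\gamma)]$. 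Because $\eps\le\eps_0$ forces $m\gamma$ to be small enough, these two sets are disjoint and the second lies below the first, so by Lemma~\ref{lem:gc} exactly $m$ eigenvalues of $\mat{C}_\mat{x}$ lie in the first set and $d-m$ in the second. Hence the $m$ most dominant eigenvalues of $\mat{C}_\mat{x}$ all lie in $[1-O(m\gamma),1+O(m\gamma)]$, there is a genuine gap between the $m$-th and $(m{+}1)$-th eigenvalues, and $L_\mat{x}$ and $L_\mat{x}^\perp$ are well-defined of dimensions $d-m$ and $m$.

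Next I would bound $\angle(\mat{e}_j,T_{\nu(\mat{x})})$ for each of the $m$ most dominant unit eigenvectors $\mat{e}_1,\ldots,\mat{e}_m$ of $\mat{C}_\mat{x}$. Write $\mat{C}_\mat{x}=\mat{M}_1+\mat{M}_2$ with $\mat{M}_1=\begin{pmatrix}\mat{C}_{11}&\mat{0}\\\mat{0}&\mat{C}_{22}\end{pmatrix}$ and $\mat{M}_2=\begin{pmatrix}\mat{0}&\mat{C}_{12}\\\mat{C}_{21}&\mat{0}\end{pmatrix}$, so that $\norm{\mat{M}_2}\le\max\{\norm{\mat{C}_{12}},\norm{\mat{C}_{21}}\}=O(m\gamma)$ and, ordering the eigenvalues of $\mat{M}_1$ so that those of $\mat{C}_{11}$ come first, the first $m$ eigenvectors of $\mat{M}_1$ span $T_{\nu(\mat{x})}$ while the last $d-m$ eigenvalues of $\mat{M}_1$ are those of $\mat{C}_{22}$, each $O(m^2\gamma^2)$. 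Applying Lemma~\ref{lem:slant} with $r=m$ to the eigenvector $\mat{e}_j$ of $\mat{M}_1+\mat{M}_2$ with eigenvalue $\sigma_j\ge 1-O(m\gamma)$ (from the previous step) gives $\angle(\mat{e}_j,T_{\nu(\mat{x})})\le\arcsin\bigl(\norm{\mat{M}_2}/\min_{i>m}|\lambda_i-\sigma_j|\bigr)$, and since $\min_{i>m}|\lambda_i-\sigma_j|\ge(1-O(m\gamma))-O(m^2\gamma^2)=1-O(m\gamma)$, we get $\angle(\mat{e}_j,T_{\nu(\mat{x})})=O(m\gamma)$.

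Finally I assemble. The orthonormal set $\mat{e}_1,\ldots,\mat{e}_m$ is a basis of $L_\mat{x}^\perp$, so Lemma~\ref{lem:matrix} with $\mat{U}$ an orthonormal basis of $L_\mat{x}$, $\mat{V}=(\mat{e}_1\;\cdots\;\mat{e}_m)$, and $\mat{K}=\begin{pmatrix}\mat{0}\\\mat{I}_{d-m}\end{pmatrix}$ (an orthonormal basis of $N_{\nu(\mat{x})}$) yields $\angle(L_\mat{x},N_{\nu(\mat{x})})=\arcsin\norm{\mat{V}^t\mat{K}}$. The rows of $\mat{V}^t\mat{K}$ are precisely the $N_{\nu(\mat{x})}$-blocks of $\mat{e}_1,\ldots,\mat{e}_m$, each of norm $\sin\angle(\mat{e}_j,T_{\nu(\mat{x})})=O(m\gamma)$, hence $\norm{\mat{V}^t\mat{K}}\le\norm{\mat{V}^t\mat{K}}_{\mathrm{F}}=\bigl(\sum_{j=1}^m O(m^2\gamma^2)\bigr)^{1/2}=O(m^{3/2}\gamma)$, so $\angle(L_\mat{x},N_{\nu(\mat{x})})=O(m^{3/2}\gamma)$, which together with the first paragraph proves the lemma. (Equivalently one can feed the bounds $\angle(\mat{e}_j,T_{\nu(\mat{x})})=O(m\gamma)$ into Lemma~\ref{lem:angle} with $k=m$ and $\phi=0$, applied to $L_\mat{x}^\perp$ and $T_{\nu(\mat{x})}=N_{\nu(\mat{x})}^\perp$.) The subtlety worth highlighting is exactly this assembly step: carrying it out through the $d-m$ least dominant eigenvectors together with $N_{\nu(\mat{x})}$ would introduce a factor $\sqrt{d-m}$ rather than $\sqrt m$, so the argument deliberately works on the $m$-dimensional side $L_\mat{x}^\perp$; a Sylvester-equation estimate for the top block of the least dominant eigenvectors would remove even this $\sqrt m$, but it is not needed here.
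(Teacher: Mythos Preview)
Your proof is correct and follows essentially the same route as the paper: work in the frame of Lemma~\ref{lem:C}, separate the spectrum of $\mat{C}_\mat{x}$ via the two-block Gershgorin sets, bound the angle of each of the $m$ dominant eigenvectors with $T_{\nu(\mat{x})}$ by $O(m\gamma)$, and then assemble on the $m$-dimensional side (the paper uses Lemma~\ref{lem:angle}, which you also mention). The only cosmetic difference is that for the per-eigenvector bound you invoke Lemma~\ref{lem:slant} with the block-diagonal/off-diagonal split $\mat{M}_1+\mat{M}_2$, whereas the paper reads off $\norm{\mat{w}}\le\norm{(\sigma\mat{I}-\mat{C}_{22})^{-1}}\,\norm{\mat{C}_{21}}$ directly from the eigenvector equation; both yield the same $O(m\gamma)$.
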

\begin{proof}
Adopt a coordinate frame such that the columns of $\begin{pmatrix}
\mat{I}_m \\ \mat{0}_{d-m,m} \end{pmatrix}$ form an orthonormal basis
of $T_{\nu(\mat{x})}$.  Let $\mat{A}_\mat{x}$ be the $d
\times m$ matrix whose columns are the $m$ most dominant unit
eigenvectors of $\mat{C}_\mat{x}$.  Thus, $\col{\mat{A}_\mat{x}}$ is the
orthogonal complement of $L_\mat{x}$.  
Let $\mat{e} = \begin{pmatrix} \mat{v} \\ \mat{w} \end{pmatrix}$ be any
column vector of $\mat{A}_\mat{x}$, where $\mat{v}$ consists of the first $m$
coordinates and $\mat{w}$ consists of the last $d-m$ coordinates.
Then, $\angle (\mat{e},T_{\nu(\mat{x})}) =
\arctan(\norm{\mat{w}}/\norm{\mat{v}})$. 

We show that $\angle (\mat{e},T_{\nu(\mat{x})}) = O(m\gamma)$.  
Partition $\mat{C}_\mat{x}$ into
$\begin{pmatrix}
\mat{C}_{11} & \mat{C}_{12} \\
\mat{C}_{21} & \mat{C}_{22} \\ 
\end{pmatrix}$, where
$\mat{C}_{11}$ is $m \times m$, $\mat{C}_{12}$ is $m \times (d-m)$,
$\mat{C}_{21}$ is $(d-m) \times m$, and $\mat{C}_{22}$ is $(d-m)
\times (d-m)$.
Let $\sigma$ be the eigenvalue of $\mat{C}_\mat{x}$ 
corresponding to $\mat{e}$.  Then,
\[
\mat{C}_\mat{x} \, \mat{e}   = 
\begin{pmatrix}
\mat{C}_{11} & \mat{C}_{12} \\
\mat{C}_{21} & \mat{C}_{22} \\ 
\end{pmatrix}
\begin{pmatrix}
\mat{v} \\ 
\mat{w}
\end{pmatrix}
\,\, = \,\, 
\sigma
\begin{pmatrix}
\mat{v} \\
\mat{w}
\end{pmatrix},
\]
which implies that 
\[
\norm{\mat{w}} = \norm{(\sigma\mat{I}_{d-m} -
\mat{C}_{22})^{-1}\mat{C}_{21}\mat{v}} 
\leq \norm{(\sigma\mat{I}_{d-m} -
\mat{C}_{22})^{-1}} \cdot \norm{\mat{C}_{21}}.  
\]
Following the definition of
generalized gershgorin sets (Section~\ref{sec:prelim}), define
\begin{eqnarray*}
G_1 & = & \left\{\mu \in \real :
\frac{1}{\norm{(\mat{C}_{11} - \mu \mat{I}_m)^{-1}}} \leq \norm{\mat{C}_{12}}\right\}, \\
G_2 & = & \left\{\mu \in \real:
\frac{1}{\norm{(\mat{C}_{22} - \mu \mat{I}_{d-m})^{-1}}} \leq \norm{\mat{C}_{21}}\right\}.  
\end{eqnarray*}
The numbers in $G_1$ are
at least the minimum eigenvalue value of $\mat{C}_{11}$ minus
$\norm{\mat{C}_{12}}$, which is at least 
$1 - O(m\gamma + m^2\gamma^2)$ by Lemma~\ref{lem:C}.
The numbers in $G_2$ are at most $\norm{\mat{C}_{22}} + \norm{\mat{C}_{21}} =
O(m\gamma + m^2\gamma^2)$ by Lemma~\ref{lem:C}.  Since every number in $G_1$ is greater than
any number in $G_2$, by Lemma~\ref{lem:gc}, $G_1$ contains the $m$ largest
eigenvalues of $\mat{C}_\mat{x}$.  Thus, $\sigma$ belongs to $G_1$ and $\sigma
\geq 1 - O(m\gamma + m^2\gamma^2)$ which is asymptotically greater than $\norm{\mat{C}_{22}}  = O(m^2\gamma^2)$ (Lemma~\ref{lem:C}).  Therefore, 
\[
\norm{(\sigma\mat{I}_{d-m} - \mat{C}_{22})^{-1}} \leq \frac{1}{1- O(m\gamma + m^2\gamma^2)}. 
\]
By Lemma~\ref{lem:C}, $\norm{\mat{C}_{21}} = O(m\gamma)$, and therefore,
\[
\norm{\mat{w}} \leq
\norm{(\sigma\mat{I}_{d-m}-\mat{C}_{22})^{-1}} \cdot \norm{\mat{C}_{21}} 
\leq \frac{ O(m\gamma)}{1 - O(m\gamma + m^2\gamma^2)} = O(m\gamma).
\]
As a result, $1 \geq \norm{\mat{v}} \geq 1 - \norm{\mat{w}} \geq 1-O(m\gamma)$.   Thus,
$\angle(\mat{e},T_{\nu(\mat{x})}) = \arctan(\norm{\mat{w}}/\norm{\mat{v}}) = O(m\gamma)$.

Since $\mat{e}$ is any column vector of $\mat{A}_\mat{x}$, the angle bound in
the previous paragraph applies to all column vectors of $\mat{A}_\mat{x}$.
We can apply Lemma~\ref{lem:angle} with $E_1 = \col{\mat{A}_\mat{x}}$, $E_2 = T_{\nu(\mat{x})}$, $\{\mat{u}_1,\ldots,\mat{u}_m\}$ equal to the columns of $\mat{A}_\mat{x}$, $\phi = 0$, $k = m$, and $\theta$ equal to the $O(m\gamma)$ bound on $\angle (\mat{e},T_{\nu(\mat{x})})$.  Then, 
\[
\angle (\col{\mat{A}_\mat{x}},T_{\nu(\mat{x})}) \leq
\arctan\left(\frac{O(m\sqrt{m}\gamma)}{\sqrt{1-O(m^3\gamma^2)}}\right) = O(m\sqrt{m}\gamma).
\] 
Since $\norm{\nu(\mat{x}) - \mat{z}} 
\leq \norm{\mat{x} - \nu(\mat{x})} + \norm{\mat{x} - \mat{z}} \leq 4\eps$,
Lemma~\ref{lem:basic}(ii) implies that $\angle (T_{\nu(\mat{x})},T_\mat{z})
\leq 16\eps$.  Hence,
\begin{eqnarray*}
\angle (L_\mat{x},N_\mat{z}) & = & \angle (\col{\mat{A}_\mat{x}},T_\mat{z}) \\
& \leq & \angle (\col{\mat{A}_\mat{x}},T_{\nu(\mat{x})}) + 
\angle (T_{\nu(\mat{x})},T_\mat{z}) \\
& = & O(m\sqrt{m}\gamma).
\end{eqnarray*}
\end{proof}

\section{Projection into $L_\mat{x}$}

For every point $\mat{z} \in \mani$ and every unit vector $\mat{n} \in
N_\mat{z}$, we want to bound the instantaneous change in the normalized
projection of $\mat{n}$ in $L_\mat{x}$ as $\mat{x}$ moves.  If we view the
projection as a map $f$, this is equivalent to analyzing the Jacobian of $f$
which is given in Lemmas~\ref{lemma::gradient} and~\ref{cor:gradient} below.
To this end, some technical results are needed.  First, we need to study the
variation of $\mat{C}_\mat{x}$ as $\mat{x}$ moves (Lemma~\ref{lem:delta_C}).
Second, we need to bound the turn of $L_\mat{x}$ if $\mat{x}$ moves slightly
(Lemma~\ref{lemma::normal_change}).

Let $\delta_k > 0$ denote an arbitrarily small change in the coordinate $x_k$
of $\mat{x}$.  Define 
\[
\Delta h(\norm{\mat{x}-\mat{p}}) = \frac{\partial
h(\norm{\mat{x}-\mat{p}})}{\partial x_k} \cdot \delta_k.  
\]
For simplicity,
we omit the dependence of $\Delta h(\norm{\mat{x}-\mat{p}})$ on $k$ in the
notation.

\begin{lemma}
\label{lem:delta_C}
Let $\mat{x}$ be a point at distance $2\eps$ or less
from $\mani$.  Assume a coordinate frame such that
the columns of $\begin{pmatrix} \mat{I}_m \\ \mat{0}_{d-m,m} 
\end{pmatrix}$ form an orthonormal basis of $T_{\nu(\mat{x})}$.
Define the $d \times d$ matrix $\Delta\mat{C}_\mat{x} = 
\begin{pmatrix}
\dfrac{\partial c_{ij}}{\partial x_k} \cdot \delta_k
\end{pmatrix}$,
where $c_{ij}$ is the $(i,j)$ entry of $\mat{C}_\mat{x}$.  The following
properties hold when $\delta_k$ is small enough.
\begin{emromani}

\item $\displaystyle \norm{\Delta\mat{C}_\mat{x}} \leq \frac{O(m\gamma) \cdot
\sum_{\mat{p} \in P} |\Delta h(\norm{\mat{x}-\mat{p}})|}{\sum_{\mat{p} \in
P} h(\norm{\mat{x}-\mat{p}})}$.

\item The $m$ largest eigenvalues of $\mat{C}_\mat{x} + \Delta\mat{C}_\mat{x}$
are at least $\displaystyle 1 - O(m\gamma) - \frac{O(m\gamma) \cdot \sum_{\mat{p}
\in P} |\Delta h(\norm{\mat{x}-\mat{p}})|}{\sum_{\mat{p} \in P}
h(\norm{\mat{x}-\mat{p}})}$.

\end{emromani}
\end{lemma}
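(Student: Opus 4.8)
The plan is to decompose $\Delta\mat{C}_\mat{x}$ into a sum over sample points and bound each term. Writing $S = \sum_{\mat{q}\in P} h(\norm{\mat{x}-\mat{q}})$, each entry $c_{ij}$ of $\mat{C}_\mat{x}$ equals $S^{-1}\sum_{\mat{p}} h(\norm{\mat{x}-\mat{p}})\,(\mat{T}_\mat{p}\mat{T}_\mat{p}^t)_{ij}$, so differentiating and using the quotient rule gives
\[
\Delta\mat{C}_\mat{x} = \sum_{\mat{p}\in P}
\left(\frac{\Delta h(\norm{\mat{x}-\mat{p}})}{S}
 - \frac{h(\norm{\mat{x}-\mat{p}})\sum_{\mat{q}} \Delta h(\norm{\mat{x}-\mat{q}})}{S^2}\right)
 \mat{T}_\mat{p}\mat{T}_\mat{p}^t
= \sum_{\mat{p}\in P}\bigl(\Delta\omega(\mat{x},\mat{p})\bigr)\,\mat{T}_\mat{p}\mat{T}_\mat{p}^t,
\]
where $\sum_{\mat{p}}\Delta\omega(\mat{x},\mat{p})=0$ since $\sum_{\mat{p}}\omega(\mat{x},\mat{p})=1$ identically. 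First I would use this identity to subtract a convenient reference matrix: because $\sum_{\mat{p}}\Delta\omega(\mat{x},\mat{p})\,\mat{M}=0$ for any fixed $\mat{M}$, I can replace $\mat{T}_\mat{p}\mat{T}_\mat{p}^t$ by $\mat{T}_\mat{p}\mat{T}_\mat{p}^t - \mat{M}$ for a well-chosen $\mat{M}$ (namely the projector onto $T_{\nu(\mat{x})}$, i.e. $\bigl(\begin{smallmatrix}\mat{I}_m & \mat{0}\\ \mat{0} & \mat{0}\end{smallmatrix}\bigr)$). Then $\norm{\Delta\mat{C}_\mat{x}} \le \sum_{\mat{p}} |\Delta\omega(\mat{x},\mat{p})|\cdot\norm{\mat{T}_\mat{p}\mat{T}_\mat{p}^t - \mat{M}}$. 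As in the proof of Lemma~\ref{lem:C}, for $\mat{p}\in P\cap B(\mat{x},m\gamma)$ the angle between $\col{\mat{T}_\mat{p}}$ and $T_{\nu(\mat{x})}$ is $O(m\gamma)$, which makes $\norm{\mat{T}_\mat{p}\mat{T}_\mat{p}^t - \mat{M}} = O(m\gamma)$ (the difference of two projectors onto subspaces at angle $\Theta$ has norm $\sin\Theta$, modulo the small non-orthonormality of $\mat{T}_\mat{p}$, which contributes only another $O(m\gamma)$). For $\mat{p}\notin B(\mat{x},m\gamma)$ we have $\Delta h = 0$, so those terms drop. Finally $\sum_{\mat{p}}|\Delta\omega(\mat{x},\mat{p})| \le \frac{\sum_{\mat{p}}|\Delta h(\norm{\mat{x}-\mat{p}})|}{S} + \frac{\sum_{\mat{p}} h(\norm{\mat{x}-\mat{p}})\cdot|\sum_{\mat{q}}\Delta h(\norm{\mat{x}-\mat{q}})|}{S^2} \le \frac{2\sum_{\mat{p}}|\Delta h(\norm{\mat{x}-\mat{p}})|}{S}$, giving part~(i) (the constant factor $2$ is absorbed into the $O(\cdot)$).

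For part~(ii), I would apply Weyl's inequality (or the min-max characterization): the $m$ largest eigenvalues of $\mat{C}_\mat{x}+\Delta\mat{C}_\mat{x}$ are each at least the corresponding eigenvalue of $\mat{C}_\mat{x}$ minus $\norm{\Delta\mat{C}_\mat{x}}$. From the analysis in Lemma~\ref{lemma::normal_angle} (via the generalized Gershgorin bound of Lemma~\ref{lem:gc} applied to $\mat{C}_\mat{x}$), the $m$ largest eigenvalues of $\mat{C}_\mat{x}$ lie in $G_1$ and hence are at least $1 - O(m\gamma + m^2\gamma^2) = 1 - O(m\gamma)$. Subtracting the bound on $\norm{\Delta\mat{C}_\mat{x}}$ from part~(i) yields exactly the claimed lower bound $1 - O(m\gamma) - \frac{O(m\gamma)\sum_{\mat{p}}|\Delta h(\norm{\mat{x}-\mat{p}})|}{\sum_{\mat{p}} h(\norm{\mat{x}-\mat{p}})}$.

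The one delicate point—the "main obstacle"—is justifying the first-order identity for $\Delta\mat{C}_\mat{x}$ rigorously, i.e. that the stated matrix of scaled partial derivatives really is the linear part of $\mat{C}_{\mat{x}+\delta_k\mat{e}_k} - \mat{C}_\mat{x}$ up to $o(\delta_k)$, and that the eigenvalue perturbation statements survive passing from the true increment to its linearization. This is handled by the hypothesis "$\delta_k$ small enough": $h$ is differentiable on $(0,\infty)$ with $h'=0$ beyond $m\gamma$, and for $\mat{x}$ at distance $\le 2\eps$ from $\mani$ there is a positive-measure set of sample points with $\norm{\mat{x}-\mat{p}}$ bounded away from both $0$ and $m\gamma$ (so $S$ stays bounded below and the $h(\norm{\mat{x}-\mat{p}})$ are smooth near $\mat{x}$), making $\mat{C}_\mat{x}$ a $C^1$ function of $\mat{x}$ there; Weyl's inequality then applies to $\mat{C}_\mat{x} + \Delta\mat{C}_\mat{x}$ directly as a symmetric matrix, independently of whether it is literally $\mat{C}_{\mat{x}+\delta_k\mat{e}_k}$. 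Everything else is a routine repackaging of estimates already established in Lemmas~\ref{lem:C} and~\ref{lemma::normal_angle}.
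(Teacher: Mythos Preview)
Your argument is correct and in fact somewhat cleaner than the paper's. The paper arrives at the same starting identity for $\Delta\mat{C}_\mat{x}$ but then proceeds block-by-block: it partitions $\Delta\mat{C}_\mat{x}$ into $\begin{pmatrix}\Delta\mat{C}_{11}&\Delta\mat{C}_{12}\\\Delta\mat{C}_{21}&\Delta\mat{C}_{22}\end{pmatrix}$ in the tangent/normal frame, bounds each block separately via $\norm{\mat{Y}_\mat{q}\mat{Y}_\mat{q}^t-\mat{Y}_\mat{p}\mat{Y}_\mat{p}^t}=O(m^2\gamma^2)$, $\norm{\mat{Y}_\mat{q}\mat{Z}_\mat{q}^t-\mat{Y}_\mat{p}\mat{Z}_\mat{p}^t}=O(m\gamma)$, etc., and then combines these with the Gershgorin-type inequality $\norm{\Delta\mat{C}_\mat{x}}\le\max\{\norm{\Delta\mat{C}_{11}}+\norm{\Delta\mat{C}_{12}},\norm{\Delta\mat{C}_{21}}+\norm{\Delta\mat{C}_{22}}\}$. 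Your centering trick---using $\sum_\mat{p}\Delta\omega(\mat{x},\mat{p})=0$ to subtract the fixed projector $\mat{M}$ onto $T_{\nu(\mat{x})}$ and then invoking $\norm{\mat{T}_\mat{p}\mat{T}_\mat{p}^t-\mat{M}}=\sin\angle(\col{\mat{T}_\mat{p}},T_{\nu(\mat{x})})=O(m\gamma)$---bypasses the block decomposition entirely and is more direct. (One small note: in this paper $\mat{T}_\mat{p}$ already has orthogonal unit columns, so $\mat{T}_\mat{p}\mat{T}_\mat{p}^t$ is an exact orthogonal projector and your caveat about ``non-orthonormality'' is unnecessary.) For part~(ii) the paper again sets up generalized Gershgorin sets $G_1,G_2$ for the perturbed matrix $\mat{C}_\mat{x}+\Delta\mat{C}_\mat{x}$ and shows they are disjoint, whereas your use of Weyl's inequality plus the eigenvalue lower bound on $\mat{C}_\mat{x}$ (which is indeed established inside the proof of Lemma~\ref{lemma::normal_angle}) gets there in one line. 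Both routes yield the same bounds; yours is shorter, while the paper's block-wise analysis has the minor advantage of producing the finer $O(m^2\gamma^2)$ bounds on the diagonal blocks as a by-product, though these are not needed downstream.
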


\begin{proof}
Using standard calculus, we obtain
\[
\Delta \mat{C}_\mat{x} = 
\frac{1}{(\sum_{\mat{p} \in P} h(\norm{\mat{x}-\mat{p}}))^2} 
\left(\sum_{\mat{p},\mat{q} \in P}
h(\norm{\mat{x}-\mat{p}}) \cdot 
\Delta h(\norm{\mat{x}-\mat{q}})
\cdot \left( \mat{T}_\mat{q}^{} \cdot \mat{T}_\mat{q}^t 
- \mat{T}_\mat{p}^{} \cdot \mat{T}_\mat{p}^t \right) \right).
\]
Partition $\mat{C}_\mat{x}$ and $\Delta\mat{C}_\mat{x}$ as follows:
\[
\mat{C}_\mat{x} = \begin{pmatrix}
\mat{C}_{11} & \mat{C}_{12} \\
\mat{C}_{21} & \mat{C}_{22} \\ 
\end{pmatrix},
\quad\quad\quad
\Delta\mat{C}_\mat{x} = 
\begin{pmatrix}
\Delta\mat{C}_{11} & \Delta\mat{C}_{12} \\
\Delta\mat{C}_{21} & \Delta\mat{C}_{22} \\ 
\end{pmatrix}
\]
where $\mat{C}_{11}$ and $\Delta\mat{C}_{11}$ are $m \times m$, $\mat{C}_{12}$
and $\Delta\mat{C}_{12}$ are $m \times (d-m)$, $\mat{C}_{21}$ and
$\Delta\mat{C}_{21}$ are $(d-m) \times m$, and $\mat{C}_{22}$ and
$\Delta\mat{C}_{22}$ are $(d-m) \times (d-m)$.  

For every sample point $\mat{p} \in P$, partition $\mat{T}_\mat{p}$ into
$\mat{T}_\mat{p} = \begin{pmatrix} \mat{Y}_\mat{p} \\
\mat{Z}_\mat{p} \end{pmatrix}$,
where $\mat{Y}_\mat{p}$ is an $m \times m$ matrix and
$\mat{Z}_\mat{p}$ is a $(d-m) \times m$ matrix.  

By \eqref{eq:Z} and \eqref{eq:Y}, for every sample point $\mat{p} \in P \cap
B(\mat{x},m\gamma)$, $\norm{\mat{Z}_\mat{p}} = O(m\gamma)$ and the eigenvalues of
$\mat{Y}_\mat{p}^{} \cdot \mat{Y}_\mat{p}^t$ are at least $1 - O(m^2\gamma^2)$.
Moreover, 
\[
\norm{\mat{Y}_\mat{p}^{}\cdot\mat{Y}_\mat{p}^t} =
\norm{\mat{Y}_\mat{p}^t\cdot\mat{Y}_\mat{p}^{}} =
\norm{\mat{T}_\mat{p}^t\cdot\mat{T}_\mat{p}^{} -
\mat{Z}_\mat{p}^t\cdot\mat{Z}_\mat{p}^{}} \leq \norm{\mat{T}_\mat{p}}^2 +
\norm{\mat{Z}_\mat{p}}^2 = 1 + O(m^2\gamma^2),
\]
which also implies that
\[
\norm{\mat{Y}_\mat{p}} = 1 + O(m^2\gamma^2).
\]
Because for
any real symmetric matrix $\mat{M}$, $\norm{\mat{M}} = \max_{\mat{v} \not=
\mat{0}} \left(\mat{v}^t \cdot \mat{M} \cdot \mat{v}\right)/\norm{\mat{v}}^2$,
we conclude that 
$\norm{\mat{Y}_\mat{q}^{} \cdot \mat{Y}_\mat{q}^t -
\mat{Y}_\mat{p}^{} \cdot \mat{Y}_\mat{p}^t}$ is at most the maximum
eigenvalue of 
$\mat{Y}_\mat{q}^{} \cdot \mat{Y}_\mat{q}^t$ minus the minimum
eigenvalue of 
$\mat{Y}_\mat{p}^{} \cdot \mat{Y}_\mat{p}^t$.  Therefore,
\[
\begin{array}{lll}
\norm{\mat{Y}_\mat{q}^{} \cdot \mat{Y}_\mat{q}^t -
\mat{Y}_\mat{p}^{} \cdot \mat{Y}_\mat{p}^t} & \leq & 1 
+ O(m^2\gamma^2) - (1 - O(m^2\gamma^2)) 
\,\, = \,\, O(m^2\gamma^2).
\end{array}
\]
Moreover,
\[
\begin{array}{lll}
\norm{\mat{Y}_\mat{q}^{} \cdot \mat{Z}_\mat{q}^t -
\mat{Y}_\mat{p}^{} \cdot \mat{Z}_\mat{p}^t} & \leq & 
\norm{\mat{Y}_\mat{q}} \cdot \norm{\mat{Z}_\mat{q}} +
\norm{\mat{Y}_\mat{p}} \cdot \norm{\mat{Z}_\mat{p}} \,\, = \,\, O(m\gamma), \\
\norm{\mat{Z}_\mat{q}^{} \cdot \mat{Z}_\mat{q}^t -
\mat{Z}_\mat{p}^{} \cdot \mat{Z}_\mat{p}^t} & \leq & \norm{\mat{Z}_\mat{q}}^2 +
\norm{\mat{Z}_\mat{p}}^2 \,\, = \,\, O(m^2\gamma^2).  
\end{array}
\]
On the other hand, for every sample point $\mat{p} \in P \setminus B(\mat{x},m\gamma)$, 
\[
h(\norm{\mat{x}-\mat{p}}) = 0, \quad\quad\quad \Delta h(\norm{\mat{x}-\mat{p}}) = 0.
\]
Consequently,
\begin{eqnarray}
\norm{\Delta\mat{C}_{11}}
& = &
\frac{\left\|\sum_{\mat{p},\mat{q} \in P} h(\norm{\mat{x}-\mat{p}}) \cdot
\Delta h(\norm{\mat{x}-\mat{q}}) \cdot
(\mat{Y}_\mat{q}^{} \cdot \mat{Y}_\mat{q}^t -
\mat{Y}_\mat{p}^{} \cdot \mat{Y}_\mat{p}^t)\right\|} {(\sum_{\mat{p} \in P}
h(\norm{\mat{x}-\mat{p}}))^2}  \nonumber \\
& \leq &
\frac{\sum_{\mat{p},\mat{q} \in P} h(\norm{\mat{x}-\mat{p}}) \cdot
|\Delta h(\norm{\mat{x}-\mat{q}})| \cdot
\norm{\mat{Y}_\mat{q}^{} \cdot \mat{Y}_\mat{q}^t -
\mat{Y}_\mat{p}^{} \cdot \mat{Y}_\mat{p}^t}} {(\sum_{\mat{p} \in P}
h(\norm{\mat{x}-\mat{p}}))^2}  \nonumber \\
& = & \frac{ O(m^2\gamma^2) \cdot \sum_{\mat{p}
\in P} |\Delta h(\norm{\mat{x}-\mat{p}})|} {\sum_{\mat{p} \in P}
h(\norm{\mat{x}-\mat{p}})}. \label{eq:C11}
\end{eqnarray}
By symmetry,
\begin{eqnarray}
\norm{\Delta\mat{C}_{12}} = \norm{\Delta\mat{C}_{21}} 
& = & \frac{\left\|\sum_{\mat{p},\mat{q} \in P}
h(\norm{\mat{x}-\mat{p}}) \cdot \Delta h(\norm{\mat{x}-\mat{q}})
\cdot (\mat{Y}_\mat{q}^{} \cdot \mat{Z}_\mat{q}^t -
\mat{Y}_\mat{p}^{} \cdot \mat{Z}_\mat{p}^t)\right\|} {(\sum_{\mat{p} \in P}
h(\norm{\mat{x}-\mat{p}}))^2} \nonumber \\
& \leq & \frac{ \sum_{\mat{p},\mat{q} \in P}
h(\norm{\mat{x}-\mat{p}}) \cdot |\Delta h(\norm{\mat{x}-\mat{q}})| 
\cdot \norm{\mat{Y}_\mat{q}^{} \cdot \mat{Z}_\mat{q}^t -
\mat{Y}_\mat{p}^{} \cdot \mat{Z}_\mat{p}^t}} {(\sum_{\mat{p} \in P}
h(\norm{\mat{x}-\mat{p}}))^2} \nonumber \\
& = & \frac{ O(m\gamma) \cdot \sum_{\mat{p} \in P}
|\Delta h(\norm{\mat{x}-\mat{p}})|} {\sum_{\mat{p} \in P}
h(\norm{\mat{x}-\mat{p}})}, \label{eq:C12}
\end{eqnarray}
Similarly,
\begin{eqnarray}
\norm{\Delta\mat{C}_{22}}
& = &
\frac{\left\| \sum_{\mat{p},\mat{q} \in P} h(\norm{\mat{x}-\mat{p}}) \cdot
\Delta h(\norm{\mat{x}-\mat{q}}) \cdot
(\mat{Z}_\mat{q}^{} \cdot \mat{Z}_\mat{q}^t -
\mat{Z}_\mat{p}^{} \cdot \mat{Z}_\mat{p}^t)\right\|} {(\sum_{\mat{p} \in P}
h(\norm{\mat{x}-\mat{p}}))^2} \nonumber \\
& \leq &
\frac{ \sum_{\mat{p},\mat{q} \in P} h(\norm{\mat{x}-\mat{p}}) \cdot
|\Delta h(\norm{\mat{x}-\mat{q}})| \cdot
\norm{\mat{Z}_\mat{q}^{} \cdot \mat{Z}_\mat{q}^t -
\mat{Z}_\mat{p}^{} \cdot \mat{Z}_\mat{p}^t}} {(\sum_{\mat{p} \in P}
h(\norm{\mat{x}-\mat{p}}))^2} \nonumber \\
& = & \frac{O(m^2\gamma^2) \cdot \sum_{\mat{p} \in
P} |\Delta h(\norm{\mat{x}-\mat{p}})|} {\sum_{\mat{p} \in P}
h(\norm{\mat{x}-\mat{p}})}. \label{eq:C22} 
\end{eqnarray}

From the discussion of generalized gershgorin sets (Section~\ref{sec:basics}), we have
\begin{eqnarray}
\norm{\Delta\mat{C}_\mat{x}} & \leq &
\max\{\norm{\Delta\mat{C}_{11}} +
\norm{\Delta\mat{C}_{12}}\,,\, \norm{\Delta\mat{C}_{21}} +
\norm{\Delta\mat{C}_{22}}\}.  \label{eq:Cx}
\end{eqnarray}
The correctness of (i) is then proved by plugging into \eqref{eq:Cx} the
inequalities \eqref{eq:C11}, \eqref{eq:C12}, and \eqref{eq:C22}.

Define the following generalized gershgorin sets:
\begin{eqnarray*}
G_1 & = &
\left\{\mu : \frac{1}{\norm{(\mat{C}_{11} + \Delta\mat{C}_{11} -
\mu \mat{I}_m)^{-1}}} \leq \norm{\mat{C}_{12} + \Delta\mat{C}_{12}}\right\}, \\
G_2 & = & \left\{\mu : \frac{1}{\norm{(\mat{C}_{22} +
\Delta\mat{C}_{22}- \mu \mat{I}_{d-m})^{-1}}} \leq
\norm{\mat{C}_{21}+\Delta\mat{C}_{21}}\right\}.  
\end{eqnarray*}
We give a lower bound for the values in $G_1$ and an upper bound for 
the values in $G_2$.

Consider $G_1$.  The minimum eigenvalue of $\mat{C}_{11}+\Delta\mat{C}_{11}$ is at least the
minimum eigenvalue of $\mat{C}_{11}$ minus
$\norm{\Delta\mat{C}_{11}}$.  Therefore, by Lemma~\ref{lem:C} and \eqref{eq:C11},
\[
\text{minimum eigenvalue of $\mat{C}_{11}+\Delta\mat{C}_{11}
\geq 1 - O(m^2\gamma^2) - 
\frac{O(m^2\gamma^2) \cdot \sum_{\mat{p} \in P} 
|\Delta h(\norm{\mat{x}-\mat{p}})|}{\sum_{\mat{p} \in P}
h(\norm{\mat{x}-\mat{p}})}$}.
\]
On the other hand, by Lemma~\ref{lem:C} and \eqref{eq:C12},
\begin{eqnarray}
\norm{\mat{C}_{12} + \Delta\mat{C}_{12}}
& \leq & \norm{\mat{C}_{12}} + \norm{\Delta\mat{C}_{12}}  \nonumber \\
& \leq &
O(m\gamma) + \frac{O(m\gamma) \cdot \sum_{\mat{p} \in P} 
|\Delta h(\norm{\mat{x}-\mat{p}})|}{\sum_{\mat{p} \in P}
h(\norm{\mat{x}-\mat{p}})}.  \label{eq:C1212}
\end{eqnarray}
The values in $G_1$ are at least the minimum eigenvalue value of $\mat{C}_{11}
+ \Delta\mat{C}_{11}$ minus $\norm{\mat{C}_{12} + \Delta\mat{C}_{12}}$. 
Therefore,
\begin{equation}
\min\{\mu : \mu \in G_1\} \geq
1 - O(m\gamma) -
\frac{O(m\gamma) \cdot \sum_{\mat{p} \in P} 
|\Delta h(\norm{\mat{x}-\mat{p}})|}{\sum_{\mat{p} \in P}
h(\norm{\mat{x}-\mat{p}})}.
\label{eq:G1}
\end{equation}

Consider $G_2$.  By Lemma~\ref{lem:C} and \eqref{eq:C22},
\begin{eqnarray*}
\norm{\mat{C}_{22} + \Delta\mat{C}_{22}}
& \leq & \norm{\mat{C}_{22}} + \norm{\Delta\mat{C}_{22}}  \\
& \leq & O(m^2\gamma^2) + 
\frac{O(m^2\gamma^2) \cdot \sum_{\mat{p} \in P}
|\Delta h(\norm{\mat{x}-\mat{p}})|}{\sum_{\mat{p} \in P}
h(\norm{\mat{x}-\mat{p}})}.
\end{eqnarray*}
By symmetry and \eqref{eq:C1212},
\begin{eqnarray*}
\norm{\mat{C}_{21} + \Delta\mat{C}_{21}} 
= \norm{\mat{C}_{12} + \Delta\mat{C}_{12}} 
& \leq & O(m\gamma) + \frac{O(m\gamma) \cdot \sum_{\mat{p} \in P}
|\Delta h(\norm{\mat{x}-\mat{p}})|}{\sum_{\mat{p} \in P}
h(\norm{\mat{x}-\mat{p}})}.  
\end{eqnarray*}
The values in $G_2$ are at most
$\norm{\mat{C}_{22} + \Delta\mat{C}_{22}} + \norm{\mat{C}_{21} +
\Delta\mat{C}_{21}}$.  Therefore,
\begin{equation}
\max\{\mu : \mu \in G_2\} = O(m\gamma) + \frac{O(m\gamma) \cdot \sum_{\mat{p}
\in P} |\Delta h(\norm{\mat{x}-\mat{p}})|}{\sum_{\mat{p} \in P}
h(\norm{\mat{x}-\mat{p}})}.
\label{eq:G2}
\end{equation}

It follows from \eqref{eq:G1} and \eqref{eq:G2} that $G_1$ and $G_2$ are disjoint
because every number in $G_2$ is much smaller than those in $G_1$.
Lemma~\ref{lem:gc} implies that $G_1$ contains the $m$ largest eigenvalues of
$\mat{C}_\mat{x} + \Delta\mat{C}_\mat{x}$.  The 
correctness of (ii) then follows from \eqref{eq:G1}.
\end{proof}

We need another technical result on bounding $|\Delta
h(\norm{\mat{x}-\mat{p}})|$ from above and $h(\norm{\mat{x}-\mat{q}})$ from
below, where $\mat{q}$ is the nearest sample point to $\nu(\mat{x})$.

\begin{lemma}\label{lemma::in_out_ratio}
Let $\mat{x}$ be any point at distance $2\eps$ or less from $\mani$.
\begin{emromani}

\item For all $\mat{p} \in P$, $\displaystyle |\Delta
h(\norm{\mat{x}-\mat{p}})| \leq \left(1 -
\frac{\norm{\mat{x}-\mat{p}}}{m\gamma}\right)^{2m-1} \cdot
O\left(\frac{m\delta_k}{\gamma}\right)$.

\item $h(\norm{\mat{x}-\mat{q}}) > 0.06$, where $\mat{q}$ is the nearest sample
point to $\nu(\mat{x})$.

\end{emromani}
\end{lemma}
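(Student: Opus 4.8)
The two parts are independent. For part~(i), the plan is to differentiate the composite map $\mat{x}\mapsto h(\norm{\mat{x}-\mat{p}})$ directly. The asserted bound is only non-trivial when $s:=\norm{\mat{x}-\mat{p}}<m\gamma$, since its right-hand side is nonnegative precisely on that range; so assume $s<m\gamma$ (outside $B(\mat{x},m\gamma)$ we have $h'(\norm{\mat{x}-\mat{p}})=0$, hence $\partial h(\norm{\mat{x}-\mat{p}})/\partial x_k=0$ and $\Delta h(\norm{\mat{x}-\mat{p}})=0$; the degenerate case $\mat{x}=\mat{p}$ is covered by $h'(0)=0$). By the chain rule, $\partial h(\norm{\mat{x}-\mat{p}})/\partial x_k=h'(s)\,(x_k-p_k)/s$ with $|(x_k-p_k)/s|\le1$, so $|\Delta h(\norm{\mat{x}-\mat{p}})|\le|h'(s)|\,\delta_k$. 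A product-rule computation on $h(s)=(1-\tfrac{s}{m\gamma})^{2m}(\tfrac{2s}{\gamma}+1)$ simplifies to $h'(s)=-\tfrac{2s}{\gamma^2}(2+\tfrac1m)(1-\tfrac{s}{m\gamma})^{2m-1}$ on $[0,m\gamma]$, and since $s\le m\gamma$ the coefficient obeys $\tfrac{2s}{\gamma^2}(2+\tfrac1m)\le\tfrac{2m\gamma}{\gamma^2}(2+\tfrac1m)=\tfrac{4m+2}{\gamma}=O(m/\gamma)$. Combining the two estimates yields $|\Delta h(\norm{\mat{x}-\mat{p}})|\le(1-\tfrac{s}{m\gamma})^{2m-1}\cdot O(m\delta_k/\gamma)$.

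For part~(ii), I would first estimate $\norm{\mat{x}-\mat{q}}$. Since $\nu(\mat{x})\in\mani$ and $P$ is a uniform $(\eps,\kappa)$-sample, $\norm{\nu(\mat{x})-\mat{q}}\le\eps$; combined with $\norm{\mat{x}-\nu(\mat{x})}\le2\eps$, the triangle inequality, and $\gamma=4\eps$, this gives $\norm{\mat{x}-\mat{q}}\le3\eps=\tfrac34\gamma$. Because $m\ge1$, this is less than $m\gamma$, so $h(\norm{\mat{x}-\mat{q}})=(1-\tfrac{\norm{\mat{x}-\mat{q}}}{m\gamma})^{2m}(\tfrac{2\norm{\mat{x}-\mat{q}}}{\gamma}+1)\ge(1-\tfrac{3}{4m})^{2m}$, where I drop the factor $\tfrac{2\norm{\mat{x}-\mat{q}}}{\gamma}+1\ge1$ and use that $t\mapsto(1-t)^{2m}$ is decreasing on $[0,1]$ together with $\tfrac{\norm{\mat{x}-\mat{q}}}{m\gamma}\le\tfrac{3}{4m}<1$.

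It then remains to lower bound $(1-\tfrac{3}{4m})^{2m}$ uniformly over integers $m\ge1$. The plan is to show that $m\mapsto(1-\tfrac{3}{4m})^{2m}$ is increasing, so that its minimum occurs at $m=1$: taking logarithms and differentiating reduces this to the elementary inequality $\ln(1-w)+\tfrac{w}{1-w}\ge0$ for $w\in[0,1)$ (which holds because the left side vanishes at $w=0$ and has derivative $\tfrac{w}{(1-w)^2}\ge0$), applied with $w=\tfrac{3}{4m}$. Hence $(1-\tfrac{3}{4m})^{2m}\ge(1-\tfrac34)^2=\tfrac1{16}>0.06$, which gives $h(\norm{\mat{x}-\mat{q}})>0.06$.

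The calculus in part~(i) and the distance estimate in part~(ii) are routine; the one point requiring care is the final uniform-in-$m$ bound, where crude exponential inequalities such as $1-w\ge e^{-w/(1-w)}$ are far too lossy at $w=\tfrac34$, so one must instead exploit the exact monotonicity of $(1-\tfrac{3}{4m})^{2m}$ and read off the worst case $m=1$ directly.
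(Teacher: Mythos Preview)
Your proof is correct and follows essentially the same approach as the paper's. For part~(i) you simplify $h'(s)$ to the closed form $-\tfrac{2s}{\gamma^2}(2+\tfrac1m)(1-\tfrac{s}{m\gamma})^{2m-1}$ before bounding, whereas the paper keeps the two product-rule terms separate and bounds each; your route is slightly cleaner but equivalent. For part~(ii) both arguments reduce to $(1-\tfrac{3}{4m})^{2m}\ge\tfrac1{16}$; the paper simply asserts the minimum occurs at $m=1$, while you supply the log-derivative monotonicity justification explicitly.
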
 
\begin{proof}
Consider (i).
Since $\Delta h(\norm{\mat{x}-\mat{p}}) = 0$ for any $\mat{p} \in P \setminus
B(\mat{x},m\gamma)$, we only need to consider the case 
of $\norm{\mat{x}-\mat{p}} \leq m\gamma$.  Taking derivative gives
\begin{eqnarray*}
|\Delta h(\norm{\mat{x}-\mat{p}})| 
& \leq &  
2m\left(1 - \frac{\norm{\mat{x}-\mat{p}}}{m\gamma}\right)^{2m-1}
\left(\frac{2\norm{\mat{x}-\mat{p}}}{\gamma} + 1\right) \cdot 
\frac{|x_k-p_k|}{m\gamma\norm{\mat{x}-\mat{p}}} \cdot \delta_k + \\ 
& & \left(1 - \frac{\norm{\mat{x}-\mat{p}}}{m\gamma}\right)^{2m} \cdot
\frac{2|x_k-p_k|}{\gamma\norm{\mat{x}-\mat{p}}} \cdot \delta_k \\
& \leq & \left(1 -
\frac{\norm{\mat{x}-\mat{p}}}{m\gamma}\right)^{2m-1} \cdot 
O\left(\frac{m\delta_k}{\gamma}\right),   
\end{eqnarray*}
establishing the correctness of (i).

Consider (ii).
As $P$ is a uniform $(\eps,\kappa)$-sample, $\norm{\mat{q}-\nu(\mat{x})} \leq \eps$.
Therefore, $\norm{\mat{q}-\mat{x}} \leq \norm{\mat{x}-\nu(\mat{x})} +
\norm{\nu(\mat{x})-\mat{q}} \leq 3\eps$.  Then, 
\begin{eqnarray*}
h(\norm{\mat{x}-\mat{q}})
& = & \left(1-\frac{\norm{\mat{x}-\mat{q}}}{m\gamma}\right)^{2m}
\left(\frac{2\norm{\mat{x}-\mat{q}}}{\gamma}+1\right) \\
& \geq &
\left(1-\frac{3\eps}{m\gamma}\right)^{2m} \\
& = & \left(1 - \frac{3}{4m}\right)^{2m}.
\end{eqnarray*}
The minimum of $\left(1- \frac{3}{4m}\right)^{2m}$ is achieved at $m=1$, and it is equal to 0.0625.
\end{proof}

The following lemma allows us to ignore the contribution of the points near the boundary of $B(\mat{x}, m\gamma)$ in $\frac{\sum_{\mat{p} \in P \cap B(\mat{x}, m\gamma)} |\Delta h(\norm{\mat{x}-\mat{p}})|}{\sum_{\mat{p} \in P \cap B(\mat{x}, m\gamma)} h(\norm{\mat{x}-\mat{p}})}$.

\begin{lemma}
	\label{lem:center}
	Let $\mat{x}$ be any point at distance $2\eps$ or less from $\mani$.  Let $P$ be a uniform $(\eps,\kappa)$-sample of $\mani$.  Let $r = \sqrt{m} \eps/3$.  Then,
	\[
	\sum_{\mat{p} \in P \cap B(\mat{x}, m\gamma)} \left(1-\frac{\norm{\mat{x}-\mat{p}}}{m \gamma}\right)^{2m-1} \leq (23\kappa+1) \cdot \sum_{\mat{p} \in P \cap B(\mat{x},m\gamma - r)} \left(1-\frac{\norm{\mat{x}-\mat{p}}}{m \gamma}\right)^{2m-1}.
	\]
\end{lemma}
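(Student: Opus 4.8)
The plan is to split the sum over $P \cap B(\mat{x},m\gamma)$ into the part over the inner ball $B(\mat{x},m\gamma-r)$ and the part over the outer annulus $A = B(\mat{x},m\gamma)\setminus B(\mat{x},m\gamma-r)$, and then to show that the annular part is at most $23\kappa$ times the inner part. Writing $g(s) = \bigl(1-\tfrac{s}{m\gamma}\bigr)^{2m-1}$ for the summand (nonnegative and strictly decreasing on $[0,m\gamma]$, with $g(m\gamma)=0$), this reduces the claim to
\[
\sum_{\mat{p}\in P\cap A} g(\norm{\mat{x}-\mat{p}}) \;\leq\; 23\kappa\sum_{\mat{p}\in P\cap B(\mat{x},m\gamma-r)} g(\norm{\mat{x}-\mat{p}}),
\]
since the inner sum is present on both sides and accounts for the ``$+1$''.

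For the right-hand side, I would keep only the sample point $\mat{q}$ nearest $\nu(\mat{x})$. Uniform $(\eps,\kappa)$-sampling gives $\norm{\mat{q}-\nu(\mat{x})}\leq\eps$, hence $\norm{\mat{x}-\mat{q}}\leq\norm{\mat{x}-\nu(\mat{x})}+\norm{\nu(\mat{x})-\mat{q}}\leq 3\eps$; and since $3 \leq 4m-\sqrt{m}/3$ for every integer $m\geq 1$ we have $3\eps \leq m\gamma-r$, so $\mat{q}\in B(\mat{x},m\gamma-r)$. As $g$ is decreasing, the right-hand side is at least $g(\norm{\mat{x}-\mat{q}})\geq g(3\eps)=\bigl(1-\tfrac{3}{4m}\bigr)^{2m-1}\geq\bigl(1-\tfrac{3}{4m}\bigr)^{2m}\geq\tfrac{1}{16}$, the last step being exactly the computation carried out in the proof of Lemma~\ref{lemma::in_out_ratio}(ii).

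For the left-hand side, every $\mat{p}\in P\cap A$ satisfies $\norm{\mat{x}-\mat{p}}>m\gamma-r$, so $g(\norm{\mat{x}-\mat{p}})<g(m\gamma-r)=(r/(m\gamma))^{2m-1}$; with $\gamma=4\eps$ and $r=\sqrt{m}\eps/3$ we get $r/(m\gamma)=1/(12\sqrt{m})$, so each term is below $(12\sqrt{m})^{-(2m-1)}$. Also $|P\cap A|\leq|P\cap B(\mat{x},4m\eps)|\leq(16m+1)^m\kappa$ by Lemma~\ref{lem:ball} with $t=4m$ (which is admissible since $\eps$ is small enough). Hence the left-hand side is at most $(16m+1)^m\kappa\cdot(12\sqrt{m})^{-(2m-1)}$, and the reduction now comes down to the purely numerical inequality $(16m+1)^m(12\sqrt{m})^{-(2m-1)}\leq\tfrac{23}{16}$ for all integers $m\geq 1$, equivalently $(16m+1)^m\leq\tfrac{23}{192}\,144^m m^{m-1/2}$.

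This last inequality is the only place where real work is needed, and I expect it to be the main obstacle. I would prove it by checking that the ratio $\phi(m)=\tfrac{23}{192}\,m^{-1/2}\bigl(\tfrac{144m}{16m+1}\bigr)^m$ of its two sides is at least $1$: one computes $\tfrac{d}{dm}\ln\phi(m)=-\tfrac{1}{2m}+\ln\tfrac{144m}{16m+1}+\tfrac{1}{16m+1}$, which is positive for $m\geq 1$ because $\tfrac{144m}{16m+1}\geq\tfrac{144}{17}$ there, so $\phi$ is increasing on $[1,\infty)$ and $\phi(m)\geq\phi(1)=\tfrac{69}{68}>1$. (The bound is nearly tight at $m=1$, where the two sides are $17$ and $17.25$; this is precisely what pins down the constant $23$.) Everything else — the splitting, the monotonicity of $g$, the $3\eps$ estimate, and the count $(16m+1)^m\kappa$ — is immediate from the definitions and from Lemmas~\ref{lem:ball} and~\ref{lemma::in_out_ratio}.
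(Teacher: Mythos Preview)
Your proposal is correct and follows essentially the same approach as the paper: both split the sum into the inner ball and the annulus, lower-bound the inner sum by $\tfrac{1}{16}$ using the nearest sample point $\mat{q}$ at distance at most $3\eps$, upper-bound the annular sum by $(16m+1)^m\kappa\cdot(12\sqrt{m})^{-(2m-1)}$ via Lemma~\ref{lem:ball}, and then verify the resulting numerical inequality. The only difference is cosmetic: the paper simplifies the annular bound via $(16m+1)^m\leq(17m)^m$ to get $17^m\sqrt{m}/12^{2m-1}\leq 17/12$, whereas you keep $(16m+1)^m$ and use a monotonicity-in-$m$ argument; both arrive at the same constant $23$, which is pinned down by the $m=1$ case.
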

\begin{proof}
Observe that
\begin{eqnarray*}
\sum_{\mat{p} \in P \cap B(\mat{x}, m\gamma) } \left(1-\frac{\norm{\mat{x}-\mat{p}}}{m \gamma}\right)^{2m-1}  & = & \sum_{\mat{p} \in P \cap B(\mat{x}, m\gamma-r) } \left(1-\frac{\norm{\mat{x}-\mat{p}}}{m \gamma}\right)^{2m-1} + \\
& & \sum_{\mat{p} \in P \cap B(\mat{x}, m\gamma) \setminus B(\mat{x}, m\gamma-r) } \left(1-\frac{\norm{\mat{x}-\mat{p}}}{m \gamma}\right)^{2m-1}.
\end{eqnarray*}
We prove the lemma by bounding the two terms on the right hand side above.

We show a lower bound for the first term. As $P$ is a uniform $(\eps,\kappa)$-sample, there exists some point $\mat{q} \in P$ such that $\norm{\mat{q}-\nu(\mat{x})} \leq \eps$. Therefore, $\norm{\mat{q}-\mat{x}} \leq \norm{\mat{x}-\nu(\mat{x})} +
\norm{\nu(\mat{x})-\mat{q}} \leq 3\eps \leq m\gamma - r$. Then, 
\begin{eqnarray*}
	\sum_{\mat{p} \in P \cap B(\mat{x}, m\gamma-r) } \left(1-\frac{\norm{\mat{x}-\mat{p}}}{m \gamma}\right)^{2m-1} 
	& \geq  & \left(1-\frac{\norm{\mat{x}-\mat{q}}}{m \gamma}\right)^{2m-1} \\
	& \geq & \left(1 - \frac{3\eps}{m\gamma}\right)^{2m-1} \\
	& \geq & \left(1 - \frac{3}{4m}\right)^{2m}.
\end{eqnarray*}
The quantity $\left(1 - \frac{3}{4m}\right)^{2m}$ achieves its minimum of 1/16 when $m=1$. Hence, 
\[
\sum_{\mat{p} \in P \cap B(\mat{x}, m\gamma-r) } \left(1-\frac{\norm{\mat{x}-\mat{p}}}{m \gamma}\right)^{2m-1}\geq \frac{1}{16}.
\]

We show an upper bound for the second term.  For any point $\mat{p} \in B(\mat{x}, m\gamma) \setminus B(\mat{x}, m\gamma-r)$, $\left(1-\frac{\norm{\mat{x}-\mat{p}}}{m \gamma}\right)^{2m-1}$ achieves its maximum of $\bigl(\frac{r}{m \gamma}\bigr)^{2m-1} = \bigl(\frac{1}{12\sqrt{m}}\bigr)^{2m-1}$ when $\norm{\mat{x}-\mat{p}} = m\gamma-r$. By Lemma~\ref{lem:ball}, $|P \cap B(\mat{x}, m\gamma) \setminus B(\mat{x}, m\gamma-r)| \leq |P \cap B(\mat{x},m\gamma)| \leq (4m\gamma/\eps + 1)^m \kappa$.
Therefore, 
\begin{eqnarray*}
\sum_{\mat{p} \in P \cap B(\mat{x}, m\gamma) \setminus B(\mat{x}, m\gamma-r) } \left(1-\frac{\norm{\mat{x}-\mat{p}}}{m \gamma}\right)^{2m-1} 
& \leq & 
(16m+1)^m\kappa  \left(\frac{1}{12\sqrt{m}}\right)^{2m-1} \\
& \leq & (17)^m \kappa \sqrt{m}/ 12^{2m-1} \\
& \leq & 17 \kappa/12.
\end{eqnarray*}
%
Therefore, the second term is at most the first term multiplied by $23\kappa$. 
%
\end{proof}

We bound the turn of $L_\mat{x}$ when $\mat{x}$ moves slightly in the next
result. 

\begin{lemma} 
\label{lemma::normal_change}
For every point $\mat{x}$ at distance at most $2\eps$ from $\mani$ and for every
vector $\Delta \mat{x} \in N_{\nu(\mat{x})} \cup T_{\nu(\mat{x})}$, if
$\norm{\Delta\mat{x}}$ is small enough and $\mat{x} + \Delta\mat{x}$ is at
distance $2\eps$ or less from $\mani$, then $\angle (L_\mat{x}, L_{\mat{x} +
\Delta\mat{x}}) = O(\kappa m^2 \,\norm{\Delta\mat{x}})$.
\end{lemma}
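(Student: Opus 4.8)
The plan is to reduce the claim to a spectral‑perturbation estimate for $\mat{C}_\mat{x}$ and then to bound the perturbation with the weight‑function estimates of Lemmas~\ref{lem:delta_C}, \ref{lemma::in_out_ratio}, and~\ref{lem:center}.

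First I would fix coordinates as in Lemma~\ref{lem:C}, so that $T_{\nu(\mat{x})}$ is spanned by the first $m$ axes. By hypothesis $\mat{x}+\Delta\mat{x}$ is also at distance at most $2\eps$ from $\mani$, so Lemmas~\ref{lem:C} and~\ref{lem:delta_C} apply to both points. Taking orthogonal complements does not change the angle, so it suffices to bound $\angle(\col{\mat{A}_\mat{x}},\col{\mat{A}_{\mat{x}+\Delta\mat{x}}})$, where $\mat{A}_\mat{x}$ and $\mat{A}_{\mat{x}+\Delta\mat{x}}$ collect the $m$ most dominant (unit) eigenvectors of $\mat{C}_\mat{x}$ and $\mat{C}_{\mat{x}+\Delta\mat{x}}$. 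By Lemma~\ref{lem:C} together with the Gershgorin argument in the proof of Lemma~\ref{lemma::normal_angle}, for each of these two matrices the $m$ largest eigenvalues are at least $1-O(m\gamma)$ while the remaining $d-m$ are $O(m\gamma)$; hence, for $\eps$ small enough, each has an $\Omega(1)$ spectral gap between its $m$‑th and $(m{+}1)$‑st eigenvalue, independent of $m$.

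The quantitative core is to bound $\norm{\mat{M}_2}$ with $\mat{M}_2=\mat{C}_{\mat{x}+\Delta\mat{x}}-\mat{C}_\mat{x}$. The map $\mat{x}\mapsto\mat{C}_\mat{x}$ is $C^1$ near $\mani$ (the potential trouble spot $\mat{x}=\mat{p}$ is harmless since $h'(0)=0$), so $\mat{M}_2$ equals the directional derivative $\Delta\mat{C}_\mat{x}$ of Lemma~\ref{lem:delta_C} along $\Delta\mat{x}$ plus $o(\norm{\Delta\mat{x}})$; the estimates of Lemmas~\ref{lem:delta_C} and~\ref{lemma::in_out_ratio}(i) go through with $\delta_k$ replaced by $\norm{\Delta\mat{x}}$ because $|\nabla_\mat{x}\norm{\mat{x}-\mat{p}}\cdot\Delta\mat{x}|\le\norm{\Delta\mat{x}}$. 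Lemma~\ref{lem:delta_C}(i) reduces everything to the ratio $\rho:=\bigl(\sum_{\mat{p}}|\Delta h(\norm{\mat{x}-\mat{p}})|\bigr)\big/\bigl(\sum_{\mat{p}}h(\norm{\mat{x}-\mat{p}})\bigr)$, since $\norm{\mat{M}_2}=O(m\gamma)\cdot\rho+o(\norm{\Delta\mat{x}})$. Differentiating $h$ gives $|\Delta h(\norm{\mat{x}-\mat{p}})|\le O(m\norm{\Delta\mat{x}}/\gamma)\cdot u_\mat{p}(1-u_\mat{p})^{2m-1}$ with $u_\mat{p}=\norm{\mat{x}-\mat{p}}/(m\gamma)$; it is essential to keep the factor $u_\mat{p}$ rather than bounding it by $1$ as in Lemma~\ref{lemma::in_out_ratio}(i). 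Now split $P\cap B(\mat{x},m\gamma)$ at radius $m\gamma-r$ with $r=\sqrt{m}\eps/3$. For $\mat{p}$ inside, $1-u_\mat{p}\ge\frac{1}{12\sqrt{m}}$ yields $u_\mat{p}(1-u_\mat{p})^{2m-1}=\frac{u_\mat{p}}{(1-u_\mat{p})(1+2mu_\mat{p})}\,h(\norm{\mat{x}-\mat{p}})\le\frac{O(1)}{\sqrt{m}}\,h(\norm{\mat{x}-\mat{p}})$, because $\frac{1}{1-u_\mat{p}}\le 12\sqrt{m}$ while $\frac{u_\mat{p}}{1+2mu_\mat{p}}\le\frac{1}{2m}$. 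For the $O((16m+1)^m\kappa)$ points outside (Lemma~\ref{lem:ball}) each term is at most $(\frac{1}{12\sqrt{m}})^{2m-1}$, so their total is $O(\kappa/\sqrt{m})$, which, via the lower bound $\sum_{\mat{p}}h(\norm{\mat{x}-\mat{p}})>0.06$ of Lemma~\ref{lemma::in_out_ratio}(ii), is $O(\kappa/\sqrt{m})$ times $\sum_{\mat{p}}h$. (This two‑part estimate is exactly the content of Lemma~\ref{lem:center}.) Hence $\rho=O(\kappa\sqrt{m}\,\norm{\Delta\mat{x}}/\gamma)$ and $\norm{\mat{M}_2}=O(\kappa m^{3/2}\norm{\Delta\mat{x}})$.

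Finally I would invoke the perturbation estimate. Each column $\mat{e}$ of $\mat{A}_{\mat{x}+\Delta\mat{x}}$ is a unit eigenvector of $\mat{C}_\mat{x}+\mat{M}_2$ with eigenvalue $\ge 1-O(m^2\gamma^2)$; Lemma~\ref{lem:slant} applied with $\mat{M}_1=\mat{C}_\mat{x}$ (eigenvalues in decreasing order), $r=m$, and the $\Omega(1)$ gap gives $\angle(\mat{e},\col{\mat{A}_\mat{x}})=O(\norm{\mat{M}_2})$; since the columns of $\mat{A}_{\mat{x}+\Delta\mat{x}}$ are orthonormal, Lemma~\ref{lem:angle} (with $k=m$, $\phi=0$) upgrades this to $\angle(\col{\mat{A}_\mat{x}},\col{\mat{A}_{\mat{x}+\Delta\mat{x}}})=O(\sqrt{m}\,\norm{\mat{M}_2})=O(\kappa m^{2}\norm{\Delta\mat{x}})$, which is the asserted bound on $\angle(L_\mat{x},L_{\mat{x}+\Delta\mat{x}})$. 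The step I expect to be the main obstacle is the third paragraph: the denominator $\sum_{\mat{p}}h(\norm{\mat{x}-\mat{p}})$ can be spread over many sample points of tiny weight, while $|\Delta h|$ decays one power of $(1-u_\mat{p})$ slower than $h$ near $\partial B(\mat{x},m\gamma)$, so a naive termwise comparison of numerator and denominator diverges; it is the boundary estimate (Lemma~\ref{lem:center}), together with the extra $u_\mat{p}$ factor inside $h'$, that makes $\rho$ small, and the appearance of $\kappa$ in the final bound is its fingerprint. Everything else — the coordinate frame, the $\Omega(1)$ gap, and the applications of Lemmas~\ref{lem:slant} and~\ref{lem:angle} — is routine.
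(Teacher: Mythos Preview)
Your proposal is correct and follows the paper's approach almost exactly: write $\mat{C}_{\mat{x}+\Delta\mat{x}}=\mat{C}_\mat{x}+\mat{M}_2$, bound $\norm{\mat{M}_2}$ via Lemma~\ref{lem:delta_C}(i) and the weight estimates (Lemmas~\ref{lemma::in_out_ratio} and~\ref{lem:center}), then apply Lemma~\ref{lem:slant} to each of the $m$ dominant eigenvectors of $\mat{C}_{\mat{x}+\Delta\mat{x}}$ and upgrade with Lemma~\ref{lem:angle}. The one place you diverge is in bounding the ratio $\rho$: you retain the factor $u_\mat{p}=\norm{\mat{x}-\mat{p}}/(m\gamma)$ in $|h'|$ and compare $u_\mat{p}(1-u_\mat{p})^{2m-1}$ termwise to $h$, obtaining $u_\mat{p}/[(1-u_\mat{p})(1+2mu_\mat{p})]\le 6/\sqrt m$ uniformly on $[0,m\gamma-r]$; the paper instead discards $u_\mat{p}$ via Lemma~\ref{lemma::in_out_ratio}(i), invokes Lemma~\ref{lem:center}, and then asserts that $(1-u_\mat{p})(1+2mu_\mat{p})=\Omega(\sqrt m)$ on the inner ball. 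Since that last quantity equals $1$ at $u_\mat{p}=0$, your version is actually the cleaner of the two and makes the intermediate bound $\norm{\mat{M}_2}=O(\kappa m^{3/2}\norm{\Delta\mat{x}})$ hold without caveat.
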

\begin{proof}
Adopt a coordinate frame such that the columns of $\begin{pmatrix}
\mat{I}_m \\ \mat{0}_{d-m,m} \end{pmatrix}$ form an orthonormal basis
of $T_{\nu(\mat{x})}$,
and $\Delta\mat{x}$ points in the direction of the $x_k$-axis for some $k \in [1,d]$.  Let 
$\delta_k = \norm{\Delta\mat{x}}$.

Every entry of $\mat{C}_{\mat{x}+\Delta\mat{x}}$ is
some algebraic function in $\delta_k$.  By Taylor's Theorem,
the $(i,j)$ entry of $\mat{C}_{\mat{x}+\Delta\mat{x}}$ is equal to
the $(i,j)$ entry of $\mat{C}_\mat{x} + \Delta\mat{C}_\mat{x}$
plus or minus an $O(\delta_k^2)$ term.  Therefore, 
\[
\mat{C}_{\mat{x} + \Delta\mat{x}}
= \mat{C}_\mat{x} + \Delta\mat{C}_\mat{x} + \mat{Z},
\]
where $\mat{Z}$ is a $d \times d$ matrix in which every entry
is $\pm O(\delta_k^2)$.  It follows that
\begin{equation}
\norm{Z} = O(d\delta_k^2).  \label{eq:Z2}
\end{equation}
Since $\mat{Z} = \mat{C}_{\mat{x}+\Delta\mat{x}} - (\mat{C}_\mat{x} + 
\Delta\mat{C}_\mat{x})$, $\mat{Z}$ is real symmetric.

Let $\mat{e}$ be one of the $m$ most dominant unit eigenvectors of
$\mat{C}_{\mat{x}+\Delta\mat{x}}$.  Let $\sigma$ be the eigenvalue
of $\mat{C}_{\mat{x}+\Delta\mat{x}}$
corresponding to $\mat{e}$.  Therefore,
\[
\mat{C}_{\mat{x} + \Delta\mat{x}} \cdot
\mat{e} = (\mat{C}_\mat{x} + \Delta\mat{C}_\mat{x} + \mat{Z}) \cdot \mat{e} =
\sigma\mat{e}.
\]
Let $\mat{A}_\mat{x}$ be the $d \times m$ matrix consisting of the $m$ most
dominant unit eigenvectors of $\mat{C}_\mat{x}$.  So $\col{\mat{A}_\mat{x}}$
is the linear subspace spanned by these eigenvectors.  Let $\Lambda$ be the
set of the $d-m$ smallest eigenvalues of $\mat{C}_\mat{x}$.  
We apply Lemma~\ref{lem:slant} with $\mat{M}_1 = \mat{C}_\mat{x}$,
$\mat{M}_2 = \Delta\mat{C}_\mat{x} + \mat{Z}$, 
and $r = m$:
\begin{eqnarray}
\angle (\col{\mat{A}_\mat{x}},\mat{e}) & \leq &
\arcsin\left(\frac{\norm{\Delta\mat{C}_\mat{x} + \mat{Z}}}
{\min_{\lambda \in \Lambda} |\lambda - \sigma|}\right) \nonumber \\
& \leq &
\arcsin\left(\frac{\norm{\Delta\mat{C}_\mat{x}} + \norm{\mat{Z}}}
{\min_{\lambda \in \Lambda} |\lambda - \sigma|}\right). \label{eq:angle}
\end{eqnarray}
We bound $\angle (\col{\mat{A}_\mat{x}},\mat{e})$ by showing an upper bound
for $\norm{\Delta\mat{C}_\mat{x}}$ and a lower bound for $|\lambda - \sigma|$.

For all $\mat{p} \in P \setminus B(\mat{x},m\gamma)$,
$h(\norm{\mat{x}-\mat{p}}) = \Delta h(\mat{x}-\mat{p}) = 0$.  Then,
Lemmas~\ref{lem:delta_C}(i), \ref{lemma::in_out_ratio}(i) and~\ref{lem:center} imply
that
\begin{eqnarray*}
\norm{\Delta\mat{C}_\mat{x}} 
& \leq & \frac{O(m^2\delta_k) \cdot \sum_{\mat{p} \in
P \cap B(\mat{x},m\gamma)}
\left(1-\frac{\norm{\mat{x}-\mat{p}}}{m\gamma}\right)^{2m-1}}
{\sum_{\mat{p} \in P \cap B(\mat{x},m\gamma)}
\left(1-\frac{\norm{\mat{x}-\mat{p}}}{m\gamma}\right)^{2m}
\left(\frac{2\norm{\mat{x}-\mat{p}}}{\gamma} + 1\right)} \\
& \leq &
\frac{O(\kappa m^2\delta_k) \cdot \sum_{\mat{p} \in
		P \cap B(\mat{x},m\gamma - r)}
	\left(1-\frac{\norm{\mat{x}-\mat{p}}}{m\gamma}\right)^{2m-1}}
{\sum_{\mat{p} \in P \cap B(\mat{x},m\gamma -r)}
	\left(1-\frac{\norm{\mat{x}-\mat{p}}}{m\gamma}\right)^{2m}
	\left(\frac{2\norm{\mat{x}-\mat{p}}}{\gamma} + 1\right)}, 
\end{eqnarray*}
where $r = \sqrt{m} \eps/3$.
In the
denominator, $\left(1-\frac{\norm{\mat{x}-\mat{p}}}{m\gamma}\right)
\left(\frac{2\norm{\mat{x}-\mat{p}}}{\gamma} + 1\right)$ is at its minimum
of $\frac{2\sqrt{m}\eps}{3\gamma} - \frac{2\eps^2}{9\gamma^2} + \frac{\eps}{3\sqrt{m}\gamma} = \Omega(\sqrt{m})$ when $\norm{\mat{x}-\mat{p}} = m\gamma - r$.  It follows that 
\begin{equation}
\norm{\Delta\mat{C}_\mat{x}} = O(\kappa m^{3/2}\delta_k).
\label{eq:Cx2}
\end{equation}

Lemmas~\ref{lem:gc} and~\ref{lem:C} imply that
\begin{equation}
\max\{\lambda : \lambda \in \Lambda\} = O(m\gamma).
\label{eq:lambda}
\end{equation}
We write $\mat{C}_\mat{x} + \Delta\mat{C}_\mat{x}$ as the
sum $\mat{C}_{\mat{x}+\Delta\mat{x}} + (-\mat{Z})$ and apply
Weyl's inequality~\cite[Theorem~3.3.16]{horn} to conclude that
the eigenvalue $\sigma$ is at least the $m$-th largest eigenvalue of
$\mat{C}_\mat{x} + \Delta\mat{C}_\mat{x}$ minus the largest eigenvalue
of $-\mat{Z}$. Then, 
by Lemma~\ref{lem:delta_C}(ii) and \eqref{eq:Z2},
\[
\sigma \geq 1 - O(m\gamma) - \frac{O(m\gamma) \cdot
\sum_{\mat{p} \in P} |\Delta h(\norm{\mat{x}-\mat{p}})|} {\sum_{\mat{p} \in
P} h(\norm{\mat{x}-\mat{p}})} - O(d\delta_k^2). 
\]
Together with \eqref{eq:lambda}, we obtain
\[
\min_{\lambda \in \Lambda} |\lambda - \sigma| \geq 1 - O(m\gamma) - \frac{O(m\gamma)
\cdot \sum_{\mat{p} \in P} |\Delta h(\norm{\mat{x}-\mat{p}})|}
{\sum_{\mat{p} \in P} h(\norm{\mat{x}-\mat{p}})} -
O(d\delta_k^2).  
\]
As $\delta_k$ approaches zero, both
$\Delta h(\norm{\mat{x}-\mat{p}})$ and $O(d\delta_k^2)$ approach zero.  But
$\sum_{\mat{p} \in P} h(\norm{\mat{x}-\mat{p}}) > 0.06$ by
Lemma~\ref{lemma::in_out_ratio}(ii).  Therefore, for a sufficiently
small $\delta_k$, 
\begin{equation}
\exists\, \text{a constant $\eta > 0$ such that $\min\{\lambda \in \Lambda : |\lambda - \sigma|\}
\geq \eta$}.
\label{eq:lambda-sigma}
\end{equation}

Plugging \eqref{eq:Z2}, \eqref{eq:Cx2} and \eqref{eq:lambda-sigma} into \eqref{eq:angle} gives
\[
\angle (\col{\mat{A}_\mat{x}},\mat{e}) \leq 
\arcsin\left(\frac{O(\kappa m^{3/2}\delta_k) + O(d\delta_k^2)}
{\eta}\right) 
= O(\kappa m^{3/2}\delta_k).
\]
Since $\mat{e}$ is any one of the $m$ most dominant unit eigenvectors of 
$\mat{C}_{\mat{x} + \Delta\mat{x}}$, the angle bound $O(\kappa m^{3/2}\delta_k)$
holds for all the $m$ most dominant unit eigenvectors of
$\mat{C}_{\mat{x} + \Delta\mat{x}}$.  Then, by Lemma~\ref{lem:angle},
$\col{\mat{A}_\mat{x}}$ makes an $O(\kappa m^2\delta_k)$ angle with the space
spanned by the $m$ most dominant unit eigenvectors of $\mat{C}_{\mat{x} +
\Delta\mat{x}}$.  It follows that 
$\angle (L_\mat{x},L_{\mat{x} + \Delta\mat{x}}) = O(\kappa m^2\delta_k)$.
\end{proof}

Next, we need a technical result on the angle between a vector in some linear
subspace to its projection in another linear subspace.

\begin{lemma}
\label{lem:proj-angle}
Let $E_1$ and $E_2$ be two $(d-m)$-dimensional linear subspaces that make an angle
$\phi < \pi/2$.  Let $\mat{n}$ be a unit vector in $\real^d$.  Let $\mat{u}_i$
be the projection of $\mat{n}$ in $E_i$ for $i \in [1,2]$.  Let
$\{\mat{v}_1,\ldots,\mat{v}_{d-m}\}$ and $\{\mat{w}_1,\ldots,\mat{w}_{d-m}\}$
be bases of $E_1$ and $E_2$, respectively, that satisfy either
Lemma~\ref{lem:choice}(i) or Lemma~\ref{lem:choice}(ii).
Let $\alpha_1 = \sum_{i=1}^{d-m} (\mat{n}^t\mat{v}_i^{})^2$
and let $\alpha_2 = \sum_{i=d-2m+1}^{d-m} ((\mat{w}_i-\mat{v}_i)^t \mat{n})^2$.  
If $\alpha_1 > \alpha_2 + (2m^2\phi^2)/\cos\phi$, then
\[
\frac{\mat{u}_1^t\mat{u}_2^{}}{\norm{\mat{u}_1}\norm{\mat{u}_2}} \geq
\sqrt{1-\frac{\alpha_2}{\alpha_1}}\cos\phi -
\frac{2m^2\phi^2}{\sqrt{\alpha_1^2-\alpha_1\alpha_2}}.
\]
\end{lemma}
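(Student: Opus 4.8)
\emph{Proof proposal.}
The plan is to reduce the claim to a small-matrix computation via the structured bases of Lemma~\ref{lem:choice}, and then to dispose of one troublesome cross term by an exact algebraic identity. Let $\mat{V}$ and $\mat{W}$ be the $d\times(d-m)$ matrices whose columns are $\mat{v}_1,\ldots,\mat{v}_{d-m}$ and $\mat{w}_1,\ldots,\mat{w}_{d-m}$ (both with orthonormal columns), and set $\mat{a}=\mat{V}^t\mat{n}$ and $\mat{b}=\mat{W}^t\mat{n}$. Then $\mat{u}_1=\mat{V}\mat{a}$ and $\mat{u}_2=\mat{W}\mat{b}$, so $\norm{\mat{u}_1}^2=\norm{\mat{a}}^2=\alpha_1$ and $\mat{u}_1^t\mat{u}_2=\mat{a}^t\mat{Q}\mat{b}$ where $\mat{Q}=\mat{V}^t\mat{W}$ has $(i,j)$ entry $\mat{v}_i^t\mat{w}_j$. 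In either case of Lemma~\ref{lem:choice}, $\mat{w}_i=\mat{v}_i$ for every index outside $[d-2m+1,d-m]$ (either those vectors coincide, or that range is all of $[1,d-m]$), so $\norm{\mat{b}-\mat{a}}^2=\sum_i((\mat{w}_i-\mat{v}_i)^t\mat{n})^2=\alpha_2$, and hence $\norm{\mat{u}_2}^2=\norm{\mat{b}}^2=\alpha_1+2P+\alpha_2$ with $P:=\mat{a}^t(\mat{b}-\mat{a})$.

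Next I would estimate $\mat{Q}$ entrywise. Its diagonal entries are $\mat{v}_i^t\mat{w}_i=\cos\angle(\mat{v}_i,\mat{w}_i)\in[\cos\phi,1]$. For $i\ne j$, since $\mat{v}_i\perp\mat{v}_j$ we have $\mat{v}_i^t\mat{w}_j=\mat{v}_i^t(\mat{w}_j-\mat{v}_j)$, and the bisector property $\angle(\mat{v}_i,\mat{w}_j-\mat{v}_j)\in[\tfrac{\pi-\phi}{2},\tfrac{\pi+\phi}{2}]$ together with $\norm{\mat{w}_j-\mat{v}_j}\le 2\sin(\phi/2)$ gives $|\mat{v}_i^t\mat{w}_j|\le 2\sin^2(\phi/2)=1-\cos\phi\le\phi^2/2$; moreover the off-diagonal part $\mat{R}$ of $\mat{Q}$ is supported on a block $S$ of at most $m$ indices (those where $\mat{v}_i$ and $\mat{w}_i$ may differ). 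Writing $\mat{Q}=\mat{D}+\mat{R}$ with $\mat{D}$ diagonal, Cauchy--Schwarz over $S$ gives $|\mat{a}^t\mat{R}\mat{b}|\le\tfrac{\phi^2}{2}\bigl(\sum_{i\in S}|a_i|\bigr)\bigl(\sum_{j\in S}|b_j|\bigr)\le\tfrac{m\phi^2}{2}\norm{\mat{a}}\norm{\mat{b}}$, while splitting each diagonal entry as $\cos\phi$ plus a nonnegative remainder of size at most $1-\cos\phi$ yields $\mat{a}^t\mat{D}\mat{b}\ge\cos\phi\,(\mat{a}^t\mat{b})-(1-\cos\phi)\norm{\mat{a}}\norm{\mat{b}}$. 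Since $\mat{a}^t\mat{b}=\alpha_1+P$, these combine to $\mat{u}_1^t\mat{u}_2\ge\cos\phi\,(\alpha_1+P)-\tfrac{(m+1)\phi^2}{2}\sqrt{\alpha_1}\,\norm{\mat{b}}$.

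Dividing through by $\norm{\mat{u}_1}\norm{\mat{u}_2}=\sqrt{\alpha_1}\,\norm{\mat{b}}$, it then remains to verify two facts. First, $\tfrac{(m+1)\phi^2}{2}\le\tfrac{2m^2\phi^2}{\sqrt{\alpha_1^2-\alpha_1\alpha_2}}$, which holds because $\alpha_1\le\norm{\mat{n}}^2=1$ forces $\sqrt{\alpha_1^2-\alpha_1\alpha_2}\le 1$ and $m+1\le 4m^2$ for all $m\ge1$. Second, $\cos\phi\,(\alpha_1+P)/(\sqrt{\alpha_1}\,\norm{\mat{b}})\ge\cos\phi\,\sqrt{1-\alpha_2/\alpha_1}$, which is equivalent to $(\alpha_1+P)^2\ge\norm{\mat{b}}^2(\alpha_1-\alpha_2)=(\alpha_1+2P+\alpha_2)(\alpha_1-\alpha_2)$; the difference of the two sides is exactly $(P+\alpha_2)^2\ge0$, and $\alpha_1+P\ge\alpha_1-\sqrt{\alpha_1\alpha_2}>0$ since $\alpha_1>\alpha_2$, so square roots may be taken. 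Combining the pieces gives the stated inequality. The hypothesis $\alpha_1>\alpha_2+(2m^2\phi^2)/\cos\phi$ is not needed in the derivation above; it is precisely the condition making the right-hand side of the conclusion positive, so it only guarantees the bound is non-vacuous.

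The step I expect to be the main obstacle is handling the cross term $P=\mat{a}^t(\mat{b}-\mat{a})$ buried inside $\norm{\mat{u}_2}^2=\alpha_1+2P+\alpha_2$: it is neither small nor sign-definite, so crude estimates such as $\norm{\mat{u}_2}\le\sqrt{\alpha_1}+\sqrt{\alpha_2}$ are too lossy and even destroy the $\cos\phi$ factor. The way out is the exact identity $(\alpha_1+P)^2-(\alpha_1+2P+\alpha_2)(\alpha_1-\alpha_2)=(P+\alpha_2)^2$, which eliminates $P$; recognizing this is the crux. A secondary item to check carefully is that the bases of Lemma~\ref{lem:choice} really do force $\norm{\mat{b}-\mat{a}}^2=\alpha_2$ and confine $\mat{R}$ to a block of size at most $m$; both follow from the perpendicularity assertions there (directly in case~(ii), and in case~(i) because then $d-2m\le0$).
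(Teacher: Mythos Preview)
Your argument is correct and the setup---writing $\mat{u}_1^t\mat{u}_2=\mat{a}^t\mat{Q}\mat{b}$, identifying $\norm{\mat{u}_2}^2=\alpha_1+2P+\alpha_2$ with $P=\mat{a}^t(\mat{b}-\mat{a})$, and exploiting the block structure of $\mat{Q}$ coming from Lemma~\ref{lem:choice}---matches the paper's. The paper's $\alpha_3$ is exactly your $P$, and both proofs arrive at a lower bound of the form $\mat{u}_1^t\mat{u}_2\ge\cos\phi\,(\alpha_1+P)-(\text{small error})$. The genuine difference is how the nuisance variable $P$ is eliminated. The paper treats $\alpha_3$ as a free real parameter and minimizes the ratio $\bigl((\alpha_1+\alpha_3)\cos\phi-m^2\phi^2\bigr)/\sqrt{\alpha_1(\alpha_1+2\alpha_3+\alpha_2)}$ by calculus, finding the critical point $\alpha_3=-\alpha_2-m^2\phi^2/\cos\phi$; the hypothesis $\alpha_1>\alpha_2+2m^2\phi^2/\cos\phi$ is invoked precisely to keep the denominator real and positive at that critical point. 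You instead discover the exact identity $(\alpha_1+P)^2-(\alpha_1+2P+\alpha_2)(\alpha_1-\alpha_2)=(P+\alpha_2)^2$, which disposes of $P$ without any optimization. This is cleaner and more transparent, and it shows that only the weaker condition $\alpha_1>\alpha_2$ is needed for the inequality itself. One small correction: you do use $\alpha_1>\alpha_2$ (to guarantee $\alpha_1+P>0$ and to make $\sqrt{1-\alpha_2/\alpha_1}$ real), so the full hypothesis is not entirely idle in your derivation---it is used in weakened form; your remark that the stronger hypothesis merely ensures the right-hand side is positive is otherwise accurate.
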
 
\begin{proof}
By Lemma~\ref{lem:choice},
\begin{eqnarray}
\forall\, i \in [1,d-2m], & &\mat{v}_i = \mat{w}_i,  \label{eq:choice-1} \\
\forall\, i \in [1,d-m],  & &\angle (\mat{v}_i,\mat{w}_i) \leq \phi, \label{eq:choice-2} \\
\forall\, i,j, \in [1,d-m], & &\angle (\mat{v}_i,\mat{w}_j-\mat{v}_j)
\in \left[(\pi-\phi)/2,(\pi+\phi)/2\right]. \label{eq:choice-3}
\end{eqnarray}
If $m \geq d/2$, then \eqref{eq:choice-1} is vacuous because $[1,d-2m]$ is an
empty range.  There is no harm done as $d-m \leq d/2$ in this case and 
Lemma~\ref{lem:choice}(i) is applicable, leading to \eqref{eq:choice-2} and
\eqref{eq:choice-3} only.  If $m < d/2$, then Lemma~\ref{lem:choice}(ii) is
applicable, leading to \eqref{eq:choice-1}, \eqref{eq:choice-2} and
\eqref{eq:choice-3}. 

Since $\mat{u}_i$ is the projection of $\mat{n}$ into $E_i$, we have
\begin{eqnarray}
\mat{u}_1 & = & (\mat{v}_1 \,\, \cdots \,\, \mat{v}_{d-m})
(\mat{v}_1 \,\, \cdots \,\, \mat{v}_{d-m})^t \mat{n}, \label{eq:u_1} \\
\mat{u}_2 & = & (\mat{w}_1 \,\, \cdots \,\, \mat{w}_{d-m})
(\mat{w}_1 \,\, \cdots \,\, \mat{w}_{d-m})^t \mat{n}. \label{eq:u_2}
\end{eqnarray}

We first bound $\mat{u}_1^t
\mat{u}_2^{}$ from below.  Standard algebra
gives 
\begin{eqnarray}
\mat{u}_1^t \mat{u}_2^{} 
& = & \sum_{i \in [1,d-m]}
\mat{n}^t\mat{v}_i^{}\mat{v}_i^t\mat{w}_i^{}\mat{w}_i^t\mat{n} + 
\sum_{\substack{i\not=j, \\ i,j \in [1,d-m]}} \mat{n}^t\mat{v}_i^{}\mat{v}_i^t\mat{w}_j^{}\mat{w}_j^t\mat{n}.  \label{eq:u_1u_2}
\end{eqnarray}
We analyze the second term in \eqref{eq:u_1u_2}.
By \eqref{eq:choice-1}, if $i \not= j$ and $i$ or $j$ belongs to $[1,d-2m]$, then
$\mat{v}_i \perp \mat{w}_j$.
It implies that $\mat{v}_i^t\mat{w}_j^{} = 0$ in the second term in
\eqref{eq:u_1u_2} whenever $i$ or $j$ belongs to $[1,d-2m]$.  The
remaining case is that both $i$ and $j$ belong to $[d-2m+1,d-m]$.

Define a vector $\mat{h}_i$ for $i \in [1,d-m]$ as follows:
\[
\forall\,i \in [1,d-m],\quad \mat{h}_i = \mat{w}_i - \mat{v}_i.  
\]
It follows from \eqref{eq:choice-2} that
\begin{equation}
\norm{\mat{h}_i} =
2\sin\frac{\angle(\mat{v}_i,\mat{w}_i)}{2} \leq \phi.  \label{eq:h_i}
\end{equation}
We rewrite \eqref{eq:u_1u_2} using $\mat{w}_i = \mat{v}_i + \mat{h}_i$ for $i \in [d-2m+1,d-m]$:
\begin{eqnarray}
\mat{u}_1^t\mat{u}_2^{} & = & \sum_{i \in [1,d-m]}
\mat{n}^t\mat{v}_i^{}\mat{v}_i^t\mat{w}_i^{}\mat{w}_i^t\mat{n} \quad + 
\sum_{\substack{i \not= j, \\ i,j \in
[d-2m+1,d-m]}} \mat{n}^t\mat{v}_i^{}\mat{v}_i^t(\mat{v}_j^{}+\mat{h}_j^{})(\mat{v}_j^{}
+\mat{h}_j^{})^t\mat{n} \nonumber \\
& = & \sum_{i \in [1,d-m]}
\mat{n}^t\mat{v}_i^{}\mat{v}_i^t\mat{w}_i^{}\mat{w}_i^t\mat{n} \quad + 
\sum_{\substack{i \not= j, \\
i,j \in [d-2m+1,d-m]}} (\mat{n}^t\mat{v}_i^{}\mat{v}_i^t\mat{h}_j^{}\mat{v}_j^t\mat{n} +
\mat{n}^t\mat{v}_i^{}\mat{v}_i^t\mat{h}_j^{}\mat{h}_j^t\mat{n}). \label{eq:u_1u_2-1}
\end{eqnarray}
Notice that if $m \geq d/2$, then $d-2m+1 \leq 1$, which implies that $[d-2m+1,d-m]$ 
acts as the range $[1,d-m]$.
In this case, Lemma~\ref{lem:choice}(i) is applicable 
and so \eqref{eq:choice-1} is vacuous, meaning that there is no simplification
from \eqref{eq:u_1u_2} to \eqref{eq:u_1u_2-1}.

By \eqref{eq:choice-2}, we get
\begin{equation}
\forall\, i \in [1,d-m], \quad \mat{v}_i^t\mat{w}_i^{} \geq \cos\phi.  \label{eq:v_iw_i}
\end{equation}
Moreover,
\begin{eqnarray}
\forall\, i,j \in [1,d-m], \quad \mat{v}_i^t\mat{h}_j^{} & = &
\norm{\mat{v}_i} \norm{\mat{h}_j} \cos(\angle(\mat{v}_i,\mat{h}_j)) \nonumber \\
& = &
\norm{\mat{h}_j} \cos(\angle(\mat{v}_i,\mat{w}_j-\mat{v}_j)) \nonumber \\
& \stackrel{\eqref{eq:choice-3}}{\geq} & -\norm{\mat{h}_j}\sin(\phi/2) \nonumber \\
& \stackrel{\eqref{eq:h_i}}{\geq} & -\phi\sin(\phi/2).  \label{eq:v_iw_j}
\end{eqnarray}
By substituting \eqref{eq:v_iw_i} and \eqref{eq:v_iw_j} into the first and second terms in \eqref{eq:u_1u_2-1},
respectively, we obtain
\begin{eqnarray*}
\mat{u}_1^t\mat{u}_2^{} & \geq & \cos\phi \left(\sum_{i \in [1,d-m]}
\mat{n}^t\mat{v}_i^{}\mat{w}_i^t\mat{n}\right) \quad - \quad
\phi\sin\frac{\phi}{2} \left( 
\sum_{\substack{i \not= j,\\ i,j \in [d-2m+1,d-m]}} (\mat{n}^t\mat{v}_i^{}\mat{v}_j^t\mat{n} +
\mat{n}^t\mat{v}_i^{}\mat{h}_j^t\mat{n})\right) \\
& = & 
\cos\phi \left(\sum_{i \in [1,d-m]}
\mat{n}^t\mat{v}_i^{}\mat{w}_i^t\mat{n}\right) \quad - \quad
\phi\sin\frac{\phi}{2} \left( 
\sum_{\substack{i \not= j,\\ i,j \in [d-2m+1,d-m]}} \mat{n}^t\mat{v}_i^{}\mat{w}_j^t\mat{n} \right).
\end{eqnarray*}
Both $\mat{n}^t\mat{v}_i$ and $\mat{w}_j^t \mat{n}$ are at most 1, which implies
that $\mat{n}^t\mat{v}_i^{}\mat{w}_j^t\mat{n} \leq 1$.  Therefore,
\[
\mat{u}_1^t\mat{u}_2^{} \, \geq \, \cos\phi \left(\sum_{i \in [1,d-m]}
\mat{n}^t\mat{v}_i^{}\mat{w}_i^t\mat{n}\right) - m^2\phi^2.
\]

Recall from the lemma statement that
$\alpha_1 = \sum_{i=1}^{d-m} (\mat{n}^t\mat{v}_i^{})^2$ and
$\alpha_2 = \sum_{i=d-2m+1}^{d-m} (\mat{h}_i^t \mat{n})^2$.  
We define one more quantity:
\[
\alpha_3 = \sum_{i=d-2m+1}^{d-m} \mat{n}^t\mat{v}_i^{}\mat{h}_i^t\mat{n}.
\]
Standard algebraic manipulation shows that $\alpha_1 + \alpha_3 = \sum_{i\in
[1,d-m]} \mat{n}^t\mat{v}_i^{}\mat{w}_i^t\mat{n}$, and therefore, 
\[
\mat{u}_1^t\mat{u}_2^{} \, \geq \, (\alpha_1+\alpha_3)\cos\phi - m^2\phi^2.
\]
By definition,
\begin{eqnarray*}
\norm{\mat{u}_1} 
& = & \sqrt{\sum_{i \in [1,d-m]} (\mat{n}^t\mat{v}_i)^2} \,\,\, = \,\,\, \sqrt{\alpha_1}, \\
\norm{\mat{u}_2} 
& = & \sqrt{\sum_{i \in [1,d-m]} (\mat{n}^t\mat{w}_i)^2} \\ 
& \stackrel{\eqref{eq:choice-1}}{=} & 
\sqrt{\sum_{i \in [1,d-2m]} (\mat{n}^t\mat{v}_i)^2 \quad + 
\sum_{i \in [d-2m+1,d-m]} (\mat{n}^t\mat{w}_i)^2 }   \\
& = & 
\sqrt{\sum_{i \in [1,d-2m]} (\mat{n}^t\mat{v}_i)^2 \quad + 
\sum_{i \in [d-2m+1,d-m]} (\mat{n}^t(\mat{v}_i + \mat{h}_i))^2 }   \\
& = & \sqrt{\sum_{i \in [1,d-m]} (\mat{n}^t\mat{v}_i^{})^2 + \sum_{i \in [d-2m+1,d-m]} 
(2\mat{n}^t\mat{v}_i^{}\mat{h}_i^t\mat{n} + (\mat{h}^t_i\mat{n})^2)} \\
& = & \sqrt{\alpha_1 + 2\alpha_3 + \alpha_2}.  
\end{eqnarray*}
Consequently,
\begin{eqnarray}
\frac{\mat{u}_1^t\mat{u}_2^{}}{\norm{\mat{u}_1}\norm{\mat{u}_2}} & \geq &
\frac{(\alpha_1+\alpha_3)\cos\phi - m^2\phi^2}{\norm{u_1}\norm{u_2}} \nonumber \\
& = & \frac{(\alpha_1+\alpha_3)\cos\phi - m^2\phi^2} {\sqrt{\alpha_1} \cdot
\sqrt{\alpha_1+2\alpha_3+\alpha_2}}.  
\label{eq:u_1u_2-2}
\end{eqnarray}

Treating $\alpha_3$ as a free variable while fixing the other values, we can
apply standard calculus to show that the right hand side of \eqref{eq:u_1u_2-2}
is minimized when $\alpha_3 = -\alpha_2-\frac{m^2\phi^2} {\cos\phi}$ under the
condition that $\alpha_1 > \alpha_2 + \frac{2m^2\phi^2}{\cos\phi}$.  (This
condition ensures that the denominator $\sqrt{\alpha_1^2 + 2\alpha_1\alpha_3 +
\alpha_1\alpha_2}$ is real and positive.)  This condition is assumed to be
satisfied in the lemma statement.  Substituting $\alpha_3 =
-\alpha_2-\frac{m^2\phi^2}{\cos\phi}$ into \eqref{eq:u_1u_2-2} gives
\begin{eqnarray*}
\frac{\mat{u}_1^t\mat{u}_2^{}}{\norm{\mat{u}_1}\norm{\mat{u}_2}} & \geq &
\frac{(\alpha_1-\alpha_2)\cos\phi - 2m^2\phi^2}{\sqrt{\alpha_1\left(\alpha_1 - \alpha_2
- 2m^2\phi^2/\cos\phi\right)}} \\
& \geq & 
\frac{(\alpha_1-\alpha_2)\cos\phi - 2m^2\phi^2}{\sqrt{\alpha_1^2 - \alpha_1\alpha_2}} \\
& = & \sqrt{1-\frac{\alpha_2}{\alpha_1}}\cos\phi -
\frac{2m^2\phi^2}{\sqrt{\alpha_1^2-\alpha_1\alpha_2}}.
\end{eqnarray*}
\end{proof}

We are ready to bound the instantaneous change in the normalized projection of
a normal vector of $\mani$ into $L_\mat{x}$ as $\mat{x}$ moves, which is the
main result of this section.

\begin{lemma}\label{lemma::gradient}
Let $\mat{z}$ be any point in $\mani$. Let $\mat{n}$ be any unit vector in
$N_\mat{z}$.  Define the function $f: B(\mat{z},2\eps) \rightarrow L_\mat{x}$
such that $f(\mat{x})$ is the normalized projection of $\mat{n}$ into
$L_\mat{x}$, i.e., $f(\mat{x})$ is the unit vector in $L_\mat{x}$ parallel to
the projection of $\mat{n}$ in $L_\mat{x}$.  
For every point $\mat{x}$ in the interior of $B(\mat{z},2\eps)$
and every $k \in [1,d]$, $\norm{\partial f(\mat{x})/\partial x_k} =
O(\kappa m^{3})$.
\end{lemma}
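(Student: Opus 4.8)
The plan is to estimate the finite difference $\norm{f(\mat{x}+\Delta\mat{x}) - f(\mat{x})}$ for a small displacement $\Delta\mat{x}$ along a coordinate axis, show it is $O(\kappa m^3\norm{\Delta\mat{x}})$, and then divide by $\norm{\Delta\mat{x}}$ and let $\norm{\Delta\mat{x}}\to 0$. As in the earlier lemmas, I would adopt the coordinate frame in which the columns of $\begin{pmatrix}\mat{I}_m \\ \mat{0}_{d-m,m}\end{pmatrix}$ span $T_{\nu(\mat{x})}$, so that the $x_k$-axis lies in $T_{\nu(\mat{x})}$ for $k \leq m$ and in $N_{\nu(\mat{x})}$ for $k > m$; thus $\Delta\mat{x} \in N_{\nu(\mat{x})} \cup T_{\nu(\mat{x})}$, which is exactly the hypothesis needed to invoke Lemma~\ref{lemma::normal_change}. (For an arbitrary frame, $\partial f/\partial x_k$ is the sum of the directional derivatives along the $T_{\nu(\mat{x})}$- and $N_{\nu(\mat{x})}$-components of the $x_k$-axis, so the same asymptotic bound persists.) Write $\delta_k = \norm{\Delta\mat{x}}$; since $\mat{x}$ is interior to $B(\mat{z},2\eps)$, for small enough $\delta_k$ the point $\mat{x}+\Delta\mat{x}$ also lies in $B(\mat{z},2\eps)$. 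Let $\mat{u}_1$ and $\mat{u}_2$ be the projections of $\mat{n}$ into $L_\mat{x}$ and $L_{\mat{x}+\Delta\mat{x}}$, so that $f(\mat{x}) = \mat{u}_1/\norm{\mat{u}_1}$, $f(\mat{x}+\Delta\mat{x}) = \mat{u}_2/\norm{\mat{u}_2}$, and $\angle(f(\mat{x}),f(\mat{x}+\Delta\mat{x})) = \angle(\mat{u}_1,\mat{u}_2)$.

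The first ingredient is that the normalization in $f$ is harmless, i.e.\ $\norm{\mat{u}_1}$ and $\norm{\mat{u}_2}$ are bounded below by a constant. By Lemma~\ref{lemma::normal_angle} applied at $\mat{x}$ and at $\mat{x}+\Delta\mat{x}$, both $\angle(L_\mat{x},N_\mat{z})$ and $\angle(L_{\mat{x}+\Delta\mat{x}},N_\mat{z})$ are $O(m\sqrt m\gamma)$; since $L_\mat{x}$, $L_{\mat{x}+\Delta\mat{x}}$ and $N_\mat{z}$ all have dimension $d-m$, the angle is symmetric and the unit vector $\mat{n} \in N_\mat{z}$ satisfies $\angle(\mat{n},L_\mat{x}) \leq \angle(N_\mat{z},L_\mat{x}) = O(m\sqrt m\gamma)$, and likewise for $L_{\mat{x}+\Delta\mat{x}}$. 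Hence $\norm{\mat{u}_1} = \cos\angle(\mat{n},L_\mat{x}) \geq 1 - O(m^3\gamma^2) \geq 1/\sqrt2$ when $\eps$ (hence $\gamma$) is small enough, and the same holds for $\norm{\mat{u}_2}$. The second ingredient is the turning rate of $L_\mat{x}$: by Lemma~\ref{lemma::normal_change}, $\phi := \angle(L_\mat{x},L_{\mat{x}+\Delta\mat{x}}) = O(\kappa m^2\delta_k)$.

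Now I would apply Lemma~\ref{lem:proj-angle} with $E_1 = L_\mat{x}$, $E_2 = L_{\mat{x}+\Delta\mat{x}}$, the above $\phi$, and a pair of bases $\{\mat{v}_i\},\{\mat{w}_i\}$ supplied by Lemma~\ref{lem:choice}. In that lemma's notation, $\alpha_1 = \norm{\mat{u}_1}^2 \geq 1 - O(m^3\gamma^2) \geq 1/2$, while each of the $m$ summands of $\alpha_2$ is at most $\norm{\mat{w}_i - \mat{v}_i}^2 \leq \phi^2$, so $\alpha_2 \leq m\phi^2 = O(\kappa^2 m^5\delta_k^2)$. For $\delta_k$ small the hypothesis $\alpha_1 > \alpha_2 + 2m^2\phi^2/\cos\phi$ holds because its right-hand side is $O(\kappa^2 m^6\delta_k^2) \to 0$ while $\alpha_1 \geq 1/2$. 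Lemma~\ref{lem:proj-angle} then gives
\[
\cos\angle(\mat{u}_1,\mat{u}_2) = \frac{\mat{u}_1^t\mat{u}_2^{}}{\norm{\mat{u}_1}\norm{\mat{u}_2}} \;\geq\; \sqrt{1-\frac{\alpha_2}{\alpha_1}}\,\cos\phi - \frac{2m^2\phi^2}{\sqrt{\alpha_1^2-\alpha_1\alpha_2}} \;\geq\; 1 - O(\kappa^2 m^6\delta_k^2),
\]
using $\alpha_1 = \Theta(1)$, $\alpha_2 = O(\kappa^2 m^5\delta_k^2)$ and $\cos\phi \geq 1 - O(\kappa^2 m^4\delta_k^2)$. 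Therefore $1 - \cos\angle(\mat{u}_1,\mat{u}_2) = O(\kappa^2 m^6\delta_k^2)$, whence $\angle(\mat{u}_1,\mat{u}_2) = O(\kappa m^3\delta_k)$, and so $\norm{f(\mat{x}+\Delta\mat{x})-f(\mat{x})} = 2\sin\bigl(\tfrac12\angle(\mat{u}_1,\mat{u}_2)\bigr) \leq \angle(\mat{u}_1,\mat{u}_2) = O(\kappa m^3\delta_k)$. Dividing by $\delta_k$ and letting $\delta_k \to 0$ — the limit exists because the eigenvalue gap of $\mat{C}_\mat{x}$ from Lemma~\ref{lem:C} makes $L_\mat{x}$, and hence (using $\norm{\mat{u}_1} \geq 1/\sqrt2$) the map $f$, smooth in $\mat{x}$ — yields $\norm{\partial f(\mat{x})/\partial x_k} = O(\kappa m^3)$.

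I expect the main obstacle to be precisely the normalization: the derivative of $\mat{u}(\mat{x})/\norm{\mat{u}(\mat{x})}$ could in principle blow up if $\mat{n}$ were nearly orthogonal to $L_\mat{x}$, i.e.\ if $\norm{\mat{u}(\mat{x})}$ were tiny. This is exactly what Lemma~\ref{lemma::normal_angle} rules out, and it is why Lemma~\ref{lem:proj-angle} is formulated with the $\alpha_1,\alpha_2$ bookkeeping: feeding $\alpha_1 = \Theta(1)$ and $\phi = O(\kappa m^2\delta_k)$ into it is what upgrades the $O(\kappa m^2)$ turning rate of $L_\mat{x}$ to the $O(\kappa m^3)$ bound for the normalized projection, the extra factor of $m$ entering through the $m^2\phi^2$ term under the square root.
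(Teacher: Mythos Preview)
Your proposal is correct and follows essentially the same route as the paper: invoke Lemma~\ref{lemma::normal_change} for $\phi = O(\kappa m^2\delta_k)$, Lemma~\ref{lemma::normal_angle} for $\alpha_1 \geq 1 - O(m^3\gamma^2)$, bound $\alpha_2 \leq m\phi^2$, feed these into Lemma~\ref{lem:proj-angle}, and convert the resulting lower bound on $\cos\angle(\mat{u}_1,\mat{u}_2)$ into $\norm{f(\mat{x}+\Delta\mat{x})-f(\mat{x})} = O(\kappa m^3\delta_k)$. The paper writes the final step as $\lim_{\delta_k\to 0}\delta_k^{-2}\bigl(2 - 2\mat{u}_1^t\mat{u}_2/(\norm{\mat{u}_1}\norm{\mat{u}_2})\bigr)$ and expands term by term, but your chord-length formulation is equivalent. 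Your explicit handling of the coordinate-frame hypothesis of Lemma~\ref{lemma::normal_change} (splitting $\mat{e}_k$ into its $T_{\nu(\mat{x})}$ and $N_{\nu(\mat{x})}$ components) is a point the paper passes over silently.
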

\begin{proof} 
Let $\mat{x}$ be a point in the interior of $B(\mat{z},2\eps)$.  Consider any index $k \in
[1,d]$.  Let $\Delta\mat{x}$ be a vector parallel to the $x_k$-axis such that
$\mat{x}+\Delta\mat{x} \in B(\mat{z},2\eps)$ and $\delta_k =
\norm{\Delta\mat{x}}$ is arbitrarily small.  Let $\phi$ denote the angle
$\angle (L_\mat{x},L_{\mat{x}+\Delta\mat{x}})$.  By
Lemma~\ref{lemma::normal_change}, $\phi = O(\kappa m^2\,\delta_k)$.  Since $\phi < \pi/2$,
there are orthonormal bases of $L_\mat{x}$ and $L_{\mat{x}+\Delta\mat{x}}$ that
satisfy either Lemma~\ref{lem:choice}(i) or Lemma~\ref{lem:choice}(ii).  Let
$\{\mat{v}_1,\ldots,\mat{v}_{d-m}\}$ and $\{\mat{w}_1,\ldots,\mat{w}_{d-m}\}$
be such orthonormal bases of $L_\mat{x}$ and
$L_{\mat{x}+\Delta\mat{x}}$, respectively. 
We want to apply Lemma~\ref{lem:proj-angle}, so we need to verify that
$\alpha_1 > \alpha_2 + (2m^2\phi^2)/\cos\phi$, where $\alpha_1 = \sum_{i=1}^{d-m}
(\mat{n}^t\mat{v}_i)^2$ and $\alpha_2 = \sum_{i=d-2m+1}^{d-m}
((\mat{w}_i-\mat{v}_i)^t\mat{n})^2$.  

First, $\alpha_2 \leq \sum_{i=d-2m+1}^{d-m} \norm{\mat{w}_i-\mat{v}_i}^2$.
Since $\angle (\mat{v}_i,\mat{w}_i) \leq \phi$ for $i \in [d-2m+1,d-m]$ by
Lemma~\ref{lem:choice}, we obtain $\norm{\mat{w}_i-\mat{v}_i} =
2\sin\frac{\angle(\mat{v}_i,\mat{w}_i)}{2} \leq \phi$.  It follows that 
\[
\alpha_2 \leq m\phi^2 = O(\kappa^2m^5\delta_k^2).  
\]
Second, observe that $\alpha_1 = \left\|(\mat{v}_1 \, \cdots \, \mat{v}_{d-m}) 
(\mat{v}_1 \, \cdots \, \mat{v}_{d-m})^t \mat{n}\right\|^2$, where 
$(\mat{v}_1 \, \cdots \, \mat{v}_{d-m}) (\mat{v}_1 \, \cdots \, \mat{v}_{d-m})^t \mat{n}$
is the projection of $\mat{n}$ into $L_\mat{x}$.   Therefore,
$\alpha_1 \geq \cos^2 (\angle(L_\mat{x},N_\mat{z}))$.  Then, 
Lemma~\ref{lemma::normal_angle} implies that
\[
\alpha_1 \geq \cos^2 (O(m\sqrt{m}\,\gamma)) \geq 1 - O(m^3\gamma^2).
\]
As $\alpha_2 + \frac{2m^2\phi^2}{\cos\phi}$ approaches zero as $\delta_k
\rightarrow 0$, we get $\alpha_1
> \alpha_2 + \frac{2m^2\phi^2}{\cos\phi}$.
Then, by Lemma~\ref{lem:proj-angle},
\[
\frac{\mat{u}_1^t\mat{u}_2^{}}{\norm{\mat{u}_1}\norm{\mat{u}_2}}
\geq 
\sqrt{1-\frac{\alpha_2}{\alpha_1}}\cos\phi -
\frac{2m^2\phi^2}{\sqrt{\alpha_1^2-\alpha_1\alpha_2}},
\]
where $\mat{u}_1$ and $\mat{u}_2$ are the projections of $\mat{n}$
into $L_\mat{x}$ and $L_{\mat{x}+\Delta\mat{x}}$, respectively.
Finally, 
\begin{eqnarray*}
\left\| \frac{\partial f(\mat{x})}{\partial x_k} \right\|^2 
& = & \lim_{\delta_k \rightarrow 0} \, \frac{1}{\delta_k^2}
\left(\frac{\mat{u}_2}{\norm{\mat{u}_2}} -
\frac{\mat{u}_1}{\norm{\mat{u}_1}}\right)^t
\left(\frac{\mat{u}_2}{\norm{\mat{u}_2}} -
\frac{\mat{u}_1}{\norm{\mat{u}_1}}\right) \\
& = & \lim_{\delta_k \rightarrow 0} \,
\frac{1}{\delta_k^2} \left(2 - \frac{2\mat{u}_1^t\mat{u}_2^{}}
{\norm{\mat{u}_1}\norm{\mat{u}_2}}\right) \\
& \leq & \lim_{\delta_k \rightarrow 0} \,
\frac{1}{\delta_k^2} \left(2 - 2\sqrt{1-\frac{\alpha_2}{\alpha_1}}\cos\phi +
\frac{4m^2\phi^2}{\sqrt{\alpha_1^2-\alpha_1\alpha_2}}\right) \\
& \leq &
\lim_{\delta_k \rightarrow 0} \, \frac{1}{\delta_k^2} \left(2 -
2\left(1-\frac{\alpha_2}{\alpha_1}\right)\cos\phi +
\frac{4m^2\phi^2}{\sqrt{\alpha_1^2 - \alpha_1\alpha_2}}\right). 
\end{eqnarray*}
We have shown earlier that $\alpha_2 \leq m\phi^2$
and $\alpha_1 \geq 1-O(m^3\gamma^2)$.
Using these relations and the facts that $\cos\phi \geq 1-\phi^2/2$
and $\phi = O(\kappa m^2\,\delta_k)$, we obtain
\begin{eqnarray*}
\left\| \frac{\partial f(\mat{x})}{\partial x_k} \right\|^2 
& \leq & \lim_{\delta_k \rightarrow 0} \, \frac{1}{\delta_k^2} \left(2 - 2\left(1 -
\frac{m\phi^2}{\alpha_1} \right)\left(1 - \frac{\phi^2}{2}\right) + 
\frac{4m^2\phi^2}{\sqrt{\alpha_1^2 - \alpha_1m\phi^2}} \right) \\
& = & \lim_{\delta_k \rightarrow 0} \, \frac{1}{\delta_k^2} \left(2 - 2\left(1 -
\frac{m\phi^2}{\alpha_1} - \frac{\phi^2}{2} + \frac{m\phi^4}{2\alpha_1}\right) + 
\frac{4m^2\phi^2}{\sqrt{\alpha_1^2 - \alpha_1 m\phi^2}} \right) \\
& \leq & \lim_{\delta_k \rightarrow 0} \, \frac{1}{\delta_k^2} \left(
O\left(\frac{\kappa^2 m^5 \delta_k^2}{\alpha_1}\right) + 
\frac{O(\kappa^2 m^6 \delta_k^2)}{\sqrt{\alpha_1^2 - O(\alpha_1 \kappa^2 m^5 \delta_k^2)}} \right) \\
& = & O(\kappa^2m^6).
\end{eqnarray*}
\end{proof}

We use Lemma~\ref{lemma::gradient} to bound $\norm{\mat{J}_f(\mat{x})}$.
Multiplying the bound in Lemma~\ref{lemma::gradient} by $\sqrt{d}$ already
gives a bound.  We give a tighter analysis that yields a bound independent of
$d$.

\begin{lemma}
\label{cor:gradient}
Let $\mat{z}$ be any point in $\mani$.  Let $\mat{J}_f$ be the Jacobian of the
function $f :
B(\mat{z},2\eps)\rightarrow L_\mat{x}$ defined in Lemma~\ref{lemma::gradient}.
For any point $\mat{x}$ in the interior of $B(\mat{z},2\eps)$,
$\norm{\mat{J}_f(\mat{x})} = O(\kappa m^{3})$.
\end{lemma}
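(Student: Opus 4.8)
The plan is to avoid the factor $\sqrt{d}$ that is lost when one bounds $\norm{\mat{J}_f(\mat{x})}$ by $\sqrt d$ times the largest column norm of $\mat{J}_f(\mat{x})$. Instead I bound the directional derivative $\mat{J}_f(\mat{x})\mat{v} = D_{\mat{v}}f(\mat{x})$ for an arbitrary unit vector $\mat{v}$ directly, after splitting $\mat{v}$ along $T_{\nu(\mat{x})}$ and $N_{\nu(\mat{x})}$.

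The first step is to observe that the proof of Lemma~\ref{lemma::gradient} in fact yields the stronger statement: for every unit vector $\mat{u}\in T_{\nu(\mat{x})}\cup N_{\nu(\mat{x})}$, $\norm{D_{\mat{u}}f(\mat{x})} = O(\kappa m^{3})$. Indeed, that proof works in a coordinate frame whose first $m$ axes span $T_{\nu(\mat{x})}$ and in which the displacement $\Delta\mat{x}$ points along a coordinate axis; the only place the choice of that axis enters is the appeal to Lemma~\ref{lemma::normal_change}, which is valid exactly when $\Delta\mat{x}\in N_{\nu(\mat{x})}\cup T_{\nu(\mat{x})}$. Given a unit vector $\mat{u}\in T_{\nu(\mat{x})}$ (resp.\ $\mat{u}\in N_{\nu(\mat{x})}$), one simply chooses the orthonormal basis of $T_{\nu(\mat{x})}$ (resp.\ of $N_{\nu(\mat{x})}$) used to set up the frame so that $\mat{u}$ is one of its members; then $\mat{u}$ is a coordinate axis of an admissible frame and the bound of Lemma~\ref{lemma::gradient} applies to $D_{\mat{u}}f(\mat{x})$ word for word.

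The second step is the decomposition. By the definition of the operator norm, $\norm{\mat{J}_f(\mat{x})} = \sup_{\norm{\mat{v}}=1}\norm{\mat{J}_f(\mat{x})\mat{v}}$, and $\mat{J}_f(\mat{x})\mat{v} = D_{\mat{v}}f(\mat{x})$ depends linearly on $\mat{v}$. Fix a unit vector $\mat{v}$ and write $\mat{v} = \mat{v}_T + \mat{v}_N$ with $\mat{v}_T\in T_{\nu(\mat{x})}$ and $\mat{v}_N\in N_{\nu(\mat{x})}$, so that $\norm{\mat{v}_T}^2 + \norm{\mat{v}_N}^2 = 1$. Then $\mat{J}_f(\mat{x})\mat{v} = \mat{J}_f(\mat{x})\mat{v}_T + \mat{J}_f(\mat{x})\mat{v}_N$, and applying the strengthened Lemma~\ref{lemma::gradient} to the unit vectors $\mat{v}_T/\norm{\mat{v}_T}$ and $\mat{v}_N/\norm{\mat{v}_N}$ (a vanishing component contributing nothing) gives $\norm{\mat{J}_f(\mat{x})\mat{v}_T} = \norm{\mat{v}_T}\cdot O(\kappa m^{3})$ and $\norm{\mat{J}_f(\mat{x})\mat{v}_N} = \norm{\mat{v}_N}\cdot O(\kappa m^{3})$. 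Hence $\norm{\mat{J}_f(\mat{x})\mat{v}} \le (\norm{\mat{v}_T} + \norm{\mat{v}_N})\cdot O(\kappa m^{3}) \le \sqrt 2\cdot O(\kappa m^{3}) = O(\kappa m^{3})$ by Cauchy--Schwarz, and taking the supremum over $\mat{v}$ completes the proof.

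The main obstacle is the first step: one has to be confident that the proofs of Lemmas~\ref{lemma::gradient} and~\ref{lemma::normal_change} use only that the frame is adapted to $T_{\nu(\mat{x})}$ and that the displacement lies in $N_{\nu(\mat{x})}\cup T_{\nu(\mat{x})}$, so that the $O(\kappa m^{3})$ estimate really bounds $D_{\mat{u}}f(\mat{x})$ for every such unit direction $\mat{u}$ and not merely for the $d$ axes of one fixed frame. Once that is granted, the remainder is just the orthogonal splitting and Cauchy--Schwarz above, which is where the $d$-dependence disappears.
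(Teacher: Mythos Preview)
Your argument is correct and takes a genuinely different route from the paper's. The paper proceeds by rotational covariance: it fixes the unit vector $\mat{v}$ realizing $\norm{\mat{J}_f(\mat{x})}$, chooses an orthogonal matrix $\mat{R}$ with $\mat{R}\mat{v}=(1,0,\dots,0)^t$, defines the analogous normalized-projection function $g$ in the rotated world, derives the relation $\mat{J}_f(\mat{x})=\mat{R}^t\,\mat{J}_g(\mat{x}')\,\mat{R}$ from first principles via Taylor expansion, and then reads off $\norm{\mat{J}_f(\mat{x})\mat{v}}$ as the norm of the first column of $\mat{J}_g(\mat{x}')$, to which Lemma~\ref{lemma::gradient} is applied once more in the rotated frame.

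Your approach buys two things. First, by bounding $\mat{J}_f(\mat{x})\mat{v}=D_{\mat{v}}f(\mat{x})$ directly through the orthogonal splitting $\mat{v}=\mat{v}_T+\mat{v}_N$, you bypass the covariance computation entirely. Second, and more to the point, you only invoke the column bound of Lemma~\ref{lemma::gradient} along directions lying in $T_{\nu(\mat{x})}\cup N_{\nu(\mat{x})}$, which is precisely what the hypothesis of Lemma~\ref{lemma::normal_change} permits. By contrast, the paper's route needs Lemma~\ref{lemma::gradient} to hold along the first coordinate axis of an \emph{arbitrarily} rotated frame; that axis need not lie in $T_{\nu(\mat{x})}\cup N_{\nu(\mat{x})}$, so the paper is implicitly invoking Lemma~\ref{lemma::normal_change} beyond its stated restriction on $\Delta\mat{x}$. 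That restriction is in fact inessential---the proof of Lemma~\ref{lemma::normal_change} goes through for general $\Delta\mat{x}$ once one interprets $\Delta\mat{C}_\mat{x}$ as a directional derivative and observes that the bound of Lemma~\ref{lemma::in_out_ratio}(i) holds for any unit direction---but your argument is more faithful to what has been explicitly established. The only cost of your route is the harmless factor $\sqrt{2}$ from Cauchy--Schwarz.
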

\begin{proof}
Fix a unit vector $\mat{n} \in N_\mat{z}$
as required in the definition of $f$ in Lemma~\ref{lemma::gradient}.
Let $\mat{x}$ be a point in the interior of $B(\mat{z},2\eps)$.
Let $\mat{R}$ be any $d \times d$ orthogonal matrix.
Apply the orthogonal transformation induced by $\mat{R}$ to $\real^d$.  Then
define the function $g : B(\mat{z}',2\eps) \rightarrow L_{\mat{x}'}$, 
where $\mat{z}' = \mat{R} \cdot \mat{z}$ and $\mat{x}' = \mat{R} \cdot \mat{x}$, 
such that
$g(\mat{x}')$ is the normalized projection of $\mat{R} \cdot \mat{n}$ into
$L_{\mat{x}'}$.

First, we show that $f(\mat{x}) = \mat{R}^t \cdot g(\mat{x}')$.  Let $\ell$ be
the length of the projection of $\mat{n}$ into $L_\mat{x}$.  Let $\mat{Q}$ be
any $d \times (d-m)$ matrix whose columns form an orthonormal basis of
$L_\mat{x}$.  It follows from the definition of $f$ that $f(\mat{x}) =
\frac{1}{\ell} \cdot \mat{Q}\cdot \mat{Q}^t\cdot \mat{n}$.  Since an orthogonal
transformation preserves lengths, $\ell$ is also the length of the projection
of $\mat{R}\cdot\mat{n}$ into $L_{\mat{x}'}$.  Then, $g(\mat{x}') =
\frac{1}{\ell} \cdot \mat{R}\cdot\mat{Q}\cdot\mat{Q}^t\cdot\mat{R}^t \cdot
\mat{R}\cdot\mat{n} =
\frac{1}{\ell}\cdot\mat{R}\cdot\mat{Q}\cdot\mat{Q}^t\cdot\mat{n}$, which
implies that $f(\mat{x}) = \mat{R}^t \cdot g(\mat{x}')$.

We show that $\mat{J}_f(\mat{x}) = \mat{R}^t \cdot \mat{J}_g(\mat{x}') \cdot \mat{R}$.  Let
$\Delta\mat{x}$ be an arbitrarily short vector.  By Taylor's Theorem,
\begin{equation}
f(\mat{x}+\Delta\mat{x}) = f(\mat{x}) + \mat{J}_f(\mat{x}) \cdot \Delta\mat{x} +
\mat{e}_f,
\label{eq:cor:gradient-1} 
\end{equation}
where $\mat{e}_f/\norm{\Delta\mat{x}}$ converges to the zero vector
as $\norm{\Delta\mat{x}} \rightarrow 0$.  Similarly, 
\begin{equation}
g(\mat{R}\cdot\mat{x}+\mat{R}\cdot\Delta\mat{x}) =
g(\mat{x}') + \mat{J}_g(\mat{x}') \cdot \mat{R}\cdot\Delta\mat{x} +
\mat{e}_g,  
\label{eq:cor:gradient-2}
\end{equation}
where $\mat{e}_g/\norm{\mat{R}\cdot\Delta\mat{x}}$ converges to the zero vector
as $\norm{\mat{R}\cdot\Delta\mat{x}} \rightarrow 0$.  Since $\mat{R}$ is fixed,
it means that $\mat{e}_g/\norm{\Delta\mat{x}}$ tends to the zero vector as
$\norm{\Delta\mat{x}} \rightarrow 0$.  We multiply both sides of
\eqref{eq:cor:gradient-2} by $\mat{R}^t$ and then subtract the resulting
equation from \eqref{eq:cor:gradient-1}.  Some terms cancel each other because
$f(\mat{x} + \Delta\mat{x}) = \mat{R}^t \cdot g\left(\mat{R}\cdot(\mat{x} +
\Delta\mat{x})\right)$ and $f(\mat{x}) = \mat{R}^t \cdot g(\mat{x}') =
\mat{R}^t \cdot g\left(\mat{R} \cdot \mat{x}\right)$.  We obtain
\[
\left(\mat{J}_f(\mat{x}) - \mat{R}^t \cdot \mat{J}_g(\mat{x}') \cdot \mat{R} \right) \cdot \Delta\mat{x} =
\mat{R}^t\cdot\mat{e}_g - \mat{e}_f.  
\]
Therefore, 
\[
\left\|\left(\mat{J}_f(\mat{x}) -
\mat{R}^t \cdot \mat{J}_g(\mat{x}') \cdot \mat{R}\right) \cdot \Delta\mat{x}\right\| \leq 
\left\|\mat{R}^t\cdot\mat{e}_g\right\|
+ \left\|\mat{e}_f\right\|.  
\]
We are free to choose the direction of $\Delta\mat{x}$.  We choose it 
such that $\left\|(\mat{J}_f -
\mat{R}^t \cdot \mat{J}_g(\mat{x}') \cdot \mat{R}) \cdot \Delta\mat{x}\right\| = \left\|\mat{J}_f(\mat{x}) -
\mat{R}^t \cdot \mat{J}_g(\mat{x}') \cdot \mat{R}\right\| \cdot \left\|\Delta\mat{x}\right\|$, i.e.,
$\Delta\mat{x}$ is an eigenvector corresponding to the largest eigenvalue of
$\mat{J}_f(\mat{x}) - \mat{R}^t \cdot \mat{J}_g(\mat{x}') \cdot \mat{R}$.  Then,
\[
\left\|\mat{J}_f(\mat{x}) - \mat{R}^t \cdot \mat{J}_g(\mat{x}') \cdot \mat{R}\right\| \leq
\frac{\left\|\mat{R}^t\mat{e}_g\right\|}{\left\|\Delta\mat{x}\right\|} +
\frac{\left\|\mat{e}_f\right\|}{\left\|\Delta\mat{x}\right\|}.  
\]
Since the right hand side tends to
zero as $\norm{\Delta\mat{x}} \rightarrow 0$, we conclude that 
\[
\lim_{\norm{\Delta\mat{x}} \rightarrow 0} 
\norm{\mat{J}_f(\mat{x}) - \mat{R}^t \cdot \mat{J}_g(\mat{x}') \cdot \mat{R}} = 0, 
\]
which implies that $\mat{J}_f(\mat{x}) = \mat{R}^t \cdot \mat{J}_g(\mat{x}') \cdot \mat{R}$.

By definition, $\norm{\mat{J}_f(\mat{x})} = \norm{\mat{J}_f(\mat{x}) \cdot \mat{v}}$ for some unit
vector $\mat{v}$.  We choose $\mat{R}$ to be the $d \times d$ orthogonal 
matrix such that $\mat{R} \cdot\mat{v} = (1,0,\ldots,0)^t$.  Then, $\norm{\mat{R}
\cdot \mat{J}_f(\mat{x}) \cdot \mat{v}} = \norm{\mat{R} \cdot \mat{J}_f(\mat{x}) \cdot \mat{R}^t \cdot
\mat{R} \cdot \mat{v}} = \norm{\mat{J}_g(\mat{x}') \cdot (1,0,\ldots,0)^t}$, which is the
2-norm of the first column of $\mat{J}_g(\mat{x}')$.  Lemma~\ref{lemma::gradient} is
independent of the coordinate frame.  So we can apply Lemma~\ref{lemma::gradient} to $g$
and conclude that the 2-norm of the first
column of $\mat{J}_g(\mat{x}')$ is $O(\kappa m^{3})$.  As a result, $\norm{\mat{R} \cdot
\mat{J}_f(\mat{x}) \cdot \mat{v}} = O(\kappa m^{3})$.   Since multiplying any vector with an
orthogonal matrix preserves the 2-norm of the vector, we conclude that
$\norm{\mat{J}_f(\mat{x})} = \norm{\mat{J}_f(\mat{x}) \cdot \mat{v}} = \norm{\mat{R} \cdot
\mat{J}_f(\mat{x}) \cdot \mat{v}} = O(\kappa m^{3})$.
\end{proof}

\section{Faithful reconstruction}

In this section, we prove our main result that $Z_\varphi \cap \widehat{\mani}$ is a faithful
reconstruction of $\mani$.  Recall the class $\Phi$ of functions $\varrho:
\real^d \rightarrow \real^{d-m}$:
\begin{quote}
$\displaystyle \Phi = \left\{ \varrho : \varrho(\mat{x}) = \sum_{\mat{p} \in P}
\omega(\mat{x},\mat{p}) \cdot \mat{B}^t_{\varrho,\mat{x}} \cdot (\mat{x}-\mat{p}) \right\}$,
where $\mat{B}_{\varrho,\mat{x}}$ is any $d \times (d-m)$ matrix with linearly 
independent columns such that $\col{\mat{B}_{\varrho,\mat{x}}} = L_{\mat{x}}$.
\end{quote}
We claim that the choice of $\mat{B}_{\varrho,\mat{x}}$ has no impact on the zero-set $Z_\varrho$ as long as
the columns of $\mat{B}_{\varrho,\mat{x}}$ are linearly independent.  In
this section, we will prove some useful properties of functions in $\Phi$.
These properties will allow us to show that $Z_\varphi \cap \widehat{\mani}$ is 
a faithful approximation of $\mani$.  

We will study properties of $Z_\varphi \cap \widehat{\mani}$ by analyzing $Z_\varrho
\cap \widehat{\mani}$ for another function $\varrho \in \Phi$ conveniently chosen
for the analysis.  Since we will conduct some local analysis, we are only
concerned with functions that are defined near some chosen points in $\mani$.
This motivates us to define for every point $\mat{z} \in \mani$ 
the following class $\Phi_{\mat{z}}$ of functions:
\begin{quote}
$\displaystyle \Phi_{\mat{z}} = \left\{ \varrho : 
\varrho: B(\mat{z},2\eps) \rightarrow \real^{d-m},\,\, 
\varrho(\mat{x}) = \sum_{\mat{p} \in P}
\omega(\mat{x},\mat{p}) \cdot \mat{B}^t_{\varrho,\mat{x}} \cdot (\mat{x}-\mat{p}) \right\}$,
where $\mat{B}_{\varrho,\mat{x}}$ is any $d \times (d-m)$ matrix with linearly 
independent columns such that $\col{\mat{B}_{\varrho,\mat{x}}} = L_{\mat{x}}$.
\end{quote}
$\Phi_{\mat{z}}$ is a local version of $\Phi$.  The next result shows that
functions in $\Phi_\mat{z}$ with overlapping domains have consistent
zero sets. 

\begin{lemma}
\label{lem:agree}
Let $\mat{y}$ and $\mat{z}$ be two arbitrary points in $\mani$
that are not necessarily distinct.
For every point $\mat{x} \in B(\mat{y},2\eps) \cap B(\mat{z},2\eps)$,
if there exists $\varrho \in \Phi_{\mat{y}}$ such
that $\varrho(\mat{x}) = \mat{0}_{d-m,1}$, then
for every $\varrho \in \Phi_{\mat{y}} \cup \Phi_{\mat{z}}$, 
$\varrho(\mat{x}) = \mat{0}_{d-m,1}$.
\end{lemma}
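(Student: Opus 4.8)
The plan is to separate, inside the formula for $\varrho$, the factor that depends on the choice of $\mat{B}_{\varrho,\mat{x}}$ from the part that depends only on $\mat{x}$, and then observe that the condition $\varrho(\mat{x}) = \mat{0}_{d-m,1}$ refers only to the latter, which is common to all members of $\Phi_\mat{y}$ and $\Phi_\mat{z}$.

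First I would note that for every $\varrho \in \Phi_{\mat{y}} \cup \Phi_{\mat{z}}$ and every $\mat{x}$ in its domain the matrix $\mat{B}_{\varrho,\mat{x}}^t$ can be pulled out of the sum:
\[
\varrho(\mat{x}) \;=\; \sum_{\mat{p} \in P} \omega(\mat{x},\mat{p}) \cdot \mat{B}_{\varrho,\mat{x}}^t \cdot (\mat{x}-\mat{p}) \;=\; \mat{B}_{\varrho,\mat{x}}^t \cdot \mat{s}(\mat{x}), \qquad \mat{s}(\mat{x}) \;:=\; \sum_{\mat{p} \in P} \omega(\mat{x},\mat{p}) \cdot (\mat{x}-\mat{p}).
\]
The vector $\mat{s}(\mat{x}) \in \real^d$ is determined by $\mat{x}$, the sample $P$, and the weights $\omega$; it does not depend on $\varrho$, and in particular not on whether $\varrho$ lies in $\Phi_{\mat{y}}$ or in $\Phi_{\mat{z}}$.

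Next I would invoke the elementary fact that for any real matrix $\mat{B}$ one has $\ker(\mat{B}^t) = (\col{\mat{B}})^\perp$. Since $\col{\mat{B}_{\varrho,\mat{x}}} = L_\mat{x}$ by the definition of $\Phi_{\mat{y}}$ and of $\Phi_{\mat{z}}$, this yields the equivalence
\[
\varrho(\mat{x}) = \mat{0}_{d-m,1} \quad\Longleftrightarrow\quad \mat{s}(\mat{x}) \perp L_\mat{x}.
\]
The right-hand condition mentions neither $\varrho$ nor the base points $\mat{y},\mat{z}$: the space $L_\mat{x}$ is spanned by the $(d-m)$ least dominant eigenvectors of $\mat{C}_\mat{x}$, which is again a function of $\mat{x}$, $P$, and the input matrices $\mat{T}_\mat{p}$ alone.

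It then remains to assemble the argument. Suppose some $\varrho_0 \in \Phi_{\mat{y}}$ satisfies $\varrho_0(\mat{x}) = \mat{0}_{d-m,1}$. By the equivalence above, $\mat{s}(\mat{x}) \perp L_\mat{x}$. Now let $\varrho$ be any function in $\Phi_{\mat{y}} \cup \Phi_{\mat{z}}$; it is defined at $\mat{x}$ because $\mat{x} \in B(\mat{y},2\eps) \cap B(\mat{z},2\eps)$, and applying the equivalence once more yields $\varrho(\mat{x}) = \mat{0}_{d-m,1}$, as desired. The statement is essentially a bookkeeping observation, so there is no serious obstacle; the one point that deserves care is recognizing that both $\mat{s}(\mat{x})$ and $L_\mat{x}$ are intrinsic to $\mat{x}$, so that the zero condition $\mat{s}(\mat{x}) \perp L_\mat{x}$ is shared by every member of both local families regardless of the base point used to define them.
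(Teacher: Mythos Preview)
Your proof is correct and follows essentially the same approach as the paper: both arguments factor $\varrho(\mat{x})$ as $\mat{B}_{\varrho,\mat{x}}^t$ applied to the $\varrho$-independent vector $\mat{s}(\mat{x})=\sum_{\mat{p}}\omega(\mat{x},\mat{p})(\mat{x}-\mat{p})$ and then use that all the $\mat{B}_{\varrho,\mat{x}}$ share the same column space $L_\mat{x}$. The only cosmetic difference is that you phrase the conclusion via $\ker(\mat{B}^t)=(\col{\mat{B}})^\perp$, whereas the paper writes down the invertible change-of-basis matrix $\mat{R}$ with $\mat{B}_{\bar\varrho,\mat{x}}^t=\mat{R}\,\mat{B}_{\varrho,\mat{x}}^t$ and observes $\bar\varrho(\mat{x})=\mat{R}\,\varrho(\mat{x})$; these are equivalent elementary observations.
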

\begin{proof}
Take two functions $\varrho, \bar{\varrho} \in \Phi_{\mat{y}} \cup \Phi_{\mat{z}}$.  
Fix a point $\mat{x} \in B(\mat{y},2\eps) \cap B(\mat{z},2\eps)$.  By
definition, $\varrho(\mat{x}) = \sum_{\mat{p} \in P}
\omega(\mat{x},\mat{p}) \cdot \mat{B}_{\varrho,\mat{x}}^t \cdot
(\mat{x}-\mat{p})$ and $\bar{\varrho}(\mat{x}) = \sum_{\mat{p} \in P}
\omega(\mat{x},\mat{p}) \cdot \mat{B}_{\bar{\varrho},\mat{x}}^t \cdot
(\mat{x}-\mat{p})$.  The columns of $\mat{B}_{\varrho,\mat{x}}$ and
$\mat{B}_{\bar{\varrho},\mat{x}}$ form two 
bases of $L_\mat{x}$, which means that there is a $(d-m) \times (d-m)$ 
invertible matrix $\mat{R}$ such that $\mat{R} \cdot \mat{B}_{\varrho,\mat{x}}^t =
\mat{B}_{\bar{\varrho},\mat{x}}^t$.  If $\varrho(\mat{x}) =
\mat{0}_{d-m,1}$, then $\bar{\varrho}(\mat{x}) = \sum_{\mat{p} \in P}
\omega(\mat{x},\mat{p}) \cdot \mat{R} \cdot \mat{B}_{\varrho,\mat{x}}^t
\cdot (\mat{x}-\mat{p}) = \mat{R} \cdot \varrho(\mat{x}) =
\mat{0}_{d-m,1}$.
\end{proof}

We define a particular function $\varrho_{\mat{z}} \in \Phi_\mat{z}$ to analyze
the properties of $Z_\varphi \cap \widehat{\mani}$ in a small neighborhood of
$\mat{z}$.

\begin{definition}
\label{df:local}
Let $\mat{z}$ be any point in $\mani$.  Let $\{\mat{v}_1,\ldots,
\mat{v}_{d-m}\}$ be any set of unit vectors forming a basis of $N_\mat{z}$.
For $i \in [1,d-m]$, let $f_{\mat{v}_i}$ be the function that maps every
$\mat{x}$ in $B(\mat{z},2\eps)$ to the normalized projection of $\mat{v}_i$ in
$L_\mat{x}$.  Define a {\bf canonical function} $\varrho_\mat{z} : B(\mat{z},2\eps) \rightarrow
\real^{d-m}$ with respect to $\mat{z}$ and $\{\mat{v}_1,\ldots,\mat{v}_{d-m}\}$
such that for all $\mat{x} \in B(\mat{z}, 2\eps)$,
$\varrho_\mat{z}(\mat{x}) = \sum_{\mat{p}\, \in P} \omega(\mat{x},\mat{p})
\cdot [f_{\mat{v}_1}(\mat{x}),\ldots,f_{\mat{v}_{d-m}}(\mat{x})]^t \cdot
(\mat{x}-\mat{p})$.
\end{definition}

We show that whenever $\eps$ is sufficiently small, $\varrho_\mat{z}$ belongs
to $\Phi_{\mat{z}}$ and $\varrho_{\mat{z}}$ is continuous in the interior of
$B(\mat{z}, 2\eps)$. 

\begin{lemma}\label{lemma:local}
Let $\varrho_\mat{z}$ be the canonical function with respect
to a point $\mat{z} \in \mani$ and some set of unit vectors 
$\{\mat{v}_1,\ldots,\mat{v}_{d-m}\}$ forming a basis of $N_\mat{z}$ for
which there exists some $\phi \in \left[0,\arcsin\left(\frac{1}{3d-3m}\right)\right)$ such
that for any distinct $i,j \in [1,d-m]$, $\angle (\mat{v}_i,\mat{v}_j)
\in [\pi/2-\phi,\pi/2+\phi]$. 
There exists $\eps_0 \in (0,1)$ that decreases as $d$ increases such that 
for every point $\mat{z} \in \mani$, if $\eps \leq \eps_0$, then
$\varrho_\mat{z} \in \Phi_{\mat{z}}$ and
$\varrho_\mat{z}$ is continuous in the interior of $B(\mat{z}, 2\eps)$.
\end{lemma}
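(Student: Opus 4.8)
The plan is to prove the two assertions separately, pushing all quantitative work into the lemmas of Sections~4 and~5. First I would choose $\eps_0 \in (0,1)$ small enough (its dependence on $d$ entering only through $m \le d$ in the polynomial bounds below) that two things hold for every $\mat{x} \in B(\mat{z},2\eps)$: by Lemma~\ref{lem:C} together with the generalized Gershgorin argument in the proof of Lemma~\ref{lemma::normal_angle}, the $m$ largest eigenvalues of $\mat{C}_{\mat{x}}$ exceed the $d-m$ smallest ones, so $L_{\mat{x}}$ is a well-defined $(d-m)$-dimensional subspace; and Lemma~\ref{lemma::normal_angle} gives $\theta := \angle(L_{\mat{x}},N_{\mat{z}}) = O(m^{3/2}\eps) < \pi/4$. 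The angle bound already makes each $f_{\mat{v}_i}(\mat{x})$ well defined, since the orthogonal projection $\mat{u}_i$ of the unit vector $\mat{v}_i \in N_{\mat{z}}$ into $L_{\mat{x}}$ has norm at least $\cos\theta > 0$ and can be normalized.

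For membership in $\Phi_{\mat{z}}$ it then remains to show that the $d-m$ columns $f_{\mat{v}_1}(\mat{x}),\dots,f_{\mat{v}_{d-m}}(\mat{x})$ of $\mat{B}_{\varrho_{\mat{z}},\mat{x}}$ are linearly independent; since they lie in the $(d-m)$-dimensional space $L_{\mat{x}}$, this is equivalent to spanning $L_{\mat{x}}$, and normalizing columns by positive scalars is irrelevant, so it suffices to show the projections $\mat{u}_1,\dots,\mat{u}_{d-m}$ are independent. Let $\mat{N} = (\mat{v}_1\ \cdots\ \mat{v}_{d-m})$. The hypothesis $\angle(\mat{v}_i,\mat{v}_j) \in [\pi/2-\phi,\pi/2+\phi]$ with $\sin\phi < \frac{1}{3(d-m)}$ makes $\mat{N}^t\mat{N} - \mat{I}_{d-m}$ symmetric with zero diagonal and off-diagonal entries in $[-\sin\phi,\sin\phi]$, so a Gershgorin bound gives $\norm{\mat{N}^t\mat{N} - \mat{I}_{d-m}} \le (d-m-1)\sin\phi < \tfrac13$; hence $\norm{\mat{N}\mat{a}}^2 \ge \tfrac23$ for every unit $\mat{a} \in \real^{d-m}$. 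Since $\mat{N}\mat{a} \in N_{\mat{z}}$ and $\angle(N_{\mat{z}},L_{\mat{x}}) = \theta$, its projection $\sum_i a_i \mat{u}_i$ into $L_{\mat{x}}$ has norm at least $\cos\theta\,\norm{\mat{N}\mat{a}} \ge \cos\theta\,\sqrt{2/3} > 0$, so $\sum_i a_i\mat{u}_i \ne 0$ for $\mat{a}\ne 0$; the $\mat{u}_i$ are independent, $\col{\mat{B}_{\varrho_{\mat{z}},\mat{x}}} = L_{\mat{x}}$, and $\varrho_{\mat{z}} \in \Phi_{\mat{z}}$.

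For continuity in the interior of $B(\mat{z},2\eps)$: each $f_{\mat{v}_i}$ is precisely the map $f$ of Lemma~\ref{lemma::gradient} with $\mat{n} = \mat{v}_i \in N_{\mat{z}}$ a unit vector, so by Lemma~\ref{cor:gradient} its Jacobian exists and has norm $O(\kappa m^3)$ at every interior point, whence $f_{\mat{v}_i}$ is continuous there. The weight $\omega(\cdot,\mat{p})$ is continuous because $h$ is continuous on $[0,\infty)$ (differentiable on $(0,\infty)$, equal to $1$ at $0$ in the limit, and with both pieces vanishing at $m\gamma$), and because the denominator $\sum_{\mat{q}\in P} h(\norm{\mat{x}-\mat{q}})$ is continuous and bounded below by $0.06$ for $\mat{x}$ at distance $\le 2\eps$ from $\mani$ by Lemma~\ref{lemma::in_out_ratio}(ii) — in particular on the interior of $B(\mat{z},2\eps)$ since $\mat{z}\in\mani$. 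As $\mat{x}-\mat{p}$ is continuous and, near any point, only the finitely many $\mat{p}\in P$ within distance $m\gamma$ contribute (Lemma~\ref{lem:ball}), $\varrho_{\mat{z}}$ is locally a finite sum of products of continuous functions, hence continuous. Finally, the several smallness requirements on $\eps$ (spectral gap of $\mat{C}_{\mat{x}}$, $\theta < \pi/4$, applicability of Lemmas~\ref{lemma::in_out_ratio} and~\ref{cor:gradient}) are each thresholds expressed in powers of $m \le d$, so taking their minimum — shrunk if necessary to be monotone — yields an $\eps_0 \in (0,1)$ decreasing in $d$. I expect the linear-independence step to be the main obstacle: one must argue that orthogonal projection from $N_{\mat{z}}$ onto $L_{\mat{x}}$ does not collapse the only approximately orthonormal basis $\{\mat{v}_i\}$, which is exactly where the hypothesis $\sin\phi < \frac{1}{3(d-m)}$ combines with the near-isometry coming from the small angle $\angle(L_{\mat{x}},N_{\mat{z}})$.
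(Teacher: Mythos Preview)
Your proposal is correct, and the continuity argument is essentially the same as the paper's (both appeal to Lemma~\ref{lemma::gradient}/\ref{cor:gradient} for smoothness of each $f_{\mat{v}_i}$, then close under sums and products). The linear-independence step, however, is genuinely different and cleaner than the paper's. The paper argues by contradiction: assuming $f_{\mat{v}_1}(\mat{x})$ lies in the span of the others, it uses the angle-accumulation Lemma~\ref{lem:angle} to deduce $\angle\bigl(\mat{v}_1,\col{(\mat{v}_2\cdots\mat{v}_{d-m})}\bigr)=O(m\sqrt{dm-m^2}\,\gamma)$, then derives a contradiction with the near-orthogonality hypothesis via an $\argmax$ computation on the coefficients of the projection. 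Your route is a direct singular-value argument: Gershgorin on the Gram matrix $\mat{N}^t\mat{N}$ shows the $\{\mat{v}_i\}$ are uniformly independent, and the small angle $\angle(L_\mat{x},N_\mat{z})$ (symmetric since the dimensions agree) makes the orthogonal projection $N_\mat{z}\to L_\mat{x}$ a near-isometry, so independence survives. This avoids Lemma~\ref{lem:angle} entirely, is shorter, and as a side effect yields an $\eps_0$ whose natural dependence is on $m$ rather than on $d$ (the paper needs $\eps_0<1/(Cm\sqrt{dm-m^2})$ because of the $\sqrt{d-m}$ blow-up in Lemma~\ref{lem:angle}); your final ``shrink to be monotone in $d$'' step is then only needed to match the lemma's stated form. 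The paper's approach has the minor virtue of reusing machinery (Lemma~\ref{lem:angle}) already developed for Lemma~\ref{lemma::normal_angle}, but your argument is the more elementary of the two.
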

\begin{proof}
To show that $\varrho_\mat{z} \in \Phi_{\mat{z}}$, it suffices to prove
that $\{f_{\mat{v}_1}(\mat{x}),\ldots,f_{\mat{v}_{d-m}}(\mat{x})\}$ form
a basis of $L_\mat{x}$, which boils down to showing that
$\{f_{\mat{v}_1}(\mat{x}),\ldots,f_{\mat{v}_{d-m}}(\mat{x})\}$ are
linearly independent.  

Since $\angle (L_\mat{x},N_\mat{z}) = O(m\sqrt{m}\,\gamma)$ by
Lemma~\ref{lemma::normal_angle}, we get $\angle
(f_{\mat{v}_i}(\mat{x}),\mat{v}_i) = O(m\sqrt{m}\,\gamma)$.  Assume to the contrary
that $f_{\mat{v}_1}(\mat{x}),  \ldots, f_{\mat{v}_{d-m}}(\mat{x})$ are linearly
dependent. Then, 
\[
\angle \left(f_{\mat{v}_1}(\mat{x}),
\col{(f_{\mat{v}_2}(\mat{x}) \,  
\cdots \, f_{\mat{v}_{d-m}}(\mat{x}))} \right) = 0. 
\]
Since $\angle \left(\mat{v}_i, \col{(f_{\mat{v}_2}(\mat{x})  \,
\cdots \, f_{\mat{v}_{d-m}}(\mat{x}))}\right) = O(m\sqrt{m}\,\gamma)$ for all $i \in [2,d-m]$, 
Lemma~\ref{lem:angle} implies that 
\[
\angle \left(\col{(\mat{v}_2  \,
\cdots \, \mat{v}_{d-m})}, \col{(f_{\mat{v}_2}(\mat{x})  \,
\cdots \, f_{\mat{v}_{d-m}}(\mat{x}))} \right) = O\left(m\sqrt{dm-m^2}\,\gamma\right). 
\]
By triangle inequality, 
$\angle \left(\mat{v}_1, \col{(\mat{v}_2  \,
\cdots \, \mat{v}_{d-m})}\right) \leq
\angle \left(\mat{v}_1, f_{\mat{v}_1}(\mat{x})\right) + 
\angle \left(f_{\mat{v}_1}(\mat{x}), \col{(\mat{v}_2  \,
\cdots \, \mat{v}_{d-m})}\right)$.
The dimension of $\col{(\mat{v}_2  \,
\cdots \, \mat{v}_{d-m})}$ is at least the dimension of $\col{(f_{\mat{v}_2}(\mat{x})  \,
\cdots \, f_{\mat{v}_{d-m}}(\mat{x}))}$. Thus, 
\begin{eqnarray*}
\angle \left(f_{\mat{v}_1}(\mat{x}), \col{(\mat{v}_2  \,
\cdots \, \mat{v}_{d-m})}\right) & \leq &
\angle \left(f_{\mat{v}_1}(\mat{x}), \col{(f_{\mat{v}_2}(\mat{x})  \,
\cdots \, f_{\mat{v}_{d-m}}(\mat{x}))}\right) + \\ 
& & \angle \left(\col{(\mat{v}_2  \,
\cdots \, \mat{v}_{d-m})}, \col{(f_{\mat{v}_2}(\mat{x})  \,
\cdots \, f_{\mat{v}_{d-m}}(\mat{x}))}\right).
\end{eqnarray*}
Combining the above observations, we obtain
\begin{eqnarray*}
\angle (\mat{v}_1, \col{(\mat{v}_2  \,
\cdots \, \mat{v}_{d-m})}) 
& \leq & 
\angle \left(\mat{v}_1, f_{\mat{v}_1}(\mat{x})\right) + 
\angle \left(f_{\mat{v}_1}(\mat{x}), \col{(\mat{v}_2  \,
\cdots \, \mat{v}_{d-m})}\right) \\
& \leq & \angle \left(\mat{v}_1, f_{\mat{v}_1}(\mat{x})\right) + 
\angle \left(f_{\mat{v}_1}(\mat{x}), \col{(f_{\mat{v}_2}(\mat{x})  \,
\cdots \, f_{\mat{v}_{d-m}}(\mat{x}))}\right) + \\
& & \angle \left(\col{(\mat{v}_2  \,
\cdots \, \mat{v}_{d-m})}, \col{(f_{\mat{v}_2}(\mat{x})  \,
\cdots \, f_{\mat{v}_{d-m}}(\mat{x}))}\right) \\
& = & O(m\sqrt{dm-m^2}\,\gamma).
\end{eqnarray*} 

Recall that $\gamma = 4\eps \leq 4\eps_0$.  Assume that $\eps_0 < \frac{1}{Cm\sqrt{dm-m^2}}$ for some appropriate constant $C \geq 1$.  Then $\angle (\mat{v}_1,\col{\mat{v}_2 \cdots \mat{v}_{d-m}}) < \pi/6$.  Note that $\eps_0$ decreases as $d$ increases.  Let $\mat{u}$ be the normalized projection of $\mat{v}_1$ in $\col{\mat{v}_2 \cdots \mat{v}_{d-m}}$.  It means that 
\[
\mat{v}_1^t \cdot \mat{u} > \cos(\pi/6) = \sqrt{3}/2.  
\]
We can write $\mat{u} = \sum_{i=2}^{d-m} \lambda_i \mat{v}_i$ for some $\lambda_i$.   Let $k = \argmax_{i=[2,d-m]} |\lambda_i|$. We take the dot product of $\mat{u}$ and $\sign(\lambda_k)\mat{v}_k$. This dot product is equal to $|\lambda_k|\norm{\mat{v}_k}^2 + \sign(\lambda_k)\sum_{i\neq k} \lambda_i \mat{v}_i^t \cdot \mat{v}_k$ and it is at most 1 as $\mat{u}$ and $\mat{v}_k$ are unit vectors. Since $\angle (\mat{v}_i,\mat{v}_j) \in \bigl[\frac{\pi}{2}-\phi,\frac{\pi}{2}+\phi\bigr]$, the projection of $\mat{v}_j$ in the direction of $\mat{v}_i$ has magnitude at most $\sin\phi$. It follows that
\begin{eqnarray*}
1 & \geq & |\lambda_k| - \sum_{i\neq k} |\lambda_i| \mat{v}_i^t \cdot \mat{v}_k\\
& \geq &
|\lambda_k| - (d-m-2) |\lambda_k| \sin \phi.
\end{eqnarray*}
We get $|\lambda_k| \leq 1/ (1 - (d-m-2)\sin\phi) < 1.5$ because $\sin\phi < \frac{1}{3d-3m}$ by assumption of the lemma.  Thus,



\[
\mat{v}_1^t \cdot \mat{u} = \sum_{i=2}^{d-m} \lambda_i \mat{v}_1^t \cdot \mat{v}_i \leq \sin\phi \cdot \sum_{i=2}^{d-m} |\lambda_i| < 1.5(d-m)\sin\phi < 0.5.
\]
This is a contradiction because we have derived earlier that $\mat{v}_1^t \cdot \mat{u} > \sqrt{3}/2$.  We conclude that  $\{f_{\mat{v}_1}(\mat{x}),\ldots,f_{\mat{v}_{d-m}}(\mat{x})\}$ are
linearly independent, and therefore, $\varrho_\mat{z} \in \Phi_{\mat{z}}$.  


By Lemma~\ref{lemma::gradient}, for $i \in [1,d-m]$, $f_{\mat{v}_i}$ is
differentiable and hence continuous in the interior of $B(\mat{z},2\eps)$.
Because $\varrho_\mat{z}$ is a sum of products of continuous functions, 
$\varrho_\mat{z}$ is also continuous in the interior of 
$B(\mat{z},2\eps)$ \cite[Ch~2:~Corollary~3.7]{mendelson}.
\end{proof}

Next, we show that the gradient of $\varrho_\mat{z}$ varies monotonically.

\begin{lemma}\label{lemma::unique}
Let $\mat{z}$ be any point in $\mani$.  Let $\mat{v}_i$ be any unit
vector in $N_\mat{z}$.  For any $\mat{x} \in B(\mat{z},2\eps)$, let 
$\varrho_{\mat{z},i}(\mat{x}) = \sum_{\mat{p} \in P} \omega(\mat{x},\mat{p})
\cdot f_{\mat{v}_i}(\mat{x})^t \cdot (\mat{x}-\mat{p})$.
Let $\tau$ be any value greater than $1$. 
For every $t \geq 1$ and every point $\mat{x} \in B(\mat{z}, t\eps^{\tau})$, 
\begin{itemize}

\item $\left\|\nabla \varrho_{\mat{z},i}(\mat{x})\right\| \in \left[1 - O(t\kappa\sqrt{m}\eps^{\tau-1}+ \kappa m^{4}\gamma),
1 + O(t\kappa\sqrt{m}\eps^{\tau-1} + \kappa m^{4}\gamma)\right]$ and 

\item $\mat{v}_i^t \cdot \nabla\varrho_{\mat{z},i}(\mat{x}) \geq 1 - O(t\kappa\sqrt{m}\eps^{\tau-1} + \kappa m^{4}\gamma)$.

\end{itemize}
\end{lemma}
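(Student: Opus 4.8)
The plan is to differentiate $\varrho_{\mat{z},i}$ in closed form and show that $\nabla\varrho_{\mat{z},i}(\mat{x})$ is the unit vector $\mat{n}(\mat{x}):=f_{\mat{v}_i}(\mat{x})$ plus two small correction vectors. Since the row vector $\mat{n}(\mat{x})^t$ does not depend on the summation index, we may write $\varrho_{\mat{z},i}(\mat{x})=\mat{n}(\mat{x})^t\psi(\mat{x})$, where $\psi(\mat{x}):=\sum_{\mat{p}\in P}\omega(\mat{x},\mat{p})(\mat{x}-\mat{p})=\mat{x}-\sum_{\mat{p}\in P}\omega(\mat{x},\mat{p})\mat{p}$. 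Both factors are differentiable at $\mat{x}$ ($\mat{n}$ by Lemma~\ref{lemma::gradient}; $\psi$ because each $\omega(\cdot,\mat{p})$ is, noting $h'(0)=0$ so that $h(\norm{\mat{x}-\mat{p}})$ is differentiable even at $\mat{x}=\mat{p}$), so by the product rule $\nabla\varrho_{\mat{z},i}(\mat{x})=\mat{J}_\psi(\mat{x})^t\mat{n}(\mat{x})+\mat{J}_n(\mat{x})^t\psi(\mat{x})$, with $\mat{J}_\psi,\mat{J}_n$ the Jacobians. Because $\sum_{\mat{p}}\omega(\mat{x},\mat{p})\equiv1$ we have $\sum_{\mat{p}}\nabla_\mat{x}\omega(\mat{x},\mat{p})=\mat{0}_{d,1}$, hence $\mat{J}_\psi(\mat{x})=\mat{I}_d-\sum_{\mat{p}}(\mat{p}-\mat{z})\bigl(\nabla_\mat{x}\omega(\mat{x},\mat{p})\bigr)^t$, and therefore
\[
\nabla\varrho_{\mat{z},i}(\mat{x})\;=\;\mat{n}(\mat{x})\;-\;\underbrace{\sum_{\mat{p}\in P}\bigl((\mat{p}-\mat{z})^t\mat{n}(\mat{x})\bigr)\,\nabla_\mat{x}\omega(\mat{x},\mat{p})}_{\mat{a}}\;+\;\underbrace{\mat{J}_n(\mat{x})^t\psi(\mat{x})}_{\mat{b}}.
\]
It then suffices to bound $\norm{\mat{a}}$ and $\norm{\mat{b}}$ by $O(t\kappa\sqrt{m}\eps^{\tau-1}+\kappa m^4\gamma)$, since $\norm{\mat{n}(\mat{x})}=1$.

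The term $\mat{b}$ is easy: $\mat{n}$ is exactly the function $f$ of Lemma~\ref{lemma::gradient} for the unit vector $\mat{v}_i\in N_\mat{z}$, so $\mat{J}_n=\mat{J}_f$ and Lemma~\ref{cor:gradient} gives $\norm{\mat{J}_n(\mat{x})}=O(\kappa m^3)$; moreover $\psi(\mat{x})$ is a convex combination of vectors $\mat{x}-\mat{p}$ with $\norm{\mat{x}-\mat{p}}<m\gamma$, so $\norm{\psi(\mat{x})}<m\gamma$ and hence $\norm{\mat{b}}=O(\kappa m^4\gamma)$. For $\mat{a}$ I bound $\sum_{\mat{p}}\norm{\nabla_\mat{x}\omega(\mat{x},\mat{p})}$ and $\max_{\mat{p}\in P\cap B(\mat{x},m\gamma)}|(\mat{p}-\mat{z})^t\mat{n}(\mat{x})|$ separately. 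Differentiating the quotient $\omega(\mat{x},\mat{p})=h(\norm{\mat{x}-\mat{p}})/\sum_{\mat{q}}h(\norm{\mat{x}-\mat{q}})$ and using $\sum_{\mat{p}}\nabla_\mat{x}\omega(\mat{x},\mat{p})=\mat{0}$, a short manipulation yields $\sum_{\mat{p}}\norm{\nabla_\mat{x}\omega(\mat{x},\mat{p})}\leq 2\bigl(\sum_{\mat{p}}|h'(\norm{\mat{x}-\mat{p}})|\bigr)/\sum_{\mat{q}}h(\norm{\mat{x}-\mat{q}})$. Direct differentiation gives $|h'(s)|=O(m/\gamma)(1-s/(m\gamma))^{2m-1}$ for $s\in[0,m\gamma]$ (the estimate underlying Lemma~\ref{lemma::in_out_ratio}(i)); combining this with the lower bound $\sum_{\mat{q}}h(\norm{\mat{x}-\mat{q}})\geq\Omega(\sqrt{m})\sum_{\mat{q}\in P\cap B(\mat{x},m\gamma-r)}(1-\norm{\mat{x}-\mat{q}}/(m\gamma))^{2m-1}$ with $r=\sqrt{m}\eps/3$ (exactly the computation behind~\eqref{eq:Cx2}) and Lemma~\ref{lem:center} to discard the sample points within distance $r$ of $\partial B(\mat{x},m\gamma)$ in the numerator gives $\sum_{\mat{p}}\norm{\nabla_\mat{x}\omega(\mat{x},\mat{p})}=O(\kappa\sqrt{m}/\gamma)$.

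For the other factor, for $\mat{p}\in P\cap B(\mat{x},m\gamma)$ both $\mat{p}$ and $\mat{z}$ lie on $\mani$ with $\norm{\mat{p}-\mat{z}}\leq m\gamma+t\eps^\tau$, so by Lemma~\ref{lem:basic}(i) we can write $\mat{p}-\mat{z}$ as a vector in $T_\mat{z}$ of norm at most $m\gamma+t\eps^\tau$ plus a vector of norm $O\bigl((m\gamma+t\eps^\tau)^2\bigr)$. Since $\mat{n}(\mat{x})\in L_\mat{x}$ and $\angle(L_\mat{x},N_\mat{z})\leq\angle(L_\mat{x},N_{\nu(\mat{x})})+\angle(N_{\nu(\mat{x})},N_\mat{z})=O(m^{3/2}\gamma)+O(t\eps^\tau)$ by Lemmas~\ref{lemma::normal_angle} and~\ref{lem:basic}(ii), the component of $\mat{n}(\mat{x})$ in $T_\mat{z}$ has norm $O(m^{3/2}\gamma+t\eps^\tau)$; pairing the two decompositions gives $|(\mat{p}-\mat{z})^t\mat{n}(\mat{x})|=O(m^{5/2}\gamma^2+tm\gamma\eps^\tau+t^2\eps^{2\tau})$. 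Multiplying by the bound on $\sum_{\mat{p}}\norm{\nabla_\mat{x}\omega(\mat{x},\mat{p})}$ and simplifying with $\gamma=4\eps$, $m\eps\leq1$, and $t\eps^\tau=O(\eps)$ (which we may assume, since otherwise the error term $t\kappa\sqrt{m}\eps^{\tau-1}$ exceeds a positive constant and the statement is trivial) yields $\norm{\mat{a}}=O(\kappa m^4\gamma+t\kappa\sqrt{m}\eps^{\tau-1})$. Setting $E:=t\kappa\sqrt{m}\eps^{\tau-1}+\kappa m^4\gamma$, we now have $\nabla\varrho_{\mat{z},i}(\mat{x})=\mat{n}(\mat{x})+\mat{a}+\mat{b}$ with $\norm{\mat{n}(\mat{x})}=1$ and $\norm{\mat{a}}+\norm{\mat{b}}=O(E)$, giving the first bullet. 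For the second, $\mat{v}_i^t\nabla\varrho_{\mat{z},i}(\mat{x})\geq\mat{v}_i^t\mat{n}(\mat{x})-\norm{\mat{a}}-\norm{\mat{b}}$; and $\mat{v}_i^t\mat{n}(\mat{x})$ is the length of the projection of $\mat{v}_i$ into $L_\mat{x}$, i.e., $\cos\angle(\mat{v}_i,L_\mat{x})\geq\cos\angle(N_\mat{z},L_\mat{x})\geq\cos\bigl(O(m^{3/2}\gamma+t\eps^\tau)\bigr)\geq1-O(m^3\gamma^2+t^2\eps^{2\tau})=1-O(E)$, so $\mat{v}_i^t\nabla\varrho_{\mat{z},i}(\mat{x})\geq1-O(E)$.

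The main obstacle is the estimate $\sum_{\mat{p}}\norm{\nabla_\mat{x}\omega(\mat{x},\mat{p})}=O(\kappa\sqrt{m}/\gamma)$: one must control the ratio of $\sum_{\mat{p}}|h'(\norm{\mat{x}-\mat{p}})|$ to $\sum_{\mat{q}}h(\norm{\mat{x}-\mat{q}})$ even though the summands of both vanish to order $2m-1$ as $\norm{\mat{x}-\mat{p}}\to m\gamma$, which is precisely the phenomenon Lemma~\ref{lem:center} is designed to absorb. Fortunately this is the same calculation already performed for $\norm{\Delta\mat{C}_\mat{x}}$ inside the proof of Lemma~\ref{lemma::normal_change} (without the extra $O(m\gamma)$ factor present there), so it transfers essentially verbatim; the remaining pieces — the near-orthogonality cancellation in $(\mat{p}-\mat{z})^t\mat{n}(\mat{x})$ and the final bookkeeping with $\gamma=4\eps$ — are routine given Lemmas~\ref{lem:basic}, \ref{lemma::normal_angle}, and~\ref{cor:gradient}.
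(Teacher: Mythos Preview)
Your proof is correct and follows essentially the same approach as the paper: both decompose $\nabla\varrho_{\mat{z},i}(\mat{x})$ into $f_{\mat{v}_i}(\mat{x})$ plus a Jacobian term bounded via Lemma~\ref{cor:gradient} and a $\nabla\omega$ term bounded via the $O(\kappa\sqrt{m}/\gamma)$ estimate obtained from Lemma~\ref{lem:center}. The only cosmetic difference is that you use the identity $\sum_{\mat{p}}\nabla\omega(\mat{x},\mat{p})=\mat{0}$ to shift the reference point to $\mat{z}$ and then bound $|(\mat{p}-\mat{z})^t\mat{n}(\mat{x})|$, whereas the paper keeps the factor $f_{\mat{v}_i}(\mat{x})^t(\mat{x}-\mat{p})$ and splits it at $\nu(\mat{x})$; both routes yield the same $t\eps^{\tau}+O(m^{5/2}\gamma^2)$ estimate for this scalar and hence the same final bound.
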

\begin{proof}
From the definition of $\varrho_{\mat{z},i}(\mat{x}) = \sum_{\mat{p} \in P} \omega(\mat{x},\mat{p})
\cdot f_{\mat{v}_i}(\mat{x})^t \cdot (\mat{x}-\mat{p})$, we obtain
\begin{eqnarray}
\left\|\nabla \varrho_{\mat{z},i}(\mat{x})\right\| 
& \leq & \sum_{\mat{p} \in P} \left(\omega(\mat{x},\mat{p}) +
\omega(\mat{x},\mat{p}) \cdot \norm{\mat{J}_{f_{\mat{v}_i}}(\mat{x})} \cdot 
\norm{\mat{x}-\mat{p}} \right) + \nonumber \\
& & \left\|\sum_{\mat{p} \in P}
\nabla \omega(\mat{x},\mat{p}) \cdot
f_{\mat{v}_i}(\mat{x})^t \cdot (\mat{x}-\mat{p})\right\|.  \label{eq:unique-1} 
\end{eqnarray}
Consider the first term in \eqref{eq:unique-1}.  
By Lemma~\ref{cor:gradient}, 
$\norm{\mat{J}_{f_{\mat{v}_i}}(\mat{x})} = O(\kappa m^{3})$.
For any $\mat{p} \not\in B(\mat{x},m\gamma)$,
$\omega(\mat{x},\mat{p})$ vanishes.  If $\mat{p} \in B(\mat{x},m\gamma)$, then
\begin{equation}
\norm{\mat{J}_{f_{\mat{v}_i}}(\mat{x})} \cdot \norm{\mat{x}-\mat{p}} =
O(\kappa m^{4}\gamma).  
\label{eq:unique}
\end{equation}
Therefore,
\begin{equation}
\sum_{\mat{p} \in P} \left(\omega(\mat{x},\mat{p}) +
\omega(\mat{x},\mat{p}) \cdot \norm{\mat{J}_{f_{\mat{v}_i}}(\mat{x})} \cdot 
\norm{\mat{x}-\mat{p}} \right)  
\leq 1 + O(\kappa m^{4}\gamma). \label{eq:unique-2}
\end{equation}
Consider the second term in \eqref{eq:unique-1}.  For any point $\mat{p} \not\in B(\mat{x},m\gamma)$, $\nabla \omega(\mat{x},\mat{p})$ is a zero vector.  If $\mat{p} \in B(\mat{x},m\gamma)$, then $\norm{\mat{p}-\nu(\mat{x})} \leq \norm{\mat{p}-\mat{x}} + \norm{\mat{x}-\nu(\mat{x})} \leq m\gamma +
t \eps^{\tau} = O(m\gamma)$.  By Lemma~\ref{lem:basic}(i), $\mat{p} -
\nu(\mat{x})$ makes an angle $\pi/2 - O(m\gamma)$ with $N_{\nu(\mat{x})}$.
It follows from Lemma~\ref{lemma::normal_angle} that $\mat{p} - \nu(\mat{x})$
makes an angle $\pi/2 - O(m\sqrt{m}\,\gamma)$ with $L_\mat{x}$.  Therefore,
the projection of $\mat{p} - \nu(\mat{x})$ onto $L_\mat{x}$ has length
less than $O(m\sqrt{m}\,\gamma) \cdot O(m\gamma) = O(m^{5/2}\gamma^2)$.  Since 
$f_{\mat{v}_i}(\mat{x})$ is a unit vector in $L_\mat{x}$, the
projection $\mat{p} -\nu(\mat{x})$ in $L_\mat{x}$ has
length at least $\left|f_{\mat{v}_i}(\mat{x})^t \cdot (\mat{p}-\nu(\mat{x}))\right|
\geq \left|f_{\mat{v}_i}(\mat{x})^t \cdot (\mat{p}-\mat{x})\right| - \norm{\mat{x}-\nu(\mat{x})}$.
Therefore, 
\begin{equation}
\left| f_{\mat{v}_i}(\mat{x})^t \cdot (\mat{x}-\mat{p})\right| 
\leq \norm{\mat{x}-\nu(\mat{x})} + O(m^{5/2}\gamma^2)
\leq t\eps^{\tau} + O(m^{5/2}\gamma^2). \label{eq:unique-0}
\end{equation}
We conclude that
\begin{equation}
\left\|\sum_{\mat{p} \in P}
\nabla \omega(\mat{x},\mat{p}) \cdot
f_{\mat{v}_i}(\mat{x})^t \cdot (\mat{x}-\mat{p})\right\|
\leq
O(t\eps^{\tau} + m^{5/2}\gamma^2) \cdot \sum_{\mat{p} \in P} \left\|\nabla\omega(\mat{x},\mat{p})\right\|.
\label{eq:unique-3}
\end{equation}
Since 
\[
\nabla\omega(\mat{x},\mat{p}) =
\frac{\left(\sum_{\mat{p} \in P} h(\norm{\mat{x}-\mat{p}})\right) 
\frac{\mathrm{d}h(\norm{\mat{x}-\mat{p}})}{\mathrm{d}
\norm{\mat{x}-\mat{p}}} \cdot \frac{\mat{x}-\mat{p}}{\norm{\mat{x}-\mat{p}}} 
- 
h(\norm{\mat{x}-\mat{p}})
\sum_{\mat{p} \in P}
\frac{\mathrm{d}h(\norm{\mat{x}-\mat{p}})}{\mathrm{d}
\norm{\mat{x}-\mat{p}}} \cdot \frac{\mat{x}-\mat{p}}{\norm{\mat{x}-\mat{p}}}}
{\left(\sum_{\mat{p} \in P} h(\norm{\mat{x}-\mat{p}})\right)^2},
\]
we obtain 
\[
\sum_{\mat{p} \in P} \norm{\nabla\omega(\mat{x},\mat{p})} \leq \frac{2\,
\sum_{\mat{p} \in P} |\mathrm{d}h(\norm{\mat{x}-\mat{p}})/\mathrm{d}
\norm{\mat{x}-\mat{p}}|} {\sum_{\mat{p} \in P}
h(\norm{\mat{x}-\mat{p}})}.  
\]

By Lemma~\ref{lemma::in_out_ratio}(i), differentiating
$h(\norm{\mat{x}-\mat{p}})$ with respect to $\norm{\mat{x}-\mat{p}}$ gives
\[
\left|\frac{\mathrm{d}h(\norm{\mat{x}-\mat{p}})}{\mathrm{d}
\norm{\mat{x}-\mat{p}}}\right| \leq O\left(\frac{m}{\gamma}\right) \cdot 
\left(1- \frac{\norm{\mat{x}-\mat{p}}}{m\gamma}\right)^{2m-1}.  
\]
On the other hand, 
\[
\sum_{\mat{p} \in P} h(\norm{\mat{x}-\mat{p}}) =
\sum_{\mat{p} \in P} \left(1-\frac{\norm{\mat{x}-\mat{p}}}{m\gamma}
\right)^{2m}\left(\frac{2\norm{\mat{x}-\mat{p}}}{\gamma}+1\right).  
\]
For all $\mat{p} \in P \setminus B(\mat{x},m\gamma)$,
$h(\norm{\mat{x}-\mat{p}}) = 0$ and $\left|\frac{\mathrm{d}h(\norm{\mat{x}-\mat{p}})}{\mathrm{d}
	\norm{\mat{x}-\mat{p}}}\right| = 0$.  Then,
\begin{eqnarray*}
	\sum_{\mat{p} \in P} \norm{\nabla\omega(\mat{x},\mat{p})}
	& \leq &  \frac{O\left(\frac{m}{\gamma}\right)  \sum_{\mat{p} \in
		P \cap B(\mat{x},m\gamma)}
	\left(1-\frac{\norm{\mat{x}-\mat{p}}}{m\gamma}\right)^{2m-1}}
{\sum_{\mat{p} \in P \cap B(\mat{x},m\gamma)}
	\left(1-\frac{\norm{\mat{x}-\mat{p}}}{m\gamma}\right)^{2m}
	\left(\frac{2\norm{\mat{x}-\mat{p}}}{\gamma} + 1\right)}.
\end{eqnarray*}
Let $r = \sqrt{m}\eps/3$.  By Lemma~\ref{lem:center}, 
\begin{eqnarray*}
	\sum_{\mat{p} \in P} \norm{\nabla\omega(\mat{x},\mat{p})}
	& \leq & 
\frac{O\left(\frac{\kappa m}{\gamma}\right) \cdot \sum_{\mat{p} \in
		P \cap B(\mat{x},m\gamma - r)}
	\left(1-\frac{\norm{\mat{x}-\mat{p}}}{m\gamma}\right)^{2m-1}}
{\sum_{\mat{p} \in P \cap B(\mat{x},m\gamma -r)}
	\left(1-\frac{\norm{\mat{x}-\mat{p}}}{m\gamma}\right)^{2m}
	\left(\frac{2\norm{\mat{x}-\mat{p}}}{\gamma} + 1\right)}.
\end{eqnarray*}
In the denominator, the term $\left(1-\frac{\norm{\mat{x}-\mat{p}}}{m\gamma}\right)
\left(\frac{2\norm{\mat{x}-\mat{p}}}{\gamma} + 1\right)$ achieves its minimum
 $\frac{2\sqrt{m}\eps}{3\gamma} - \frac{2\eps^2}{9\gamma^2} + \frac{\eps}{3\sqrt{m}\gamma} = \Omega(\sqrt{m})$ when $\norm{\mat{x}-\mat{p}} = m\gamma - r$.  
 It follows that 
\begin{equation}
\sum_{\mat{p} \in P} \left\|\nabla\omega(\mat{x},\mat{p})\right\| = O(\kappa\sqrt{m}/\gamma).
\label{eq:unique-4}
\end{equation}

Substituting \eqref{eq:unique-4} into \eqref{eq:unique-3} gives
\begin{equation}
\left\|\sum_{\mat{p} \in P}
\nabla \omega(\mat{x},\mat{p}) \cdot
f_{\mat{v}_i}(\mat{x})^t \cdot (\mat{x}-\mat{p})\right\|
= O(t\kappa \sqrt{m}\eps^{\tau-1} + \kappa m^{3}\gamma).
\label{eq:unique-5}
\end{equation}
By substituting \eqref{eq:unique-2} and \eqref{eq:unique-5} into \eqref{eq:unique-1},
we have
\[
\left\|\nabla \varrho_{\mat{z},i}(\mat{x})\right\| \leq 1 + O(t \kappa \sqrt{m}\eps^{\tau-1} + \kappa m^{4}\gamma), 
\]
establishing the upper range limit for  
$\left\|\nabla \varrho_{\mat{z},i}(\mat{x})\right\|$.
Symmetrically,
\begin{eqnarray*}
\norm{\nabla \varrho_{\mat{z},i}(\mat{x})} 
& \geq & \sum_{\mat{p} \in P} \left(\omega(\mat{x},\mat{p}) -
\omega(\mat{x},\mat{p}) \cdot \norm{\mat{J}_{f_{\mat{v}_i}}(\mat{x})} \cdot 
\norm{\mat{x}-\mat{p}} \right) - \\
& & \left\|\sum_{\mat{p} \in P}
\nabla \omega(\mat{x},\mat{p}) \cdot
f_{\mat{v}_i}(\mat{x})^t \cdot (\mat{x}-\mat{p})\right\|.  
\end{eqnarray*}
By \eqref{eq:unique} and \eqref{eq:unique-5}, we have 
\[
\norm{\nabla \varrho_{\mat{z},i}(\mat{x})} \geq 1 -
O(\kappa m^{4}\gamma) - O(t \kappa \sqrt{m}\eps^{\tau-1} + \kappa m^{3}\gamma) 
= 1 - O(t \kappa \sqrt{m}\eps^{\tau-1} + \kappa m^{4}\gamma), 
\]
establishing the lower range limit for
$\left\|\nabla \varrho_{\mat{z},i}(\mat{x})\right\|$.

Observe that 
\begin{eqnarray*}
\mat{v}_i^t \cdot \nabla\varrho_{\mat{z},i}(\mat{x}) 
& = &
\sum_{\mat{p} \in P} \omega(\mat{x},\mat{p}) \cdot
\mat{v}_i^t \cdot f_{\mat{v}_i}(\mat{x}) + 
\sum_{\mat{p} \in P} \omega(\mat{x},\mat{p})
\cdot \mat{v}_i^t \cdot \mat{J}_{f_{\mat{v}_i}}(\mat{x})^t \cdot
(\mat{x}-\mat{p}) +  \\
& & \sum_{\mat{p} \in P} \mat{v}_i^t \cdot
\nabla\omega(\mat{x},\mat{p}) \cdot f_{\mat{v}_i}(\mat{x})^t \cdot (\mat{x}-\mat{p}).
\end{eqnarray*}
Therefore, 
\begin{eqnarray*}
\mat{v}_i^t \cdot \nabla\varrho_{\mat{z},i}(\mat{x}) & \geq & \sum_{\mat{p} \in
P} \omega(\mat{x},\mat{p}) \cdot \mat{v}_i^t \cdot f_{\mat{v}_i}(\mat{x}) -  
\left|\sum_{\mat{p} \in P} \omega(\mat{x},\mat{p}) \cdot \mat{v}_i^t \cdot
J_{f_{\mat{v}_i}}(\mat{x})^t \cdot (\mat{x}-\mat{p})\right| - \\ 
& & \left|\sum_{\mat{p} \in P}
\mat{v}_i^t \cdot \nabla\omega(\mat{x},\mat{p}) \cdot
f_{\mat{v}_i}(\mat{x})^t \cdot (\mat{x}-\mat{p}) \right|.  
\end{eqnarray*}
Since $\angle (f_{\mat{v}_i}(\mat{x}),\mat{v}_i)$ is $O(m\sqrt{m}\,\gamma)$ by Lemma~\ref{lemma::normal_angle}, we get 
$\mat{v}_i^t \cdot f_{\mat{v}_i}(\mat{x}) \geq 1 - O(m^3\gamma^2)$, which implies that
$\sum_{\mat{p} \in P} \omega(\mat{x},\mat{p}) \cdot
\mat{v}_i^t \cdot f_{\mat{v}_i}(\mat{x}) \geq 1-O(m^3\gamma^2)$.  The second term is at
most $\sum_{\mat{p} \in P} \omega(\mat{x},\mat{p}) \cdot
\norm{\mat{J}_{f_{\mat{v}_i}}(\mat{x})} \cdot \norm{\mat{x}-\mat{p}} \leq O(\kappa m^{4}\gamma)$
by \eqref{eq:unique}.
The third term is at most $\sum_{\mat{p} \in P}
\norm{\nabla\omega(\mat{x},\mat{p})} \cdot
|f_{\mat{v}_i}(\mat{x})^t \cdot (\mat{x}-\mat{p})|$, which is 
$O(t \kappa \sqrt{m}\eps^{\tau-1} + \kappa m^{3}\gamma)$ by \eqref{eq:unique-0} and \eqref{eq:unique-4}. 
As a result, 
$\mat{v}_i^t \cdot \nabla
\varrho_{\mat{z},i}(\mat{x}) \geq 1 - O(t \kappa \sqrt{m}\eps^{\tau-1} + \kappa m^{4}\gamma)$.
\end{proof}

The next result shows that every point $z$ in $\mani$ is near $Z_{\varrho_\mat{z}}$.

\begin{lemma}\label{lemma::exist}
Let $\varrho_\mat{z}$ be the canonical function with respect to a point
$\mat{z} \in \mani$ and an orthonormal basis $\{\mat{v}_1,\ldots,
\mat{v}_{d-m}\}$ of $N_\mat{z}$.  
There exists $\eps_0 \in (0,1)$ and $c_m \geq 1$ such that if $\eps \leq \eps_0$, then $Z_{\varrho_\mat{z}} \cap B(\mat{z}, c_m\gamma^{2}) \cap (\mat{z} + N_\mat{z}) \not= \emptyset$ and $Z_{\varrho_\mat{z}} \cap ( B(\mat{z}, 2\eps) \setminus B(\mat{z}, c_m\gamma^{2})) \cap (\mat{z} + N_\mat{z}) = \emptyset$.  The value $\eps_0$ decreases as $d$ increases, and $c_m$ is linear in $m^{5/2}$.
\end{lemma}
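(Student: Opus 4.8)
The plan is to restrict $\varrho_\mat{z}$ to the affine normal space $\mat{z}+N_\mat{z}$, rewrite this restriction as a near-identity perturbation of a linear map, and then deduce existence of a zero near $\mat{z}$ from a fixed-point argument and absence of zeros far from $\mat{z}$ from a direct lower bound on the norm of the restricted map.

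\emph{Setup.} I would put $\mat{V} = (\mat{v}_1\;\cdots\;\mat{v}_{d-m})$, a $d\times(d-m)$ matrix with orthonormal columns and $\col{\mat{V}} = N_\mat{z}$. Since $\{\mat{v}_1,\dots,\mat{v}_{d-m}\}$ is orthonormal, Lemma~\ref{lemma:local} applies with $\phi = 0$, so for $\eps$ at most that lemma's $\eps_0$ we have $\varrho_\mat{z}\in\Phi_\mat{z}$ and $\varrho_\mat{z}$ continuous in the interior of $B(\mat{z},2\eps)$. Define $F(\mat{s}) = \varrho_\mat{z}(\mat{z}+\mat{V}\mat{s})$ on the closed ball of radius $2\eps$ about the origin in $\real^{d-m}$; since $\norm{\mat{V}\mat{s}} = \norm{\mat{s}}$, the point $\mat{x} := \mat{z}+\mat{V}\mat{s}$ lies in $B(\mat{z},2\eps)$, and zeros of $F$ inside the ball of radius $\rho$ correspond bijectively to points of $Z_{\varrho_\mat{z}}\cap B(\mat{z},\rho)\cap(\mat{z}+N_\mat{z})$, so it suffices to study $F$.

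\emph{Decomposition.} Writing $\mat{M}_\mat{x} = (\,f_{\mat{v}_1}(\mat{x})\;\cdots\;f_{\mat{v}_{d-m}}(\mat{x})\,)$ (this is the matrix $\mat{B}_{\varrho_\mat{z},\mat{x}}$ of Definition~\ref{df:local}) and using $\sum_{\mat{p}\in P}\omega(\mat{x},\mat{p}) = 1$, I would record
\[
F(\mat{s}) \;=\; \mat{M}_\mat{x}^t\mat{V}\mat{s} \;+\; \mat{r}(\mat{s}),\qquad \mat{r}(\mat{s}) \;=\; \sum_{\mat{p}\in P}\omega(\mat{x},\mat{p})\cdot\mat{M}_\mat{x}^t(\mat{z}-\mat{p}).
\]
The whole argument then rests on two estimates that hold uniformly for $\norm{\mat{s}}\le 2\eps$. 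First, for $\mat{p}\in P\cap B(\mat{x},m\gamma)$ one has $\norm{\mat{z}-\mat{p}}\le\norm{\mat{s}}+m\gamma<(m+1)\gamma<1$, so Lemma~\ref{lem:basic}(i) bounds the $T_\mat{z}$-normal component of $\mat{z}-\mat{p}$ by $O(m^2\gamma^2)$, while Lemma~\ref{lemma::normal_angle} gives $\angle(L_\mat{x},N_\mat{z}) = O(m\sqrt{m}\,\gamma)$, so any unit vector of $T_\mat{z}$ projects into $L_\mat{x}$ with length $O(m\sqrt{m}\,\gamma)$; combining, the projection of $\mat{z}-\mat{p}$ into $L_\mat{x}$ has length $O(m^{5/2}\gamma^2)$. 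Second, $\angle(f_{\mat{v}_i}(\mat{x}),\mat{v}_i)\le\angle(L_\mat{x},N_\mat{z}) = O(m\sqrt{m}\,\gamma)$, so $f_{\mat{v}_i}(\mat{x}) = \mat{v}_i+\mat{e}_i$ with $\norm{\mat{e}_i} = O(m\sqrt{m}\,\gamma)$; hence every entry of $\mat{M}_\mat{x}^t\mat{V}-\mat{I}_{d-m}$ and of $\mat{M}_\mat{x}^t\mat{M}_\mat{x}-\mat{I}_{d-m}$ is $O(m\sqrt{m}\,\gamma)$, so their Frobenius norms are $(d-m)\cdot O(m\sqrt{m}\,\gamma)$, which I make at most $\tfrac12$ by shrinking $\eps_0$ --- and this is precisely where $\eps_0$ is forced to decrease with $d$. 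It follows that $\norm{\mat{M}_\mat{x}}\le\sqrt{3/2}$, that $\mat{M}_\mat{x}^t\mat{V}$ is invertible with $\norm{(\mat{M}_\mat{x}^t\mat{V})^{-1}}\le 2$ and $\norm{\mat{M}_\mat{x}^t\mat{V}\mat{s}}\ge\tfrac12\norm{\mat{s}}$, and that $\norm{\mat{r}(\mat{s})}\le\sum_{\mat{p}}\omega(\mat{x},\mat{p})\,\norm{\mat{M}_\mat{x}}\cdot O(m^{5/2}\gamma^2)\le C\,m^{5/2}\gamma^2$ for an absolute constant $C$.

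\emph{Existence and non-vanishing.} A zero of $F$ is exactly a fixed point of $\Psi(\mat{s}) := -(\mat{M}_{\mat{z}+\mat{V}\mat{s}}^t\mat{V})^{-1}\mat{r}(\mat{s})$, which is continuous (both $\mat{M}_\mat{x}$ --- via continuity of the $f_{\mat{v}_i}$ from Lemma~\ref{lemma::gradient} --- and matrix inversion are continuous) and satisfies $\norm{\Psi(\mat{s})}\le 2C\,m^{5/2}\gamma^2$. I would set $c_m = 3C\,m^{5/2}$ (enlarging $C$ if needed so $c_m\ge1$) and take $\eps_0\le 1/(24Cm^{5/2})$, so that $c_m\gamma^2 = 16c_m\eps^2 < 2\eps$; then $\Psi$ maps the closed ball of radius $c_m\gamma^2$ into itself, and Brouwer's fixed-point theorem yields a zero $\mat{s}^\ast$ of $F$ with $\norm{\mat{s}^\ast}\le 2Cm^{5/2}\gamma^2 < c_m\gamma^2$, giving $Z_{\varrho_\mat{z}}\cap B(\mat{z},c_m\gamma^2)\cap(\mat{z}+N_\mat{z})\neq\emptyset$. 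For the complement, whenever $c_m\gamma^2\le\norm{\mat{s}}\le 2\eps$,
\[
\norm{F(\mat{s})} \;\ge\; \norm{\mat{M}_\mat{x}^t\mat{V}\mat{s}} - \norm{\mat{r}(\mat{s})} \;\ge\; \tfrac12\norm{\mat{s}} - C\,m^{5/2}\gamma^2 \;\ge\; \tfrac12 c_m\gamma^2 - C\,m^{5/2}\gamma^2 \;=\; \tfrac12 C\,m^{5/2}\gamma^2 \;>\; 0,
\]
so $F$, hence $\varrho_\mat{z}$, has no zero there, i.e.\ $Z_{\varrho_\mat{z}}\cap\bigl(B(\mat{z},2\eps)\setminus B(\mat{z},c_m\gamma^2)\bigr)\cap(\mat{z}+N_\mat{z}) = \emptyset$. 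Finally $\eps_0$ decreases as $d$ grows (from $(d-m)\cdot O(m\sqrt{m}\,\gamma)\le\tfrac12$ and the thresholds inherited from Lemmas~\ref{lemma::normal_angle} and~\ref{lemma:local}) and $c_m = 3Cm^{5/2}$ is linear in $m^{5/2}$, as required.

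\emph{Main obstacle.} The delicate point will be keeping the constant multiplying $\gamma^2$ linear in $m^{5/2}$ and \emph{independent of $d$}: the bound on $\mat{r}(\mat{s})$ must come from controlling the vector norm $\norm{\mat{M}_\mat{x}^t(\mat{z}-\mat{p})}\le\norm{\mat{M}_\mat{x}}\cdot\norm{\mathrm{proj}_{L_\mat{x}}(\mat{z}-\mat{p})}$ directly, not from summing the $d-m$ coordinates of $\mat{M}_\mat{x}^t(\mat{z}-\mat{p})$ separately (which would cost a spurious $\sqrt{d-m}$). The dimension $d$ may then enter only through the smallness threshold $\eps_0$, exactly as the statement allows.
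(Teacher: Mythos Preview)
Your proof is correct and takes a genuinely different route from the paper's. The paper works coordinate-wise: it shows that whenever $\mat{v}_i^t(\mat{x}-\mat{z})\ge c_m\gamma^2$ one has $\varrho_{\mat{z},i}(\mat{x})>0$ (and symmetrically on the other side), which traps the zeros of each $\varrho_{\mat{z},i}$ in a slab $S_{\mat{v}_i}$; non-vanishing in the annulus then comes from varying $\mat{v}_i$ over \emph{all} unit vectors in $N_\mat{z}$ and invoking Lemma~\ref{lem:agree} so that $\bigcap_{\mat{v}}S_\mat{v}$ collapses to the ball. For existence, the paper works inside the cube $\bigcap_{i=1}^{d-m}S_{\mat{v}_i}$, uses the gradient monotonicity of Lemma~\ref{lemma::unique} to show each axis-parallel segment meets $\varrho_{\mat{z},i}^{-1}(0)$ exactly once, defines auxiliary coordinate maps $g_i$, proves their continuity via compactness, and applies Brouwer to $g=(g_1,\dots,g_{d-m})$. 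Your decomposition $F(\mat{s})=\mat{M}_\mat{x}^t\mat{V}\mat{s}+\mat{r}(\mat{s})$ with the single norm bound $\norm{\mat{r}(\mat{s})}\le Cm^{5/2}\gamma^2$ gives both conclusions at once: the direct lower bound $\norm{F(\mat{s})}\ge\tfrac12\norm{\mat{s}}-Cm^{5/2}\gamma^2$ handles all directions simultaneously (no slab-by-slab or basis-change trick), and Brouwer applies immediately to $\Psi=-(\mat{M}_\mat{x}^t\mat{V})^{-1}\mat{r}$ without constructing the $g_i$ or invoking Lemma~\ref{lemma::unique}. Your observation that $\norm{\mat{M}_\mat{x}^t(\mat{z}-\mat{p})}\le\norm{\mat{M}_\mat{x}}\cdot\norm{\mathrm{proj}_{L_\mat{x}}(\mat{z}-\mat{p})}$ (rather than summing the $d-m$ coordinates) is exactly what keeps $c_m$ free of $d$; the paper achieves the same independence by bounding each coordinate $|f_{\mat{v}_i}(\mat{x})^t(\mat{z}-\mat{p})|$ individually by $O(m^{5/2}\gamma^2)$ and never aggregating them. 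What the paper's route buys is reuse: the sign/monotonicity picture and Lemma~\ref{lemma::unique} are invoked again in Theorem~\ref{thm:main} for injectivity of the nearest-point map, but since that lemma is proved independently, your shortcut here costs nothing downstream.
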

\begin{proof}
We first show that $Z_{\varrho_\mat{z}} \cap ( B(\mat{z}, 2\eps) \setminus B(\mat{z}, c_m\gamma^{2})) \cap (\mat{z} + N_\mat{z})$ is empty.
For all $i \in [1,d-m]$ and all point $\mat{x} \in B(\mat{z},2\eps)$, let
$\varrho_{\mat{z},i} = \sum_{\mat{p} \in P} \omega(\mat{x},\mat{p}) \cdot
f_{\mat{v}_i}(\mat{x})^t \cdot (\mat{x}-\mat{p})$.

We claim that there exists a value $c_m \geq 1$ that is linear in $m^{5/2}$ such that for every $\mat{x} \in B(\mat{z}, 2\eps) \cap (\mat{z} +
N_\mat{z})$ and every $i \in [1,d-m]$, if $\mat{v}_i^t \cdot (\mat{x}-\mat{z})
\geq c_m\gamma^{2}$, then $\varrho_{\mat{z},i}(\mat{x}) > 0$.  We ignore all $\mat{p} \in P \setminus
B(\mat{x},m\gamma)$ because $\omega(\mat{x},\mat{p}) = 0$ in this case, so such
points have no influence over $\varrho_\mat{z}(\mat{x})$.  $P \cap B(\mat{x},m\gamma)$
is non-empty because, by uniform $(\eps,\kappa)$-sampling, there is a point $\mat{q} \in P$
such that $\norm{\mat{q}-\mat{z}} \leq \eps$ which implies that $\norm{\mat{q}-\mat{x}} 
\leq \norm{\mat{x}-\mat{z}} + \norm{\mat{q}-\mat{z}} \leq 3\eps \leq m\gamma$.
For every $\mat{p} \in P \cap B(\mat{x},m\gamma)$, 
\[
\mat{v}_i^t \cdot (\mat{x}-\mat{p}) \geq \mat{v}_i^t
\cdot (\mat{x}-\mat{z}) - |\mat{v}_i^t \cdot (\mat{z}-\mat{p})|.  
\]
The first term is bounded from below as
$\mat{v}_i^t \cdot (\mat{x}-\mat{z}) \geq
c_m\gamma^2$ by assumption.  Consider the
second term.  Since $\norm{\mat{p}-\mat{z}} \leq
\norm{\mat{p}-\mat{x}} + \norm{\mat{x}-\mat{z}} \leq m\gamma+ 2\eps <
(m+1)\gamma$, Lemma~\ref{lem:basic}(i) implies that 
the second
term $|\mat{v}_i^t \cdot (\mat{z}-\mat{p})|$ is at most $(m+1)^2\gamma^2/2$.
It
follows that 
\[
\mat{v}_i^t \cdot (\mat{x}-\mat{p}) \geq
c_m \gamma^{2} - (m+1)^2\gamma^2/2.
\]
For $i \in [1,d-m]$, define $\mat{h}_i(\mat{x}) = f_{\mat{v}_i}(\mat{x}) - \mat{v}_i$.
Lemma~\ref{lemma::normal_angle} implies that 
\[
\norm{\mat{h}_i(\mat{x})} \leq 2\sin\frac{\angle (L_\mat{x},N_\mat{z})}{2} = O(m\sqrt{m}\,\gamma).
\]
Observe that 
\begin{eqnarray*}
f_{\mat{v}_i}(\mat{x})^t \cdot (\mat{x}-\mat{p}) 
& = & \mat{v}_i^t \cdot (\mat{x}-\mat{p}) + \mat{h}_i(\mat{x})^t \cdot (\mat{x}-\mat{p}) \\
& \geq & c_m \gamma^{2}  - (m+1)^2\gamma^2/2 -
\norm{\mat{h}_i(\mat{x})} \cdot \norm{\mat{x}-\mat{p}} \\
& \geq & c_m \gamma^{2}  - (m+1)^2\gamma^2/2 - O(m^{5/2}\gamma^2) \\
& > & 0,
\end{eqnarray*}
whenever $c_m$ is a large enough value that is linear in $m^{5/2}$.
As a result, $\varrho_{\mat{z},i}(\mat{x}) > 0$.  
This proves our claim.  

We can symmetrically show that if $\mat{v}_i^t \cdot (\mat{x}-\mat{z}) \leq
-c_m\gamma^2$, then $\varrho_{\mat{z},i}(\mat{x}) < 0$.  Thus,
$\varrho_{\mat{z},i}^{-1}(0) \cap B(\mat{z},2\eps) \cap (\mat{z} +
N_\mat{z})$ lies in a $(d-m)$-dimensional slab $S_{\mat{v}_i} \subset \mat{z} +
N_{\mat{z}}$ that is bounded by two $(d-m-1)$-dimensional flats orthogonal to
$\mat{v}_i$ and at distance $c_m\gamma^2$ from $\mat{z}$.  It follows that $(Z_{\varrho_\mat{z}} \cap ( B(\mat{z}, 2\eps) \cap (\mat{z} + N_\mat{z})) \setminus S_{\mat{v}_i} = \emptyset$. By Lemma~\ref{lem:agree}, $Z_{\varrho_\mat{z}}$ is identical for any choice of the orthonormal basis $\{\mat{v}_1,\ldots, \mat{v}_{d-m}\}$ of $N_\mat{z}$.  It means that we can set $\mat{v}_i$ to be any unit vector $\mat{v} \in N_\mat{z}$ and the proof above still works.  Observe that $\bigcap_{\mat{v} \in N_\mat{z}} S_\mat{v} = B(\mat{z},c_m\gamma^2) \cap (\mat{z} + N_\mat{z})$.  Hence, $Z_{\varrho_\mat{z}} \cap ( B(\mat{z}, 2\eps) \setminus B(\mat{z}, c_m\gamma^2)) \cap (\mat{z} + N_\mat{z}) = \emptyset$.

To establish that $Z_{\varrho_\mat{z}} \cap B(\mat{z}, c_m\gamma^{2}) \cap (\mat{z} + N_\mat{z}) \not= \emptyset$, it suffices to show that $\bigcap_{i=1}^{d-m} \varrho_{\mat{z},i}^{-1}(0)$
contains a point in $\bigcap_{i=1}^{d-m} S_{\mat{v}_i}$.  This is because $\bigcap_{i=1}^{d-m} S_{\mat{v}_i}$ is contained in $B(\mat{z},c_m\sqrt{d-m}\gamma^2)$, and for $\eps_0 \leq 1/(16c_m\sqrt{d-m})$, we have $B(\mat{z},c_m\sqrt{d-m}\gamma^2) \subseteq B(\mat{z},\eps)$ as $c_m\sqrt{d-m}\gamma^2 \leq 16c_m\sqrt{d-m}\,\eps^2 \leq 16c_m\sqrt{d-m}\, \eps_0\eps$.  Then, the fact that $Z_{\varrho_{\mat{z}}} \cap (B(\mat{z},2\eps) \setminus B(\mat{z},c_m\gamma^2)) \cap (\mat{z} + N_\mat{z}) = \emptyset$ implies that $\bigcap_{i=1}^{d-m} \varrho^{-1}_{\mat{z},i}(0)$ contains a point in $B(\mat{z},c_m\gamma^2) \cap (\mat{z} + N_\mat{z})$.

In fact, we choose an even smaller $\eps_0$ such that $\sqrt{\eps_0} \leq 1/(16c_m\sqrt{d-m})$, which gives $c_m\sqrt{d-m} \gamma^2 \leq \eps^{3/2}$.  This will allow us apply Lemma~\ref{lemma::unique} later.  The exponent 3/2 is an arbitrary choice.  Any number greater than 1 will do.

Let $C = \bigcap_{i=1}^{d-m} S_{\mat{v}_i}$.  It is a $(d-m)$-dimensional cube that lies in $\mat{z} + N_{\mat{z}}$, has $\mat{z}$ as its center, and has side
length $2c_m\gamma^{2} $. The facets of $C$ are orthogonal to the
directions $\mat{v}_1,\ldots,\mat{v}_{d-m}$.  

Adopt a coordinate frame such that $\mat{v}_1,\ldots,\mat{v}_{d-m}$ are the
first $d-m$ coordinate axes of $\real^d$.  For $i \in [1,d-m]$, define $H_i$ to
be the set of maximal line segments that lie inside $C$ and are parallel to the
direction $\mat{v}_i$.

First, we claim that every line segment $l \in H_i$ intersects
$\varrho_{\mat{z},i}^{-1}(0)$ at exactly one point.  We have shown earlier that
$\varrho_{\mat{z},i}$ has opposite signs at the endpoints of $l$.  So $l \cap
\varrho_{\mat{z},i}^{-1}(\mat{0}) \not= \emptyset$.  Suppose to the contrary
that $l \cap \varrho_{\mat{z},i}^{-1}(0)$ contains two distinct points
$\mat{y}_1$ and $\mat{y}_2$.  So $\mat{y}_1 - \mat{y}_2$ is parallel to
$\mat{v}_i$.  Assume without loss of generality that
$\mat{y}_1 - \mat{y}_2$ has the same orientation as $\mat{v}_i$. 
By Lemma~\ref{lemma::unique}, $(\mat{y}_1-\mat{y}_2)^t \cdot
\nabla\varrho_{\mat{z},i}(\mat{x}) > 0$ for every $\mat{x} \in
B(\mat{z},c_m\sqrt{d-m}\gamma^2) \subseteq B(\mat{z},\eps^{3/2})$. But then $\varrho_{\mat{z},i}(\mat{x})$ increases
strictly monotonically from $\mat{y}_2$ to $\mat{y}_1$, which implies that
$\varrho_{\mat{z},i}(\mat{y}_1) > 0$. This is a contradiction because
$\mat{y}_1 \in \varrho_{\mat{z},i}^{-1}(0)$, thereby establishing our claim.

Define a function $g_i : C \rightarrow [-c_m\gamma^{2},
c_m\gamma^{2}]$ such that 
$g_i(\mat{x}) = b_{i,\mat{x}}$, where 
\begin{itemize}

\item $(x_1,\ldots,x_{i-1},b_{i,\mat{x}},x_{i+1},\ldots,x_d) \in C$ and

\item $\varrho_{\mat{z},i}(x_1,\ldots,x_{i-1},b_{i,\mat{x}},x_{i+1},\ldots,x_d) = 0$.

\end{itemize} 
Our claim in the previous paragraph ensures the existence and uniqueness of
$b_{i,\mat{x}}$.  We show that $g_i$ is continuous.  Since $\varrho_{\mat{z},i}$ is
continuous, $\varrho_{\mat{z},i}^{-1}(0)$ is compact~\cite[Ch~3:~Theorem
5.4,~Ch~5:~Theorem 2.11]{mendelson}, which implies that for any interval $[a,b]
\subset \real$, $\varrho_{\mat{z},i}^{-1}(0) \cap \{\mat{x} \in C : x_i \in
[a,b]\}$ is compact.  Let $\pi_i$ be the function that projects points in $C$
onto the linear subspace spanned by
$\{\mat{v}_1,\ldots,\mat{v}_{i-1},\mat{v}_{i+1},\ldots,\mat{v}_{d-m}\}$.  Since
$\pi_i$ is continuous, its image is compact and so is the following
product~\cite[Ch~5:~Theorem 2.9 \& Theorem 4.2]{mendelson}:
\[
\pi_i\left(\varrho_{\mat{z},i}^{-1}(0) \cap \{\mat{x}
\in C : x_i \in [a,b]\}\right) \times [-c_m\gamma^{2},
c_m\gamma^{2}].
\]
Observe that this product is homeomorphic to $g_i^{-1}([a,b])$.
Therefore, $g_i^{-1}([a,b])$ is compact for any
interval $[a,b] \subset \real$, which implies that $g_i$ is
continuous~\cite[Ch~2:~Theorem 6.10]{mendelson}.

Define a function $g : C \rightarrow C$ such that 
\[
g(\mat{x}) = \left(g_1(\mat{x}), \ldots, g_{d-m}(\mat{x})\right)^t. 
\]
The function $g$ is continuous as each $g_i$ is continuous.
Notice that $\varrho_{\mat{z},i}^{-1}(0) \cap C$ is the subset of $C$ that satisfy
the equation $g_i(x_1,\ldots,x_i,\ldots,x_d) = x_i$.  Since
$\varrho_\mat{z}(\mat{x}) = (\varrho_{\mat{z},1}(\mat{x}), \ldots,
\varrho_{\mat{z},d-m}(\mat{x}))^t$, we conclude that $Z_{\varrho_\mat{z}} \cap
C$ is the subset of $C$ that satisfy the equation $g(\mat{x}) =
\mat{x}$.  By the Brouwer fixed-point theorem~\cite[Ch~4:~Theorem 4.6]{mendelson}, there is indeed such a point in $C$.
\end{proof}

Recall that $\nu$ is the map that sends every point in $\real^d$ to its nearest
point in $\mani$.  We need to show that $Z_\varphi \cap \widehat{\mani}$ is compact
in order to prove that $Z_\varphi \cap \widehat{\mani}$ and $\mani$ are homeomorphic.  

\begin{lemma}\label{lemma::compact}
$Z_{\varphi} \cap \widehat{\mani}$ is compact.
\end{lemma}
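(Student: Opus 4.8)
The plan is to show that $Z_\varphi \cap \widehat{\mani}$ is closed and bounded in $\real^d$. Boundedness is immediate: $\mani$ is compact, hence bounded, and $\widehat{\mani}$ is contained in a bounded neighborhood of $\mani$, so $Z_\varphi \cap \widehat{\mani}$ is bounded. For closedness I would take an arbitrary sequence $(\mat{x}_i)$ in $Z_\varphi \cap \widehat{\mani}$ converging to a limit $\mat{x}^*$ and argue that $\mat{x}^* \in Z_\varphi \cap \widehat{\mani}$. Since $\widehat{\mani}$ is closed (it is a closed tubular neighborhood of the compact manifold $\mani$), we already have $\mat{x}^* \in \widehat{\mani}$; the only thing left to check is $\varphi(\mat{x}^*) = \mat{0}_{d-m,1}$.

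The subtlety is that $\varphi$ need not be continuous on all of $\real^d$: the orthonormal columns of $\mat{B}_{\varphi,\mat{x}}$ spanning $L_\mat{x}$ are not chosen to vary continuously, and cannot be if the normal bundle is topologically nontrivial. I would circumvent this with the local machinery already developed. Let $\mat{z}^* = \nu(\mat{x}^*)$, which is well defined because $\dist(\mat{x}^*,\mani) \leq \eps < 1$ and $\mani$ has reach $1$. Then $\norm{\mat{x}^* - \mat{z}^*} \leq \eps$, so $\mat{x}^*$ lies in the interior of $B(\mat{z}^*,2\eps)$, and for all sufficiently large $i$ the point $\mat{x}_i$ also lies in the interior of $B(\mat{z}^*,2\eps)$ by the triangle inequality. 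Fix an orthonormal basis $\{\mat{v}_1,\ldots,\mat{v}_{d-m}\}$ of $N_{\mat{z}^*}$ (which satisfies the angle hypothesis of Lemma~\ref{lemma:local} with $\phi = 0$) and let $\varrho_{\mat{z}^*}$ be the corresponding canonical function. By Lemma~\ref{lemma:local}, $\varrho_{\mat{z}^*} \in \Phi_{\mat{z}^*}$ and $\varrho_{\mat{z}^*}$ is continuous on the interior of $B(\mat{z}^*,2\eps)$.

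Next I would transfer the zero-set information between $\varphi$ and $\varrho_{\mat{z}^*}$ via Lemma~\ref{lem:agree}. The restriction of $\varphi$ to $B(\mat{z}^*,2\eps)$ is itself a member of $\Phi_{\mat{z}^*}$, so applying Lemma~\ref{lem:agree} with $\mat{y} = \mat{z} = \mat{z}^*$, the equality $\varphi(\mat{x}_i) = \mat{0}_{d-m,1}$ yields $\varrho_{\mat{z}^*}(\mat{x}_i) = \mat{0}_{d-m,1}$ for all large $i$. By continuity of $\varrho_{\mat{z}^*}$ on the interior of $B(\mat{z}^*,2\eps)$, we get $\varrho_{\mat{z}^*}(\mat{x}^*) = \lim_{i\to\infty}\varrho_{\mat{z}^*}(\mat{x}_i) = \mat{0}_{d-m,1}$. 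A final application of Lemma~\ref{lem:agree}, now in the other direction, gives $\varphi(\mat{x}^*) = \mat{0}_{d-m,1}$, i.e., $\mat{x}^* \in Z_\varphi$. Hence $\mat{x}^* \in Z_\varphi \cap \widehat{\mani}$, which shows that $Z_\varphi \cap \widehat{\mani}$ is closed, and therefore compact.

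The only step that requires care is the combination of the local continuity of $\varrho_{\mat{z}^*}$ with the two applications of Lemma~\ref{lem:agree}; this is precisely where the possible global discontinuity of $\varphi$ is sidestepped. Everything else — boundedness, closedness of $\widehat{\mani}$, and the well-definedness of $\nu(\mat{x}^*)$ — is routine, so I expect no real obstacle. I would also remind the reader that the statement is under the standing hypothesis $\eps \leq \eps_0$, which is what makes Lemmas~\ref{lemma:local} and~\ref{lem:agree} applicable.
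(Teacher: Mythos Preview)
Your proof is correct. Both your argument and the paper's hinge on the same local machinery (Lemmas~\ref{lem:agree} and~\ref{lemma:local}) to transfer zero-set information from the possibly discontinuous $\varphi$ to the locally continuous canonical function $\varrho_{\mat{z}}$, but the outer structure differs. The paper builds a finite $\eps^\tau$-net $Y$ in $\widehat{\mani}$, covers $Z_\varphi \cap \widehat{\mani}$ by the finitely many compact sets $Z_{\varrho_{\nu(\mat{y})}} \cap B(\nu(\mat{y}),\eps^\tau+\eps)$, and then intersects their union with the compact set $\widehat{\mani}$. You instead invoke Heine--Borel directly: boundedness is trivial, and closedness is handled by a sequential argument centered at a single point $\mat{z}^* = \nu(\mat{x}^*)$. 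Your route is shorter and avoids the finite-cover bookkeeping; the paper's route is more explicit about how the local pieces $Z_{\varrho_{\nu(\mat{y})}}$ assemble, which is consistent with its later use of these local zero sets. Either way, the essential content---that $Z_\varphi$ near $\mani$ coincides with the zero set of a continuous local function---is the same.
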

\begin{proof}
By Lemmas~\ref{lem:agree} and~\ref{lemma:local}, for any point $\mat{z} \in \mani$,
$Z_{\varphi}$ agrees locally with $Z_{\varrho_\mat{z}}$ where $\varrho_{\mat{z}}$
is the canonical function with respect to $\mat{z}$ and any
orthonormal basis of $N_\mat{z}$.  Our strategy is to
construct a finite number of such $Z_{\varrho_{\mat{z}}}$'s and prove that each is
compact.  The lemma then follows as a finite union of compact sets is compact. 

Take a maximal set $Y$ of points in $\widehat{\mani}$ such that any two of them are at distance $\eps^{\tau}$ or more apart. 
It implies that any two balls centered at points in $Y$ with radius $\eps^{\tau}/2$ are interior-disjoint. 
Since $\widehat{\mani}$ is the
product of $\mani$ and a ball of radius $\eps$, 
$\widehat{\mani}$ is compact~\cite[Ch~5:~Theorem~4.2]{mendelson}.  
It follows that $|Y|$ is finite.  The maximality 
also implies that $\widehat{\mani} \subseteq \bigcup_{\mat{y} \in Y}
B(\mat{y},\eps^{\tau})$.
The intersection $Z_\varphi \cap \bigcup_{\mat{y} \in Y}
B(\mat{y},\eps^{\tau})$ is equal to $\bigcup_{\mat{y} \in Y} Z_\varphi \cap
B(\mat{y},\eps^{\tau})$ which is a subset of $\bigcup_{\mat{y} \in Y}
Z_\varphi \cap B(\nu(\mat{y}),\eps^{\tau} + \eps)$ because $\norm{\mat{y} -
\nu(\mat{y})} \leq \eps$.  By Lemmas~\ref{lem:agree} and~\ref{lemma:local},
$Z_\varphi \cap B(\nu(y), \eps^\tau + \eps) = Z_{\varrho_{\nu(y)}} \cap B(\nu(y), \eps^\tau + \eps)$.
Therefore,
\[
Z_\varphi \cap \widehat{\mani} \subseteq Z_\varphi \cap \bigcup_{\mat{y} \in Y}
B(\mat{y},\eps^\tau) \subseteq
\bigcup_{\mat{y} \in Y} Z_{\varrho_{\nu(y)}} \cap B(\nu(y),\eps^\tau + \eps).
\]
As $\varrho_{\nu(\mat{y})}$ is continuous in the interior of $B(\nu(y),2\eps)$
by Lemma~\ref{lemma:local}, $Z_{\varrho_{\nu(y)}} \cap B(\nu(y),\eps^{\tau}+\eps)$
is compact~\cite[Ch~3:~Theorem 5.4,~Ch~5:~Theorem 2.11]{mendelson}.  
It implies that the finite union $\bigcup_{\mat{y} \in Y} Z_{\varrho_{\nu(y)}}
\cap B(\nu(y),\eps^{\tau} + \eps)$ is also compact.  Finally, observe that 
\[
Z_\varphi \cap \widehat{\mani} = \left(\bigcup_{\mat{y} \in Y}
Z_{\varrho_{\nu(y)}} \cap B(\nu(y),\eps^{\tau}+\eps)\right) \cap \widehat{\mani},
\]
which is compact because it is the intersection of two 
compact subsets in $\real^d$.
\end{proof}

We are ready to prove the faithful approximation of $\mani$ by $Z_\varphi \cap
\widehat{\mani}$. 

%
%
%

\begin{theorem}
	\label{thm:main}
	Let $\mani$ be an $m$-dimensional compact smooth manifold in $\real^d$.  
	Let $P$ be a uniform $(\eps,\kappa)$-sample of $\mani$ for some constant $\kappa \geq 1$. 
	We assume that $\mani$ has unit reach, $m$ is known,
	a neighborhood radius $\gamma = 4\eps$, and approximate tangent spaces 
	with angular errors
	at most $m\gamma$ are specified at the points in $P$.
	Let $\widehat{\mani}$ be the set of
	points within a distance $\eps$ from $\mani$.  We can construct a function $\varphi: \real^d \rightarrow \real^{d-m}$ 
	for which there exists $\eps_0 \in (0,1)$ that decreases as $d$ increases such that 
	the following properties hold whenever $\eps \leq \eps_0$.
	\begin{emromani}
		
		\item The restriction of the nearest point map to $Z_\varphi \cap
		\widehat{\mani}$ is a homeomorphism between $Z_\varphi \cap \widehat{\mani}$ and $\mani$.
		
		\item The Hausdorff distance between $Z_\varphi \cap \widehat{\mani}$ and $\mani$
		is $O(m^{5/2}\gamma^{2}) = O(m^{5/2}\eps^2)$.
		
		\item For all $\mat{x} \in Z_\varphi \cap \widehat{\mani}$, $N_{\nu(\mat{x})}$
		makes an $O(m^2\sqrt{\kappa\gamma}) = O(m^2\sqrt{\kappa\eps})$ angle with the normal space of $Z_\varphi$ at
		$\mat{x}$.
		
	\end{emromani}
\end{theorem}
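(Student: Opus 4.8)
The plan is to analyse $Z_\varphi$ locally through the canonical functions and then assemble the pieces already in hand. Fix $\eps_0\in(0,1)$, decreasing in $d$, small enough that every lemma cited below applies and that the auxiliary smallness conditions used in Lemmas~\ref{lemma:local} and~\ref{lemma::exist} hold (in particular $16c_m\eps\le 1$ and $c_m\sqrt{d-m}\,\gamma^2\le\eps^{3/2}$). For each $\mat{z}\in\mani$ let $\varrho_\mat{z}$ be the canonical function with respect to $\mat{z}$ and some \emph{orthonormal} basis $\{\mat{v}_1,\dots,\mat{v}_{d-m}\}$ of $N_\mat{z}$; then the hypothesis of Lemma~\ref{lemma:local} holds with $\phi=0$, so $\varrho_\mat{z}\in\Phi_\mat{z}$ and $\varrho_\mat{z}$ is continuous in the interior of $B(\mat{z},2\eps)$, and since $\varphi$ restricted to $B(\mat{z},2\eps)$ also lies in $\Phi_\mat{z}$, Lemma~\ref{lem:agree} gives $Z_\varphi\cap B(\mat{z},2\eps)=Z_{\varrho_\mat{z}}\cap B(\mat{z},2\eps)$. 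Throughout I use that $\eps<1$ is below the reach, so $\widehat{\mani}$ is disjoint from the medial axis, $\nu$ is well defined and continuous on $\widehat{\mani}$, and $\mat{x}-\nu(\mat{x})\in N_{\nu(\mat{x})}$ with $\norm{\mat{x}-\nu(\mat{x})}\le\eps$ for every $\mat{x}\in\widehat{\mani}$.

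Part (ii) is immediate from Lemma~\ref{lemma::exist}. For $\mat{z}\in\mani$ that lemma yields $\mat{x}^\ast\in Z_{\varrho_\mat{z}}\cap B(\mat{z},c_m\gamma^2)\cap(\mat{z}+N_\mat{z})=Z_\varphi\cap B(\mat{z},c_m\gamma^2)\cap(\mat{z}+N_\mat{z})$, and $c_m\gamma^2\le\eps$ puts $\mat{x}^\ast$ in $Z_\varphi\cap\widehat{\mani}$, so $\dist(\mat{z},Z_\varphi\cap\widehat{\mani})\le c_m\gamma^2$. Conversely, given $\mat{x}\in Z_\varphi\cap\widehat{\mani}$ and $\mat{z}=\nu(\mat{x})$, we have $\mat{x}\in(\mat{z}+N_\mat{z})\cap B(\mat{z},2\eps)$ and $\varrho_\mat{z}(\mat{x})=\mat{0}$, so the emptiness clause of Lemma~\ref{lemma::exist} forces $\mat{x}\in B(\mat{z},c_m\gamma^2)$; hence $\dist(\mat{x},\mani)\le\norm{\mat{x}-\mat{z}}\le c_m\gamma^2$. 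As $c_m$ is linear in $m^{5/2}$, the Hausdorff distance is $O(m^{5/2}\gamma^2)=O(m^{5/2}\eps^2)$.

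For part (i) I will show that $\nu$ restricted to $Z_\varphi\cap\widehat{\mani}$ is a continuous bijection onto $\mani$; since $Z_\varphi\cap\widehat{\mani}$ is compact (Lemma~\ref{lemma::compact}) and $\mani$ is Hausdorff, it is then automatically a homeomorphism. Surjectivity comes from $\mat{x}^\ast$ above, since $\mat{x}^\ast-\mat{z}\in N_\mat{z}$ and $\norm{\mat{x}^\ast-\mat{z}}<1$ give $\nu(\mat{x}^\ast)=\mat{z}$. For injectivity, suppose $\nu(\mat{x}_1)=\nu(\mat{x}_2)=\mat{z}$; by the preceding paragraph both $\mat{x}_1,\mat{x}_2$ lie in $Z_{\varrho_\mat{z}}\cap B(\mat{z},c_m\gamma^2)\cap(\mat{z}+N_\mat{z})$. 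Coordinatize the flat $\mat{z}+N_\mat{z}$ by $\{\mat{v}_1,\dots,\mat{v}_{d-m}\}$ and restrict $\varrho_\mat{z}$ to it; the $(i,j)$ entry of the Jacobian of this restriction at any point of $B(\mat{z},c_m\gamma^2)$ is $\mat{v}_j^t\nabla\varrho_{\mat{z},i}$. Applying Lemma~\ref{lemma::unique} with $\tau=2$ and $t=16c_m$ (so $B(\mat{z},c_m\gamma^2)=B(\mat{z},t\eps^2)$ and the error term is $O(\kappa m^4\gamma)$), we get $\mat{v}_i^t\nabla\varrho_{\mat{z},i}=1\pm O(\kappa m^4\gamma)$ and $\norm{\nabla\varrho_{\mat{z},i}}=1\pm O(\kappa m^4\gamma)$, whence $\norm{\nabla\varrho_{\mat{z},i}-\mat{v}_i}=O(m^2\sqrt{\kappa\gamma})$ and, using $\mat{v}_i\perp\mat{v}_j$, $|\mat{v}_j^t\nabla\varrho_{\mat{z},i}|=O(m^2\sqrt{\kappa\gamma})$ for $j\ne i$. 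So this Jacobian is $\mat{I}_{d-m}$ plus a matrix of operator norm $<1$ throughout the convex set $B(\mat{z},c_m\gamma^2)\cap(\mat{z}+N_\mat{z})$ (permissible because $\eps_0$ may decrease with $d$), and the mean value inequality makes the restriction injective there; hence $\mat{x}_1=\mat{x}_2$.

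Finally, part (iii). Fix $\mat{x}\in Z_\varphi\cap\widehat{\mani}$ and $\mat{z}=\nu(\mat{x})$; by part (ii), $\mat{x}\in B(\mat{z},c_m\gamma^2)$. With the same invocation of Lemma~\ref{lemma::unique}, $\mat{J}_{\varrho_\mat{z}}(\mat{x})$ has rank $d-m$, so by the Implicit Function Theorem $Z_\varphi=Z_{\varrho_\mat{z}}$ is, near $\mat{x}$, a $C^1$ manifold of dimension $m$ whose normal space at $\mat{x}$ is the linear span of $\{\nabla\varrho_{\mat{z},1}(\mat{x}),\dots,\nabla\varrho_{\mat{z},d-m}(\mat{x})\}$, and the same estimates give $\angle(\nabla\varrho_{\mat{z},i}(\mat{x}),\mat{v}_i)=O(m^2\sqrt{\kappa\gamma})$ for every $i$. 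It remains to upgrade these per-vector angles to a bound on the angle between that span and $N_\mat{z}$, which is exactly the angle between the normal space of $Z_\varphi$ at $\mat{x}$ and $N_{\nu(\mat{x})}$, and this is where I expect the main difficulty to lie: feeding the normalized $\nabla\varrho_{\mat{z},i}(\mat{x})$ into Lemma~\ref{lem:angle} with $k=d-m$ (after checking their pairwise near-orthogonality) would introduce a spurious $\sqrt{d-m}$ factor that is absent from the claimed bound. The route I would take to avoid it is to pass through $L_\mat{x}$: write $\nabla\varrho_{\mat{z},i}(\mat{x})=f_{\mat{v}_i}(\mat{x})+\mat{e}_i$ with $f_{\mat{v}_i}(\mat{x})$ a unit vector in $L_\mat{x}$ and $\norm{\mat{e}_i}=O(\kappa m^4\gamma)$ (the second and third sums in the decomposition of $\nabla\varrho_{\mat{z},i}$ used in the proof of Lemma~\ref{lemma::unique}), and observe that the components of the $\mat{e}_i$ orthogonal to $L_\mat{x}$ are assembled from the gradients $\nabla\omega(\mat{x},\mat{p})$ and the Jacobians $\mat{J}_{f_{\mat{v}_i}}(\mat{x})$ associated with the at most $O(\kappa\,\mathrm{poly}(m))$ sample points in $B(\mat{x},m\gamma)$ (Lemma~\ref{lem:ball}), a collection of directions whose count does not depend on $d$; this should give an $O(m^2\sqrt{\kappa\gamma})$ bound on the angle between the span of the $\nabla\varrho_{\mat{z},i}(\mat{x})$ and $L_\mat{x}$, and then adding $\angle(L_\mat{x},N_\mat{z})=O(m^{3/2}\gamma)$ from Lemma~\ref{lemma::normal_angle} (already free of $d$) via the triangle inequality for subspace angles yields the asserted $O(m^2\sqrt{\kappa\gamma})=O(m^2\sqrt{\kappa\eps})$.
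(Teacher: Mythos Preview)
Your treatment of parts (i) and (ii) is correct and close to the paper's, though your injectivity argument differs: the paper does not set up the full Jacobian of $\varrho_\mat{z}\big|_{\mat{z}+N_\mat{z}}$, but instead exploits the freedom to choose the first basis vector $\mat{v}_1$ to be the unit vector along $\mat{x}_1-\mat{x}_2$ and then uses only $\mat{v}_1^t\nabla\varrho_{\mat{z},1}>0$ from Lemma~\ref{lemma::unique} to get strict monotonicity of $\varrho_{\mat{z},1}$ on the segment $\mat{x}_1\mat{x}_2$. Your Jacobian argument also works, at the cost of the extra $(d-m)$ factor in the perturbation norm that you absorb into $\eps_0$.

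For part (iii) you have a genuine gap, and the fix is much simpler than your proposed route through $L_\mat{x}$. The key observation --- the same one the paper uses for injectivity --- is that $\mat{v}_1$ in Lemma~\ref{lemma::unique} may be \emph{any} unit vector in $N_{\nu(\mat{x})}$, not just a member of a fixed basis. By Lemma~\ref{lem:agree} the zero set (and hence its normal space at $\mat{x}$) is independent of which canonical function you use, so for each unit $\mat{v}\in N_{\nu(\mat{x})}$ you may form the canonical function with $\mat{v}$ as first basis vector; then $\nabla\varrho_{\nu(\mat{x}),1}(\mat{x})$ lies in the normal space of $Z_\varphi$ at $\mat{x}$, and Lemma~\ref{lemma::unique} gives $\mat{v}^t\nabla\varrho_{\nu(\mat{x}),1}(\mat{x})\ge 1-O(\kappa m^4\gamma)$ together with $\norm{\nabla\varrho_{\nu(\mat{x}),1}(\mat{x})}\le 1+O(\kappa m^4\gamma)$, so $\angle(\mat{v},\text{normal space of }Z_\varphi\text{ at }\mat{x})\le\arccos\bigl((1-O(\kappa m^4\gamma))/(1+O(\kappa m^4\gamma))\bigr)=O(m^2\sqrt{\kappa\gamma})$. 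Since this holds for every unit $\mat{v}\in N_{\nu(\mat{x})}$ and both spaces have dimension $d-m$, the subspace angle is $O(m^2\sqrt{\kappa\gamma})$ by definition, with no recourse to Lemma~\ref{lem:angle} and no $\sqrt{d-m}$ factor.
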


\begin{proof}
Consider (i).  Let $\mu$ denote the restriction of $\nu$ to $Z_\varphi \cap
\widehat{\mani}$.  First, we show that $\mu$ is injective.  Suppose to the
contrary that there are two points $\mat{y}_1, \mat{y}_2 \in Z_\varphi \cap
\widehat{\mani}$ such that $\mu(\mat{y}_1)$ and $\mu(\mat{y}_2)$ are the same
point $\mat{z} \in \mani$. Then, $\mat{y}_1$ and $\mat{y}_2$ belong to
$\mat{z} + N_\mat{z}$, which implies that $\mat{y}_1-\mat{y}_2 \in N_\mat{z}$.  
Note that $\mat{y}_1$ and $\mat{y}_2$ lie in $B(\mat{z},\eps)$.  By Lemmas~\ref{lem:agree} and~\ref{lemma:local}, $Z_\varphi \cap B(\mat{z},\eps) = Z_{\varrho_{\mat{z}}} \cap B(\mat{z},\eps)$.  Then, Lemma~\ref{lemma::exist} implies that $\mat{y}_1$ and $\mat{y}_2$ belong to $B(\mat{z}, t\gamma^2)$ for some large enough $t$ that is linear in $m^{5/2}$. By Lemma~\ref{lemma::unique}, we can define $\mat{v}_1 = \mat{y}_1-\mat{y}_2$
and get $(\mat{y}_1-\mat{y}_2)^t \cdot \nabla
\varrho_{\mat{z},1}(\mat{x}) >0$ for all $\mat{x} \in B(\mat{z}, t\gamma^2)$ when $\eps_0$ is sufficiently small.
But then $\varrho_{\mat{z},1}(\mat{x})$ increases strictly monotonically from
$\mat{y}_2$ to $\mat{y}_1$, which implies that $\varrho_{\mat{z},1}(\mat{y}_1)
> 0$.  This is a contradiction because $\mat{y}_1$ belongs to $Z_\varphi$ and
hence $Z_{\varrho_\mat{z}}$ by Lemmas~\ref{lem:agree}
and~\ref{lemma:local}.  This proves that $\mu$ is injective.

Next, we show that $\mu$ is surjective. Let $\mat{z}$ be any point in $\mani$.
It follows from Lemmas~\ref{lem:agree},~\ref{lemma:local}, and~\ref{lemma::exist} that there 
exists a point $\mat{y} \in
Z_\varphi \cap \widehat{\mani} \cap (\mat{z} + N_\mat{z})$.  We show
that $\mu$ must map $\mat{y}$ to $\mat{z}$.  Suppose that $\mu$ maps $\mat{y}$
to another point $\mat{z}_2 \in \mani$, i.e.  $\norm{\mat{y}-\mat{z}_2} <
\norm{\mat{y} - \mat{z}}$.  We grow a ball $B$ tangent to $\mani$ at
$\mat{z}$ by moving its center linearly from $\mat{z}$ towards $\mat{y}$.
When $B$ is tiny, it touches $\mani$ only at $\mat{z}$.  When the center of
$B$ reaches $\mat{y}$, $B$ contains both $\mat{z}$ and $\mat{z}_2$.  Thus, the
radius of the growing $B$ 
must become the local feature size of $\mani$ at $\mat{z}$ before or when its
center reaches $\mat{y}$.  Recall that the reach of $\mani$ is assumed
to be 1.  Thus, $\norm{\mat{y}-\mat{z}} \geq 1 >
\eps$.  This contradicts the fact that $\mat{y} \in \widehat{\mani} \cap
(\mat{z} + N_\mat{z})$, thereby proving that $\mu$ is surjective.

Since $Z_\varphi \cap \widehat{\mani}$ avoids the medial axis, the restriction $\mu$
is continuous.  Therefore, $\mu$ is a continuous bijection from $Z_\varphi \cap
\widehat{\mani}$ to $\mani$.  The spaces $\mani$ and $Z_\varphi \cap \widehat{\mani}$ are
compact by assumption and Lemma~\ref{lemma::compact}, respectively, so we
conclude from the existence of $\mu$ that $\mani$ and $Z_\varphi \cap
\widehat{\mani}$ are homeomorphic~\cite[Ch~5:~Theorem 2.14]{mendelson}.  This 
proves the correctness of (i).

Consider (ii). By Lemmas~\ref{lem:agree},~\ref{lemma:local}, and~\ref{lemma::exist}, for any point $\mat{z} \in \mani$, there exists a point $\mat{x} \in Z_\varphi$ within a distance of $c_m\gamma^2$, where $c_m \geq 1$ is some value linear in $m^{5/2}$.  Therefore, $c_m\gamma^2 = O(m^{5/2}\eps_0\eps) < \eps$ for a small enough $\eps_0$.  So $\mat{x} \in Z_\varphi \cap \widehat{\mani}$.  It follows that the directed Hausdorff distance from $\mani$ to $Z_\varphi \cap \widehat{\mani}$ is $O(m^{5/2}\gamma^2)$.  Conversely, for any point $\mat{x} \in Z_\varphi \cap \widehat{\mani}$, $\norm{\nu(\mat{x}) - \mat{x}} \leq \eps$ and $x \in \nu(x) + N_{\nu(\mat{x})}$. By Lemmas~\ref{lem:agree},~\ref{lemma:local}, and~\ref{lemma::exist}, $Z_\varphi \cap (B(\nu(\mat{x}), 2\eps) \setminus B(\nu(\mat{x}),c_m\gamma^2)) \cap (\nu(x) + N_{\nu(\mat{x})})$ is empty.  So $\norm{\nu(\mat{x}) - \mat{x}} \leq c_m\gamma^2 = O(m^{5/2}\gamma^2)$.  It follows that the directed Hausdorff distance from $Z_\varphi \cap \widehat{\mani}$ to $\mani$ is $O(m^{5/2}\gamma^2)$.

Consider (iii). By Lemma~\ref{lemma::unique}, for every point
$\mat{x} \in Z_\varphi \cap \widehat{\mani}$ and every unit vector $\mat{v}_1 \in N_{\nu(\mat{x})}$, $\norm{\nabla\varrho_{\nu(\mat{x}),1}(\mat{x})} \leq 1 + O(\kappa m^4\gamma)$ and $\mat{v}_1^t \cdot \nabla\varrho_{\nu(\mat{x}),1}(\mat{x}) \geq 1 - O(\kappa m^4 \gamma)$.  Thus,
\[
\angle (\mat{v}_1,
\nabla\varrho_{\nu(\mat{x}),1}(\mat{x})) \leq 
\arccos\left(\frac{\mat{v}_1^t \cdot \nabla\varrho_{\nu(\mat{x}),1}(\mat{x})}
{\norm{\nabla\varrho_{\nu(\mat{x}),1}(\mat{x})}}\right) 
\leq 
\arccos\left(\frac{1-O(\kappa m^4\gamma)} {1 + O(\kappa m^4\gamma)}\right) 
= 
O(m^2\sqrt{\kappa\gamma}).  
\]
The vector $\nabla\varrho_{\nu(\mat{x}),1}(\mat{x})$ belongs to the normal space of
$Z_\varphi$ at $\mat{x}$. (Recall that $Z_\varphi$ agrees with
$Z_{\varrho_{\nu(\mat{x})}}$ locally.)  Thus, the angle between
$N_{\nu(\mat{x})}$ and the normal space of $Z_\varphi$ at $\mat{x}$ is
$O(m^2\sqrt{\kappa\gamma})$.
\end{proof}

\cancel{
On the surface, the statement of Lemma~\ref{lem:main} suggests that we should 
make $\delta$ close to 1 in order to achieve a Hausdorff distance close to
$\eps^2$ between $\mani$ and $Z_\varphi \cap \mani_\delta$ and an angular error
close to $O(\eps^{1/2})$ between the normal spaces.  The disadvantage is that we focus on a much smaller neighborhood $\mani_\delta$ around $\mani$ when $\delta$ is close to 1.  Conversely, when $\delta$
approaches zero, a much larger neighborhood around $\mani$ is considered, but
the Hausdorff distance approaches $\eps$ and the angular error
approaches $O(1)$.  
In order to get a stronger result, we apply Lemma~\ref{lem:main} for $\delta$ as well as $\delta' = 1-\delta$.  
Without loss of generality, we assume that $\delta \leq \delta' = 1-\delta$. It means that we pick $\delta$ from the interval $(0, 1/2]$ instead of $(0, 1)$. 
As stated in Lemmas~\ref{lemma::exist} and~\ref{lem:main}, the threshold $\eps_0$ required for $\delta' = 1-\delta$ being greater than 1/2 is smaller (i.e., higher sampling density) than for $\delta$ being less than 1/2.  So we assume that
$\eps_0$ is small enough for $\delta'$, which is also small
enough for $\delta$.  Since $\delta$ is at most 1/2,  $\mani_{1-\delta} \subseteq \mani_{\delta}$ and hence $Z_\varphi
\cap \mani_{1-\delta} \subseteq Z_\varphi \cap \mani_{\delta}$.   By Lemma~\ref{lem:main}(i), both $Z_\varphi \cap \mani_{1-\delta}$ and
$Z_\varphi \cap \mani_{\delta}$ are homeomorphic to $\mani$. This implies that $Z_\varphi \cap
\mani_{1-\delta} = Z_\varphi \cap \mani_{\delta}$. Therefore, $Z_\varphi \cap
\mani_{\delta}$ enjoys the properties in Lemma~\ref{lem:main} for $Z_\varphi
\cap \mani_{1-\delta}$.  This allows us to obtain the main theorem of this
paper stated below.
}

\cancel{
On the surface, the statement of Lemma~\ref{lem:main} suggests that we should
make $\delta$ close to 1 in order to achieve a Hausdorff distance close to
$\eps^2$ between $\mani$ and $Z_\varphi \cap \mani_\delta$ and an angular error
close to $O(\eps^{1/2})$ between the normal spaces.  Conversely, if $\delta$
approaches zero, the Hausdorff distance approaches $\eps$ and the angular error
approaches $O(1)$.  As stated in Lemmas~\ref{lemma::exist} and~\ref{lem:main}, the value $\eps_0$ required for $\delta$ being greater than 1/2 is smaller (i.e., higher sampling density) than for $\delta$ being less than 1/2.  If $\delta$ is at most 1/2,  then $\mani_{1-\delta} \subseteq \mani_{\delta}$ and hence $Z_\varphi
\cap \mani_{1-\delta} \subseteq Z_\varphi \cap \mani_{\delta}$.  (Assume that
the threshold $\eps_0$ is small enough for $1-\delta$, which is also small
enough for $\delta$.)  Nonetheless, the restriction of the nearest point map is
a continuous bijection from both $Z_\varphi \cap \mani_{1-\delta}$ and
$Z_\varphi \cap \mani_{\delta}$ to $\mani$.  This implies that $Z_\varphi \cap
\mani_{1-\delta} = Z_\varphi \cap \mani_{\delta}$.  Therefore, $Z_\varphi \cap
\mani_{\delta}$ enjoys the properties in Lemma~\ref{lem:main} for $Z_\varphi
\cap \mani_{1-\delta}$.  This allows us to obtain the main theorem of this
paper stated below.
}

%
%
%
%
%

\section{Projection operator}

Our proof of convergence will make use of the property that
$\mat{B}_{\varphi,\mat{x}}$ is a $d \times (d-m)$ matrix with orthogonal unit
columns such that $\col{\mat{B}_{\varphi,\mat{x}}} = L_\mat{x}$.  Such a matrix
can be obtained by an eigen-decomposition of $\mat{C}_\mat{x}$.  

We rewrite $\varphi(\mat{x}) = \sum_{\mat{p} \in P} \omega(\mat{x},\mat{p})
\cdot \mat{B}_{\varphi,\mat{x}}^t \cdot (\mat{x}-\mat{p}) =
\mat{B}_{\varphi,\mat{x}}^t \cdot (\mat{x} - \mat{a}_\mat{x})$, where
$\mat{a}_\mat{x} = \sum_{\mat{p} \in P} \omega(\mat{x},\mat{p}) \cdot \mat{p}$.
Intuitively, as $\varphi(\mat{a}_\mat{x}) = \mat{0}$, we want to move the
current point $\mat{x}_i$ closer to $\mat{a}_\mat{x}$.  We also want to move
directly onto $Z_{\varphi}$ without much drifting.  Therefore, it is desirable
to move $\mat{x}_i$ within the affine subspace $\mat{x}_i + L_{\mat{x}_i}$
which is roughly normal to $Z_\varphi$.  The projection follows
an iterative scheme:
\[
\mat{x}_{i+1} = \mat{x}_i +
\mat{B}_{\varphi,\mat{x}_i}^{} \cdot \mat{B}_{\varphi,\mat{x}_i}^t \cdot (
\mat{a}_{\mat{x}_i} - \mat{x}_i). 
\]
Note that $\mat{B}_{\varphi,\mat{x}_i}^{} \cdot \mat{B}_{\varphi,\mat{x}_i}^t
\cdot ( \mat{a}_{\mat{x}_i} - \mat{x}_i)$ is the projection of the vector
$\mat{a}_{\mat{x}_i} - \mat{x}_i$ into $L_{\mat{x}_i}$.  The iterative scheme
moves the current point $\mat{x}_i$ by this projected vector to the new point
$\mat{x}_{i+1}$.  In other words, $\mat{x}_{i+1}$ is the projection
of $\mat{a}_{\mat{x}_i}$ onto the affine subspace
$\mat{x}_i + L_{\mat{x}_i}$.

We prove two technical results in order to establish the proof of convergence.
The first one shows that any initial point near $\mani$ is moved to within an
$O(m^{7/2}\gamma^2)$ distance from $\mani$ after a single iteration.  Let
$\tilde{\mat{x}}_i$ denote the nearest point in $Z_\varphi$ to $\mat{x}_i$.
The second result shows that $\norm{\mat{x}_{i+1} - \tilde{\mat{x}}_i} \ll
\norm{\mat{x}_i - \tilde{\mat{x}}_i}$, which implies that $\norm{\mat{x}_{i+1}
- \tilde{\mat{x}}_{i+1}} \ll \norm{\mat{x}_i - \tilde{\mat{x}}_i}$.

\begin{lemma}
\label{lem:first_itr}
Let $P$ be a uniform $(\eps,\kappa)$-sample of $\mani$.  
For every point $\mat{x}$
within a distance $m\gamma$ from $P$ and every $d \times (d-m)$ matrix
$\mat{B}_{\varphi,\mat{x}}$ that satisfies $\col{\mat{B}_{\varphi,\mat{x}}} =
L_\mat{x}$, we have $\norm{\mat{y} - \nu(\mat{x})} = O(m^{7/2}\gamma^2)$, where
$\mat{y} = \mat{x} + \mat{B}_{\varphi,\mat{x}}^{} \cdot
\mat{B}_{\varphi,\mat{x}}^t \cdot (\mat{a}_\mat{x} - \mat{x})$.
\end{lemma}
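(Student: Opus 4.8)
The plan is to show that, although $\mat{x}$ may lie as far as $m\gamma$ from $\mani$, a single iteration pulls $\mat{y}$ to within $O(m^{7/2}\gamma^{2})$ of $\mat{z}:=\nu(\mat{x})$. The mechanism is that the weighted centroid $\mat{a}_\mat{x}$ is already within $O(m^{2}\gamma^{2})$ of $\mani$ near $\mat{z}$, while $L_\mat{x}$ is almost parallel to $N_\mat{z}$, so projecting $\mat{a}_\mat{x}$ orthogonally onto the flat $\mat{x}+L_\mat{x}$ essentially cancels the (possibly large) tangential displacement $\mat{x}-\mat{z}$. Recall from the start of this section that $\mat{B}_{\varphi,\mat{x}}$ has orthogonal unit columns, so $\mat{B}_{\varphi,\mat{x}}^{}\mat{B}_{\varphi,\mat{x}}^{t}$ is the orthogonal projector onto $L_\mat{x}$, which we denote $P_{L_\mat{x}}$; thus $\mat{y}=\mat{x}+P_{L_\mat{x}}(\mat{a}_\mat{x}-\mat{x})$ is the orthogonal projection of $\mat{a}_\mat{x}$ onto the affine subspace $\mat{x}+L_\mat{x}$.

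\emph{Step 1: a localised version of Lemmas~\ref{lem:C} and~\ref{lemma::normal_angle}.} Since $P\subseteq\mani$ and $\dist(\mat{x},P)\le m\gamma$, we have $\norm{\mat{x}-\mat{z}}\le m\gamma<1$ (for $\eps_0$ small enough, decreasing in $d\ge m$), so $\mat{z}=\nu(\mat{x})$ is well defined; set $\mat{n}=\mat{x}-\mat{z}\in N_\mat{z}$. For every $\mat{p}\in P\cap B(\mat{x},m\gamma)$ we then have $\norm{\mat{p}-\mat{z}}\le 2m\gamma=O(m\gamma)$. The proofs of Lemmas~\ref{lem:C} and~\ref{lemma::normal_angle} invoke the hypothesis ``$\mat{x}$ is within $2\eps$ of $\mani$'' only to bound $\norm{\mat{p}-\nu(\mat{x})}$ by $O(m\gamma)$ for the contributing samples $\mat{p}$; rerunning them verbatim (in a coordinate frame where $T_\mat{z}$ is spanned by the first $m$ axes) therefore gives $\norm{\mat{C}_{12}},\norm{\mat{C}_{21}}=O(m\gamma)$, $\norm{\mat{C}_{22}}=O(m^{2}\gamma^{2})$, smallest eigenvalue of $\mat{C}_{11}$ at least $1-O(m^{2}\gamma^{2})$, and hence $\angle(L_\mat{x},N_\mat{z})=O(m^{3/2}\gamma)$; only the constants change.

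\emph{Step 2: decomposing $\mat{a}_\mat{x}$.} Only samples in $P\cap B(\mat{x},m\gamma)$ contribute to $\mat{a}_\mat{x}=\sum_{\mat{p}}\omega(\mat{x},\mat{p})\mat{p}$, and this set is nonempty because $\dist(\mat{x},P)\le m\gamma$. For each such $\mat{p}$ write $\mat{p}=\mat{z}+\mat{t}_\mat{p}+\mat{s}_\mat{p}$ with $\mat{t}_\mat{p}\in T_\mat{z}$ and $\mat{s}_\mat{p}\in N_\mat{z}$. Since $\mat{n}\in N_\mat{z}=T_\mat{z}^{\perp}$, the tangential part is $\mat{t}_\mat{p}=P_{T_\mat{z}}(\mat{p}-\mat{x})$, so $\norm{\mat{t}_\mat{p}}\le\norm{\mat{p}-\mat{x}}\le m\gamma$; and Lemma~\ref{lem:basic}(i) with $\xi=\norm{\mat{p}-\mat{z}}\le 2m\gamma$ gives $\norm{\mat{s}_\mat{p}}\le\xi^{2}/2=O(m^{2}\gamma^{2})$. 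Averaging with the weights $\omega(\mat{x},\mat{p})$ (which sum to $1$) yields $\mat{a}_\mat{x}=\mat{z}+\mat{t}^{*}+\mat{s}^{*}$ with $\mat{t}^{*}\in T_\mat{z}$, $\norm{\mat{t}^{*}}\le m\gamma$, and $\mat{s}^{*}\in N_\mat{z}$, $\norm{\mat{s}^{*}}=O(m^{2}\gamma^{2})$.

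\emph{Step 3: bounding $\norm{\mat{y}-\mat{z}}$.} Using $\mat{x}=\mat{z}+\mat{n}$ and $\mat{a}_\mat{x}-\mat{x}=\mat{t}^{*}+\mat{s}^{*}-\mat{n}$, linearity of $P_{L_\mat{x}}$ gives
\[
\mat{y}-\mat{z}=\mat{n}+P_{L_\mat{x}}(\mat{t}^{*}+\mat{s}^{*}-\mat{n})=\bigl(\mat{n}-P_{L_\mat{x}}\mat{n}\bigr)+P_{L_\mat{x}}\mat{t}^{*}+P_{L_\mat{x}}\mat{s}^{*}.
\]
Write $\theta=\angle(L_\mat{x},N_\mat{z})=O(m^{3/2}\gamma)$. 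As $N_\mat{z}$ and $L_\mat{x}$ have the same dimension $d-m$, standard properties of principal angles (cf.\ the setup of Lemma~\ref{lem:choice}) give $\norm{\mat{n}-P_{L_\mat{x}}\mat{n}}\le\norm{\mat{n}}\sin\theta$ for $\mat{n}\in N_\mat{z}$ and $\norm{P_{L_\mat{x}}\mat{t}^{*}}\le\norm{\mat{t}^{*}}\sin\theta$ for $\mat{t}^{*}\in T_\mat{z}=N_\mat{z}^{\perp}$; together with $\norm{P_{L_\mat{x}}\mat{s}^{*}}\le\norm{\mat{s}^{*}}$ we obtain
\[
\norm{\mat{y}-\mat{z}}\le\sin\theta\,(\norm{\mat{n}}+\norm{\mat{t}^{*}})+\norm{\mat{s}^{*}}=O(m^{3/2}\gamma)\cdot O(m\gamma)+O(m^{2}\gamma^{2})=O(m^{5/2}\gamma^{2}),
\]
which is in particular $O(m^{7/2}\gamma^{2})$, as asserted. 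The only delicate point is Step~1: verifying that Lemmas~\ref{lem:C} and~\ref{lemma::normal_angle} survive replacing ``within $2\eps$ of $\mani$'' by ``within $m\gamma$ of $P$'', i.e.\ that inflating the relevant inter-sample distances from $O(\gamma)$ to $O(m\gamma)$ affects only the implied constants. Everything else is routine manipulation of orthogonal projections together with the quadratic estimate of Lemma~\ref{lem:basic}(i).
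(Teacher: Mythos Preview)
Your proof is correct and in fact delivers the sharper bound $\norm{\mat{y}-\nu(\mat{x})}=O(m^{5/2}\gamma^{2})$, but it takes a genuinely different route from the paper's. The paper introduces the auxiliary point $\hat{\mat{a}}_\mat{x}=P_{\nu(\mat{x})+N_{\nu(\mat{x})}}(\mat{a}_\mat{x})$, observes that $\mat{x},\mat{y},\hat{\mat{a}}_\mat{x},\mat{a}_\mat{x}$ lie on a $(d-1)$-sphere of radius $O(m\gamma)$ with $\mat{x}\mat{a}_\mat{x}$ as diameter, and then invokes the technical Lemma~\ref{lem:proj-angle} (comparing two projections of a common vector $\mat{n}$ into two nearby subspaces via principal vectors) to bound the inscribed angle $\angle\hat{\mat{a}}_\mat{x}\mat{a}_\mat{x}\mat{y}$ (or $\angle\hat{\mat{a}}_\mat{x}\mat{x}\mat{y}$) by $O(m^{5/2}\gamma)$; this yields the chord estimate $\norm{\hat{\mat{a}}_\mat{x}-\mat{y}}=O(m^{7/2}\gamma^{2})$. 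Your argument bypasses both the sphere geometry and Lemma~\ref{lem:proj-angle} entirely: you decompose $\mat{a}_\mat{x}-\mat{z}$ and $\mat{x}-\mat{z}$ directly into their $T_\mat{z}$- and $N_\mat{z}$-components and use only the elementary bounds $\norm{P_{L_\mat{x}}\mat{t}}\le\norm{\mat{t}}\sin\theta$ for $\mat{t}\in T_\mat{z}$ and $\norm{(\mat{I}-P_{L_\mat{x}})\mat{n}}\le\norm{\mat{n}}\sin\theta$ for $\mat{n}\in N_\mat{z}$, both of which follow immediately from $\angle(L_\mat{x},N_\mat{z})=\theta$ and equality of dimensions. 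This saves a factor of $m$ and is considerably more transparent. Your Step~1 is also more explicit than the paper, which silently applies Lemma~\ref{lemma::normal_angle} under the weaker hypothesis $\dist(\mat{x},P)\le m\gamma$ without comment; your remark that the only use of the $2\eps$ hypothesis there is to bound $\norm{\mat{p}-\nu(\mat{x})}$ by $O(m\gamma)$ is exactly right.
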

\begin{proof}
For every sample point $\mat{p} \in
B(\mat{x},m\gamma)$, $\norm{\mat{p} - \nu(\mat{x})} \leq \norm{\mat{p} -
\mat{x}} + \norm{\mat{x} - \nu(\mat{x})} = O(m\gamma)$.  By
Lemma~\ref{lem:basic}(i), the distance between $\mat{p}$ and $\nu(\mat{x}) +
T_{\nu(\mat{x})}$ is $O(m^2\gamma^2)$.  As $\mat{a}_\mat{x}$ is convex
combination of all $\mat{p} \in B(\mat{x},m\gamma)$, the distance between
$\mat{a}_\mat{x}$ and $\nu(\mat{x}) + T_{\nu(\mat{x})}$ is also $O(m^2\gamma^2)$.

Let $\hat{\mat{a}}_\mat{x}$ be the projection of $\mat{a}_\mat{x}$ into
$\nu(\mat{x}) + N_{\nu(\mat{x})}$.  The vector $\hat{\mat{a}}_\mat{x} -
\mat{a}_\mat{x}$ is parallel to $T_{\nu(\mat{x})}$, so $\hat{\mat{a}}_\mat{x}$
is also at distance $O(m^2\gamma^2)$ from $\nu(\mat{x}) + T_{\nu(\mat{x})}$.  As
$\hat{\mat{a}}_\mat{x} \in \nu(\mat{x}) + N_{\nu(\mat{x})}$, the
vector $\hat{\mat{a}}_\mat{x} - \nu(\mat{x})$ is orthogonal to
$T_{\nu(\mat{x})}$, which implies that
$\norm{\hat{\mat{a}}_\mat{x} - \nu(\mat{x})} = O(m^2\gamma^2)$.  Therefore, it
suffices to prove that $\norm{\hat{\mat{a}}_\mat{x} - \mat{y}} =
O(m^{7/2}\gamma^2)$ as $\norm{\mat{y} - \nu(\mat{x})} \leq
\norm{\hat{\mat{a}}_\mat{x} - \mat{y}} + \norm{\hat{\mat{a}}_\mat{x} -
\nu(\mat{x})} = \norm{\hat{\mat{a}}_\mat{x} - \mat{y}} + O(m^2\gamma^2)$. 

\begin{figure}
\centerline{\begin{tabular}{ccc}
\includegraphics[scale=0.65]{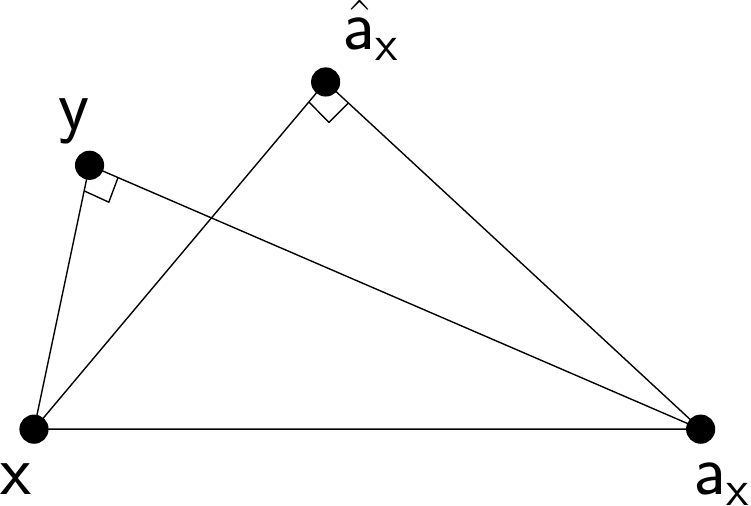} & & 
\includegraphics[scale=0.65]{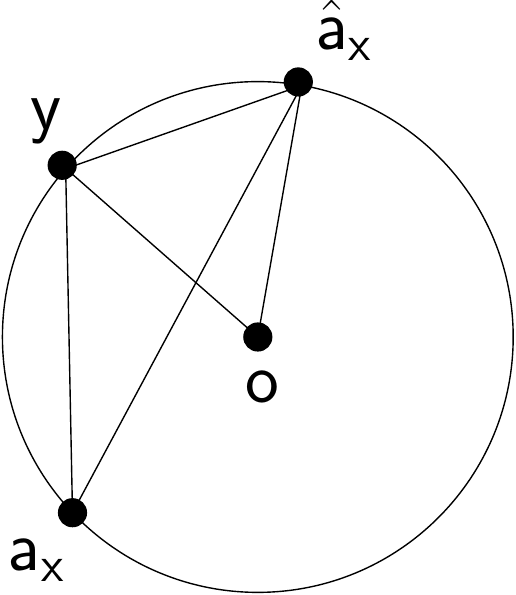} \\
\\
(a) & \hspace*{.5in} & (b) 
\end{tabular}}
\caption{(a)~The points  $\mat{x}$,
$\mat{y}$, $\hat{\mat{a}}_\mat{x}$, and $\mat{a}_\mat{x}$ lie on a
$(d-1)$-dimensional sphere with $\mat{x}\,\mat{a}_\mat{x}$ 
as a diameter. (b)~The circle with center $\mat{o}$ 
circumscribes $\mat{y}\,\hat{\mat{a}}_\mat{x}\,\mat{a}_\mat{x}$.  Also,
$\angle \hat{\mat{a}}_\mat{x}\,\mat{o}\,\mat{y} = 2\angle \hat{\mat{a}}_{\mat{x}}\,\mat{a}_\mat{x}\,\mat{y}$.}
\label{fg:proj-1}
\end{figure}

Refer to Figure~\ref{fg:proj-1}(a).  By construction, $\hat{\mat{a}}_\mat{x}
\in \nu(\mat{x}) + N_{\nu(\mat{x})}$.  Also, $\mat{x} - \nu(\mat{x}) \in
N_{\nu(\mat{x})}$, implying that $\mat{x} \in \nu(\mat{x}) + N_{\nu(\mat{x})}$.
Therefore, $\angle \mat{x}\,\hat{\mat{a}}_\mat{x}\,\mat{a}_\mat{x} = \pi/2$.
From the previous discussion, $\mat{y}$ is the projection of $\mat{a}_\mat{x}$
onto $\mat{x} + L_\mat{x}$.   So $\angle \mat{x}\,\mat{y}\,\mat{a}_\mat{x} =
\pi/2$.   As a result, $\mat{x}$, $\mat{y}$, $\hat{\mat{a}}_\mat{x}$, and
$\mat{a}_\mat{x}$ lie on a $(d-1)$-dimensional sphere $S$ that has
$\mat{x}\,\mat{a}_\mat{x}$ as a diameter.  Since $\mat{a}_\mat{x}$ is a convex
combination of all $\mat{p} \in P \cap B(\mat{x},m\gamma)$, we have
$\norm{\mat{a}_\mat{x} - \mat{x}} \leq m\gamma$.  Thus,
$\mathrm{radius}(S) = O(m\gamma)$.

Since $\angle \mat{x}\,\hat{\mat{a}}_\mat{x}\,\mat{a}_\mat{x} = \pi/2$, we have
$\norm{\hat{\mat{a}}_\mat{x} - \mat{x}}^2 + \norm{\hat{\mat{a}}_\mat{x} -
\mat{a}_\mat{x}}^2 = \norm{\mat{a}_\mat{x} - \mat{x}}^2$.  It follows that
$\norm{\hat{\mat{a}}_\mat{x} - \mat{x}} \geq \norm{\mat{a}_\mat{x} -
\mat{x}}/2$ or $\norm{\hat{\mat{a}}_\mat{x} - \mat{a}_\mat{x}} \geq
\norm{\mat{a}_\mat{x} - \mat{x}}/2$.  We prove that $\angle
\hat{\mat{a}}_\mat{x}\,\mat{x}\,\mat{y} = O(m^{5/2}\gamma)$ if
$\norm{\hat{\mat{a}}_\mat{x}
- \mat{x}} \geq \norm{\mat{a}_\mat{x} - \mat{x}}/2$.  Let
  $\{\mat{v}_1,\ldots,\mat{v}_{d-m}\}$ and $\{\mat{w}_1,\ldots,\mat{w}_{d-m}\}$
be orthonormal bases of $N_{\nu(\mat{x})}$ and $L_{\mat{x}}$, respectively,
that satisfy Lemma~\ref{lem:choice}.  
Note that $\hat{\mat{a}}_\mat{x} - \mat{x} \in N_{\nu(\mat{x})}$
and $\mat{y}-\mat{x} \in L_\mat{x}$.
Refer to Lemma~\ref{lem:proj-angle}.  Let
$(\mat{a}_\mat{x}-\mat{x})/\norm{\mat{a}_\mat{x}-\mat{x}}$ be the unit vector
$\mat{n}$, let $\hat{\mat{a}}_\mat{x} - \mat{x}$ be the vector $\mat{u}_1$, let $\mat{y} - \mat{x}$ be the vector $\mat{u}_2$ as specified in
Lemma~\ref{lem:proj-angle}, and let $\phi = \angle (L_\mat{x},N_{\nu(\mat{x})}) =
O(m\sqrt{m}\,\gamma)$ by Lemma~\ref{lemma::normal_angle}.  We need to show that
the values $\alpha_1$ and $\alpha_2$ defined in Lemma~\ref{lem:proj-angle}
satisfy the assumption that $\alpha_1 > \alpha_2 + (2m^2\phi^2)/\cos\phi$.

By Lemma~\ref{lem:choice}, $\angle (\mat{v}_i,\mat{w}_i) \leq \phi$ for $i \in
[1,d-m]$, which implies that $\norm{\mat{v}_i-\mat{w}_i} \leq 2\sin(\phi/2)
\leq \phi$.  By definition, $\alpha_2 = \sum_{i=d-2m+1}^{d-m}
((\mat{w}_i-\mat{v}_i)^t \mat{n})^2$, and therefore, 
$\alpha_2 \leq \sum_{i=d-2m+1}^{d-m} \norm{\mat{w}_i-\mat{v}_i}^2 \leq m\phi^2 =
O(m^4\gamma^2)$.  By definition, $\alpha_1$ is the squared norm of the projection of
$\mat{n} = (\mat{a}_\mat{x}-\mat{x})/\norm{\mat{a}_\mat{x}-\mat{x}}$ onto
$N_{\nu(\mat{x})}$.  Since $\hat{\mat{a}}_\mat{x} - \mat{x}$ is the
projection of $\mat{a}_\mat{x}-\mat{x}$ onto $N_{\nu(\mat{x})}$, we get 
$\alpha_1 =
\norm{\hat{\mat{a}}_\mat{x}-\mat{x}}^2/\norm{\mat{a}_\mat{x}-\mat{x}}^2 \geq
1/4$ because $\norm{\hat{\mat{a}}_\mat{x} - \mat{x}} \geq
\norm{\mat{a}_\mat{x}-\mat{x}}/2$ by assumption.  This shows that $\alpha_1 >
\alpha_2 + (2m^2\phi^2)/\cos\phi$.
Then, Lemma~\ref{lem:proj-angle} implies that $\angle
\hat{\mat{a}}_\mat{x}\,\mat{x}\,\mat{y} = \angle (\mat{u}_1,\mat{u}_2) \leq
\arccos\left(\sqrt{1-\frac{\alpha_2}{\alpha_1}}\cos\phi -
\frac{2m^2\phi^2}{\sqrt{\alpha_1^2-\alpha_1\alpha_2}}\right)$.  One can verify
that the right hand side is $\arccos(1-O(m^5\gamma^2))$ and so $\angle
\hat{\mat{a}}_\mat{x}\,\mat{x}\,\mat{y} = O(m^{5/2}\gamma)$.  

Similarly, we can
prove that $\angle \hat{\mat{a}}_\mat{x}\,\mat{a}_\mat{x}\,\mat{y}
=O(m^{5/2}\gamma)$ if $\norm{\hat{\mat{a}}_\mat{x} - \mat{a}_\mat{x}} \geq
\norm{\mat{a}_\mat{x} - \mat{x}}/2$.  We conclude that $\angle
\hat{\mat{a}}_\mat{x}\,\mat{x}\,\mat{y} = O(m^{5/2}\gamma)$ or $\angle
\hat{\mat{a}}_\mat{x}\,\mat{a}_\mat{x}\,\mat{y} =O(m^{5/2}\gamma)$.

Without loss of generality, assume that $\angle
\hat{\mat{a}}_\mat{x}\,\mat{a}_\mat{x}\,\mat{y} = O(m^{5/2}\gamma)$.  Consider
the circumcircle of $\hat{\mat{a}}_\mat{x}\,\mat{a}_\mat{x}\,\mat{y}$.  Let
$\mat{o}$ be its center.  Refer to Figure~\ref{fg:proj-1}(b).  The angle
$\angle \hat{\mat{a}}_\mat{x}\,\mat{o}\,\mat{y} = 2\angle
\hat{\mat{a}}_\mat{x}\,\mat{a}_\mat{x}\,\mat{y}$.  Then,
$\norm{\hat{\mat{a}}_\mat{x} - \mat{y}} = 2\norm{\mat{o} -
\mat{y}}\,\sin(\angle \hat{\mat{a}}_\mat{x}\,\mat{o}\,\mat{y}/2) \leq
\mbox{radius}(S) \cdot O(m^{5/2}\gamma) = O(m^{7/2}\gamma^2)$.  
\end{proof}

Next, we prove that $\mat{x}_{i+1}$ is much closer to
$Z_\varphi$ than $\mat{x}_i$.

\begin{lemma}
\label{lem:iterate}
Let $P$ be a uniform $(\eps,\kappa)$-sample of $\mani$. There exists $\eps_0 \in (0,1)$ that decreases as $d$ and $\kappa$ increase such that if $\eps \leq \eps_0$, then for any point $\mat{y}$
at distance $O(m^{7/2}\gamma^2)$ or less from $\mani$, we have $\norm{\mat{y}' -
\tilde{\mat{y}}} \leq \gamma^{1/4} \cdot \norm{\mat{y} - \tilde{\mat{y}}}$, where
$\tilde{\mat{y}}$ is the nearest point in $Z_\varphi \cap \widehat{\mani}$ to
$\mat{y}$ and $\mat{y}' = \mat{y} + \mat{B}_{\varphi,\mat{y}}^{} \cdot
\mat{B}_{\varphi,\mat{y}}^t \cdot ( \mat{a}_{\mat{y}} - \mat{y})$. 
\end{lemma}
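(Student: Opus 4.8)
Set $\mat{z} = \nu(\mat{y})$ and let $\varrho_\mat{z}$ be the canonical function (Definition~\ref{df:local}) with respect to $\mat{z}$ and a fixed \emph{orthonormal} basis $\{\mat{v}_1,\dots,\mat{v}_{d-m}\}$ of $N_\mat{z}$. First I would confine everything to a tiny ball about $\mat{z}$: since $\mat{y}$ lies within $O(m^{7/2}\gamma^2)$ of $\mani$ (hence within $m\gamma$ of $P$), Lemma~\ref{lem:first_itr} gives $\norm{\mat{y}' - \mat{z}} = O(m^{7/2}\gamma^2)$, while Lemmas~\ref{lem:agree},~\ref{lemma:local} and~\ref{lemma::exist} show that the nearest point $\tilde{\mat{y}} \in Z_\varphi \cap \widehat{\mani}$ satisfies $\ell := \norm{\mat{y}-\tilde{\mat{y}}} = O(m^{7/2}\gamma^2)$ and that $\tilde{\mat{y}}$ and $\nu(\tilde{\mat{y}})$ lie within $O(m^{7/2}\gamma^2)$ of $\mat{z}$. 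Thus $\mat{y},\mat{y}',\tilde{\mat{y}}$ and the segments joining them all lie in $B(\mat{z}, t\eps^2) \subseteq B(\mat{z},2\eps)$ for $t = \Theta(m^{7/2})$ and $\eps_0$ small enough, a region where (by Lemmas~\ref{lem:agree} and~\ref{lemma:local}) $Z_\varphi$ coincides with $Z_{\varrho_\mat{z}}$ and where Lemma~\ref{lemma::unique} applies with this $t$ and $\tau = 2$. The crucial output is that for every $\xi$ in this region and every $i$, $\norm{\nabla\varrho_{\mat{z},i}(\xi)} \le 1 + O(\kappa m^4\gamma)$ and $\mat{v}_i^{\,t}\nabla\varrho_{\mat{z},i}(\xi) \ge 1 - O(\kappa m^4\gamma)$, whence $\norm{\nabla\varrho_{\mat{z},i}(\xi) - \mat{v}_i}^2 = O(\kappa m^4\gamma)$. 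It is precisely the $\gamma^2$-closeness of $\mat{y}$ to $\mani$ — not merely $\eps$-closeness — that lets us take $\tau = 2$ and get the error $O(\kappa m^4\gamma)$ rather than $O(\kappa m^4\sqrt{\gamma})$.

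\textbf{Contraction of $\varrho_\mat{z}$.} Write $\mat{a}_\mat{y} = \sum_{\mat{p}\in P}\omega(\mat{y},\mat{p})\mat{p}$, so $\mat{y}' - \mat{y} = \mat{B}_{\varphi,\mat{y}}^{}\mat{B}_{\varphi,\mat{y}}^t(\mat{a}_\mat{y} - \mat{y})$ is the orthogonal projection of $\mat{a}_\mat{y}-\mat{y}$ into $L_\mat{y}$. Since $f_{\mat{v}_i}(\mat{y}) \in L_\mat{y}$, projection symmetry yields the identity $\varrho_{\mat{z},i}(\mat{y}) = f_{\mat{v}_i}(\mat{y})^t(\mat{y}-\mat{a}_\mat{y}) = -f_{\mat{v}_i}(\mat{y})^t(\mat{y}'-\mat{y})$. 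Applying the mean value theorem to the differentiable scalar function $\varrho_{\mat{z},i}$ along $[\mat{y},\mat{y}']$ produces $\xi_i$ with $\varrho_{\mat{z},i}(\mat{y}') = \varrho_{\mat{z},i}(\mat{y}) + \nabla\varrho_{\mat{z},i}(\xi_i)^t(\mat{y}'-\mat{y}) = \bigl(\nabla\varrho_{\mat{z},i}(\xi_i) - f_{\mat{v}_i}(\mat{y})\bigr)^t(\mat{y}'-\mat{y})$. By the triangle inequality $\norm{\nabla\varrho_{\mat{z},i}(\xi_i) - f_{\mat{v}_i}(\mat{y})} \le \norm{\nabla\varrho_{\mat{z},i}(\xi_i) - \mat{v}_i} + \norm{\mat{v}_i - f_{\mat{v}_i}(\mat{y})}$, where the first summand is $O(\sqrt{\kappa m^4\gamma})$ by the previous paragraph and the second is $O(m^{3/2}\gamma)$ because $\angle(\mat{v}_i, f_{\mat{v}_i}(\mat{y})) = \angle(\mat{v}_i, L_\mat{y}) \le \angle(N_\mat{z},L_\mat{y}) = O(m^{3/2}\gamma)$ by Lemma~\ref{lemma::normal_angle}. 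Hence, by Cauchy--Schwarz, $|\varrho_{\mat{z},i}(\mat{y}')| = O(\sqrt{\kappa m^4\gamma})\cdot\norm{\mat{y}'-\mat{y}}$ for each $i$.

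\textbf{From small $\varrho_\mat{z}(\mat{y}')$ to the contraction bound.} Let $\Pi_V$ denote orthogonal projection onto a subspace $V$, and decompose $\mat{y}'-\tilde{\mat{y}}$ via $N_\mat{z}$ and $T_\mat{z}$. For the normal part, the mean value theorem along $[\tilde{\mat{y}},\mat{y}']$ (using $\varrho_{\mat{z},i}(\tilde{\mat{y}}) = 0$) gives $|\mat{v}_i^{\,t}(\mat{y}'-\tilde{\mat{y}})| \le |\varrho_{\mat{z},i}(\mat{y}')| + O(\sqrt{\kappa m^4\gamma})\norm{\mat{y}'-\tilde{\mat{y}}}$; since $\{\mat{v}_i\}$ is orthonormal, summing squares and inserting the bound of the previous paragraph yields $\norm{\Pi_{N_\mat{z}}(\mat{y}'-\tilde{\mat{y}})}^2 = O(d\kappa m^4\gamma)\bigl(\norm{\mat{y}'-\mat{y}}^2 + \norm{\mat{y}'-\tilde{\mat{y}}}^2\bigr)$. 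For the tangential part, $\mat{y}'-\tilde{\mat{y}} = (\mat{y}'-\mat{y}) + (\mat{y}-\tilde{\mat{y}})$: the first term lies in $L_\mat{y}$, which makes angle $O(m^{3/2}\gamma)$ with $N_\mat{z}$ (Lemma~\ref{lemma::normal_angle}); the second is the displacement from $\mat{y}$ to its nearest point on $Z_\varphi \cap \widehat{\mani}$, so it lies in the normal space of $Z_\varphi$ at $\tilde{\mat{y}}$, which by Theorem~\ref{thm:main}(iii) together with Lemma~\ref{lem:basic}(ii) (to pass from $N_{\nu(\tilde{\mat{y}})}$ to $N_\mat{z}$) makes angle $O(m^2\sqrt{\kappa\gamma})$ with $N_\mat{z}$. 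Hence $\norm{\Pi_{T_\mat{z}}(\mat{y}'-\tilde{\mat{y}})}^2 = O(m^3\gamma^2)\norm{\mat{y}'-\mat{y}}^2 + O(m^4\kappa\gamma)\,\ell^2$. Adding the two estimates, using $\norm{\mat{y}'-\mat{y}}^2 \le 2\norm{\mat{y}'-\tilde{\mat{y}}}^2 + 2\ell^2$, and shrinking $\eps_0$ so that the coefficient $O(d\kappa m^4\gamma)$ is below $\tfrac12$, I can absorb all $\norm{\mat{y}'-\tilde{\mat{y}}}^2$ terms on the right, leaving $\norm{\mat{y}'-\tilde{\mat{y}}}^2 = O(d\kappa m^4\gamma)\,\ell^2$, i.e.\ $\norm{\mat{y}'-\tilde{\mat{y}}} \le C\sqrt{d\kappa}\,m^2\gamma^{1/2}\,\ell$. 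Shrinking $\eps_0$ once more so that $\gamma \le (C\sqrt{d\kappa}\,m^2)^{-4}$ converts this into $\norm{\mat{y}'-\tilde{\mat{y}}} \le \gamma^{1/4}\norm{\mat{y}-\tilde{\mat{y}}}$, and the resulting $\eps_0$ decreases as $d$ and $\kappa$ increase, as required.

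\textbf{Main obstacle.} The delicate point is the quantitative estimate $\norm{\nabla\varrho_{\mat{z},i}-\mat{v}_i}^2 = O(\kappa m^4\gamma)$: the whole argument survives only because its square root $O(m^2\sqrt{\kappa\gamma})$ still beats the target rate $\gamma^{1/4}$ once $\eps_0$ is taken small, and obtaining this bound forces us to run Lemma~\ref{lemma::unique} in the sharp regime $\tau = 2$, which is available only because $\mat{y}$ (and, via Lemma~\ref{lem:first_itr}, $\mat{y}'$) is already $O(m^{7/2}\gamma^2)$-close to $\mani$ rather than just $O(\eps)$-close. A secondary, more clerical obstacle is to show that the one-step tangential drift of $\mat{y}'-\tilde{\mat{y}}$ off $N_\mat{z}$ — coming both from the tilt of $L_\mat{y}$ and from the tilt of the normal space of $Z_\varphi$ — is only $O(m^2\sqrt{\kappa\gamma})\,\ell$, which is where Theorem~\ref{thm:main}(iii) is invoked. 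The factor-$\sqrt{d}$ losses incurred when passing from the $d-m$ coordinatewise bounds on the $\varrho_{\mat{z},i}$ to a bound on $\mat{y}'-\tilde{\mat{y}}$ are harmless, since $\eps_0$ is permitted to depend on $d$.
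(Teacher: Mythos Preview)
Your argument is correct and reaches the same conclusion with the same quantitative rate, but it is organised quite differently from the paper. The paper selects the (generally \emph{non}-orthonormal) basis $\{\mat{v}_i\}$ of $N_\mat{z}$ so that the columns of $\mat{B}_{\varphi,\mat{y}}$ are \emph{exactly} $f_{\mat{v}_i}(\mat{y})$; this makes $\mat{y}'-\mat{y} = -\mat{B}_{\varphi,\mat{y}}\,\varrho_\mat{z}(\mat{y})$ and hence $\norm{\mat{y}'-\mat{y}} = \norm{\varrho_\mat{z}(\mat{y})}$. It then (i) integrates the gradients along $[\tilde{\mat{y}},\mat{y}]$ to prove $\norm{\varrho_\mat{z}(\mat{y})}^2 = (1\pm O(d^2\kappa m^4\gamma))\,\ell^2$, (ii) expands $(\mat{y}'-\mat{y})^t(\tilde{\mat{y}}-\mat{y})$ to obtain $\cos\angle\mat{y}'\mat{y}\,\tilde{\mat{y}} \ge 1 - O(d^2\kappa m^4\gamma)$, and (iii) finishes with the cosine law in the triangle $\mat{y}'\mat{y}\,\tilde{\mat{y}}$. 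You instead keep an orthonormal basis of $N_\mat{z}$, first bound the \emph{residual} $|\varrho_{\mat{z},i}(\mat{y}')|$ via the MVT identity $\varrho_{\mat{z},i}(\mat{y}') = (\nabla\varrho_{\mat{z},i}(\xi_i)-f_{\mat{v}_i}(\mat{y}))^t(\mat{y}'-\mat{y})$, and then control $\mat{y}'-\tilde{\mat{y}}$ by a normal/tangential decomposition, invoking Theorem~\ref{thm:main}(iii) for the tangential drift of $\mat{y}-\tilde{\mat{y}}$. Your route avoids the $\lambda_i$ bookkeeping and the near-orthonormal basis manipulations, and makes transparent that the $\sqrt{\gamma}$ contraction stems from $\norm{\nabla\varrho_{\mat{z},i}-\mat{v}_i}^2 = O(\kappa m^4\gamma)$; the paper's route is more self-contained in that it does not appeal to Theorem~\ref{thm:main}(iii). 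Both rely on exactly the same engine---Lemma~\ref{lemma::unique} run with $\tau=2$ thanks to the $O(m^{7/2}\gamma^2)$ proximity hypothesis---and both pay a polynomial-in-$d$ factor that is absorbed into~$\eps_0$.
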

\begin{proof}
Let $\mat{z} = \nu(\mat{y})$.  For $i \in [1,d-m]$, let $\mat{v}_i$ be the unit
vector in $N_{\mat{z}}$ such that $\mat{B}_{\varphi,\mat{y}} =
(f_{\mat{v}_1}(\mat{y}), \ldots, f_{\mat{v}_{d-m}}(\mat{y}))$ consists of orthogonal unit column vectors.  By
Lemma~\ref{lemma::normal_angle}, $\angle (L_{\mat{y}},N_{\mat{z}}) =
O(m\sqrt{m}\,\gamma)$, so for any distinct $i,j \in [1,d-m]$, $\angle (\mat{v}_i,
\mat{v}_j) = \pi/2 \pm O(m\sqrt{m}\,\gamma)$.  This allows us to prove as in the
proof of Lemma~\ref{lemma:local} that $\{\mat{v}_1,\ldots,\mat{v}_{d-m}\}$ are
linearly independent and hence they form a basis of $N_{\mat{z}}$.

Let $\varrho_{\mat{z}}$ be the canonical function with respect to $\mat{z}$ and
the basis $\{\mat{v}_1,\ldots,\mat{v}_{d-m}\}$ of $N_\mat{z}$.  
Since $\norm{\mat{y}-\tilde{\mat{y}}}$ is at most
$\norm{\mat{y}-\mat{z}}$ plus the distance from $\mat{z}$ to $Z_\varphi \cap \widehat{\mani}$, 
by Theorem~\ref{thm:main}, we have $\norm{\mat{y}-\tilde{\mat{y}}} \leq
O(m^{7/2}\gamma^2) + O(m^{5/2}\gamma^2) = O(m^{7/2}\gamma^2)$.
So $\norm{\tilde{\mat{y}}-\mat{z}} \leq \norm{\mat{y} -
\tilde{\mat{y}}} + \norm{\mat{y}-\mat{z}} = O(m^{7/2}\gamma^2)$.  Therefore,
\[
\mbox{segment $\mat{y}\,\tilde{\mat{y}}$ is
contained in $B(\mat{z}, tm^{7/2}\gamma^2)$ for some constant $t$},
\] 
implying that
$\varrho_\mat{z}(\mat{x})$ is defined for any point $\mat{x}$ in the segment
$\mat{y}\,\tilde{\mat{y}}$ as long as $\eps_0 < 1/(8tm^{7/2})$ so that $tm^{7/2}\gamma^2 \leq
16tm^{7/2}\eps_0\eps < 2\eps$.  By Lemmas~\ref{lem:agree} and~\ref{lemma:local}, 
$\varrho_\mat{z}^{-1}(0)$ agrees with $Z_\varphi$ within
$B(\mat{z},tm^{7/2}\gamma^2)$.  Then, the following relations follow from
Lemma~\ref{lemma::normal_angle}, Lemma~\ref{lemma::unique},
Theorem~\ref{thm:main}, and the facts that
$\angle (\mat{v}_i,f_{\mat{v}_i}(\mat{y})) = O(m\sqrt{m}\,\gamma)$ for any $i \in [1,d-m]$,
and $\angle (\mat{v}_i,f_{\mat{v}_j}(\mat{y})) = \pi/2 \pm O(m\sqrt{m}\,\gamma)$ for any
distinct $i,j \in [1,d-m]$.
\begin{itemize}

\item For all $i \in [1,d-m]$ and all $\mat{x} \in
B(\mat{z},tm^{7/2}\gamma^2)$, $\norm{\nabla\varrho_{\mat{z},i}(\mat{x})} \in
\left[1-O(\kappa m^4\gamma),1+O(\kappa m^4\gamma)\right]$.

\item For all distinct indices $i,j \in [d-m]$ and for all pair of points
$\mat{x},\mat{x}' \in B(\mat{z}, tm^{7/2}\gamma^2)$,
$\nabla\varrho_{\mat{z},i}(\mat{x})^t \cdot \nabla\varrho_{\mat{z},j}(\mat{x}')
= \pm O(\kappa m^4\gamma)$.

\item For all $i \in [d-m]$, $f_{\mat{v}_i}(\mat{y})^t \cdot
\nabla\varrho_{\mat{z},i}(\mat{y}) \in \left[1 - O(\kappa m^4\gamma),1 +
O(\kappa m^4\gamma)\right]$.

\item For all distinct $i,j \in [d-m]$, $f_{\mat{v}_i}(\mat{y})^t \cdot
\nabla\varrho_{\mat{z},j}(\mat{y}) = \pm O(\kappa m^4\gamma)$.

\end{itemize}

We first prove lower and upper bounds on $\norm{\varrho_\mat{z}(\mat{y})}$.
Since $\tilde{\mat{y}}$ is the nearest point in $Z_\varphi \cap \widehat{\mani}$
to $\mat{y}$, the vector $\mat{y}-\tilde{\mat{y}}$ belongs to the normal space
of $Z_\varphi$ at $\tilde{\mat{y}}$.  Recall that $Z_{\varrho_\mat{z}}$ agrees
with $Z_\varphi$ locally, so the normal space of $Z_\varphi$ at
$\tilde{\mat{y}}$ is spanned by $\{\nabla\varrho_{\mat{z},1}(\tilde{\mat{y}}),
\ldots, \nabla\varrho_{\mat{z},d-m}(\tilde{\mat{y}})\}$.  Let $\mat{u} =
\sum_{i=1}^{d-m} \lambda_i\cdot\nabla\varrho_{\mat{z},i}(\tilde{\mat{y}})$
denote the unit vector $(\mat{y}-\tilde{\mat{y}})/\norm{\mat{y} -
\tilde{\mat{y}}}$.  Standard vector calculus gives 
\begin{eqnarray}
\varrho_{\mat{z}}(\mat{y}) & = &
\left(\bigintsss_{0}^1
\left(\nabla \varrho_{\mat{z},1}(\tilde{\mat{y}}+r\mat{u}), \ldots,
\nabla \varrho_{\mat{z},d-m}(\tilde{\mat{y}}+r\mat{u})\right)^t \cdot
(\mat{y}-\tilde{\mat{y}}) \;\; \mbox{d}r \right) \nonumber \\
& = &
\norm{\mat{y}-\tilde{\mat{y}}} \cdot \bigintsss_{0}^1
\left(\nabla \varrho_{\mat{z},1}(\tilde{\mat{y}}+r\mat{u}), \ldots,
\nabla \varrho_{\mat{z},d-m}(\tilde{\mat{y}}+r\mat{u})\right)^t \cdot
\left(\sum_{i=1}^{d-m}
\lambda_i\cdot\nabla\varrho_{\mat{z},i}(\tilde{\mat{y}})\right)
\; \mbox{d}r \nonumber \\
& = &
\norm{\mat{y}-\tilde{\mat{y}}} \cdot 
\begin{pmatrix}
\lambda_1 + \sum_{i=1}^{d-m} (\pm\lambda_i) \cdot O(\kappa m^4\gamma) \\
\vdots \\
\lambda_{d-m} + \sum_{i=1}^{d-m} (\pm\lambda_i) \cdot O(\kappa m^4\gamma)
\end{pmatrix}.  \label{eq:iterate}
\end{eqnarray}
Hence,
\begin{equation}
\sum_{i=1}^{d-m}\lambda_i^2 - O(\kappa m^4\gamma) \left(\sum_{i=1}^{d-m}|\lambda_i|\right)^2
\leq \frac{\norm{\varrho_{\mat{z}}(\mat{y})}^2}{\norm{\mat{y}-\tilde{\mat{y}}}^2} \leq 
\sum_{i=1}^{d-m}\lambda_i^2 + O(\kappa m^4\gamma) \left(\sum_{i=1}^{d-m}|\lambda_i|\right)^2.
\label{eq:iterate-2}
\end{equation}

We claim that if $\eps_0$ is small enough, then
\begin{equation}
\forall\, i \in [1,d-m], \quad |\lambda_i| \leq 1 + O((d-m)\kappa m^4\gamma).
\label{eq:iterate-3}
\end{equation}
Let $k = \argmax_{i=[1,d-m]} |\lambda_i|$. We take the 
dot product of
$\sum_{i=1}^{d-m} \lambda_{i} \cdot
\nabla\varrho_{\mat{z},i}(\tilde{\mat{y}})$ and 
$\nabla\varrho_{\mat{z},k}(\tilde{\mat{y}})$  or
$-\nabla\varrho_{\mat{z},k}(\tilde{\mat{y}})$ depending on whether
$\lambda_{k}$ is non-negative or negative, respectively.  This dot product
is at most $1 + O(\kappa m^4\gamma)$ as
$\norm{\nabla\varrho_{\mat{z},k}(\tilde{\mat{y}})} = 1 +
O(\kappa m^4\gamma)$.  On the other hand,
for each $i \not= k$, $\lambda_{i} \cdot \nabla\varrho_{\mat{z},i}(\tilde{\mat{y}})^t \cdot 
\nabla\varrho_{\mat{z},k}(\tilde{\mat{y}})$ contributes
$\pm|\lambda_{i}| \cdot O(\kappa m^4\gamma)$.  It follows that
\begin{eqnarray*}
	& & |\lambda_{k}|\left(1 - O(\kappa m^4\gamma)\right) - O(\kappa m^4\gamma) \sum_{i\not=k} |\lambda_{i}| \leq 1
	+ O(\kappa m^4\gamma) \\
	& \Rightarrow & \left(1 - O((d-m)\kappa m^4\gamma))\right)|\lambda_{k}| \leq 1 + O(\kappa m^4\gamma) \\
	& \Rightarrow & |\lambda_{k}| \leq 1 + O((d-m)\kappa m^4\gamma)).
\end{eqnarray*}
Since $|\lambda_k| = \max_i |\lambda_i|$, it establishes our claim.
\cancel{
Let
$\pi(1),\ldots,\pi(d-m)$ be the permutation such that $|\lambda_{\pi(1)}| \geq
|\lambda_{\pi(2)}| \geq \ldots \geq |\lambda_{\pi(d-m)}|$.  
We take the 
dot product of
$\sum_{i=1}^{d-m} \lambda_{\pi(i)} \cdot
\nabla\varrho_{\mat{z},\pi(i)}(\tilde{\mat{y}})$ and 
$\nabla\varrho_{\mat{z},\pi(1)}(\tilde{\mat{y}})$  or
$-\nabla\varrho_{\mat{z},\pi(1)}(\tilde{\mat{y}})$ depending on whether
$\lambda_{\pi(1)}$ is non-negative or negative, respectively.  This dot product
is at most $1 + O(\kappa m^4\gamma)$ as
$\norm{\nabla\varrho_{\mat{z},\pi(1)}(\tilde{\mat{y}})} = 1 +
O(\kappa m^4\gamma)$.  On the other hand,
for each $i \not= 1$, $\lambda_{\pi(i)} \cdot \nabla\varrho_{\mat{z},\pi(i)}(\tilde{\mat{y}})^t \cdot 
\nabla\varrho_{\mat{z},\pi(1)}(\tilde{\mat{y}})$ contributes
$\pm|\lambda_{\pi(i)}| \cdot O(\kappa m^4\gamma)$.  It follows that
\begin{eqnarray*}
& & |\lambda_{\pi(1)}|\left(1 - O(\kappa m^4\gamma)\right) - O(\kappa m^4\gamma) \sum_{i\not=1} |\lambda_{\pi(i)}| \leq 1
+ O(\kappa m^4\gamma) \\
& \Rightarrow & \left(1 - O((d-m)\kappa m^4\gamma))\right)|\lambda_{\pi(1)}| \leq 1 + O(\kappa m^4\gamma) \\
& \Rightarrow & |\lambda_{\pi(1)}| \leq 1 + O((d-m)\kappa m^4\gamma)).
\end{eqnarray*}
In general, suppose that we consider
$\lambda_{\pi(k)}$.  As before, we get
\[
|\lambda_{\pi(k)}|\left(1-O(\kappa m^4\gamma)\right) - O(\kappa m^4\gamma)
\sum_{i\not=k} |\lambda_{\pi(i)}| \leq 1 + O(\kappa m^4\gamma).  
\]
Inductively, we have shown that $|\lambda_{\pi(j)}| \leq 1 + O((d-m)\kappa m^4\gamma)$ for $j \in
[1,k-1]$.  Therefore, 
\begin{eqnarray*}
& & |\lambda_{\pi(k)}|\left(1-O(\kappa m^4\gamma)\right) - O((k-1)\kappa m^4\gamma)) -
O((d-m-k)\kappa m^4\gamma)|\lambda_{\pi(k)}| \\
& \leq & 1 + O(\kappa m^4\gamma).
\end{eqnarray*}
Hence, $|\lambda_{\pi(k)}| \leq 1 + O((d-m)\kappa m^4\gamma))$,
establishing our claim.
}

Since $\sum_{i=1}^{d-m}
\lambda_i\cdot\nabla\varrho_{\mat{z},i}(\tilde{\mat{y}})$ is a unit vector, we
get 
\[
\left\|\sum_{i=1}^{d-m} \lambda_i \cdot \nabla\varrho_{\mat{z},i}(\tilde{\mat{y}}) \right\|^2
= \sum_{i=1}^{d-m}
\lambda_i^2\cdot\norm{\nabla\varrho_{\mat{z},i}(\tilde{\mat{y}})}^2 +
\sum_{i\not=j}\lambda_i\lambda_j\cdot\nabla\varrho_{\mat{z},i}(\mat{y})^t \cdot
\nabla\varrho_{\mat{z},j}(\mat{y}) = 1, 
\]
which implies that 
\[
1 - O(\kappa m^4\gamma) -
O(\kappa m^4\gamma)\sum_{i\not=j}|\lambda_i\lambda_j|
\leq \;\; \sum_{i=1}^{d-m} \lambda_i^2 \;\; \leq 1 + O(\kappa m^4\gamma) +
O(\kappa m^4\gamma)\sum_{i\not=j}|\lambda_i\lambda_j|.
\]

Using the above relations concerning $\lambda_i$'s, we get an upper bound of the right
hand side of \eqref{eq:iterate-2} as follows.
\begin{eqnarray*}
\sum_{i=1}^{d-m} \lambda_i^2 + O(\kappa m^4\gamma) \left(\sum_{i=1}^{d-m}
|\lambda_i|\right)^2 & \leq & 1 + O(\kappa m^4\gamma) +  O(\kappa m^4\gamma)\sum_{i\not=j}|\lambda_i\lambda_j| \\
& \leq & 1 + O(\kappa m^4\gamma) +  O(\kappa m^4\gamma) \cdot (d^2 + O(d^2(d-m)\kappa m^4\gamma)) \\
& \leq & 1 + O(d^2\kappa m^4\gamma).
\end{eqnarray*}
Symmetrically, we get a lower bound of the left hand side of \eqref{eq:iterate-2}: 
\[
\sum_{i=1}^{d-m} \lambda_i^2 - O(\kappa m^4\gamma) \left(\sum_{i=1}^{d-m} |\lambda_i|\right)^2 
\geq 1 - O(d^2\kappa m^4\gamma).
\]
Thus, we simplify \eqref{eq:iterate-2} to
\begin{equation}
(1 - O(d^2\kappa m^4\gamma)) \cdot \norm{\mat{y}-\tilde{\mat{y}}}^2
\leq 
\norm{\varrho_{\mat{z}}(\mat{y})}^2 \leq
(1 + O(d^2\kappa m^4\gamma)) \cdot \norm{\mat{y}-\tilde{\mat{y}}}^2.
\label{eq:iterate-4}
\end{equation}
In other words, $\norm{\varrho_{\mat{z}}(\mat{y})}$ is a good 
approximation of the distance from $\mat{y}$ to the zero-set of $\varrho_{\mat{z}}$.

Next, we give a lower bound on $\cos\angle \mat{y}'\,\mat{y}\,\tilde{\mat{y}}$.
Consider the dot product $(\mat{y}'-\mat{y})^t \cdot (\tilde{\mat{y}}-\mat{y})$.  By
expanding $\mat{B}_{\varphi,\mat{y}}^t \cdot (\mat{a}_\mat{y}-\mat{y})$, we get 
\[
\mat{y}'-\mat{y} = \mat{B}_{\varphi,\mat{y}} \cdot \mat{B}_{\varphi,\mat{y}}^t \cdot (\mat{a}_\mat{y} - \mat{y})
= \mat{B}_{\varphi,\mat{y}} \cdot (-\varrho_{\mat{z}}(\mat{y})).
\] 
Since $\mat{B}_{\varphi,\mat{y}}$ consists of orthogonal unit column vectors, we get 
\begin{equation}
\norm{\mat{y}'-\mat{y}} = \norm{\mat{B}_{\varphi,\mat{y}} \cdot (-\varrho_{\mat{z}}(\mat{y}))}
= \norm{\varrho_{\mat{z}}(\mat{y})}.  
\label{eq:iterate-6}
\end{equation}
Therefore,
\begin{eqnarray}
(\mat{y}'-\mat{y})^t \cdot (\tilde{\mat{y}}-\mat{y})
& = & 
\norm{\varrho_{\mat{z}}(\mat{y})} \cdot
\norm{\mat{y}-\tilde{\mat{y}}} \cdot \cos \angle \mat{y}'\,\mat{y}\,\tilde{\mat{y}} \nonumber \\
& \leq &  
\sqrt{1 + O(d^2\kappa m^4\gamma)} \cdot \norm{\mat{y}-\tilde{\mat{y}}}^2 \cdot
\cos \angle \mat{y}'\,\mat{y}\,\tilde{\mat{y}}. \label{eq:iterate-5}
\end{eqnarray}
Recall that $\sum_{i=1}^{d-m}\lambda_i \cdot \nabla\varrho_{\mat{z},i}(\tilde{\mat{y}})$ is
the unit vector $(\mat{y}-\tilde{\mat{y}})/\norm{\mat{y}-\tilde{\mat{y}}}$. 
By expanding $(\mat{y}'-\mat{y})^t \cdot (\tilde{\mat{y}}-\mat{y})$, we get
\begin{eqnarray*}
(\mat{y}'-\mat{y})^t \cdot (\tilde{\mat{y}}-\mat{y}) & = & 
\left(\mat{B}_{\varphi,\mat{y}} \cdot \varrho_{\mat{z}}(\mat{y})\right)^t \cdot
\norm{\mat{y}-\tilde{\mat{y}}} \cdot \sum_{i=1}^{d-m} \lambda_i \cdot \nabla\varrho_{\mat{z},i}(\tilde{\mat{y}}) \\
& = & 
\left(\sum_{i=1}^{d-m} \varrho_{\mat{z},i}(\mat{y}) \cdot f_{\mat{v}_i}(\mat{y})\right)^t 
\cdot
\norm{\mat{y}-\tilde{\mat{y}}} \cdot \sum_{i=1}^{d-m} \lambda_i \cdot \nabla\varrho_{\mat{z},i}(\tilde{\mat{y}}) \\
& = &
\sum_{i=1}^{d-m} \sum_{j=1}^{d-m} \varrho_{\mat{z},i}(\mat{y}) \cdot \lambda_j
\cdot f_{\mat{v}_i}(\mat{y})^t \cdot 
\nabla\varrho_{\mat{z},j}(\tilde{\mat{y}}) \cdot \norm{\mat{y}-\tilde{\mat{y}}} \\
& = &
\sum_{i=1}^{d-m} \varrho_{\mat{z},i}(\mat{y}) \cdot \norm{\mat{y}-\tilde{\mat{y}}} \cdot \beta_i, 
\end{eqnarray*}
where $\beta_i = \lambda_i + \sum_{i=1}^{d-m} (\pm\lambda_i) \cdot O(\kappa m^4\gamma)$ for $i \in [1,d-m]$.
Note the similarity between the $\beta_i$'s and the vector in \eqref{eq:iterate}.  Therefore, 
$\norm{\mat{y}-\tilde{\mat{y}}} \cdot \beta_i = \varrho_{\mat{z},i}(\mat{y}) +
\norm{\mat{y}-\tilde{\mat{y}}} \cdot \sum_{i=1}^{d-m} (\pm\lambda_i) \cdot O(\kappa m^4\gamma) \geq 
\varrho_{\mat{z},i}(\mat{y}) - O((d-m)\kappa m^4\gamma))\cdot\norm{\mat{y}-\tilde{\mat{y}}}$ as $|\lambda_i| 
\leq 1 + O((d-m)\kappa m^4\gamma)$.  Hence,
\begin{eqnarray*}
(\mat{y}'-\mat{y})^t \cdot (\tilde{\mat{y}}-\mat{y}) & \geq &
\sum_{i=1}^{d-m}
\varrho_{\mat{z},i}(\mat{y})^2 - O((d-m)\kappa m^4\gamma)) \cdot
\norm{\mat{y}-\tilde{\mat{y}}} \cdot 
\sum_{i=1}^{d-m} |\varrho_{\mat{z},i}(\mat{y})| \\
& \geq & \norm{\varrho_{\mat{z}}(\mat{y})}^2 - 
O((d-m)\kappa m^4\gamma) \cdot
\norm{\mat{y}-\tilde{\mat{y}}} \cdot \sqrt{d-m} \cdot \norm{\varrho_{\mat{z}}(\mat{y})} \\
& \geq &
\norm{\varrho_{\mat{z}}(\mat{y})}^2 - O((d-m)^{3/2}\kappa m^4\gamma) \cdot \norm{\mat{y}-\tilde{\mat{y}}} 
\cdot \norm{\varrho_\mat{z}(\mat{y})}.
\end{eqnarray*}
Substituting \eqref{eq:iterate-4} 
into the above, we get 
\[
(\mat{y}'-\mat{y})^t \cdot (\tilde{\mat{y}}-\mat{y}) \geq 
\bigl(1 - O(d^2\kappa m^4\gamma)\bigr) \cdot \norm{\mat{y}-\tilde{\mat{y}}}^2.
\]
Combining \eqref{eq:iterate-5} with the above inequality gives
\[
\cos\angle\mat{y}'\,\mat{y}\,\tilde{\mat{y}}
\geq 1 - O(d^2\kappa m^4\gamma).
\]

Finally, consider triangle $\mat{y}'\mat{y}\,\tilde{\mat{y}}$.  By the cosine law,
we have 
\[
\norm{\mat{y}'-\tilde{\mat{y}}}
= \bigl(\norm{\mat{y}'-\mat{y}}^2 + 
\norm{\mat{y}-\tilde{\mat{y}}}^2 - 
2 \norm{\mat{y}'-\mat{y}}\, \norm{\mat{y}-\tilde{\mat{y}}}\,
\cos\angle\mat{y}'\mat{y}\,\tilde{\mat{y}}\bigr)^{1/2}.
\]
By \eqref{eq:iterate-4} and \eqref{eq:iterate-6}, $\norm{\mat{y}'-\mat{y}}^2 
\leq (1+O(d^2\kappa m^4\gamma)) \cdot \norm{\mat{y} - \tilde{\mat{y}}}^2$.  Therefore,
\begin{eqnarray*}
\norm{\mat{y}' - \tilde{\mat{y}}} & \leq & 
\norm{\mat{y}-\tilde{\mat{y}}} \cdot
\left(2 + O(d^2\kappa m^4\gamma) - 
2\left(1 - O(d^2\kappa m^4\gamma)\right)\left(1 - O(d^2\kappa m^4\gamma)\right)\right)^{1/2} \\
& \leq & O(dm^2\sqrt{\kappa\gamma}) \cdot \norm{\mat{y}-\tilde{\mat{y}}} \\
& \leq & \gamma^{1/4} \cdot \norm{\mat{y}-\tilde{\mat{y}}}
\end{eqnarray*}
whenever $\eps_0$ is small enough so that $\gamma^{1/4} = O(\eps^{1/4}) = O(\eps_0^{1/4})$ cancels the $O(dm^2\sqrt{\kappa})$ factor.
This requires $\eps_0$ to decrease as $d$ and $\kappa$ increase.  
\end{proof}

By combining Lemmas~\ref{lem:first_itr} and~\ref{lem:iterate}, we prove that
the projection operator will bring an initial point to a point in $Z_\varphi
\cap \widehat{\mani}$ in the limit.

\begin{theorem}
Let $\varphi$ be the function for a uniform $(\eps,\kappa)$-sample of an $m$-dimensional
compact smooth manifold $\mani$ in $\real^d$ as specified in
Theorem~\ref{thm:main}.   
Define
the projection operator $\mat{x}_{i+1} = \mat{x}_i +
\mat{B}_{\varphi,\mat{x}_i} \cdot \mat{B}_{\varphi,\mat{x}_i}^t \cdot
(\mat{a}_{\mat{x}_i} - \mat{x}_i)$, where $\mat{a}_{\mat{x}_i} = \sum_{\mat{p}
\in P} \omega(\mat{x}_i,\mat{p}) \cdot \mat{p}$. 
There exists $\eps_0 \in (0,1)$ that decreases as $d$ and $\kappa$ increase such that if
$\eps \leq \eps_0$, then for any initial point $\mat{x}_0$ at distance
$m\gamma$ or less from some sample point, where $\gamma$ is the input
neighborhood radius, the following properties hold.
\begin{itemize}

\item  $\lim_{i \rightarrow \infty} x_i \in Z_\varphi \cap \widehat{\mani}$,
where $\widehat{\mani}$ is the set of points within a distance of $\eps$ from
$\mani$.

\item For all $i > 0$, $\norm{\mat{x}_i - \nu(\mat{x}_0)} = O(m^{7/2}\gamma^2) = O(m^{7/2}\eps^2)$.

\end{itemize}
\end{theorem}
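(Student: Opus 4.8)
The plan is to bootstrap from Lemma~\ref{lem:first_itr} into a geometric contraction governed by Lemma~\ref{lem:iterate}. First I would note that the iteration is well defined along the whole orbit: whenever $\mat{x}_i$ lies within $O(m^{7/2}\gamma^2)$ of $\mani$, the $(\eps,\kappa)$-sampling gives a sample point within $\eps + O(m^{7/2}\gamma^2) < m\gamma$ of $\mat{x}_i$, so $\sum_{\mat{p} \in P} h(\norm{\mat{x}_i - \mat{p}}) > 0$ and $\mat{a}_{\mat{x}_i}$, $\mat{B}_{\varphi,\mat{x}_i}$, hence $\mat{x}_{i+1}$, are defined. Applying Lemma~\ref{lem:first_itr} to $\mat{x}_0$, which is within $m\gamma$ of $P$ by hypothesis, yields $\norm{\mat{x}_1 - \nu(\mat{x}_0)} = O(m^{7/2}\gamma^2)$, so $\mat{x}_1$ is within $O(m^{7/2}\gamma^2)$ of $\mani$ and the orbit has entered the regime where Lemma~\ref{lem:iterate} applies.

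Write $d_i = \norm{\mat{x}_i - \tilde{\mat{x}}_i}$, where $\tilde{\mat{x}}_i$ is the nearest point of $Z_\varphi \cap \widehat{\mani}$ to $\mat{x}_i$ (well defined since $Z_\varphi \cap \widehat{\mani}$ is nonempty and compact by Theorem~\ref{thm:main}(i) and Lemma~\ref{lemma::compact}). The inductive claim is that for all $i \geq 1$, $\mat{x}_i$ lies within $O(m^{7/2}\gamma^2)$ of $\mani$ and $d_{i+1} \leq \gamma^{1/4} d_i$. For the step, Lemma~\ref{lem:iterate} at $\mat{x}_i$ gives $\norm{\mat{x}_{i+1} - \tilde{\mat{x}}_i} \leq \gamma^{1/4} d_i$, and since $\tilde{\mat{x}}_i \in Z_\varphi \cap \widehat{\mani}$ we get $d_{i+1} \leq \norm{\mat{x}_{i+1} - \tilde{\mat{x}}_i} \leq \gamma^{1/4} d_i$; as $\gamma^{1/4} < 1$ for small $\eps_0$, $d_{i+1} \leq d_1 = O(m^{7/2}\gamma^2)$, and combining with the $O(m^{5/2}\gamma^2)$ Hausdorff bound between $Z_\varphi \cap \widehat{\mani}$ and $\mani$ from Theorem~\ref{thm:main}(ii), the distance from $\mat{x}_{i+1}$ to $\mani$ is at most $d_{i+1} + O(m^{5/2}\gamma^2) = O(m^{7/2}\gamma^2)$, closing the induction. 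Hence $d_i \leq (\gamma^{1/4})^{i-1} d_1 \to 0$.

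The rest is routine bookkeeping. Since $\norm{\mat{x}_{i+1} - \mat{x}_i} \leq \norm{\mat{x}_{i+1} - \tilde{\mat{x}}_i} + \norm{\tilde{\mat{x}}_i - \mat{x}_i} \leq (1 + \gamma^{1/4}) d_i \leq 2(\gamma^{1/4})^{i-1} d_1$, the orbit $\{\mat{x}_i\}$ is Cauchy and converges to some $\mat{x}^\ast$, and $\tilde{\mat{x}}_i \to \mat{x}^\ast$ because $\norm{\tilde{\mat{x}}_i - \mat{x}^\ast} \leq d_i + \norm{\mat{x}_i - \mat{x}^\ast} \to 0$; as $Z_\varphi \cap \widehat{\mani}$ is closed and contains every $\tilde{\mat{x}}_i$, the limit $\mat{x}^\ast$ lies in $Z_\varphi \cap \widehat{\mani}$, giving the first claim. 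For the second, summing the geometric bound gives $\norm{\mat{x}_i - \mat{x}_1} \leq \sum_{j \geq 1} 2(\gamma^{1/4})^{j-1} d_1 = O(d_1) = O(m^{7/2}\gamma^2)$, so $\norm{\mat{x}_i - \nu(\mat{x}_0)} \leq \norm{\mat{x}_i - \mat{x}_1} + \norm{\mat{x}_1 - \nu(\mat{x}_0)} = O(m^{7/2}\gamma^2) = O(m^{7/2}\eps^2)$. I expect the part needing the most care to be the inductive invariant, namely verifying that every iterate stays inside the $O(m^{7/2}\gamma^2)$-neighborhood of $\mani$ where Lemma~\ref{lem:iterate} is valid, while keeping the two nearest-point notions (to $\mani$ via $\nu$, to $Z_\varphi \cap \widehat{\mani}$ via $\tilde{\cdot}$) straight and choosing $\eps_0$ uniformly small enough for Lemmas~\ref{lem:first_itr}, \ref{lem:iterate}, and Theorem~\ref{thm:main} at once, which is what forces $\eps_0$ to decrease as $d$ and $\kappa$ grow.
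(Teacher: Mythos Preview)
Your proposal is correct and follows essentially the same route as the paper: apply Lemma~\ref{lem:first_itr} once to land in the $O(m^{7/2}\gamma^2)$-tube, then iterate Lemma~\ref{lem:iterate} to get the geometric contraction $d_{i+1}\le\gamma^{1/4}d_i$, and sum the resulting series to bound $\norm{\mat{x}_i-\nu(\mat{x}_0)}$. If anything, you are slightly more explicit than the paper in maintaining the inductive invariant that each $\mat{x}_i$ stays in the tube where Lemma~\ref{lem:iterate} applies, and in spelling out the Cauchy-sequence/compactness argument that the limit actually lies in $Z_\varphi\cap\widehat{\mani}$.
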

\begin{proof}
For any point $\mat{x}$, let $\tilde{\mat{x}}$ denote the nearest point in $Z_\varphi
\cap \widehat{\mani}$ to $\mat{x}$.
By Lemma~\ref{lem:first_itr}, $\norm{\mat{x}_1 - \nu(\mat{x}_0)} = O(m^{7/2}\gamma^2)$.
Let $\mat{b}$ be the nearest point in $Z_\varphi \cap \widehat{\mani}$ to $\nu(\mat{x}_0)$.  Since $\norm{\mat{b} - \nu(\mat{x}_0)} = O(m^{5/2}\gamma^2)$ by Theorem~\ref{thm:main}, triangle inequality implies
that for a small enough $\eps_0$,
\begin{align*}
\norm{\mat{x}_1-\tilde{\mat{x}}_1} \leq \norm{\mat{x}_1 - \mat{b}} & \leq  \norm{\mat{b} - \nu(\mat{x}_0)} + \norm{\mat{x}_1 - \nu(\mat{x}_0)} \\
& \leq
O(m^{5/2}\gamma^2) + O(m^{7/2}\gamma^2) \\
& = O(m^{7/2}\gamma^2).
\end{align*}
Since $\norm{\mat{x}_1 - \nu(\mat{x}_0)} = O(m^{7/2}\gamma^2)$,
Lemma~\ref{lem:iterate} is applicable to $\mat{x}_1$.  It ensures that
$\norm{\mat{x}_2 - \tilde{\mat{x}}_2} \leq 
\norm{\mat{x}_2-\tilde{\mat{x}}_1} \leq \gamma^{1/4} \cdot 
\norm{\mat{x}_1 - \tilde{\mat{x}}_1} = O(m^{7/2}\gamma^{9/4})$, which is smaller than 
$O(m^{7/2}\gamma^2)$ and so Lemma~\ref{lem:iterate}
is applicable to $\mat{x}_2$.
Repeating this argument gives 
\[
\norm{\mat{x}_i - \tilde{\mat{x}}_i}
\leq \norm{\mat{x}_i - \tilde{\mat{x}}_{i-1}} = O(m^{7/2}\gamma^{(7+i)/4}).
\]
This proves that $\lim_{i \rightarrow \infty} x_i \in Z_\varphi \cap \widehat{\mani}$.
By triangle inequality, 
\begin{align*}
\norm{\mat{x}_i - \mat{x}_{i-1}} 
& \leq \norm{\mat{x}_i - \tilde{\mat{x}}_{i-1}} + \norm{\mat{x}_{i-1} - \tilde{\mat{x}}_{i-1}} \\
& = O(m^{7/2}\gamma^{(7+i)/4}) + O(m^{7/2}\gamma^{(6+i)/4}) \\
& = O(m^{7/2}\gamma^{(7+i)/4}).
\end{align*}
Therefore, for a small enough $\eps_0$,
\begin{align*}
\norm{\mat{x}_i - \nu(\mat{x}_0)}
& \leq 
\sum_{j=2}^i \norm{\mat{x}_j-\mat{x}_{j-1}} + \norm{\mat{x}_1-\nu(\mat{x}_0)} \\
& < \sum_{j=2}^i O(m^{7/2}\gamma^{(7+j)/4}) + O(m^{7/2}\gamma^2) \\
& = O(m^{7/2}\gamma^2).
\end{align*}
\end{proof}

\section{Conclusion}

We define a function $\varphi$ from a uniform $(\eps,\kappa)$-sample of a
compact smooth manifold $\mani$ in $\real^d$ such that the zero-set of
$\varphi$ near $\mani$ is a faithful reconstruction of $\mani$.  Moreover, we
give a projection operator that will yield a point on the zero-set near $\mani$
in the limit by iterative applications.  More work is needed to improve the
angular error of $O(m^2\sqrt{\kappa\eps})$, which is weaker than the $O(\eps)$
angular error offered by provably good simplicial reconstructions.  It would also be
desirable for $\eps$ to depend on $m$ only instead of $d$.  Another 
natural question is how to deal with non-smooth manifolds and non-manifolds.

\paragraph{Acknowledgment}
The authors would like to thank the anonymous reviewers for helpful comments, pointing out mistakes in an earlier version that we subsequently corrected, and suggesting the removal of some slack in the bounds on Hausdorff distance and angular error.

\end{document}